\newtheorem{theorem}{Theorem}[section]
\newtheorem{lemma}[theorem]{Lemma}
\newtheorem{corollary}[theorem]{Corollary}
\newtheorem{definition}[theorem]{Definition}
\newtheorem{remark}[theorem]{Remark}
\newcommand{\braket}[2]{\left< #1 \vphantom{#2} \middle| #2 \vphantom{#1} \right>} 
\DeclarePairedDelimiter\rbra{\lparen}{\rparen}
\DeclarePairedDelimiter\sbra{\lbrack}{\rbrack}
\DeclarePairedDelimiter\cbra{\{}{\}}
\DeclarePairedDelimiter\abs{\lvert}{\rvert}
\DeclarePairedDelimiter\Abs{\lVert}{\rVert}
\DeclarePairedDelimiter\ceil{\lceil}{\rceil}
\DeclarePairedDelimiter\ket{\lvert}{\rangle}
\DeclarePairedDelimiter\bra{\langle}{\rvert}
\newcommand{\tr} {\operatorname{tr}}
\newcommand{\poly} {\operatorname{poly}}
\newcommand{\diag} {\operatorname{diag}}
\newcommand{\SV} {\text{SV}}
\newcommand{\bb}[1]{#1^{\diamond}}
\newcommand{\footremember}[2]{%
    \footnote{#2}
    \newcounter{#1}
    \setcounter{#1}{\value{footnote}}%
}
\title{Quantum Lower Bounds by Sample-to-Query Lifting}
\author{
    Qisheng Wang \footremember{1}{Qisheng Wang is with the School of Informatics, University of Edinburgh, EH8 9AB Edinburgh, United Kingdom (e-mail: \href{mailto:QishengWang1994@gmail.com}{\nolinkurl{QishengWang1994@gmail.com}}). Part of the work was done when the author was with the Graduate School of Mathematics, Nagoya University, Nagoya 464-8602, Japan.}
    \and Zhicheng Zhang \footremember{2}{Zhicheng Zhang is with the Centre for Quantum Software and Information, University of Technology Sydney, Ultimo, NSW 2007, Australia (e-mail: \href{mailto:iszczhang@gmail.com}{\nolinkurl{iszczhang@gmail.com}}).}
}
\date{}
\begin{document}

\maketitle

\begin{abstract}
    The polynomial method by \hyperlink{cite.BBC+01}{Beals, Buhrman, Cleve, Mosca, and de Wolf (FOCS 1998, \textit{J.\ ACM} 2001)}, the adversary method by \hyperlink{cite.Amb02}{Ambainis (STOC 2000, \textit{J.\ Comput.\ Syst.\ Sci.}\ 2002)}, and the compressed oracle method by \hyperlink{cite.Zha19}{Zhandry (CRYPTO 2019)} have been shown to be powerful in proving quantum query lower bounds for a wide variety of problems.
In this paper, we propose a new method for proving quantum query lower bounds by a quantum sample-to-query lifting theorem, which is from an information theory perspective. 
Using this method, we obtain the following new results:
\begin{enumerate}
    \item A quadratic relation between quantum sample and query complexities regarding quantum property testing, which is optimal and saturated by quantum state discrimination. 
    Here, the sample complexity is measured given sample access to the quantum state to be tested, while the query complexity is measured given query access to an oracle that block-encodes the quantum state. 
    \item A matching lower bound $\widetilde \Omega(\beta)$ for quantum Gibbs sampling at inverse temperature $\beta$,\footnote{$\widetilde\Omega\rbra{\cdot}$ suppresses logarithmic factors.} showing that the quantum Gibbs sampler by \hyperlink{cite.GSLW19}{Gily\'en, Su, Low, and Wiebe (STOC 2019)} is optimal. 
    \item A new lower bound $\widetilde \Omega\rbra{1/\sqrt{\Delta}}$ for the entanglement entropy problem with gap $\Delta$, which was recently studied by \hyperlink{cite.SY23}{She and Yuen (ITCS 2023)}. 
    \item A series of quantum query lower bounds for matrix spectrum testing, based on the sample lower bounds for quantum state spectrum testing by \hyperlink{cite.OW21}{O'Donnell and Wright (STOC 2015, \textit{Comm.\ Math.\ Phys.}\ 2021)}. 
\end{enumerate}
In addition, we also provide unified proofs for some known lower bounds that have been proven previously via different techniques, 
including those for phase/amplitude estimation and Hamiltonian simulation. 
\end{abstract}

\textbf{Keywords: Quantum computing, quantum query complexity, quantum sample complexity, quantum lower bounds, quantum sample-to-query lifting.}

\newpage
\tableofcontents
\newpage

\section{Introduction}
Quantum query algorithms typically assume access to a quantum unitary oracle $U$, and the corresponding computational complexity, known as the \textit{quantum query complexity}, is measured by the number of queries to $U$ used in the algorithms. 
Well-known examples include Grover's search algorithm \cite{Gro96} that finds a solution among $N$ unstructured items with quantum query complexity $O\rbra{\sqrt{N}}$, and quantum phase estimation \cite{Kit95} which is a key step of Shor's factoring algorithm \cite{Sho94}. 

To understand the limitations of quantum computing, it was shown in \cite{BBBV97,BBHT98,Zal99} that any quantum search algorithm over $N$ items requires $\Omega\rbra{\sqrt{N}}$ queries, which implies the optimality of Grover's algorithm.
Motivated by the quantum lower bound for search problem, the \textit{polynomial method} \cite{BBC+01,BKT20,SY23}, the \textit{adversary method} \cite{Amb02,SS06}, and the \textit{compressed oracle method} \cite{Zha19} were successively proposed for proving quantum lower bounds for a wide range of problems.
It was shown in \cite{ABDK+21} via the polynomial method that $\mathsf{D}\rbra{f} = O\rbra{\mathsf{Q}\rbra{f}^4}$ for any total Boolean function $f$, where $\mathsf{D}$ and $\mathsf{Q}$ are the (classical) deterministic query complexity and the quantum query complexity (with two-sided error) for computing $f$, respectively.
This polynomial relation is the best possible due to the quantum speedup $\mathsf{D}\rbra{f} = \widetilde\Omega\rbra{\mathsf{Q}\rbra{f}^4}$ for certain total Boolean function $f$ \cite{ABB+17}. 

Other than for Boolean functions, quantum query algorithms are also widely studied for testing properties of unitary oracles \cite{Wan11}, which is an important part of quantum property testing \cite{MdW16}. 
Query lower bounds for several unitary property testing problems were recently established via the polynomial method \cite{AKKT20,CNY23,SY23} and also via the adversary method \cite{MdW23}. 
An interesting question is whether there is a relation related to quantum property testing like the aforementioned polynomial relation for Boolean functions.

Unitary oracles that encode inputs in a matrix block, known as block-encodings \cite{LC19}, have recently attracted great interest
due to that many quantum algorithms can be described in this framework, via the quantum singular value transformation (QSVT)~\cite{GSLW19}. Examples include Hamiltonian simulation \cite{LC19} and solving systems of linear equations \cite{CKS17,CAS+22}.
Among these examples, quantum Gibbs sampling was recently used as a key subroutine in quantum semidefinite programming \cite{BS17,vAGGdW20,BKL+19,vAG19}.
The current best quantum Gibbs sampler is due to \cite{GSLW19} which prepares the Gibbs state of the block-encoded Hamiltonian at inverse temperature $\beta$ with query complexity $O\rbra{\beta}$.
However, there is no known query lower bound for quantum Gibbs sampling.\footnote{To the best of our knowledge, we are only aware of a quantum circuit depth lower bound of $\Omega\rbra{\log \log \rbra{N}}$ \cite{Has11,Yos11,Eld21} where $N$ is the dimension of the Hamiltonian, which is incomparable with the quantum query complexity considered in this paper.}

In this paper, we propose a new method for proving quantum query lower bounds
and use it to prove a matching query lower bound $\widetilde \Omega\rbra{\beta}$ for quantum Gibbs sampling, as well as a new query lower bound for the entanglement entropy problem recently studied in \cite{SY23}.
It also provides a unified proof for several other results that have been previously proven, including the quantum query lower bounds for phase estimation, amplitude estimation, and Hamiltonian simulation. 

Very different from the previous methods, our lower bounds are obtained by reducing from information-theoretic bounds for quantum state testing, e.g., the Helstrom-Holevo bound \cite{Hel67,Hol73}. 
To enable such reduction, we propose a quantum sample-to-query lifting theorem which translates the sample lower bound for any type of quantum state testing into the query lower bound for a corresponding type of unitary property testing.
Moreover, our lifting theorem also implies an optimal polynomial relation between quantum sample and query complexities for quantum property testing, which is analogous to the aforementioned polynomial relation for Boolean functions \cite{BBC+01,ABDK+21} 
(the analogy will be explained in \cref{remark:qsq-vs-dq}).
Here, the quantum query complexity of property testing of quantum states considered in this paper is defined to be the quantum query complexity of property testing of quantum unitary oracles with the unitary oracle to be tested being a (scaled) block-encoding of the quantum state to be tested (see \cref{sec:intro-lifting} for the detailed definition).

\subsection{Quantum sample-to-query lifting theorem} \label{sec:intro-lifting}

Quantum state testing is to test whether a quantum state satisfies certain property, given its independent and identical samples. 
Formally, a quantum state testing problem $\mathcal{P}$ is a promise problem represented by $\mathcal{P} = \rbra{\mathcal{P}^{\textup{yes}}, \mathcal{P}^{\textup{no}}}$, where $\mathcal{P}^{\textup{yes}}$ and $\mathcal{P}^{\textup{no}}$ are disjoint sets of quantum states. 
The goal of $\mathcal{P}$ is to determine whether a given quantum state $\rho \in \mathcal{P}^{\textup{yes}}$ or $\rho \in \mathcal{P}^{\textup{no}}$, given the promise that it is in either case.
The quantum sample complexity of $\mathcal{P}$, denoted by $\mathsf{S}\rbra{\mathcal{P}}$, is the minimum number of independent samples of $\rho$ for solving $\mathcal{P}$ with success probability at least $2/3$.

In this paper, we also consider a variant of quantum state testing, given access to a unitary oracle that block-encodes the tested quantum state $\rho$. 
In other words, this variant version is actually a unitary property testing problem,
with quantum oracles being block-encodings of quantum states. 
Here, a unitary operator $U$ is said to be a block-encoding of $A$ if the matrix $A$ is in the upper left corner in the matrix representation of $U$ (see \cref{def:block-encoding}). 
The (diamond) quantum query complexity of $\mathcal{P}$, denoted by $\mathsf{Q}_{\diamond}\rbra{\mathcal{P}}$,\footnote{There are other definitions of the quantum query complexity for quantum state testing when encoding it in the scenario of unitary property testing. 
A common way is to assume a quantum oracle that prepares a purification of the tested quantum state, e.g., \cite{Wat02,BASTS10,GL20,GHS21,SH21,WZC+23,RASW23,WGL+22,GP22,Liu23}.
To avoid possible ambiguity, we use $\mathsf{Q}_\diamond\rbra{\mathcal{P}}$ instead of $\mathsf{Q}\rbra{\mathcal{P}}$ to denote the quantum query complexity of a quantum state testing problem $\mathcal{P}$ studied in this paper. \label{fnt:oracle-def}} is then defined as the minimum number of queries to a block-encoding of $\frac{1}{2}\rho$ for solving $\mathcal{P}$ with success probability at least $2/3$.\footnote{The constant factor in the block-encoding is due to technical reasons, which is not a problem in most applications. Generally, let $\mathsf{Q}_\diamond^{\alpha}\rbra{\mathcal{P}}$ be the $\alpha$-quantum query complexity that is defined to be the minimum number of queries to a block-encoding of $\rho/\alpha$.
The quantum query complexity $\mathsf{Q}_\diamond\rbra{\mathcal{P}}$ is actually the case of $\alpha = 2$.
Moreover, it holds that $\mathsf{Q}_\diamond\rbra{\mathcal{P}} \coloneqq \mathsf{Q}_\diamond^2\rbra{\mathcal{P}} = \widetilde \Theta\rbra{\mathsf{Q}_\diamond^{\alpha}\rbra{\mathcal{P}}}$ for any constant $\alpha > 1$ (see \cref{sec:alpha-query} for further discussion regarding the choice of $\alpha$).}

We prove a quantum sample-to-query lifting theorem that relates the above two complexities $\mathsf{S}\rbra{\mathcal{P}}$ and $\mathsf{Q}_\diamond\rbra{\mathcal{P}}$, stated as follows.

\begin{theorem} [Quantum sample-to-query lifting, \cref{thm:main}]
\label{thm:lifting}
    Let $\mathcal{P}$ be a promise problem for quantum state testing. Then,
    \[
    \mathsf{Q}_\diamond\rbra{\mathcal{P}} = \widetilde \Omega\rbra*{\sqrt{\mathsf{S}\rbra{\mathcal{P}}}}.
    \]
\end{theorem}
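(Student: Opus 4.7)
The plan is to prove the equivalent upper-bound reformulation $\mathsf{S}\rbra{\mathcal{P}} = \widetilde{O}\rbra{\mathsf{Q}_\diamond\rbra{\mathcal{P}}^2}$ by showing that any $q$-query algorithm for $\mathcal{P}$ with block-encoding access to $\rho/2$ can be converted, with only constant additional error, into a sample-based algorithm for $\mathcal{P}$ that consumes $\widetilde{O}\rbra{q^2}$ i.i.d.\ copies of $\rho$. Combining this simulation with the sample lower bound for $\mathcal{P}$ immediately yields $\mathsf{Q}_\diamond\rbra{\mathcal{P}} = \widetilde{\Omega}\rbra*{\sqrt{\mathsf{S}\rbra{\mathcal{P}}}}$.

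The key subroutine is a sample-to-block-encoding construction: from $k$ copies of $\rho$, approximate in diamond norm (to error $\epsilon$) a block-encoding of $\rho/\alpha$ for some absolute constant $\alpha > 1$, with $k = \widetilde{O}\rbra{1/\epsilon}$. I would build this in two steps. First, the Lloyd--Mohseni--Rebentrost density-matrix exponentiation primitive uses a single copy of $\rho$ with a partial SWAP to implement $e^{-i\rho \delta}$ with error $O\rbra{\delta^2}$; iterating with $k$ slices at $\delta = 1/k$ therefore yields $e^{-i\rho}$ at diamond-norm error $O\rbra{1/k}$. Second, a short LCU combination or quantum-signal-processing conversion of $e^{\pm i\rho \tau}$ for small $\tau$ extracts a unitary whose upper-left block is approximately $\sin\rbra{\rho}/2$, which is then corrected to $\rho/\alpha$ by a constant-degree (in $\epsilon$, logarithmic) QSVT polynomial; the mismatch between $1/\alpha$ and the nominal $1/2$ is absorbed by the equivalence $\mathsf{Q}_\diamond = \widetilde{\Theta}\rbra{\mathsf{Q}_\diamond^{\alpha}}$ between block-encoding normalizations noted in the excerpt. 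Controlled and inverse queries pose no new difficulty because every primitive above admits a controlled or inverted form.

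With this subroutine in hand, I would replace each of the $q$ oracle calls in the given query algorithm by the sample-based channel above, tuned to per-call diamond-norm error $1/\poly\rbra{q}$. A standard hybrid argument then bounds the total simulation error by $q \cdot 1/\poly\rbra{q} = o\rbra{1}$, so the simulated algorithm solves $\mathcal{P}$ with success probability $2/3 - o\rbra{1}$, which is above threshold after constant-overhead boosting. The total sample budget is $\widetilde{O}\rbra{q}$ copies per query times $q$ queries, i.e., $\widetilde{O}\rbra{q^2}$. The main technical obstacle is bookkeeping: tracking the polylog factors across LMR, the polynomial block-encoding extraction, and the hybrid argument so that all of them are absorbed into the $\widetilde{O}\rbra{\cdot}$. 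A secondary subtlety is that LMR produces a channel rather than a genuine unitary oracle; this is handled by feeding the inner query algorithm the Stinespring dilation of the simulated channel, or equivalently by verifying that the algorithm tolerates channel-valued oracles that are diamond-close to a block-encoding.
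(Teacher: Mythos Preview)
Your proposal is correct and follows essentially the same route as the paper: density-matrix exponentiation (LMR/KLL+17) composed with a QSVT/LCU step to build, from $\widetilde O(1/\epsilon)$ samples, a channel that is $\epsilon$-close in diamond norm to a block-encoding of $\rho/\alpha$, then substituting this for each of the $Q$ queries at per-call error $\Theta(1/Q)$ and invoking a hybrid argument to bound the total error; the paper packages the sample-to-block-encoding step by citing \cite{GP22} and handles the normalization by down-scaling to exactly $\rho/2$ rather than appealing to the $\mathsf{Q}_\diamond \simeq \mathsf{Q}_\diamond^\alpha$ equivalence, but these are cosmetic differences.
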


\paragraph{What is lifting?}
Typically, a lifting theorem in computational complexity (cf.\ \cite{GPW18,CKLM18,GPW20,CFK+19,ABDK21}) is a quantitative relation of the form $M_2 \rbra{L\rbra{f}} \geq q\rbra{M_1 \rbra{f}}$, where 
\begin{enumerate}
    \item $f \colon \mathcal{X} \rightharpoonup \cbra{0, 1}$ is a property of the input $x \in \mathcal{X}$. 
    \item $L$ maps the property $f$ to another property $f' = L\rbra{f} \colon \mathcal{X}' \rightharpoonup \cbra{0, 1}$. 
    \item $M_1$ and $M_2$ are two complexity measures of computational problems $f$ and $f' = L\rbra{f}$, respectively.
    \item $q$ is a scaling function. 
\end{enumerate}
The relations among $f, L, M_1, M_2, q$ are illustrated in \cref{fig:lifting}.
Note that in a lifting theorem, $M_1$ is usually a complexity measure in a weaker computational model than $M_2$. 
For example, it is easy to simulate query algorithms by communication protocols for XOR functions but the reverse is challenging (see \cite{GRT22,GSTW23}).

\begin{figure}[!htp]
    \centering

\tikzset{every picture/.style={line width=0.75pt}} 

\begin{tikzpicture}[x=0.75pt,y=0.75pt,yscale=-1,xscale=1]

\draw    (245,210) -- (318,210) ;
\draw [shift={(320,210)}, rotate = 180] [color={rgb, 255:red, 0; green, 0; blue, 0 }  ][line width=0.75]    (10.93,-3.29) .. controls (6.95,-1.4) and (3.31,-0.3) .. (0,0) .. controls (3.31,0.3) and (6.95,1.4) .. (10.93,3.29)   ;
\draw    (330,120) -- (330,198) ;
\draw [shift={(330,200)}, rotate = 270] [color={rgb, 255:red, 0; green, 0; blue, 0 }  ][line width=0.75]    (10.93,-3.29) .. controls (6.95,-1.4) and (3.31,-0.3) .. (0,0) .. controls (3.31,0.3) and (6.95,1.4) .. (10.93,3.29)   ;
\draw    (245,200) -- (318.59,121.41) ;
\draw [shift={(320,120)}, rotate = 135] [color={rgb, 255:red, 0; green, 0; blue, 0 }  ][line width=0.75]    (10.93,-3.29) .. controls (6.95,-1.4) and (3.31,-0.3) .. (0,0) .. controls (3.31,0.3) and (6.95,1.4) .. (10.93,3.29)   ;

\draw (229,202) node [anchor=north west][inner sep=0.75pt]   [align=left] {$F$};
\draw (324,202) node [anchor=north west][inner sep=0.75pt]   [align=left] {$\mathbb{N}$};
\draw (324,101) node [anchor=north west][inner sep=0.75pt]   [align=left] {$F'$};
\draw (264,191) node [anchor=north west][inner sep=0.75pt]   [align=left] {$q \circ M_1$};
\draw (269,141) node [anchor=north west][inner sep=0.75pt]   [align=left] {$L$};
\draw (335,151) node [anchor=north west][inner sep=0.75pt]   [align=left] {$M_2$};

\end{tikzpicture}
\caption{Commutative diagram for the form of lifting theorems, where $F$ is the set of all possible $f$'s and $F' = L\rbra{F} \coloneqq \{L\rbra{f}:f \in F\}$.}
\label{fig:lifting}
\end{figure}

For (classical) deterministic query-to-communication lifting (cf.\ \cite{GPW18}), $f$ is a Boolean function, $M_1 \coloneqq \mathsf{D}$ is the deterministic query complexity of $f$, $M_2 \coloneqq \mathsf{D}^{\mathsf{cc}}$ is the corresponding communication complexity, and $L$ is induced by a ``gadget'' $g$ such that $\rbra{L\rbra{f}}\rbra{x_1, x_2, \dots, x_n, y_1, y_2, \dots, y_n} \coloneqq f\rbra{g\rbra{x_1, y_1}, g\rbra{x_2, y_2}, \dots, g\rbra{x_n, y_n}}$. 
The query-to-communication lifting theorem in \cite{GPW18} states that there exists a mapping $L$ (induced by a ``gadget'' $g$) such that $\mathsf{D}^{\mathsf{cc}}\rbra{L\rbra{f}} = \Omega\rbra{\mathsf{D}\rbra{f}\log\rbra{n}}$, where $q\rbra{x} = \Omega\rbra{x \log\rbra{n}}$.

For our quantum sample-to-query lifting (\cref{thm:lifting}), $f \coloneqq \mathcal{P}$ is a property of quantum states, $M_1 \coloneqq \mathsf{S}$ is the quantum sample complexity of $\mathcal{P}$, and $M_2 \coloneqq \mathsf{Q}$ is the quantum query complexity of $\mathcal{P}^{\diamond}$, where $\mathcal{P}^{\diamond}$ is the unitary property testing problem corresponding to $\mathcal{P}$ with each of its yes/no instances being a unitary operator that block-encodes the quantum state $\rho$ to be tested (see \cref{def:p-diamond}). 
By taking $L$ to be the mapping that maps every quantum state testing problem $\mathcal{P}$ to the corresponding unitary property testing problem $\mathcal{P}^{\diamond}$, our quantum sample-to-query lifting theorem (\cref{thm:lifting}) states that $\mathsf{Q}_\diamond\rbra{\mathcal{P}} \coloneqq \mathsf{Q}\rbra{\mathcal{P}^{\diamond}} = \widetilde\Omega\rbra{\sqrt{\mathsf{S}\rbra{\mathcal{P}}}}$, where $q\rbra{x} = \widetilde\Omega\rbra{\sqrt{x}}$. 
Note that $\mathsf{Q}_{\diamond}\rbra{\mathcal{P}}$ is a complexity measure of the quantum state testing problem $\mathcal{P}$ while $\mathsf{Q}\rbra{\mathcal{P}^{\diamond}}$ is a complexity measure of the unitary property testing problem $\mathcal{P}^{\diamond}$.

\paragraph{What does it mean?}

Our lifting theorem reveals a novel connection between quantum state testing and unitary property testing.
It provides a new method for proving quantum query lower bounds: 
by reducing from quantum state testing, we can systematically derive quantum query lower bounds 
using corresponding sample lower bounds from quantum information theory. 
This method is in sharp contrast to the polynomial method \cite{BBC+01}, the adversary method \cite{Amb02}, and the compressed oracle method \cite{Zha19}, where they characterize the progress of the computation in quantum query algorithms by degrees of polynomials or values of potential functions.
Applications of \cref{thm:lifting} in proving lower bounds will be discussed in \cref{sec:app}.

\begin{proof} [Proof sketch of \cref{thm:lifting}]
    The basic idea comes from quantum principal component analysis \cite{LMR14}, which allows one to implement the unitary transformation $e^{-i\rho t}$ to precision $\delta$ using $\Theta\rbra{t^2/\delta}$ samples of $\rho$ \cite{KLL+17}. 
    Using quantum singular value transformation \cite{GSLW19}, one can implement a block-encoding of $\frac{\pi}{4}\rho$ to precision $\delta$ using $\widetilde O\rbra{1/\delta}$ samples of $\rho$ \cite{GP22}. 

    For convenience, we use $\mathcal{P}^\diamond$ to denote the unitary property testing problem corresponding to $\mathcal{P}$.
    We reduce $\mathcal{P}$ to $\mathcal{P}^\diamond$ as follows.
    Let $\mathcal{A}$ be a quantum tester for $\mathcal{P}^\diamond$ with query complexity $Q$, and let $\rho$ be a quantum state to be tested. 
    We first implement a block-encoding $U_\rho$ of $\frac{1}{2}\rho$ to precision $\Theta\rbra{1/Q}$ using $\widetilde O\rbra{Q}$ samples of $\rho$. 
    Then, we use the tester $\mathcal{A}$ to test the approximated $U_\rho$ (regardless of whether it satisfies the promise of $\mathcal{P}^\diamond$). Finally, we decide whether $\rho$ is a \textit{yes} or \textit{no} instance of $\mathcal{P}$ following the decision by $\mathcal{A}$. 
    With a constant number of repetitions of the above procedure, we can decide whether $\rho \in \mathcal{P}^{\textup{yes}}$ or $\rho \in \mathcal{P}^{\textup{no}}$ with success probability at least $2/3$.

    Now that $\mathcal{A}$ uses $Q$ queries to $U_\rho$ and each $U_\rho$ uses $\widetilde O\rbra{Q}$ samples of $\rho$, this approach uses $\widetilde O\rbra{Q^2}$ samples of $\rho$ in total.
    The conclusion follows by noting that $\widetilde O\rbra{Q^2} \geq \mathsf{S}\rbra{\mathcal{P}}$.
\end{proof}

\begin{remark} \label{remark:choi}
Our lifting theorem implies a Karp reduction from quantum state testing to unitary property testing. 
In contrast, only a reduction in the reverse direction is previously known via the Choi-Jamio{\l}kowski isomorphism \cite{Cho75,Jam72}.
For example, the equality of unitary operators can be reduced to the equality of their Choi states (cf. \cite[Section 5.1.3]{MdW16}). 
For an $N$-dimensional unitary operator $U$, the Choi state of $U$ is a pure state $\ket{U} = \frac{1}{\sqrt{N}} \sum_{j=0}^{N-1} \sum_{j=0}^{N-1} U_{jk} \ket{j} \ket{k}$, which can be prepared by applying $U \otimes I$ on the maximally entangled state $\frac{1}{\sqrt{N}} \sum_{j=0}^{N-1} \ket{j} \ket{j}$.
Then, two unitary operators $U$ and $V$ are equal (up to a global phase, say $U = e^{i\theta}V$) if and only if their Choi states $\ket{U}$ and $\ket{V}$ are indistinguishable (which can be simply tested through the SWAP test \cite{BCWdW01}).
According to the analysis in \cite{MdW16}, whether two unitary operators are equal or $\varepsilon$-far (in the distance $d\rbra{U, V} = \sqrt{1 - \abs{\tr\rbra{U^\dag V}}^2/N^2}$ induced by the normalized Hilbert-Schmidt inner product) can be determined with query complexity $O\rbra{1/\varepsilon^2}$.
\end{remark}

\paragraph{Quantum state discrimination.}
Discriminating quantum states is a basic problem in quantum information theory and quantum property testing (cf. \cite{Che00,BC09}). 
As an illustrative example, we show a new query lower bound for quantum state discrimination by applying \cref{thm:lifting}. 
Specifically, let $\rho$ and $\sigma$ be two quantum states. We use $\textsc{Dis}_{\rho, \sigma}$ to denote the promise problem for quantum state discrimination, with only one \textit{yes} instance $\rho$ and one \textit{no} instance $\sigma$. 
That is, the task is to determine whether the tested quantum state is $\rho$ or $\sigma$.
By the Helstrom-Holevo bound \cite{Hel67,Hol73}, we know that $\mathsf{S}\rbra{\textsc{Dis}_{\rho, \sigma}} = \Omega\rbra{1/\gamma}$ (see \cref{lemma:dis-sample}), where $\gamma = 1-\operatorname{F}\rbra{\rho, \sigma}$ is the infidelity. 
By simply invoking \cref{thm:lifting}, we have:
\begin{corollary} [Query lower bound for quantum state discrimination]
\label{corollary:qsd-intro} 
    Let $\rho$ and $\sigma$ be two quantum states. Then, $\mathsf{Q}_\diamond\rbra{\textsc{Dis}_{\rho, \sigma}} = \widetilde \Omega\rbra{1/\sqrt{\gamma}}$, where $\gamma = 1-\operatorname{F}\rbra{\rho, \sigma}$ is the infidelity. 
\end{corollary}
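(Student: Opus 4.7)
The plan is to prove \cref{corollary:qsd-intro} essentially by composing two black boxes: the sample lower bound for quantum state discrimination (which the paper isolates as \cref{lemma:dis-sample}) and the lifting theorem \cref{thm:lifting}. Concretely, I would first establish $\mathsf{S}\rbra{\textsc{Dis}_{\rho,\sigma}} = \Omega\rbra{1/\gamma}$, and then mechanically substitute this into the bound $\mathsf{Q}_\diamond\rbra{\mathcal{P}} = \widetilde\Omega\rbra{\sqrt{\mathsf{S}\rbra{\mathcal{P}}}}$ guaranteed by \cref{thm:lifting}, obtaining $\mathsf{Q}_\diamond\rbra{\textsc{Dis}_{\rho,\sigma}} = \widetilde\Omega\rbra{1/\sqrt{\gamma}}$.

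For the sample lower bound, the plan is a standard information-theoretic argument. Any tester that uses $n$ copies must distinguish $\rho^{\otimes n}$ from $\sigma^{\otimes n}$, and by the Helstrom--Holevo bound the optimal success probability equals $\frac{1}{2} + \frac{1}{2} T\rbra{\rho^{\otimes n}, \sigma^{\otimes n}}$, where $T$ denotes trace distance. For success probability at least $2/3$ we need $T\rbra{\rho^{\otimes n}, \sigma^{\otimes n}} \geq 1/3$. I would then pass from trace distance to fidelity via the Fuchs--van de Graaf inequality $T \leq \sqrt{1 - \operatorname{F}^2}$, and use multiplicativity of fidelity under tensor products to get $\operatorname{F}\rbra{\rho^{\otimes n}, \sigma^{\otimes n}} = \operatorname{F}\rbra{\rho,\sigma}^n = \rbra{1-\gamma}^n$. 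The condition $\rbra{1-\gamma}^n \leq \sqrt{1-1/9}$ then forces $n = \Omega\rbra{1/\gamma}$ after taking logarithms and using $-\ln\rbra{1-\gamma} \leq 2\gamma$ for small $\gamma$ (and handling large $\gamma$ by the trivial bound $n \geq 1$).

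Once the sample bound is in hand, \cref{thm:lifting} finishes the argument in one line: $\mathsf{Q}_\diamond\rbra{\textsc{Dis}_{\rho,\sigma}} = \widetilde\Omega\rbra*{\sqrt{\mathsf{S}\rbra{\textsc{Dis}_{\rho,\sigma}}}} = \widetilde\Omega\rbra*{1/\sqrt{\gamma}}$. There is genuinely no obstacle here beyond bookkeeping, since all the work has already been absorbed into the lifting theorem; the only mild care needed is to confirm that \cref{thm:lifting} applies to two-element promise problems (which is immediate from its statement, as $\mathcal{P}^{\textup{yes}} = \cbra{\rho}$ and $\mathcal{P}^{\textup{no}} = \cbra{\sigma}$ are disjoint sets of quantum states whenever $\gamma > 0$) and that the success-probability conversion in the sample argument uses the same $2/3$ threshold as in the definition of $\mathsf{S}\rbra{\mathcal{P}}$. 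The main pedagogical point of the proof is not any technical difficulty but rather to demonstrate how effortlessly an information-theoretic sample bound transfers to a query bound through \cref{thm:lifting}, foreshadowing the more substantial applications to Gibbs sampling, entanglement entropy, and spectrum testing that follow.
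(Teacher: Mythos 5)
Your proposal is correct and matches the paper's own route: the paper proves $\mathsf{S}\rbra{\textsc{Dis}_{\rho,\sigma}} = \Omega\rbra{1/\gamma}$ in \cref{lemma:dis-sample} via the Helstrom--Holevo bound (\cref{thm:HH-measurement}), the Fuchs--van de Graaf inequality (\cref{thm:fidelity-vs-td}), and multiplicativity of fidelity, and then obtains \cref{corollary:qsd-intro} by direct substitution into \cref{thm:lifting}. The only cosmetic difference is that the paper closes the calculation with the Bernoulli inequality $\rbra{1-\gamma}^{2S} \geq 1 - 2S\gamma$ rather than taking logarithms, which avoids your small-versus-large $\gamma$ case split.
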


The new lower bound given in \cref{corollary:qsd-intro} turns out to be useful for proving a number of quantum lower bounds (see \cref{sec:app}).

\paragraph{Optimal separation between quantum sample and query complexities.}
\cref{thm:lifting} can be alternatively expressed as $\mathsf{S}\rbra{\mathcal{P}} = \widetilde O\rbra{\mathsf{Q}_\diamond\rbra{\mathcal{P}}^2}$, which is a quadratic relation between $\mathsf{S}\rbra{\mathcal{P}}$ and $\mathsf{Q}_\diamond\rbra{\mathcal{P}}$ for any quantum state testing problem $\mathcal{P}$.
We show that this quadratic relation is the best possible up to logarithmic factors by giving an optimal separation based on quantum state discrimination. 

\begin{theorem} [Lifting is tight, \cref{thm:tightness}]
\label{thm:lifting-is-tight}    
$\mathsf{S}\rbra{\textsc{Dis}_{\rho_+, \rho_-}} = \Omega\rbra{\mathsf{Q}_\diamond\rbra{\textsc{Dis}_{\rho_+, \rho_-}}^2}$, 
where $\rho_{\pm} = I/2 + 8 \varepsilon Z$ is a one-qubit quantum state and $Z$ is the Pauli-Z operator.
\end{theorem}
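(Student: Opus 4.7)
The plan is to exhibit the pair of one-qubit states $\rho_\pm = (I \pm \varepsilon Z)/2$ as realizing the claimed quadratic separation. I will bound $\mathsf{S}\rbra{\textsc{Dis}_{\rho_+,\rho_-}}$ and $\mathsf{Q}_\diamond\rbra{\textsc{Dis}_{\rho_+,\rho_-}}$ separately as functions of $\varepsilon$, aiming for $\Omega(1/\varepsilon^2)$ and $O(1/\varepsilon)$ respectively, which together give the desired inequality.

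First I would pin down the sample complexity lower bound. Since $\rho_+$ and $\rho_-$ commute and are simultaneously diagonal in the computational basis with eigenvalues $(1\pm\varepsilon)/2$, a direct calculation gives $\operatorname{F}(\rho_+,\rho_-) = \sqrt{1-\varepsilon^2}$, hence infidelity $\gamma = 1 - \sqrt{1-\varepsilon^2} = \Theta(\varepsilon^2)$ for small $\varepsilon$. The Helstrom-Holevo-based sample bound $\mathsf{S}\rbra{\textsc{Dis}_{\rho,\sigma}} = \Omega(1/\gamma)$ recalled in the excerpt then yields $\mathsf{S}\rbra{\textsc{Dis}_{\rho_+,\rho_-}} = \Omega(1/\varepsilon^2)$.

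Next I would design an explicit algorithm witnessing $\mathsf{Q}_\diamond\rbra{\textsc{Dis}_{\rho_+,\rho_-}} = O(1/\varepsilon)$. Let $U$ be any block-encoding of $\rho_\pm/2 = (I \pm \varepsilon Z)/4$, acting on an ancilla register (initialized to $\ket{0^a}$) together with the one-qubit system register. By the definition of block-encoding, $\bra{0^a, 0} U \ket{0^a, 0} = (1 \pm \varepsilon)/4$, so the amplitude of the flag state $\ket{0^a}\ket{0}$ in $U\ket{0^a}\ket{0}$ is exactly $(1\pm\varepsilon)/4$. The two candidate amplitudes differ by $\varepsilon/2$, so running quantum amplitude estimation to additive precision $\Theta(\varepsilon)$ against this flagged subspace distinguishes the two cases with success probability at least $2/3$ using $O(1/\varepsilon)$ queries to $U$ and $U^\dagger$. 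Combining with the sample bound yields the claim.

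The only delicate point I anticipate is ensuring that the query upper bound holds for an \emph{arbitrary} block-encoding $U$ handed to the tester, not merely some canonical one that we might construct ourselves. This will not actually be an obstacle: the matrix element $\bra{0^a, 0}U\ket{0^a, 0}$ is fully determined by the block-encoding property, independently of how $U$ acts on the orthogonal complement of the encoded subspace, so the amplitude estimation subroutine above sees exactly the $\varepsilon$-dependent signal regardless of which valid block-encoding is supplied as the oracle.
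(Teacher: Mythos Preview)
Your proposal is correct and follows essentially the same approach as the paper: bound the sample complexity below via the Helstrom--Holevo/infidelity argument to get $\Omega(1/\varepsilon^2)$, and bound the query complexity above by running amplitude estimation on the matrix element $\bra{0^a,0}U\ket{0^a,0}=(1\pm\varepsilon)/4$ of an arbitrary block-encoding to get $O(1/\varepsilon)$. The paper's version differs only in cosmetic constants (it uses $\rho_\pm = (I \mp 8\varepsilon Z)/2$) and in explicitly writing out the decision threshold on the estimated probability $p=((1\pm\varepsilon)/4)^2$; your observation that the relevant matrix element is determined by the block-encoding property alone, independent of the rest of $U$, is exactly the point the paper relies on.
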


\begin{proof} [Proof sketch]
    Note that the infidelity between $\rho_+$ and $\rho_-$ is $\gamma = O\rbra{\varepsilon^2}$. 
    Then by \cref{lemma:dis-sample}, we have $\mathsf{S}\rbra{\textsc{Dis}_{\rho_+, \rho_-}} = \Omega\rbra{1/\gamma} = \Omega\rbra{1/\varepsilon^2}$.
    To complete the proof, we only have to show that $\mathsf{Q}_\diamond\rbra{\textsc{Dis}_{\rho_+, \rho_-}} = O\rbra{1/\varepsilon}$.
    This is obtained by using quantum amplitude estimation \cite{BHMT02} and by noting that $\bra{0} U_{\rho_\pm} \ket{0} = \frac{1}{4} \pm \Theta\rbra{\varepsilon}$ for every block-encoding $U_{\rho_\pm}$ of $\frac{1}{2}\rho_\pm$.
\end{proof}

The proof of \cref{thm:lifting-is-tight} actually provides a natural reduction from quantum state discrimination to quantum amplitude estimation, which can also be used to show the optimality of quantum amplitude estimation later in \cref{sec:app}.

\begin{remark} \label{remark:qsq-vs-dq}
    The optimal $\mathsf{S}$-$\mathsf{Q}_\diamond$ relation for quantum state testing problems revealed by \cref{thm:lifting} and \cref{thm:lifting-is-tight} is analogous to the optimal $\mathsf{D}$-$\mathsf{Q}$ relation for Boolean functions due to \cite{ABB+17,ABDK+21}, as shown in \cref{fig:cmp}.
    The similarity between the $\mathsf{D}$-$\mathsf{Q}$ relation for Boolean functions and the $\mathsf{S}$-$\mathsf{Q}_\diamond$ relation for quantum state testing problems is explained as follows.
    \begin{itemize}
        \item $\mathsf{D}$-$\mathsf{Q}$ relation for Boolean functions.
        For a Boolean function $f$, the deterministic query complexity $\mathsf{D}\rbra{f}$ and the quantum query complexity $\mathsf{Q}\rbra{f}$ are different complexity measures of computing the function value $f\rbra{x_1, x_2, \dots, x_n} \in \cbra{0, 1}$ given the input variables $x_1, x_2, \dots, x_n$. 
        \item $\mathsf{S}$-$\mathsf{Q}_\diamond$ relation for quantum state testing problems. 
        For a quantum state testing problem $\mathcal{P}$, the quantum sample complexity $\mathsf{S}\rbra{\mathcal{P}}$ and the quantum query complexity $\mathsf{Q}_\diamond\rbra{\mathcal{P}}$ are different complexity measures of computing the function value $\mathcal{P}\rbra{\rho} \in \cbra{0, 1}$ given the input quantum state $\rho$. 
    \end{itemize}
    Given the above, it can be seen that \cref{fig:cmp} shows a structural similarity in the correspondence $f \leftrightarrow \mathcal{P}$, $\mathsf{D} \leftrightarrow \mathsf{S}$, and $\mathsf{Q} \leftrightarrow \mathsf{Q}_\diamond$. 

    It is worth mentioning that the relationships between randomized query complexity and quantum query complexity have also been investigated in the literature \cite{BBC+01,Aar08,Mid05,Amb16,ABB+17}.
    For example, in \cite{ABB+17}, they showed that $\mathsf{R}_0\rbra{f} = \widetilde\Omega\rbra{\mathsf{Q}_{\textup{E}}\rbra{f}^2}$ for certain Boolean function $f$, while $\mathsf{R}_0\rbra{f} = O\rbra{\mathsf{Q}_{\textup{E}}\rbra{f}^3}$ in general \cite{Mid04}, where $\mathsf{Q}_{\textup{E}}$ stands for the exact quantum query complexity, and $\mathsf{R}_0$ stands for the zero-error randomized query complexity.
\end{remark}

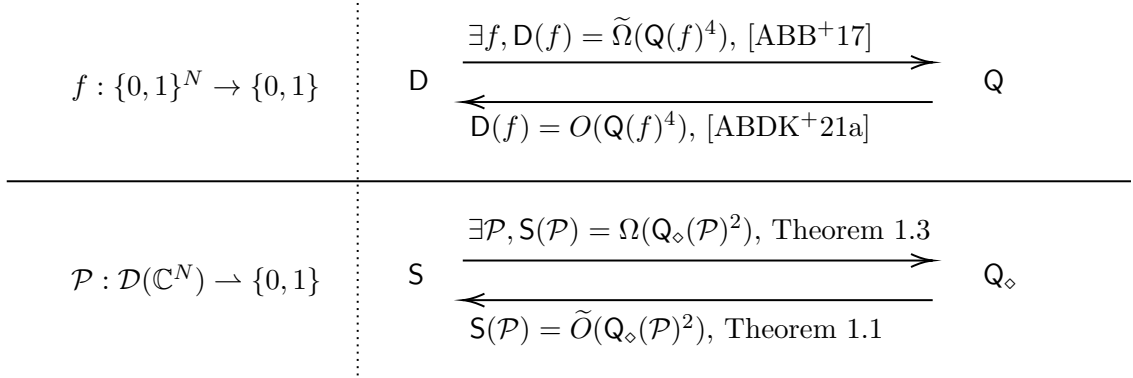
\begin{figure}[!htp]
    \centering

\tikzset{every picture/.style={line width=0.75pt}} 

\adjustbox{max width=\textwidth}{
\begin{tikzpicture}[x=0.75pt,y=0.75pt,yscale=-1,xscale=1]

\draw    (63,140) -- (630,140) ;
\draw    (530,100) -- (295,100) ;
\draw [shift={(293,100)}, rotate = 360] [color={rgb, 255:red, 0; green, 0; blue, 0 }  ][line width=0.75]    (10.93,-3.29) .. controls (6.95,-1.4) and (3.31,-0.3) .. (0,0) .. controls (3.31,0.3) and (6.95,1.4) .. (10.93,3.29)   ;
\draw    (293,80) -- (528,80) ;
\draw [shift={(530,80)}, rotate = 180] [color={rgb, 255:red, 0; green, 0; blue, 0 }  ][line width=0.75]    (10.93,-3.29) .. controls (6.95,-1.4) and (3.31,-0.3) .. (0,0) .. controls (3.31,0.3) and (6.95,1.4) .. (10.93,3.29)   ;
\draw    (530,200) -- (295,200) ;
\draw [shift={(293,200)}, rotate = 360] [color={rgb, 255:red, 0; green, 0; blue, 0 }  ][line width=0.75]    (10.93,-3.29) .. controls (6.95,-1.4) and (3.31,-0.3) .. (0,0) .. controls (3.31,0.3) and (6.95,1.4) .. (10.93,3.29)   ;
\draw    (293,180) -- (528,180) ;
\draw [shift={(530,180)}, rotate = 180] [color={rgb, 255:red, 0; green, 0; blue, 0 }  ][line width=0.75]    (10.93,-3.29) .. controls (6.95,-1.4) and (3.31,-0.3) .. (0,0) .. controls (3.31,0.3) and (6.95,1.4) .. (10.93,3.29)   ;
\draw[dotted]    (240,50) -- (240,240) ;

\draw (94,180.4) node [anchor=north west][inner sep=0.75pt]    {$\mathcal{P} :\mathcal{D}(\mathbb{C}^{N})\rightharpoonup \{0,1\}$};
\draw (94,82.4) node [anchor=north west][inner sep=0.75pt]    {$f:\{0,1\}^{N}\rightarrow \{0,1\}$};
\draw (264,82.4) node [anchor=north west][inner sep=0.75pt]    {$\mathsf{D}$};
\draw (554,82.4) node [anchor=north west][inner sep=0.75pt]    {$\mathsf{Q}$};
\draw (334,55.4) node [anchor=north west][inner sep=0.75pt]    {$\exists f, \mathsf{D}(f) =\widetilde\Omega ( \mathsf{Q}( f)^{4})$, \cite{ABB+17}};
\draw (345,103.4) node [anchor=north west][inner sep=0.75pt]    {$\mathsf{D}( f) =O( \mathsf{Q}( f)^{4})$, \cite{ABDK+21}};
\draw (264,180.4) node [anchor=north west][inner sep=0.75pt]    {$\mathsf{S}$};
\draw (554,180.4) node [anchor=north west][inner sep=0.75pt]    {$\mathsf{Q}_\diamond$};
\draw (299,155.4) node [anchor=north west][inner sep=0.75pt]    {$\exists \mathcal{P}, \mathsf{S}(\mathcal{P}) =\Omega ( \mathsf{Q}_\diamond(\mathcal{P})^{2})$, \cref{thm:lifting-is-tight}};
\draw (315,203.4) node [anchor=north west][inner sep=0.75pt]    {$\mathsf{S}(\mathcal{P}) =\widetilde O( \mathsf{Q}_\diamond(\mathcal{P})^{2})$, \cref{thm:lifting}};

\end{tikzpicture}
}

\caption{$\mathsf{S}$-$\mathsf{Q}_\diamond$ relation for quantum state testing vs.\ $\mathsf{D}$-$\mathsf{Q}$ relation for Boolean functions.
Here, a quantum state testing problem $\mathcal{P} = \rbra{\mathcal{P}^{\textup{yes}}, \mathcal{P}^{\textup{no}}}$
is expressed as a partial function $\mathcal{P} \colon \mathcal{D} \rbra{\mathbb{C}^{N}}\rightharpoonup\{0,1\}$,
where $0$ and $1$ stand for \textit{yes} and \textit{no} instances, respectively.
}
\label{fig:cmp}
\end{figure}

\subsection{Applications: quantum query lower bounds}
\label{sec:app}

Now we show how our lifting theorem leads to unified proofs for quantum query lower bounds for several problems of general interest.
The proof for each bound is essentially a reduction from a corresponding quantum state testing problem, e.g., quantum state discrimination (\cref{corollary:qsd-intro}).
Such reduction also reveals an intrinsic relationship between them.
In \cref{fig:diagram}, a diagram on how we obtain the lower bounds and the relationships between them is presented. 

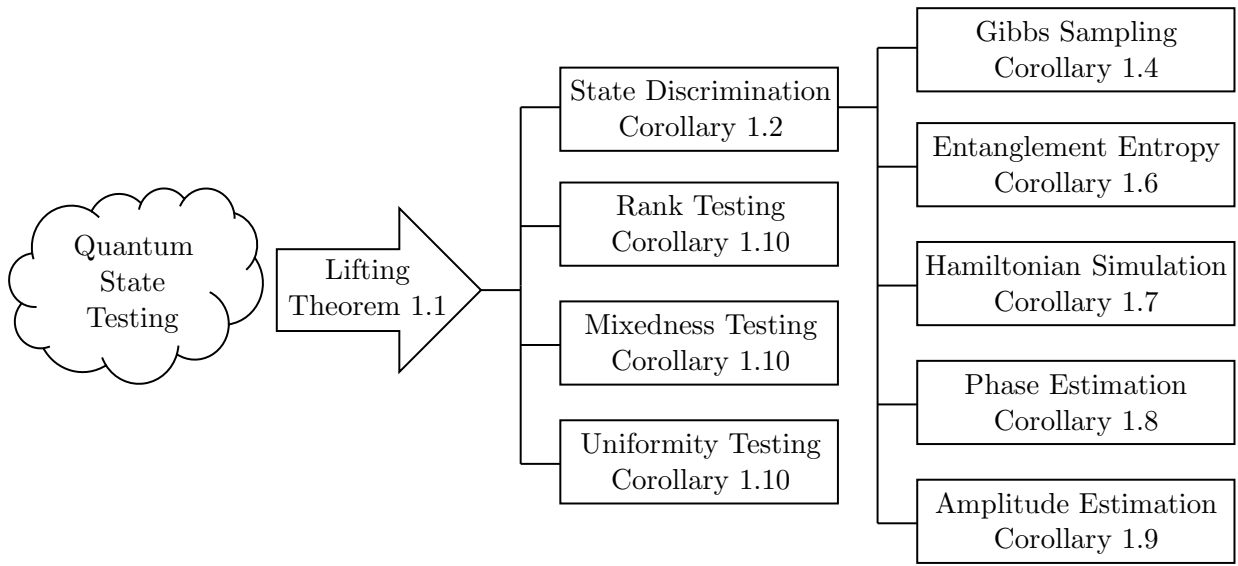
\begin{figure}[!htp]
    \centering

\tikzset{every picture/.style={line width=0.75pt}} 

\adjustbox{max width=\textwidth}{
\begin{tikzpicture}[x=0.75pt,y=0.75pt,yscale=-1,xscale=1]

\draw   (137,150.65) -- (198.8,150.65) -- (198.8,130) -- (240,171.3) -- (198.8,212.6) -- (198.8,191.95) -- (137,191.95) -- cycle ;
\draw   (13.65,152.46) .. controls (12.62,144.51) and (16,136.64) .. (22.36,132.19) .. controls (28.73,127.74) and (36.95,127.49) .. (43.55,131.55) .. controls (45.89,126.92) and (50.16,123.73) .. (55.09,122.94) .. controls (60.01,122.14) and (65.01,123.84) .. (68.56,127.51) .. controls (70.55,123.32) and (74.46,120.5) .. (78.9,120.06) .. controls (83.34,119.62) and (87.68,121.62) .. (90.39,125.34) .. controls (93.98,120.9) and (99.7,119.03) .. (105.08,120.54) .. controls (110.45,122.05) and (114.51,126.67) .. (115.49,132.4) .. controls (119.9,133.66) and (123.57,136.87) .. (125.56,141.19) .. controls (127.54,145.52) and (127.65,150.54) .. (125.85,154.95) .. controls (130.19,160.88) and (131.21,168.78) .. (128.52,175.7) .. controls (125.83,182.62) and (119.84,187.53) .. (112.79,188.59) .. controls (112.74,195.08) and (109.35,201.05) .. (103.91,204.17) .. controls (98.48,207.3) and (91.86,207.11) .. (86.61,203.67) .. controls (84.37,211.45) and (78.07,217.17) .. (70.42,218.37) .. controls (62.78,219.57) and (55.17,216.02) .. (50.87,209.26) .. controls (45.61,212.59) and (39.3,213.55) .. (33.35,211.92) .. controls (27.41,210.29) and (22.34,206.21) .. (19.28,200.6) .. controls (13.91,201.26) and (8.71,198.34) .. (6.26,193.28) .. controls (3.82,188.22) and (4.66,182.1) .. (8.36,177.97) .. controls (3.56,175) and (1.11,169.12) .. (2.29,163.39) .. controls (3.47,157.66) and (8.01,153.37) .. (13.54,152.77) ; \draw   (8.36,177.96) .. controls (10.63,179.37) and (13.24,180) .. (15.86,179.78)(19.29,200.6) .. controls (20.41,200.46) and (21.51,200.17) .. (22.56,199.73)(50.87,209.26) .. controls (50.08,208.01) and (49.42,206.68) .. (48.9,205.29)(86.61,203.67) .. controls (87.01,202.25) and (87.28,200.79) .. (87.4,199.31)(112.79,188.59) .. controls (112.84,181.67) and (109.1,175.34) .. (103.17,172.31)(125.85,154.95) .. controls (124.89,157.31) and (123.42,159.39) .. (121.57,161.05)(115.49,132.4) .. controls (115.66,133.35) and (115.73,134.32) .. (115.72,135.28)(90.39,125.34) .. controls (89.49,126.45) and (88.75,127.69) .. (88.19,129.02)(68.56,127.51) .. controls (68.08,128.52) and (67.72,129.58) .. (67.49,130.68)(43.55,131.55) .. controls (44.94,132.41) and (46.24,133.44) .. (47.4,134.62)(13.65,152.46) .. controls (13.79,153.56) and (14.01,154.64) .. (14.32,155.7) ;
\draw    (260,169) -- (260,259) ;
\draw    (260,79) -- (260,169) ;
\draw    (260,259) -- (280,259) ;
\draw    (260,79) -- (280,79) ;
\draw    (260,139) -- (280,139) ;
\draw    (260,199) -- (280,199) ;
\draw    (420,79) -- (440,79) ;
\draw    (440,49) -- (440,289) ;
\draw    (440,49) -- (460,49) ;
\draw    (440,109) -- (460,109) ;
\draw    (440,169) -- (460,169) ;
\draw    (440,229) -- (460,229) ;
\draw    (440,289) -- (460,289) ;
\draw   (280,59) -- (420,59) -- (420,101) -- (280,101) -- cycle ;
\draw   (280,117) -- (420,117) -- (420,159) -- (280,159) -- cycle ;
\draw   (280,177) -- (420,177) -- (420,219) -- (280,219) -- cycle ;
\draw   (280,237) -- (420,237) -- (420,279) -- (280,279) -- cycle ;
\draw   (460,87) -- (620,87) -- (620,129) -- (460,129) -- cycle ;
\draw   (460,147) -- (620,147) -- (620,189) -- (460,189) -- cycle ;
\draw   (460,207) -- (620,207) -- (620,249) -- (460,249) -- cycle ;
\draw   (460,267) -- (620,267) -- (620,309) -- (460,309) -- cycle ;
\draw   (460,29) -- (620,29) -- (620,71) -- (460,71) -- cycle ;
\draw    (240,171.3) -- (260,171.3) ;

\draw (33,134) node [anchor=north west][inner sep=0.75pt]   [align=left] {\begin{minipage}[lt]{44.68pt}\setlength\topsep{0pt}
\begin{center}
Quantum\\State\\Testing
\end{center}

\end{minipage}};
\draw (141,154) node [anchor=north west][inner sep=0.75pt]   [align=left] {\begin{minipage}[lt]{60.55pt}\setlength\topsep{0pt}
\begin{center}
Lifting\\ \cref{thm:lifting}
\end{center}

\end{minipage}};
\draw (276,64) node [anchor=north west][inner sep=0.75pt]   [align=left] {\begin{minipage}[lt]{110pt}\setlength\topsep{0pt}
\begin{center}
State Discrimination\\ \cref{corollary:qsd-intro}
\end{center}

\end{minipage}};
\draw (276,122) node [anchor=north west][inner sep=0.75pt]   [align=left] {\begin{minipage}[lt]{110pt}\setlength\topsep{0pt}
\begin{center}
Rank Testing\\ \cref{corollary:spectrum-testing}
\end{center}

\end{minipage}};
\draw (276,182) node [anchor=north west][inner sep=0.75pt]   [align=left] {\begin{minipage}[lt]{110pt}\setlength\topsep{0pt}
\begin{center}
Mixedness Testing\\ \cref{corollary:spectrum-testing}
\end{center}

\end{minipage}};
\draw (276,242) node [anchor=north west][inner sep=0.75pt]   [align=left] {\begin{minipage}[lt]{110pt}\setlength\topsep{0pt}
\begin{center}
Uniformity Testing\\ \cref{corollary:spectrum-testing}
\end{center}

\end{minipage}};
\draw (462,34) node [anchor=north west][inner sep=0.75pt]   [align=left] {\begin{minipage}[lt]{115pt}\setlength\topsep{0pt}
\begin{center}
Gibbs Sampling\\ \cref{thm:gibbs-query-lower-bound}
\end{center}

\end{minipage}};
\draw (462,92) node [anchor=north west][inner sep=0.75pt]   [align=left] {\begin{minipage}[lt]{115pt}\setlength\topsep{0pt}
\begin{center}
Entanglement Entropy\\ \cref{corollary:entro-intro}
\end{center}

\end{minipage}};
\draw (462,152) node [anchor=north west][inner sep=0.75pt]   [align=left] {\begin{minipage}[lt]{115pt}\setlength\topsep{0pt}
\begin{center}
Hamiltonian Simulation\\ \cref{corollary:hamsim-intro}
\end{center}

\end{minipage}};
\draw (462,212) node [anchor=north west][inner sep=0.75pt]   [align=left] {\begin{minipage}[lt]{115pt}\setlength\topsep{0pt}
\begin{center}
Phase Estimation\\ \cref{corollary:phe-intro}
\end{center}

\end{minipage}};
\draw (462,272) node [anchor=north west][inner sep=0.75pt]   [align=left] {\begin{minipage}[lt]{115pt}\setlength\topsep{0pt}
\begin{center}
Amplitude Estimation\\ \cref{corollary:ampl-intro}
\end{center}

\end{minipage}};

\end{tikzpicture}
}

\caption{Diagram of reductions and relationships amongst our results.}
\label{fig:diagram}
\end{figure}

\vspace{5pt}
\textbf{Quantum Gibbs sampling} is a task to prepare the Gibbs (thermal) state $e^{-\beta H}/\mathcal{Z}$ of a given Hamiltonian $H$ at inverse temperature $\beta \geq 0$, where $\mathcal{Z} = \tr\rbra{e^{-\beta H}}$ is the normalization factor. 
This task is of great interest in physics, with several Gibbs samplers proposed in \cite{PW09,TOV+11,YAG12,KB16,CS17,BK19}. 
Recently, it was found that Gibbs sampling is useful in solving semidefinite programs \cite{BS17,vAGGdW20,BKL+19,vAG19}. 
The current best quantum Gibbs sampler by \cite{GSLW19} uses $O\rbra{\beta}$ queries to the block-encoding of $H$.

We give a matching quantum query lower bound for quantum Gibbs sampling as follows.

\begin{corollary} [Optimality of quantum Gibbs sampling, \cref{thm:gibbs}]
\label{thm:gibbs-query-lower-bound}
    Every quantum query algorithm for quantum Gibbs sampling for Hamiltonian $H$ at inverse temperature $\beta$, given an oracle that is a block-encoding of $H$, has query complexity $\widetilde \Omega\rbra{\beta}$.
\end{corollary}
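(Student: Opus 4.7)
The plan is to reduce a quantum state discrimination task to quantum Gibbs sampling via a carefully chosen family of one-qubit Hamiltonians, and then to invoke \cref{corollary:qsd-intro} on a companion pair of densities related by a short LCU reduction. The key instance is $H_\pm = \pm \theta Z/\beta$ for a small absolute constant $\theta > 0$; since $\|H_\pm\| = \theta/\beta \leq 1$, this admits a block-encoding with subnormalization $\alpha = 1$, in which regime \cite{GSLW19} achieves $\widetilde O(\beta)$ queries. A direct computation yields Gibbs states $\rho_\pm^\diamond := e^{-\beta H_\pm}/\tr\rbra{e^{-\beta H_\pm}} = \rbra{I \mp \tanh(\theta) Z}/2$, which are $\beta$-independent and have constant infidelity $1 - \operatorname{sech}(\theta) = \Theta(1)$.

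First, I would observe that any Gibbs sampler with query complexity $q\rbra{\beta}$ yields an algorithm that distinguishes $H_+$ from $H_-$ given a block-encoding of $H_\pm$ in $O\rbra{q\rbra{\beta}}$ queries: run the Gibbs sampler $O(1)$ times to obtain $O(1)$ samples of $\rho_\pm^\diamond$, then apply the Helstrom-Holevo test, which succeeds with probability $\geq 2/3$ because the infidelity is a constant. Second, I would lower bound the same discrimination task via the lifting theorem. Let $\tilde\rho_\pm = \rbra{I + H_\pm}/2 = \rbra{I \pm \rbra{\theta/\beta} Z}/2$. These are legitimate one-qubit densities whose infidelity is $1 - \sqrt{1 - \theta^2/\beta^2} = \Theta\rbra{1/\beta^2}$, so by \cref{corollary:qsd-intro} we have $\mathsf{Q}_\diamond\rbra{\textsc{Dis}_{\tilde\rho_+, \tilde\rho_-}} = \widetilde\Omega\rbra{\beta}$.

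The bridge between the two discrimination tasks is a short LCU argument: given a block-encoding of $H_\pm$ with $\alpha = 1$, a one-step LCU with the trivial block-encoding of $I$ produces a block-encoding of $\rbra{I + H_\pm}/2 = \tilde\rho_\pm$, and a further LCU with the block-encoding of $0$ rescales to a block-encoding of $\tilde\rho_\pm/2$, using only $O(1)$ queries to the block-encoding of $H_\pm$. This transfers the $\widetilde\Omega\rbra{\beta}$ lower bound from \cref{corollary:qsd-intro} to the task of distinguishing $H_+$ from $H_-$ via the block-encoding of $H_\pm$, and combined with the upper bound $O\rbra{q\rbra{\beta}}$ above gives $q\rbra{\beta} = \widetilde\Omega\rbra{\beta}$, as desired.

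The main obstacle I anticipate is the careful bookkeeping of block-encoding subnormalizations across three conventions (the Gibbs sampler input, the LCU output, and the hypothesis of \cref{corollary:qsd-intro}). The equivalence $\mathsf{Q}_\diamond = \widetilde\Theta\rbra{\mathsf{Q}_\diamond^{\alpha}}$ for constant $\alpha$ noted in the paper should ensure that no $\beta$-dependent factor is lost in these conversions. A secondary concern is the tunable precision of the Gibbs sampler: since the target Gibbs states have constant infidelity, fixing the sampler's trace-distance error to a sufficiently small constant is enough for the discrimination step, so this does not affect the asymptotic bound.
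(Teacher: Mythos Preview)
Your approach is essentially the paper's --- reduce a state-discrimination task with infidelity $\Theta(1/\beta^2)$ to Gibbs sampling and invoke \cref{corollary:qsd-intro} --- but the bridging LCU step points the wrong way, and the detour through a separate Hamiltonian $H_\pm$ is unnecessary.

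On the direction error: you describe building a block-encoding of $\tilde\rho_\pm/2$ from a block-encoding of $H_\pm$. That shows any algorithm for $\textsc{Dis}_{\tilde\rho_+,\tilde\rho_-}^\diamond$ yields one for distinguishing $H_\pm$, i.e., it gives an \emph{upper} bound on the $H_\pm$ task in terms of the $\tilde\rho_\pm$ task. To transfer the $\widetilde\Omega(\beta)$ lower bound from \cref{corollary:qsd-intro} to the $H_\pm$ task you need the reverse: from a block-encoding of $\tilde\rho_\pm/2$, construct one of $H_\pm$ (or of any Hamiltonian whose Gibbs states at inverse temperature $\beta$ remain $\Theta(1)$-distinguishable). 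The reverse LCU $H_\pm = 4\cdot(\tilde\rho_\pm/2) - I$ does work up to a constant subnormalization, so the gap is easily fixable, but as written the argument does not establish the claim.

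The paper sidesteps this entirely by noting that $\tfrac{1}{2}\rho_\pm$ (with $\rho_\pm = \tfrac{1}{2}I \mp \tfrac{2}{\beta}Z$, essentially your $\tilde\rho_\pm$) is already a Hermitian operator with norm at most $1$, hence a valid Hamiltonian. The block-encoding of $\tfrac{1}{2}\rho_\pm$ handed to the tester by the definition of $\mathsf{Q}_\diamond$ is therefore \emph{already} a valid input to the Gibbs sampler, and the resulting Gibbs state $e^{-\frac{1}{2}\beta\rho_\pm}/\tr\rbra{e^{-\frac{1}{2}\beta\rho_\pm}} \propto e^{\pm 1}\ket{0}\bra{0} + e^{\mp 1}\ket{1}\bra{1}$ is distinguished by a single computational-basis measurement. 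No LCU bridge is needed, and the subnormalization bookkeeping you anticipated as the main obstacle simply does not arise.
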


An $\Omega\rbra{\sqrt{N}}$ lower bound for quantum Gibbs sampling for $N$-dimensional Hamiltonians can be derived by a simple reduction from the search problem (see \cref{sec:gibbs-dimension}). 
Therefore, combining it with \cref{thm:gibbs-query-lower-bound}, we obtain a query lower bound $\widetilde\Omega\rbra{\beta+\sqrt{N}}$ for quantum Gibbs sampling, which is almost tight (up to logarithmic factors) with respect to all parameters, compared to the upper bound $\widetilde O\rbra{\beta \sqrt{N}}$ given in \cite[Theorem 35]{GSLW19}.

\begin{proof} [Proof sketch of \cref{thm:gibbs-query-lower-bound}]
    We reduce $\textsc{Dis}_{\rho_+, \rho_-}$ to quantum Gibbs sampling, where $\rho_\pm = \frac{1}{2}I \mp \frac{2}{\beta}Z$ is a one-qubit quantum state with infidelity $\gamma = O\rbra{1/\beta^2}$. 
    Suppose one can prepare the Gibbs state of any Hamiltonian at inverse temperature $\beta$ with query complexity $Q$. 
    The Gibbs state of $\frac{1}{2}\rho_\pm$ (note that the matrix representation of $\frac{1}{2}\rho_\pm$ is a valid Hamiltonian!) at inverse temperature $\beta$ is $\propto e^{-\frac{1}{2}\beta \rho_\pm} \propto e^{\pm 1} \ket{0}\bra{0} + e^{\mp 1} \ket{1} \bra{1}$, which can be easily distinguished by measuring it in the computational basis.
    By \cref{corollary:qsd-intro}, we conclude that $Q \geq \mathsf{Q}_\diamond\rbra{\textsc{Dis}_{\rho_+, \rho_-}} = \widetilde \Omega\rbra{1/\sqrt{\gamma}} = \widetilde\Omega\rbra{\beta}$.
\end{proof}

It was shown in \cite{CS17} that if the given oracle is a block-encoding of $\sqrt{H}$, then one can prepare the Gibbs state of $H$ at inverse temperature $\beta$ with query complexity $O\rbra{\sqrt{\beta}}$ in some special cases, e.g., when $H$ is a linear combination of Pauli operators. 
We show that such quadratic improvement on $\beta$, however, cannot extend to the general case.

\begin{corollary} [Impossibility of $O\rbra{\sqrt{\beta}}$-time quantum Gibbs sampling, \cref{thm:sqrt-beta-gibbs}]
    Every quantum query algorithm for quantum Gibbs sampling for Hamiltonian $H$ at inverse temperature $\beta$, given an oracle that is a block-encoding of $\sqrt{H}$, has query complexity $\widetilde \Omega\rbra{\beta}$.
\end{corollary}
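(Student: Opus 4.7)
The plan is to adapt the reduction in the proof of \cref{thm:gibbs-query-lower-bound} by the following observation: the \emph{same} block-encoding oracle used there can be simultaneously viewed as a block-encoding of $\sqrt{H}$ rather than of $H$, at the cost of squaring the effective Hamiltonian. Concretely, for the one-qubit family $\rho_\pm = \tfrac{1}{2}I \pm \varepsilon Z$ with $\varepsilon := 4/\beta$, any block-encoding of $\tfrac{1}{2}\rho_\pm$ is automatically a block-encoding of $\sqrt{H_\pm}$ for the Hamiltonian $H_\pm := \tfrac{1}{4}\rho_\pm^2$, since $\rho_\pm$ is positive semi-definite and hence $\sqrt{H_\pm} = \tfrac{1}{2}\rho_\pm$ is a bounded Hermitian square root of $H_\pm$.

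The reduction from $\textsc{Dis}_{\rho_+, \rho_-}$ to this variant of Gibbs sampling then proceeds just as in \cref{thm:gibbs-query-lower-bound}: given the block-encoding of $\tfrac{1}{2}\rho_\pm$ provided by the discrimination problem, feed it into the assumed Gibbs sampler, which uses $Q$ queries to prepare (up to a small constant trace distance) the Gibbs state $e^{-\beta H_\pm}/\tr\rbra{e^{-\beta H_\pm}}$. Using $Z^2 = I$, one has $H_\pm = \rbra*{\tfrac{1}{16} + \tfrac{\varepsilon^2}{4}} I \pm \tfrac{\varepsilon}{4} Z$, so this Gibbs state is diagonal in the $Z$ basis and proportional to $e^{\mp\beta\varepsilon/4}\ket{0}\!\bra{0} + e^{\pm\beta\varepsilon/4}\ket{1}\!\bra{1}$. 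With $\varepsilon = 4/\beta$, the distinguishing bias in the $Z$ basis is $\tanh(1) = \Theta(1)$, so a constant number of computational-basis measurements distinguishes $\rho_+$ from $\rho_-$ with probability $\ge 2/3$, solving $\textsc{Dis}_{\rho_+, \rho_-}$ with $O(Q)$ total queries to a block-encoding of $\tfrac{1}{2}\rho_\pm$.

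Finally, the infidelity is $\gamma = 1 - \sqrt{1 - 4\varepsilon^2} = \Theta(\varepsilon^2) = \Theta(1/\beta^2)$, so invoking \cref{corollary:qsd-intro} yields
\[
    Q \;\ge\; \mathsf{Q}_\diamond\rbra{\textsc{Dis}_{\rho_+, \rho_-}} \;=\; \widetilde\Omega\rbra{1/\sqrt{\gamma}} \;=\; \widetilde\Omega\rbra{\beta}.
\]
I do not anticipate a real obstacle. The only conceptual point worth flagging is that squaring $\rho_\pm$ still leaves a $Z$-component of magnitude $\Theta(\varepsilon) = \Theta(1/\beta)$ in $H_\pm$, so the Gibbs state at inverse temperature $\beta$ retains the same $\Omega(1)$ distinguishing bias as in \cref{thm:gibbs-query-lower-bound}. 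This is consistent with the $O\rbra{\sqrt{\beta}}$ upper bound of \cite{CS17}, which applies in structured settings (e.g., Pauli linear combinations) where the oracle provides strictly more information than a generic block-encoding of $\sqrt{H}$.
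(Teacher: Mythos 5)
Your proposal is correct and follows essentially the same route as the paper's proof of \cref{thm:sqrt-beta-gibbs}: one reuses the reduction from $\textsc{Dis}_{\rho_+,\rho_-}$ in \cref{thm:gibbs-query-lower-bound}, observes that a block-encoding of $\tfrac{1}{2}\rho_\pm$ is exactly a block-encoding of $\sqrt{H_\pm}$ for $H_\pm = \tfrac{1}{4}\rho_\pm^2$, and notes that the Gibbs state of $H_\pm$ at inverse temperature $\beta$ retains an $\Omega(1)$ bias in the computational basis, so \cref{corollary:qsd-intro} with infidelity $\Theta(1/\beta^2)$ gives $\widetilde\Omega(\beta)$. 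Your choice of $Z$-bias $4/\beta$ instead of the paper's $2/\beta$ only changes the constant in the exponent and is immaterial.
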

\begin{proof} [Proof sketch]
    We reduce the same problem to quantum Gibbs sampling given access to $\sqrt{H}$ as the proof of \cref{thm:gibbs-query-lower-bound}. 
    The difference is that the Hamiltonian $H$ becomes either $\frac{1}{4}\rho_+^2$ or $\frac{1}{4}\rho_-^2$. 
    The proof is immediately done by noting that the Gibbs state of $\frac{1}{4}\rho_{\pm}^2$ at inverse temperature $\beta$ is $e^{-\frac{1}{4}\beta\rho_\pm^2} \propto e^{\pm \frac{1}{2}} \ket{0}\bra{0} + e^{\mp \frac{1}{2}} \ket{1} \bra{1}$, which can also be easily distinguished by measuring it in the computational basis.
\end{proof}

\vspace{5pt}
\textbf{The entanglement entropy problem} is a problem of unitary property testing recently studied in \cite{SY23}, which is to decide whether a bipartite state $\ket{\psi} \in \mathbb{C}^N \otimes \mathbb{C}^N$ has low ($\leq a$) or high ($\geq b$) ($2$-R\'enyi) entanglement entropy, given the reflection operator $U = I - 2\ket{\psi}\bra{\psi}$ as the oracle. 
An $\Omega\rbra{e^{a/4}}$ query lower bound for this problem was given in \cite[Theorem 1.14]{SY23}. 

We give a new query lower bound for the entanglement entropy problem as follows. 

\begin{corollary} [New lower bound for the entanglement entropy problem, \cref{thm:entangle-entro}]
\label{corollary:entro-intro}
    Every quantum tester for the entanglement entropy problem requires $\widetilde\Omega\rbra{1/\sqrt{\Delta}}$ queries, where $\Delta = b - a$.
\end{corollary}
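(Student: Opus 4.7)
The plan is to reduce quantum state discrimination (\cref{corollary:qsd-intro}) to the entanglement entropy problem. I will exhibit a pair of bipartite pure states that form a valid \textit{yes}/\textit{no} instance of the entanglement entropy problem yet have small infidelity as quantum states, so that distinguishing them is expensive.

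First I will pick the family $\ket{\psi_\theta} = \cos\theta \ket{00} + \sin\theta \ket{11}$, embedded trivially into $\mathbb{C}^N \otimes \mathbb{C}^N$. Its reduced state is $\rho_\theta = \cos^2\theta \ket{0}\bra{0} + \sin^2\theta \ket{1}\bra{1}$, with $2$-R\'enyi entropy $-\log\rbra{\cos^4\theta + \sin^4\theta}$. Taking $\theta_- = \pi/4$ (the maximally entangled state, $S_2 = 1$) and $\theta_+ = \pi/4 - \delta$, a short Taylor expansion yields $S_2\rbra{\rho_-} - S_2\rbra{\rho_+} = \Theta\rbra{\delta^2}$, while the pure-state overlap $\langle \psi_+ | \psi_- \rangle = \cos\delta$ gives infidelity $\gamma = \sin^2\delta = \Theta\rbra{\delta^2}$. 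Setting $\delta = \Theta\rbra{\sqrt{\Delta}}$ thus makes $\ket{\psi_-}$ a high-entropy instance with $S_2 \geq 1$ and $\ket{\psi_+}$ a low-entropy instance with $S_2 \leq 1 - \Delta$, while $\gamma = \Theta\rbra{\Delta}$. With $a := 1 - \Delta$ and $b := 1$, this is a valid \textit{yes}/\textit{no} pair for the problem.

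Next I will simulate a block-encoding of $\frac{1}{2}\ket{\psi}\bra{\psi}$ from one query to the reflection $U_\psi = I - 2\ket{\psi}\bra{\psi}$: since $\frac{1}{2}\ket{\psi}\bra{\psi} = \frac{1}{4}\rbra{I - U_\psi}$, a one-call linear combination of unitaries on a single ancilla qubit produces such a block-encoding using one query to $U_\psi$. Therefore, any $Q$-query tester for the entanglement entropy problem on the instances $\ket{\psi_\pm}$ translates into an $O\rbra{Q}$-query distinguisher between the density matrices $\ket{\psi_+}\bra{\psi_+}$ and $\ket{\psi_-}\bra{\psi_-}$ in the block-encoding model underlying \cref{corollary:qsd-intro}. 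Invoking that corollary with $\gamma = \Theta\rbra{\Delta}$ then gives $Q = \widetilde\Omega\rbra{1/\sqrt{\gamma}} = \widetilde\Omega\rbra{1/\sqrt{\Delta}}$, as claimed.

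The main obstacle is the quantitative entropy-vs-infidelity calibration in the first step: the $\sqrt{\Delta}$ rate arises precisely because both the entropy gap and the infidelity are $\Theta\rbra{\delta^2}$ in this one-parameter family, so they scale identically in $\delta$ and the $\sqrt{\cdot}$ in the lifting bound only absorbs one of the exponents. Once this calibration is in hand, converting the reflection oracle into a block-encoding is a routine one-query LCU, and the rest is a direct application of \cref{corollary:qsd-intro}.
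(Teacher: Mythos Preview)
Your choice of instances and the entropy/infidelity calibration are fine (up to minor cosmetic issues: the paper uses natural log, so the maximal $S_2$ is $\ln 2$, not $1$; and the infidelity in \cref{corollary:qsd-intro} is $1-F$ rather than $1-F^2$, though both are $\Theta(\delta^2)$ here). The real problem is that your simulation step runs in the wrong direction.

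To invoke \cref{corollary:qsd-intro} you must show that a $Q$-query tester for the entropy problem yields a $\widetilde O(Q)$-query tester for $\textsc{Dis}^{\diamond}_{\ket{\psi_+}\bra{\psi_+},\,\ket{\psi_-}\bra{\psi_-}}$. The latter receives as oracle a block-encoding $V$ of $\tfrac{1}{2}\ket{\psi}\bra{\psi}$, so to run the entropy tester you must simulate the reflection $U_\psi$ \emph{from} $V$. You do the opposite: you build $V$ from $U_\psi$. That only shows that a $\textsc{Dis}^{\diamond}$-tester solves the entropy problem, i.e.\ Entropy $\leq \textsc{Dis}^{\diamond}$, which yields no lower bound on Entropy.

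The fix is straightforward and is what the paper does: writing $U_\psi = I - 4\cdot\tfrac{1}{2}\ket{\psi}\bra{\psi}$, an LCU with coefficients $(1,-4)$ on $I$ and $V$ gives a $(5,\cdot,0)$-block-encoding of the unitary $U_\psi$, and robust oblivious amplitude amplification (\cref{lmm:roaa}) then produces a $(1,\cdot,\varepsilon)$-block-encoding of $U_\psi$ using $O(1)$ queries to $V$. Choosing $\varepsilon=\Theta(1/Q)$ lets the entropy tester run on this approximate oracle with constant total error, and the bound follows. With this correction your route---applying \cref{corollary:qsd-intro} directly to the pure bipartite states $\ket{\psi_\pm}\bra{\psi_\pm}$ rather than to their one-qubit reductions as the paper does---is in fact slightly cleaner, since it bypasses the purification step (\cref{lemma:block-encoding-of-purification}).
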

\begin{proof}[Proof sketch]
    We reduce $\textsc{Dis}_{\rho, \sigma}$ to the entanglement entropy problem, where $\rho = I/2$ and $\sigma = I/2-\sqrt{\Delta}Z$ are one-qubit quantum states with infidelity $\gamma = O\rbra{\Delta}$. 
    Suppose there is a quantum tester for the entanglement entropy problem with query complexity $Q$. 
    For $\varrho \in \cbra{\rho, \sigma}$, we can (approximately) implement $R_{\varrho} = I - 2\ket{\varrho}\bra{\varrho}$ to precision $\Theta\rbra{1/Q}$ by QSVT techniques \cite{GSLW19}, using $\widetilde O\rbra{Q}$ queries, where $\ket{\varrho}$ is a purification of $\varrho$.
    Note that $\ket{\rho}$ has high ($\geq \ln\rbra{2}$) entropy and $\ket{\sigma}$ has low ($\leq \ln\rbra{2} - \Delta$) entropy. 
    Then, we can distinguish between $\rho$ and $\sigma$ by distinguishing between $R_{\rho}$ and $R_{\sigma}$ via the entanglement entropy problem. 
    By \cref{corollary:qsd-intro}, we conclude that $\widetilde O\rbra{Q} \geq \mathsf{Q}_\diamond\rbra{\textsc{Dis}_{\rho, \sigma}} = \widetilde \Omega\rbra{1/\sqrt{\gamma}} = \widetilde\Omega\rbra{1/\sqrt{\Delta}}$.
\end{proof}

We note that our lower bound is stronger than the one given in \cite{SY23} when $b - a \leq O\rbra{e^{-a/2}}$.

\vspace{5pt}
\textbf{Hamiltonian simulation} is a task to implement the unitary operator $e^{-iHt}$ for a given Hamiltonian $H$ and simulation time $t$.
It was first proposed by \cite{Fey82} in the hope that quantum systems can be simulated efficiently on quantum computers. Later, it was shown in \cite{Llo96} that this is possible for local Hamiltonians. 
Since then, more efficient quantum algorithms for Hamiltonian simulation have been proposed \cite{ATS03,BACS07,CW12,BCC+14,BCC+15,BCK15,LC17,LC19}. The state-of-the-art approaches \cite{LC17,LC19,GSLW19} have the optimal query complexity $\Theta\rbra{t}$ with respect to the simulation time $t$.

The query lower bound $\Omega\rbra{t}$ for Hamiltonian simulation given in \cite{BACS07,CK10,BCC+14,BCK15} is obtained by a reduction from the parity problem, whose lower bound can be derived by the polynomial method \cite{BBC+01} (see also \cite{FGGS98}).

We provide a new proof for the quantum query lower bound for Hamiltonian simulation. 

\begin{corollary} [Optimality of Hamiltonian simulation, \cref{thm:ham-sim}]
\label{corollary:hamsim-intro}
    Every quantum query algorithm for Hamiltonian simulation for time $t$, given an oracle that is a block-encoding of the Hamiltonian $H$, has query complexity $\widetilde \Omega\rbra{t}$. 
\end{corollary}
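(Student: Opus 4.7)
The plan is to mimic the reduction pattern used for \cref{thm:gibbs-query-lower-bound} and \cref{corollary:entro-intro}: I reduce $\textsc{Dis}_{\rho_+, \rho_-}$ to Hamiltonian simulation and invoke \cref{corollary:qsd-intro}. The guiding idea is that a Hamiltonian simulator, when fed a block-encoding of the tested one-qubit state as its Hamiltonian, amplifies a tiny difference between $\rho_+$ and $\rho_-$ into a large, measurable difference between the two corresponding simulated unitaries, provided the simulation time $t$ is long enough to turn that difference into a constant rotation angle.

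Concretely, I take $\rho_\pm = \rbra{I \pm \varepsilon Z}/2$ with $\varepsilon = \Theta\rbra{1/t}$ chosen so that $\varepsilon t$ equals a fixed constant (e.g.\ $\pi/2$), giving infidelity $\gamma = \Theta\rbra{\varepsilon^2} = \Theta\rbra{1/t^2}$. A block-encoding of $\tfrac{1}{2}\rho_\pm$ (the oracle model for $\mathsf{Q}_\diamond$) is also a block-encoding of the Hamiltonian $H = \rho_\pm$ up to a constant subnormalization, which is fully compatible with the standard access model for Hamiltonian simulation. Hence any $Q$-query simulator producing $e^{-iHt}$ to constant precision yields, on this oracle, an approximation of $e^{-i\rho_\pm t}$. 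Applied to $\ket{+}$, the two resulting states differ only by a relative $Z$-rotation of angle $\varepsilon t = \Theta\rbra{1}$, and a single-qubit measurement distinguishes them with constant bias, which can be boosted to success probability $2/3$ by $O\rbra{1}$ repetitions.

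\cref{corollary:qsd-intro} then yields $Q \geq \mathsf{Q}_\diamond\rbra{\textsc{Dis}_{\rho_+, \rho_-}} = \widetilde\Omega\rbra{1/\sqrt{\gamma}} = \widetilde\Omega\rbra{t}$, which is the claimed lower bound. The main obstacle I anticipate is not conceptual but bookkeeping: verifying that (i) the constant subnormalization incurred by using a block-encoding of $\tfrac{1}{2}\rho_\pm$ rather than of $\rho_\pm$ inflates the simulation query count only by a constant factor, and (ii) the approximation error with which the simulator realizes $e^{-iHt}$ is small enough not to wash out the $\Theta\rbra{1}$ gap between the two candidate output states. Both points are standard for block-encoding-based Hamiltonian simulation and introduce at most polylogarithmic overhead, which is absorbed into the $\widetilde\Omega$ notation.
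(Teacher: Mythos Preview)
Your proposal is correct and follows essentially the same approach as the paper: reduce $\textsc{Dis}$ between two one-qubit states differing by $\Theta\rbra{1/t}$ in the $Z$ direction to Hamiltonian simulation, apply the simulator on $\ket{+}$, and invoke \cref{corollary:qsd-intro}. The only cosmetic difference is that the paper simply treats the block-encoded matrix $\tfrac{1}{2}\rho_\pm$ \emph{itself} as the Hamiltonian (so your concern (i) about subnormalization never arises), and it chooses the constants so that the two output states are exactly $\ket{+}$ and $\ket{-}$, eliminating the need for repetition in (ii).
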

\begin{proof} [Proof sketch]
    We reduce $\textsc{Dis}_{\rho,\sigma}$ to Hamiltonian simulation, where $\rho = \frac{1}{2}I$ and $\sigma = \frac{1}{2}I+\frac{\pi}{t} Z$ are one-qubit quantum states with infidelity $\gamma = O\rbra{1/t^2}$.
    If we can simulate Hamiltonians $\frac{1}{2}\rho$ and $\frac{1}{2}\sigma$ (note that the matrix representation of $\frac{1}{2}\rho$ and $\frac{1}{2}\sigma$ are valid Hamiltonians!) for time $t$ with query complexity $Q$, then we can distinguish them by simulating them on the state $\ket{+}$ because $e^{-i \frac{\rho}{2} t} \ket{+} \propto \ket{+}$ and $e^{-i \frac{\sigma}{2} t} \ket{+} \propto \ket{-}$. 
    By \cref{corollary:qsd-intro}, we conclude that $Q \geq \mathsf{Q}_\diamond\rbra{\textsc{Dis}_{\rho, \sigma}} = \widetilde \Omega\rbra{1/\sqrt{\gamma}} = \widetilde\Omega\rbra{t}$.
\end{proof}

The reduction above was also used in \cite{KLL+17} to prove the optimality of quantum principal component analysis \cite{LMR14}. 
By comparison, we combine their idea with our lifting theorem to obtain a quantum query lower bound for Hamiltonian simulation.

\vspace{5pt}
\textbf{Quantum phase estimation} is a task to estimate the phase $\lambda$ of an eigenvector $\ket{\psi}$ of a given oracle $U$, promised that $U \ket{\psi} = e^{i \lambda} \ket{\psi}$. 
Initially introduced by \cite{Kit95}, it is used as a subroutine in various quantum algorithms, e.g., Shor's factoring \cite{Sho94}, quantum counting \cite{BHT98b}, and solving systems of linear equations \cite{HHL09}. 

A matching lower bound $\Omega\rbra{\log\rbra{1/\delta}}$ was shown in \cite{Bes05} regarding the precision $\delta$ in the case (e.g., for factoring \cite{Sho94}) when the quantum oracle $U$ and its (controlled) powers $U^t$ are given. 
For the general case when only the quantum oracle $U$ is given, a matching lower bound $\Omega\rbra{1/\delta}$ was shown in \cite{LT20} by reducing from quantum counting \cite{NW99}.
Very recently, the lower bound regarding the success probability was considered in \cite{MdW23}, showing that $\Omega\rbra{\log\rbra{1/\varepsilon}/\delta}$ queries are required to estimate the phase with probability at least $1-\varepsilon$. This is done by reducing from a unitary property testing problem, with its lower bound obtained by the polynomial method \cite{BBC+01}. 

We provide a new proof for the quantum query lower bound for phase estimation. 

\begin{corollary} [Optimality of quantum phase estimation, \cref{thm:ph-est}]
\label{corollary:phe-intro}
    Every quantum query algorithm for phase estimation requires $\widetilde\Omega\rbra{1/\delta}$ queries to the oracle $U$.
\end{corollary}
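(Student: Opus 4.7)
The plan is to mirror the reductions used in the earlier proofs of this section: reduce $\textsc{Dis}_{\rho_+, \rho_-}$ to phase estimation for a pair of one-qubit states whose infidelity is quadratic in $\delta$, and then invoke \cref{corollary:qsd-intro}. Concretely, I would choose
\[
    \rho_{\pm} = \frac{1}{2} I \pm \delta Z,
\]
so that $\ket{0}$ is an exact eigenvector of $\rho_{\pm}$ with eigenvalue $\frac{1}{2} \pm \delta$, and a short computation yields $\gamma = 1 - \operatorname{F}\rbra{\rho_+, \rho_-} = 1 - \sqrt{1 - 4\delta^2} = \Theta\rbra{\delta^2}$.

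Given a block-encoding $V$ of $\varrho/2$ with $\varrho \in \cbra{\rho_+, \rho_-}$, I would use the QSVT-based Hamiltonian simulation of \cite{GSLW19} to construct, with $K = \widetilde O\rbra{1}$ queries to $V$, a unitary $U$ that is $\varepsilon$-close to $e^{-i\varrho}$, for a small error parameter $\varepsilon$ chosen below. Then $\ket{0}$ is an $O\rbra{\varepsilon}$-approximate eigenvector of $U$ with phase $-\rbra{\frac{1}{2} \pm \delta}$, so running any hypothetical phase-estimation algorithm $\mathcal{A}$ on $U$ and input $\ket{0}$ with target precision $\delta/4$ distinguishes the two phases and thereby decides $\textsc{Dis}_{\rho_+, \rho_-}$ with success probability $\geq 2/3$. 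If $\mathcal{A}$ makes $Q$ queries to $U$, the overall procedure uses $Q \cdot K = \widetilde O\rbra{Q}$ queries to $V$, and \cref{corollary:qsd-intro} gives
\[
    \widetilde O\rbra{Q} \geq \mathsf{Q}_\diamond\rbra{\textsc{Dis}_{\rho_+, \rho_-}} = \widetilde\Omega\rbra*{1/\sqrt{\gamma}} = \widetilde\Omega\rbra{1/\delta},
\]
so $Q = \widetilde\Omega\rbra{1/\delta}$, as desired.

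The main technical hurdle I expect is controlling approximation errors. Since $\ket{0}$ is only an approximate eigenvector of the simulated $U$, the error grows across the $Q$ queries of $\mathcal{A}$; I would therefore set $\varepsilon = 1/\poly\rbra{Q}$, which costs only $K = O\rbra{\log Q}$ queries to $V$ per simulation step and keeps the overall success probability above $2/3$. These extra logarithmic factors are absorbed into the $\widetilde O$ notation in the final bound. Once this error analysis is in place, the reduction itself is immediate: any phase-estimation algorithm with precision $\delta$ must separate two phases that differ by $2\delta$ on a common, known eigenvector, so an algorithm using $o\rbra{1/\delta}$ queries would yield a sample-efficient quantum state discriminator, contradicting the Helstrom--Holevo bound via \cref{corollary:qsd-intro}.
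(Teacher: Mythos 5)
Your proposal is correct and follows essentially the same route as the paper's proof of \cref{thm:ph-est}: reduce $\textsc{Dis}_{\rho_+,\rho_-}$ for two one-qubit states with $\Theta(\delta)$ bias (infidelity $\Theta(\delta^2)$) to phase estimation by simulating $e^{-i\varrho t}$ from the block-encoding with $1/\poly(Q)$ precision at $\widetilde O(1)$ cost per query, then invoke \cref{corollary:qsd-intro}. The only differences are cosmetic (the paper uses bias $4\delta$ so the precision-$\delta$ estimator separates the phases $\tfrac14\mp2\delta$ directly, and it formalizes the substitution of the simulated unitary via \cref{lemma:block-encoding-as-query} with $\varepsilon=1/9Q$ rather than your informal ``$\varepsilon$-close unitary'' accounting).
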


\begin{proof} [Proof sketch]
    We reduce $\textsc{Dis}_{\rho_+, \rho_-}$ to quantum phase estimation, where $\rho_\pm = I/2 \mp \delta Z$ are two one-qubit quantum states with infidelity $\gamma = O\rbra{\delta^2}$. 
    Suppose quantum phase estimation can be done with query complexity $Q$. 
    We first implement the unitary operator $e^{-i\frac{\rho_{\pm}}{2}}$ by Hamiltonian simulation (e.g., \cite[Corollary 32]{GSLW19}) to precision $\Theta\rbra{1/Q}$ for a unit simulation time, using $\widetilde O\rbra{1}$ queries.
    Then, we perform phase estimation on $e^{-i \frac{\rho_{\pm}}{2}}$ and its eigenvector $\ket{0}$ (note that the phase is $\frac{1}{4} \mp \Theta\rbra{\delta}$) to precision $\delta$, using $Q$ queries to the approximated $e^{-i\frac{\rho_\pm}{2}}$. 
    In summary, using $\widetilde O\rbra{Q}$ queries, we can distinguish the two cases by checking whether the estimate of the phase is greater or less than $\frac{1}{4}$. 
    By \cref{corollary:qsd-intro}, we conclude that $\widetilde O\rbra{Q} \geq \mathsf{Q}_\diamond\rbra{\textsc{Dis}_{\rho_+, \rho_-}} = \widetilde \Omega\rbra{1/\sqrt{\gamma}} = \widetilde\Omega\rbra{1/\delta}$.
\end{proof}

Both the reduction and construction are new,
compared to known proofs for phase estimation.
The key observation is that combining Hamiltonian simulation and phase estimation can solve quantum state discrimination. 

\vspace{5pt}
\textbf{Quantum amplitude estimation} is to estimate the amplitude $p$ of the quantum state prepared by a quantum oracle $U \ket{0} = \sqrt{p} \ket{0}\ket{\phi_0} + \sqrt{1-p} \ket{1} \ket{\phi_1}$. 
It is well-known that estimating $p$ to within additive error $\varepsilon$ can be done using $O\rbra{1/\varepsilon}$ queries \cite{BHMT02},\footnote{It was recently noted in \cite{Wan24} that estimating $\sqrt{p}$ to within additive error $\varepsilon$ can also be done using $O\rbra{1/\varepsilon}$ queries.} which is optimal by reducing from the lower bound for quantum counting \cite{BBC+01,NW99}.

We provide a new proof for the quantum query lower bound for amplitude estimation. 

\begin{corollary} [Optimality of quantum amplitude estimation, \cref{thm:ampl-est}]
\label{corollary:ampl-intro}
    Every quantum algorithm for amplitude estimation with precision $\varepsilon$ has query complexity $\widetilde \Omega\rbra{1/\varepsilon}$.
\end{corollary}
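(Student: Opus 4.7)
The plan is to reduce $\textsc{Dis}_{\rho_+, \rho_-}$ to quantum amplitude estimation, mirroring the construction in the proof sketch of \cref{thm:lifting-is-tight} but running it in the opposite direction. Take $\rho_\pm = (I \pm \varepsilon Z)/2$, one-qubit states with infidelity $\gamma = 1 - \operatorname{F}(\rho_+, \rho_-) = O(\varepsilon^2)$, and let $U_\pm$ be any block-encoding of $\frac{1}{2}\rho_\pm$. The key observation already noted in \cref{thm:lifting-is-tight} is that $\bra{0}U_\pm\ket{0} = \frac{1}{4} \pm \Theta(\varepsilon)$. Viewing $U_\pm$ through the lens of amplitude estimation, write $U_\pm\ket{0} = \sqrt{p_\pm}\,\ket{0}\ket{\phi_0^\pm} + \sqrt{1-p_\pm}\,\ket{1}\ket{\phi_1^\pm}$; then the amplitudes take separated values $p_\pm = |\bra{0}U_\pm\ket{0}|^2 = \frac{1}{16} \pm \Theta(\varepsilon)$, and the additive gap between $p_+$ and $p_-$ is $\Theta(\varepsilon)$ --- linear, not quadratic, in $\varepsilon$.

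Given a hypothetical amplitude estimator of precision $\varepsilon' = c\varepsilon$ (for a sufficiently small absolute constant $c$) using $Q$ queries to a state-preparation oracle, I would feed it $U_\pm$ as the oracle to obtain an estimate $\tilde{p}$ of $p_\pm$. Comparing $\tilde{p}$ to the threshold $\frac{1}{16}$ distinguishes $\rho_+$ from $\rho_-$ with success probability at least $2/3$, giving a solver for $\textsc{Dis}_{\rho_+, \rho_-}$ of query complexity $Q$ in terms of the block-encoding oracle. Invoking \cref{corollary:qsd-intro},
\[
Q \;\geq\; \mathsf{Q}_\diamond\rbra{\textsc{Dis}_{\rho_+, \rho_-}} \;=\; \widetilde\Omega\rbra*{1/\sqrt{\gamma}} \;=\; \widetilde\Omega\rbra*{1/\varepsilon},
\]
and relabeling $\varepsilon' \mapsto \varepsilon$ yields the claimed lower bound.

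The reduction is essentially the contrapositive of the upper-bound construction that appears in the proof sketch of \cref{thm:lifting-is-tight}, so I do not expect a real obstacle. The only delicate point is bookkeeping of the gap: squaring $\frac{1}{4} \pm \Theta(\varepsilon)$ could in principle have collapsed the additive separation to $\Theta(\varepsilon^2)$, but the linear term survives because it is crossed with the $\frac{1}{4}$ baseline, so precision $\Theta(\varepsilon)$ genuinely suffices to resolve the two cases. This step parallels the proofs of \cref{corollary:hamsim-intro} and \cref{corollary:phe-intro}: each reduces discrimination of $\rho_\pm$ to a task whose natural precision parameter matches the square root of the infidelity, and the $1/\sqrt{\gamma}$ lower bound from \cref{corollary:qsd-intro} then delivers the matching query lower bound.
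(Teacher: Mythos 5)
Your proposal is correct and follows essentially the same route as the paper's proof of \cref{thm:ampl-est}, which likewise reduces $\bb{\textsc{Dis}_{\rho_+,\rho_-}}$ (for one-qubit states with infidelity $O(\varepsilon^2)$) to amplitude estimation by reusing the tester from \cref{thm:tightness}: the estimator is run on a block-encoding of $\frac{1}{2}\rho_\pm$, the squared amplitude $\rbra{\frac14 \mp \Theta(\varepsilon)}^2$ retains a $\Theta(\varepsilon)$ separation thanks to the cross term with the $\frac14$ baseline, and \cref{corollary:qsd-intro} gives $\widetilde\Omega(1/\varepsilon)$. Your bookkeeping of the gap and the thresholding at $\frac{1}{16}$ match the paper's argument up to inessential constants.
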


\begin{proof} [Proof sketch]
    We reduce $\textsc{Dis}_{\rho_+, \rho_-}$ to quantum amplitude estimation, where $\rho_\pm = I/2 \mp 8 \varepsilon Z$ 
    are two one-qubit quantum states with infidelity $\gamma = O\rbra{\varepsilon^2}$. 
    Suppose quantum amplitude estimation can be done with query complexity $Q$. 
    Using the same idea as in the proof of \cref{thm:lifting-is-tight}, we can distinguish the two cases by estimating the amplitude $\bra{0}U_{\rho_{\pm}}\ket{0} = \frac{1}{4} \mp \Theta\rbra{\varepsilon}$ for every block-encoding $U_{\rho_{\pm}}$ of $\frac{1}{2}\rho_{\pm}$ to precision $\varepsilon$ using $Q$ queries. 
    By \cref{corollary:qsd-intro}, we conclude that $Q \geq \mathsf{Q}_\diamond\rbra{\textsc{Dis}_{\rho_+, \rho_-}} = \widetilde \Omega\rbra{1/\sqrt{\gamma}} = \widetilde\Omega\rbra{1/\varepsilon}$.
\end{proof}

Compared to known proofs for amplitude estimation, our proof is relatively simple 
by observing that block-encodings of quantum states store their amplitudes.

\textbf{Matrix spectrum testing.}
We also obtain new quantum query lower bounds for the spectrum testing of matrices block-encoded in unitary operators, which are directly reduced from the spectrum testing of quantum states \cite{OW21}. 

\begin{corollary} [Matrix spectrum testing, Corollaries \ref{corollary:rank-testing}, \ref{corollary:mixed-testing}, and \ref{corollary:uniformity-distinguishing}, informal]
\label{corollary:spectrum-testing}
    Let $U$ be a block-encoding of an $N$-dimensional Hermitian matrix $A$.
    \begin{itemize}
        \item \emph{Rank Testing}: Testing whether $A$ has rank $r$ requires $\widetilde \Omega\rbra{\sqrt{r}}$ queries. 
        \item \emph{Mixedness Testing}: Testing whether $A$ has the uniform spectrum $\rbra{\frac{1}{N}, \dots, \frac{1}{N}}$ requires $\widetilde \Omega\rbra{\sqrt{N}}$ queries.
        \item \emph{Uniformity Testing}: Testing whether $A$ has a spectrum uniform on $r$ or $r+\Delta$ eigenvalues requires $\Omega^*\rbra{r/\sqrt{\Delta}}$ queries.\footnote{$\Omega^*\rbra{\cdot}$ suppresses quasi-logarithmic factors such as $r^{o\rbra{1}}$ and $\rbra{1/\varepsilon}^{o\rbra{1}}$.}
    \end{itemize}
\end{corollary}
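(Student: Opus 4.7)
The plan is to obtain all three query lower bounds uniformly by composing the sample-to-query lifting theorem (\cref{thm:lifting}) with the corresponding quantum state spectrum testing sample lower bounds of O'Donnell and Wright. The key observation is that any spectrum-testing problem $\mathcal{P}$ on density matrices is, up to scaling, a special case of matrix spectrum testing on block-encoded matrices: if $\rho$ is a density matrix, then $A = \tfrac{1}{2}\rho$ has the same spectrum as $\rho$ (up to the global factor $\tfrac{1}{2}$), and a block-encoding of $A$ is precisely the oracle model used in \cref{thm:lifting}. Hence any quantum query algorithm solving matrix rank / mixedness / uniformity testing for $A$ also solves the corresponding state spectrum testing problem for $\rho$, which lets us invoke \cref{thm:lifting}.

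More concretely, I would first identify, for each of the three cases, a family of density matrices $\rho$ whose matrix scaling $\tfrac{1}{2}\rho$ realizes the \textit{yes} / \textit{no} distinction required by the matrix spectrum testing statement: in the rank case, mixed states of rank $r$ versus mixed states of rank $> r$ (with some appropriate gap); in the mixedness case, the maximally mixed state $I/N$ versus states far from it in trace distance; in the uniformity case, states whose spectrum is flat on $r$ versus on $r+\Delta$ eigenvalues. For each family $\mathcal{P}$, the sample complexity $\mathsf{S}(\mathcal{P})$ has been established in O'Donnell-Wright as $\Theta(r)$, $\Theta(N)$, and $\Omega^*(r^2/\Delta)$ respectively. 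Feeding these into \cref{thm:lifting} gives $\mathsf{Q}_{\diamond}(\mathcal{P}) = \widetilde\Omega(\sqrt{r})$, $\widetilde\Omega(\sqrt{N})$, and $\Omega^*(r/\sqrt{\Delta})$, and since a matrix spectrum tester for block-encoded $A$ solves $\mathcal{P}$ on $\rho = 2A$, these are also lower bounds on the matrix spectrum testing query complexity.

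The only step requiring mild care is the scaling convention: \cref{thm:lifting} is stated for block-encodings of $\tfrac{1}{2}\rho$, whereas the corollary speaks of block-encodings of a general matrix $A$. This is harmless because (i) the spectrum of $A = \tfrac{1}{2}\rho$ determines the spectrum of $\rho$ bijectively, and the testing promises (fixed rank, flat spectrum, uniform on $r$ eigenvalues) are invariant under this uniform rescaling; and (ii) as noted in \cref{fnt:oracle-def}, block-encoding normalizations can be interchanged with only polylogarithmic overhead, which is absorbed into $\widetilde\Omega$ (and $\Omega^*$). The hardest ingredient is the uniformity case, where one must be faithful to the O'Donnell-Wright bound $\Omega^*(r^2/\Delta)$, whose proof is the only nontrivial input; once that is quoted, the lifting is mechanical.

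I expect the main obstacle to be purely \emph{bookkeeping}: making sure that the formal reduction goes in the right direction for each problem, that the promise parameters on $A$ match those on $\rho$ after the factor-of-$\tfrac{1}{2}$ rescaling, and that the logarithmic / quasi-logarithmic slack in the lifting theorem is correctly propagated into $\widetilde\Omega$ and $\Omega^*$. There is no need to revisit the hard work of O'Donnell-Wright, nor to construct new adversary or polynomial arguments, since \cref{thm:lifting} already converts their information-theoretic lower bounds into the desired query lower bounds.
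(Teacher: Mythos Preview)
Your proposal is correct and takes essentially the same approach as the paper: invoke the O'Donnell--Wright sample lower bounds ($\Omega(r)$, $\Omega(N)$, $\Omega^*(r^2/\Delta)$) and apply the lifting theorem (\cref{thm:lifting}) to obtain the stated query lower bounds. The only point the paper treats more explicitly than you do is the scaling mismatch for Mixedness and Uniformity Testing---since the matrix-testing oracle there is a block-encoding of $A=\rho$ itself while $\mathsf{Q}_\diamond$ is defined via a block-encoding of $\rho/2$, the formal proofs (Corollaries~\ref{corollary:mixed-testing} and~\ref{corollary:uniformity-distinguishing}) invoke the up-scaling lemma (\cref{corollary:up-scaling-block-encoding}) to convert one into an approximate version of the other; you correctly flag this as polylogarithmic bookkeeping absorbed into $\widetilde\Omega$ and $\Omega^*$.
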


\begin{proof} [Proof sketch]
    These results are reduced from the sample lower bounds $\Omega\rbra{r}$, $\Omega\rbra{N}$, and $\Omega^*\rbra{r^2/\Delta}$ for the corresponding quantum state testing problems studied in \cite{OW21}, respectively.
\end{proof}

\subsection{Related work}

The quantum polynomial method \cite{BBC+01} extends the polynomial method for classical circuit complexity (cf. \cite{Bei93}), which was employed to show numerous matching quantum lower bounds, e.g., parity checking (see also \cite{FGGS98}), collision finding and element distinctness \cite{AS04} (see \cite{BHT98a,Amb07} for their upper bounds), and property testing of probability distributions \cite{BKT20} (e.g., $k$-junta testing \cite{ABRdW16}, uniformity testing \cite{BHH11}, and entropy estimation \cite{LW19}).
Recently, the polynomial method was generalized for unitary property testing by \cite{SY23}.

The adversary method \cite{Amb02} is another powerful method for proving quantum lower bounds, which was also used to show a wide range of quantum lower bounds, e.g., AND of ORs, surjectivity \cite{BM12,She18}, and some basic graph problems \cite{DHHM06}. 
It is worth noting that, in \cite{Amb06}, the adversary method is proven to be more powerful than the polynomial method on certain problems.
Moreover, the quantum query complexity for Boolean functions was shown in \cite{Rei11} to be fully characterized by the general adversary bound only up to a constant factor.

The compressed oracle method \cite{Zha19} is a method for proving average-case quantum lower bounds, 
which originated from quantum cryptology and have been recently extended to handling parallel queries \cite{CFHL21} and random permutation oracles \cite{Unr23,MMW25}.
This method was used to show quantum lower bounds for several problems, e.g., collision finding \cite{HM23,HLS24}, inverting functions \cite{CGLQ20} and permutations \cite{Ros21}, search with noisy oracle \cite{Ros23}, and amplitude estimation without inverses \cite{TW25}.

As a lifting theorem, our sample-to-query lifting is analogous to the query-to-communication lifting that converts lower bounds in query complexity to lower bounds in communication complexity. 
Classical query-to-communication lifting has been extensively studied for deterministic and randomized communication complexities, e.g., \cite{GPW18,CKLM18,GPW20,CFK+19}.
For the quantum case, related results include that quantum communication complexity can be lifted from approximate degrees \cite{She11} and the adversary bound \cite{ABDK21}.

\subsection{Discussion}

In this paper, we propose a quantum sample-to-query lifting theorem that reveals a relation between the sample complexity of quantum state testing and the query complexity of unitary property testing. 
It provides a novel method for proving query lower bounds on quantum query algorithms, which is quite different from known methods such as the polynomial method \cite{BBC+01}, the adversary method \cite{Amb02}, and the compressed oracle method \cite{Zha19}.
In particular, we obtain new lower bounds for quantum Gibbs sampling (\cref{thm:gibbs-query-lower-bound}) and the entanglement entropy problem (\cref{corollary:entro-intro}). 
We believe that our lifting theorem brings new insights regarding quantum sample and query complexities and will  serve as a new tool for proving quantum query lower bounds.

We conclude by mentioning several open questions related to our work. 

\begin{itemize}
    \item Can we improve the logarithmic factors in our lifting theorem (\cref{thm:lifting})? 
    \item Our quantum query lower bound for Gibbs sampling (\cref{thm:gibbs-query-lower-bound}) holds when the oracle is a block-encoding of the Hamiltonian. Can we prove a stronger lower bound when given query access to the classical data of the (e.g., sparse) Hamiltonian? 
    \item We provide a quantum query lower bound for the entanglement entropy problem (\cref{corollary:entro-intro}) different from the one given in \cite{SY23} by the guessing lemma \cite{Aar12,AKKT20}, while they also considered the $\mathsf{QMA}$ lower bound for this problem. Can we extend our method to obtain a different $\mathsf{QMA}$ lower bound?
    \item Is there a classical analog of the quantum sample-to-query lifting theorem? 
    If it exists, we believe it could be a new tool for proving classical lower bounds. 
    However, the difficulty is that it is not clear how to interpret ``block-encode'' in classical computing, namely, how to ``block-encode'' classical information in classical logic gates.
    \item In contrast to the sample-to-query lifting considered in this paper, can we do the reverse? Namely, can we reduce a unitary property testing problem (with query access) to a quantum state testing problem (with sample access)? 
    We are only aware of a simple example via the Choi-Jamio{\l}kowski isomorphism \cite{Cho75,Jam72} as discussed in \cref{remark:choi}. 
    \item How does our ``lifting'' method for proving quantum query lower bounds compare to the polynomial method \cite{BBC+01}, the adversary method \cite{Amb02}, and the compressed oracle method \cite{Zha19}?
    The relationship between the polynomial and adversary methods has been extensively investigated in the literature.
    Prior work mainly focuses on the quantum query complexity of Boolean functions, where the negative-weighted version of the adversary method \cite{LHS07} is known to be tight \cite{RS12,Rei09,Rei11} while the polynomial method is known to be non-tight \cite{Amb06,AB23}. 
    Direct reductions from the polynomial method to the adversary method were shown in \cite{MR13,Bel24}. 
    On the other hand, the compressed oracle method turns out to be more useful for the average-case complexity, e.g., \cite{HM23,HLS24,CGLQ20,Ros21,Ros23,TW25}.
    In comparison, our lifting method is able to provide simple proofs of quantum query lower bounds for certain problems previously given by the polynomial and adversary methods (e.g., \cref{corollary:hamsim-intro,corollary:phe-intro,corollary:ampl-intro}), though with a loss of logarithmic factors.
    Meanwhile, we are not aware of a general approach using our lifting method to prove quantum query lower bounds for Boolean functions. 
\end{itemize}

\subsection{Recent Developments}

In parallel with this work, Chen, Kastoryano, Brand{\~a}o, and Gily{\'e}n \cite{CKBG23} showed a matching lower bound $\Omega\rbra{\beta}$ on the \textit{black-box Hamiltonian simulation time complexity} of quantum Gibbs sampling, which, compared to \cref{thm:gibbs-query-lower-bound}, also implies an $\widetilde \Omega\rbra{\beta}$ query lower bound if given the block-encoding access to the Hamiltonian. 

After the work of this paper, Weggemans \cite{Weg24} provided a query lower bound for unitary channel discrimination in the diamond norm.
Based on this, the author removed the logarithmic factors in the lower bounds given in this paper that are reduced from quantum state discrimination (\cref{corollary:qsd-intro}), e.g., quantum Gibbs sampling (\cref{thm:gibbs-query-lower-bound}) and the entanglement entropy problem (\cref{corollary:entro-intro}). 
In addition, the author showed that there exists a quantum oracle relative to which $\mathsf{QMA}\rbra{2} \not\supseteq \mathsf{SBQP}$ and $\mathsf{QMA}/\mathsf{qpoly} \not\supseteq \mathsf{SBQP}$.

Using the quantum sample-to-query lifting theorem presented in this paper, Chen, Wang, and Zhang \cite{CWZ24} further showed a quantum query lower bound $\widetilde\Omega\rbra{\sqrt{N/\Delta}}$ for the entanglement entropy problem (\cref{corollary:entro-intro}), simultaneously improving both the lower bound in $N$ due to \cite{SY23} and the lower bound in $\Delta$ due to \cite{Weg24} and this work. 
Their improvement is achieved by using the quantum sample lower bound for estimating the entanglement R\'enyi entropy of bipartite pure quantum states other than quantum state discrimination (\cref{corollary:qsd-intro}). 

In addition to quantum lower bounds, the quantum sample-to-query lifting theorem has inspired new quantum algorithms. 
Wang and Zhang \cite{WZ24b} extended the quantum sample-to-query lifting theorem to an algorithmic tool called \textit{samplizer}, which can be used to convert a quantum query algorithm with query complexity $Q$ to a quantum sample algorithm with sample complexity $\widetilde O\rbra{Q^2/\delta}$, where the two quantum algorithms are $\delta$-close in the diamond norm distance. 
Using the samplizer, they further developed time-efficient quantum estimators for von Neumann entropy and R\'enyi entropy. 
The samplizer later turned out to be useful in several other applications:
\begin{itemize}
    \item Liu, Wang, Wilde, and Zhang \cite{LWWZ24} developed a quantum estimator for the fidelity between well-conditioned quantum states. 
    \item Liu and Wang \cite{LW25} developed a quantum estimator for the Tsallis entropy of quantum states, exponentially improving the prior approaches known or implied in \cite{AISW20,WGL+22,WZL24,WZ24b}. 
    \item Liu and Wang \cite{LW25b} developed a quantum estimator for the $\ell_\alpha$ distance between quantum states, exponentially improving the prior approach in \cite{WGL+22}.
    \item Niwa, Rossi, Taranto, and Murao \cite{NRTM25} developed a singular value transformation scheme for quantum channels, which has applications to testing if a quantum channel is entanglement breaking. 
\end{itemize}
It is worth noting that later Wang and Zhang \cite{WZ25} investigated the samplizer especially for pure states, polylogarithmically improving the general samplizer in \cite{WZ24b}.
Using the samplizer for pure states, they further discovered a sample-optimal (up to a constant factor) quantum estimator for pure-state trace distance and square root fidelity with sample complexity $\Theta\rbra{1/\varepsilon^2}$; 
in comparison, query-optimal quantum algorithms for this task were given in \cite{Wan24,FW25} with query complexity $\Theta\rbra{1/\varepsilon}$. 

\subsection{Organization of This Paper}

In \cref{sec:preliminaries}, we include necessary preliminaries for quantum computing. 
In \cref{sec:lifting}, we prove the quantum sample-to-query lifting theorem. 
In \cref{sec:query-lower-bound}, we give formal proofs for a series of quantum query lower bounds that are based on the quantum sample-to-query lifting theorem.

\section{Preliminaries}
\label{sec:preliminaries}

In this section, we introduce the notations for quantum computation and provide some useful tools that will be used in our analysis.

\subsection{Basic notations}

Consider a finite-dimensional Hilbert space $\mathcal{H}$. 
A quantum state in $\mathcal{H}$ is denoted by $\ket{\psi}$. 
The inner product of two quantum states $\ket\psi$ and $\ket\varphi$ is denoted by $\braket{\psi}{\varphi}$. 
The norm of a quantum state is defined by $\Abs{\ket{\psi}} = \sqrt{\braket{\psi}{\psi}}$. 
We denote an $a$-qubit quantum state $\ket{\psi}$ as $\ket{\psi}_a$ with the subscript $a$ indicating the label (and also the number of qubits) of the quantum system; and we write $\bra{\psi}_a$ for the conjugate of $\ket{\psi}_a$.
For example, we write $\ket{0}_a = \ket{0}^{\otimes a}$ and $\ket{0}_a \ket{0}_b = \ket{0}^{\otimes a} \otimes \ket{0}^{\otimes b}$.

A mixed quantum state in $\mathcal{H}$ can be described by a density operator $\rho$ with $\tr\rbra{\rho} = 1$ and $\rho \sqsupseteq 0$, where $\sqsupseteq$ is the L\"owner order, i.e., $A \sqsupseteq B$ if $A-B$ is positive semidefinite. 
Let $\mathcal{D}\rbra{\mathcal{H}}$ denote the set of all density operators in $\mathcal{H}$. 
The trace distance between two mixed quantum states $\rho$ and $\sigma$ is defined by
\[
\frac 1 2 \Abs{\rho - \sigma}_1,
\]
where $\Abs{A}_1 = \tr\rbra{\sqrt{A^\dag A}}$ and $A^\dag$ is the Hermitian conjugate of $A$. 
The fidelity between $\rho$ and $\sigma$ is defined by
\[
\operatorname{F}\rbra*{\rho, \sigma} = \tr\rbra*{\sqrt{\sqrt{\sigma}\rho\sqrt{\sigma}}}.
\]
The following is a relationship between fidelity and trace distance. 
\begin{theorem} [{\cite[Theorem 1]{FvdG99}}]
\label{thm:fidelity-vs-td}
For any two quantum states $\rho$ and $\sigma$, 
\[
1 - \operatorname{F}\rbra{\rho, \sigma} \leq \frac 1 2 \Abs*{\rho - \sigma}_1 \leq \sqrt{1 - \rbra{\operatorname{F}\rbra{\rho, \sigma}}^2}.
\]
\end{theorem}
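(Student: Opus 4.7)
The plan is to prove the two inequalities separately, reducing to the pure-state case for the upper bound and to classical distributions for the lower bound.

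For the upper bound $\frac{1}{2}\Abs{\rho - \sigma}_1 \leq \sqrt{1 - \operatorname{F}\rbra{\rho,\sigma}^2}$, I would invoke Uhlmann's theorem to pick purifications $\ket{\psi}$ of $\rho$ and $\ket{\phi}$ of $\sigma$ in a larger Hilbert space so that $\abs{\braket{\psi}{\phi}} = \operatorname{F}\rbra{\rho,\sigma}$. For two pure states the operator $\ket{\psi}\bra{\psi} - \ket{\phi}\bra{\phi}$ has rank at most two, and a direct spectral calculation inside the two-dimensional span of $\ket{\psi}$ and $\ket{\phi}$ shows its nonzero eigenvalues are $\pm\sqrt{1 - \abs{\braket{\psi}{\phi}}^2}$. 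Hence $\frac{1}{2}\Abs{\ket{\psi}\bra{\psi} - \ket{\phi}\bra{\phi}}_1 = \sqrt{1 - \abs{\braket{\psi}{\phi}}^2}$, and monotonicity of the trace norm under the partial trace over the purifying register then delivers the claim.

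For the lower bound $1 - \operatorname{F}\rbra{\rho,\sigma} \leq \frac{1}{2}\Abs{\rho - \sigma}_1$, I would first establish the classical analogue: for probability distributions $p$ and $q$,
\[
2\rbra*{1 - \sum_i \sqrt{p_i q_i}} = \sum_i \rbra*{\sqrt{p_i} - \sqrt{q_i}}^2 \leq \sum_i \abs{\sqrt{p_i} - \sqrt{q_i}}\rbra*{\sqrt{p_i} + \sqrt{q_i}} = \sum_i \abs{p_i - q_i}.
\]
To lift this to the quantum setting, I would apply it with $p_i = \tr\rbra{M_i \rho}$ and $q_i = \tr\rbra{M_i \sigma}$ for the Fuchs--Caves measurement $\cbra{M_i}$, whose defining property is $\operatorname{F}\rbra{\rho,\sigma} = \sum_i \sqrt{p_i q_i}$. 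Combining the resulting inequality with monotonicity of trace distance under measurement, $\frac{1}{2}\sum_i \abs{p_i - q_i} \leq \frac{1}{2}\Abs{\rho - \sigma}_1$, closes out the argument.

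The only non-routine ingredients are Uhlmann's theorem and the existence of the Fuchs--Caves measurement attaining fidelity; both are standard results whose proofs ultimately rest on polar and singular value decompositions, so there is no real obstacle, only bookkeeping. Since the statement is quoted verbatim from \cite{FvdG99}, the cleanest choice in a final write-up is simply to invoke that reference, but the route sketched above furnishes a self-contained proof should one be preferred.
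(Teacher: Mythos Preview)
The paper does not give its own proof of this statement; it simply quotes the result with a citation to \cite{FvdG99}. Your sketch is correct and is essentially the original Fuchs--van de Graaf argument, so there is nothing to compare---as you yourself note, invoking the reference is the appropriate choice here.
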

A quantum gate or quantum oracle is a unitary operator $U$ such that $U^\dag U = UU^\dag = I$, where $I$ is the identity operator. 
The operator norm of an operator $A$ is defined by
\[
\Abs{A} = \sup_{\Abs{\ket{\psi}} = 1} \Abs{A \ket{\psi}}.
\]

For two quantum channels $\mathcal{E}$ and $\mathcal{F}$ that act on $\mathcal{D}\rbra{\mathcal{H}}$, the trace norm distance between 
 $\mathcal{E}$ and $\mathcal{F}$ is defined by
\[
\Abs{\mathcal{E} - \mathcal{F}}_{\tr} = \sup_{\varrho \in \mathcal{D}\rbra{\mathcal{H}}} \Abs{ \mathcal{E}\rbra{\varrho} - \mathcal{F}\rbra{\varrho} }_1.
\]
The diamond norm distance between 
$\mathcal{E}$ and $\mathcal{F}$ is defined by
\[
    \Abs{\mathcal{E} - \mathcal{F}}_{\diamond} = \sup_{\varrho \in \mathcal{D}\rbra{\mathcal{H} \otimes \mathcal{H}'}} \Abs{ \rbra*{\mathcal{E} \otimes \mathcal{I}_{\mathcal{H}'}}\rbra{\varrho} - \rbra*{\mathcal{F} \otimes \mathcal{I}_{\mathcal{H}'}}\rbra{\varrho} }_1,
\]
where the supremum is chosen over all finite-dimensional Hilbert space $\mathcal{H}'$. We note that the following inequalities between the trace norm distance and the diamond norm distance always hold:
\begin{equation}
\label{eq:tr-vs-diamond}
    \frac{1}{\dim\rbra{\mathcal{H}}} \Abs{\mathcal{\mathcal{E} - \mathcal{F}}}_{\diamond} \leq \Abs{\mathcal{\mathcal{E} - \mathcal{F}}}_{\tr} \leq \Abs{\mathcal{\mathcal{E} - \mathcal{F}}}_{\diamond}.
\end{equation}

\subsection{Quantum state discrimination}

We need an information-theoretic lower bound for the success probability of quantum hypothesis testing between two quantum states in terms of the trace distance between them, which is originated from \cite{Hel67,Hol73}.

\begin{theorem} [Quantum state discrimination, cf. {\cite[Section 9.1.4]{Wil13}}]
\label{thm:HH-measurement}
    Suppose $\rho_0$ and $\rho_1$ are two quantum states. 
    Let $\varrho$ be a quantum state such that $\varrho = \rho_0$ or $\varrho = \rho_1$ with equal probability. 
    For any POVM $\Lambda = \cbra{\Lambda_0, \Lambda_1}$,
    the success probability of distinguishing the two cases is 
    \[
    p = \frac 1 2 \tr\rbra*{\Lambda_0\rho_0} + \frac 1 2 \tr\rbra*{\Lambda_1\rho_1} \leq \frac{1}{2}\rbra*{1+\frac{1}{2}\Abs{\rho_0-\rho_1}_1}. 
    \]
\end{theorem}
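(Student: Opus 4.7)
The plan is to reduce the success probability to a linear functional of $\rho_0-\rho_1$ and then bound that functional using the Jordan (positive/negative) decomposition of the Hermitian operator $\rho_0-\rho_1$. This is the standard route to the Helstrom bound.

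First I would use the completeness relation $\Lambda_0 + \Lambda_1 = I$ to rewrite
\[
p = \frac{1}{2}\tr(\Lambda_0 \rho_0) + \frac{1}{2}\tr((I-\Lambda_0)\rho_1) = \frac{1}{2} + \frac{1}{2}\tr\bigl(\Lambda_0(\rho_0-\rho_1)\bigr),
\]
using $\tr(\rho_1)=1$. So the entire task reduces to showing that $\tr(\Lambda_0(\rho_0-\rho_1)) \le \frac{1}{2}\|\rho_0-\rho_1\|_1$ for every POVM effect $0 \sqsubseteq \Lambda_0 \sqsubseteq I$.

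Next I would invoke the spectral decomposition of the Hermitian operator $\rho_0-\rho_1$ to write $\rho_0-\rho_1 = P - N$ with $P, N \sqsupseteq 0$ having orthogonal supports. Then $\|\rho_0-\rho_1\|_1 = \tr(P)+\tr(N)$, and since $\tr(\rho_0-\rho_1)=0$ we get $\tr(P)=\tr(N)=\frac{1}{2}\|\rho_0-\rho_1\|_1$. Using $0 \sqsubseteq \Lambda_0 \sqsubseteq I$ together with $P,N \sqsupseteq 0$ gives $\tr(\Lambda_0 P) \le \tr(P)$ and $\tr(\Lambda_0 N) \ge 0$, whence
\[
\tr\bigl(\Lambda_0(\rho_0-\rho_1)\bigr) = \tr(\Lambda_0 P) - \tr(\Lambda_0 N) \le \tr(P) = \tfrac{1}{2}\|\rho_0-\rho_1\|_1.
\]
Plugging this into the expression for $p$ yields the claimed bound.

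There is no real obstacle here, since this is a classical result; the only subtlety to be careful about is the sign conventions in the Jordan decomposition and verifying that both $0 \le \tr(\Lambda_0 P) \le \tr(P)$ and $0 \le \tr(\Lambda_0 N) \le \tr(N)$ follow from $0 \sqsubseteq \Lambda_0 \sqsubseteq I$ together with positivity of $P$ and $N$. One could alternatively observe that the optimal $\Lambda_0$ is the projector onto the positive part of $\rho_0-\rho_1$, which saturates the inequality, but for the upper-bound statement only the one-sided argument above is needed.
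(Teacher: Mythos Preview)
Your argument is correct and is exactly the standard proof of the Helstrom bound. The paper does not actually supply a proof of this statement; it is quoted as a known result with a reference to \cite[Section 9.1.4]{Wil13}, so there is nothing further to compare.
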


Now we are ready to derive the sample lower bound for $\textsc{Dis}_{\rho, \sigma}$. 
Recall that $\textsc{Dis}_{\rho, \sigma}$ is a promise problem for quantum state discrimination for which $\rho$ is the only \textit{yes} instance and $\sigma$ is the only \textit{no} instance.

\begin{lemma}
\label{lemma:dis-sample}
    $\mathsf{S}\rbra{\textsc{Dis}_{\rho, \sigma}} = \Omega\rbra{1/\gamma}$, where $\gamma = 1 - \operatorname{F}\rbra{\rho, \sigma}$ is the infidelity.
\end{lemma}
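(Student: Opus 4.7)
The plan is to view the problem as quantum hypothesis testing between the $n$-fold tensor powers $\rho^{\otimes n}$ and $\sigma^{\otimes n}$, and then convert the Helstrom--Holevo optimality bound into a constraint on $n$ via the multiplicativity of fidelity under tensor products.

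First I would fix any quantum tester for $\textsc{Dis}_{\rho, \sigma}$ that uses $n$ samples. Since the tester receives either $\rho^{\otimes n}$ (in the yes case) or $\sigma^{\otimes n}$ (in the no case), its final accept/reject decision can be described by a two-outcome POVM on the $n$-copy space. Applying \cref{thm:HH-measurement} to $\rho_0 = \rho^{\otimes n}$ and $\rho_1 = \sigma^{\otimes n}$, the success probability is bounded by
\[
    p \leq \tfrac{1}{2}\bigl(1 + \tfrac{1}{2}\Abs{\rho^{\otimes n} - \sigma^{\otimes n}}_1\bigr).
\]
The requirement $p \geq 2/3$ therefore forces $\tfrac{1}{2}\Abs{\rho^{\otimes n} - \sigma^{\otimes n}}_1 \geq 1/3$.

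Next I would translate this trace distance lower bound into a fidelity statement using the right inequality of \cref{thm:fidelity-vs-td}, which gives
\[
    \tfrac{1}{2}\Abs{\rho^{\otimes n} - \sigma^{\otimes n}}_1 \leq \sqrt{1 - \operatorname{F}\rbra{\rho^{\otimes n}, \sigma^{\otimes n}}^2}.
\]
Combined with the multiplicativity of fidelity over tensor products, $\operatorname{F}(\rho^{\otimes n}, \sigma^{\otimes n}) = \operatorname{F}(\rho, \sigma)^n = (1-\gamma)^n$, the success condition becomes $(1-\gamma)^{2n} \leq 1 - 1/9 = 8/9$, equivalently $2n \ln\bigl(1/(1-\gamma)\bigr) \geq \ln(9/8)$.

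Finally, I would use $\ln\bigl(1/(1-\gamma)\bigr) \leq 2\gamma$ for $\gamma \in [0, 1/2]$ (and the trivial $n \geq 1$ bound when $\gamma$ is bounded away from $0$) to conclude $n = \Omega(1/\gamma)$. I do not expect any serious obstacle here: the argument is a textbook application of Helstrom--Holevo plus fidelity multiplicativity, and the only subtlety is making sure the multiplicativity of $\operatorname{F}$ on tensor products is invoked in the correct direction so that the bound degrades gracefully as $\gamma \to 0$.
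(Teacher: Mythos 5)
Your proposal is correct and follows essentially the same route as the paper: apply the Helstrom--Holevo bound (\cref{thm:HH-measurement}) to $\rho^{\otimes n}$ vs.\ $\sigma^{\otimes n}$, convert the trace distance to fidelity via \cref{thm:fidelity-vs-td}, use multiplicativity of fidelity, and deduce $n = \Omega(1/\gamma)$. The only (immaterial) difference is the final elementary estimate: the paper uses Bernoulli's inequality $(1-\gamma)^{2S} \geq 1 - 2S\gamma$, whereas you use the logarithmic bound $\ln\bigl(1/(1-\gamma)\bigr) \leq 2\gamma$.
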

\begin{proof}
    Suppose that there is a quantum tester $\mathcal{T}$ for $\textsc{Dis}_{\rho, \sigma}$ with sample complexity $S = \mathsf{S}\rbra{\textsc{Dis}_{\rho, \sigma}}$.
    Then, it means that $\mathcal{T}$ can be used to distinguish between $\rho^{\otimes S}$ and $\sigma^{\otimes S}$ with success probability $p_{\textup{succ}} \geq 2/3$. 
    On the other hand, by \cref{thm:HH-measurement}, we have
    \[
    p_{\textup{succ}} \leq \frac{1}{2}\rbra*{1+\frac{1}{2}\Abs*{\rho^{\otimes S}-\sigma^{\otimes S}}_1}.
    \]
    Moreover, by \cref{thm:fidelity-vs-td}, we have
    \begin{align*}
    p_{\textup{succ}}
    & \leq \frac 1 2 \rbra*{1 + \sqrt{1 - \rbra*{\operatorname{F}\rbra{\rho^{\otimes S}, \sigma^{\otimes S}}}^2}} \\
    & = \frac 1 2 \rbra*{1 + \sqrt{1 - \rbra*{F\rbra{\rho, \sigma}}^{2S}}} \\
    & = \frac 1 2 \rbra*{1+\sqrt{1-\rbra{1-\gamma}^{2S}}},
    \end{align*}
    which gives $\sqrt{1 - \rbra{1-\gamma}^{2S}} \geq 1/6$.
    This further implies that
    \[
    \frac{35}{36} \geq \rbra{1-\gamma}^{2S} \geq 1 - 2S\gamma,
    \]
    and thus $S \geq 1/72\gamma = \Omega\rbra{1/\gamma}$.
\end{proof}

\subsection{Quantum query algorithms}

A quantum query algorithm $\mathcal{A}$ with quantum query complexity $Q$ is described by a quantum circuit $\mathcal{A}^U$ that uses $Q$ queries to the quantum unitary oracle $U$:
\[
\mathcal{A}^U = G_Q U_Q \dots G_2 U_2 G_1 U_1 G_0,
\]
where each $U_j$ is either (controlled-)$U$ or (controlled-)$U^\dag$ and each $G_j$ is a unitary operator (consisting of one- and two-qubit quantum gates) that do not depend on $U$. 
Throughout this paper, for simplicity, a query to $U$ means a query to $U$, $U^\dag$, controlled-$U$, or controlled-$U^\dag$.

Quantum amplitude estimation is a basic quantum query algorithm, which is usually used as a subroutine in other algorithms. 

\begin{theorem} [Quantum amplitude estimation, {\cite[Theorem 12]{BHMT02}}]
\label{thm:ampest}
    Suppose that $U$ is a unitary operator such that
    \[
    U \ket{0} = \sqrt{p} \ket{0} \ket{\phi_0} + \sqrt{1-p} \ket{1} \ket{\phi_1},
    \]
    where $\ket{\phi_0}$ and $\ket{\phi_1}$ are normalized pure quantum states and $p \in \sbra{0, 1}$. 
    There is a quantum query algorithm that outputs $\tilde p \in \sbra{0, 1}$ such that 
    \[
    \abs*{\tilde p - p} \leq \frac{2\pi\sqrt{p\rbra{1-p}}}{M} + \frac{\pi^2}{M^2}
    \]
    with probability at least $8/\pi^2$, using $M$ queries to $U$. 
    In particular, we can estimate $p$ to within additive error $\varepsilon$ by letting $M = O\rbra{1/\varepsilon}$. 
\end{theorem}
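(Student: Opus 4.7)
The plan is to reduce amplitude estimation to phase estimation via a Grover-like operator whose eigenphases encode the unknown amplitude. Writing $U\ket{0} = \sin(\theta_p)\ket{\Psi_g} + \cos(\theta_p)\ket{\Psi_b}$ with $\sin^2(\theta_p) = p$ and $\theta_p \in [0,\pi/2]$, where $\ket{\Psi_g} = \ket{0}\ket{\phi_0}$ and $\ket{\Psi_b} = \ket{1}\ket{\phi_1}$, I would construct the Grover operator $Q = -U S_0 U^\dag S_\chi$, where $S_\chi = (I - 2\ket{0}\bra{0}) \otimes I$ flips the sign of the ``good'' subspace and $S_0 = I - 2\ket{0}\bra{0}$ reflects about $\ket{0}$ on the full register. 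Each application of $Q$ costs two queries to $U$ (one to $U$ and one to $U^\dag$), while the two reflections are query-free.

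On the two-dimensional subspace $\mathcal{V} = \mathrm{span}\{\ket{\Psi_g}, \ket{\Psi_b}\}$, which is $Q$-invariant and contains the starting vector $U\ket{0}$, $Q$ acts as a rotation by angle $2\theta_p$. Hence $Q|_\mathcal{V}$ has eigenvalues $e^{\pm 2 i \theta_p}$ with eigenvectors $\ket{\omega_\pm}$, and $U\ket{0}$ decomposes as an equal-weight superposition of $\ket{\omega_+}$ and $\ket{\omega_-}$. Running quantum phase estimation on $Q$ with state register $U\ket{0}$ and $M$ controlled applications of $Q$ then returns, with probability at least $8/\pi^2$, an estimate $\tilde{\theta}$ such that $|\tilde{\theta} - \theta_p| \le \pi/M$, after folding the $\pm 2\theta_p$ ambiguity back to $\theta_p \in [0,\pi/2]$ (harmless since $\sin^2$ is even).

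Setting $\tilde p = \sin^2(\tilde\theta)$ and applying the mean value theorem, $|\tilde p - p| \leq |\sin(2\theta_p)|\cdot|\tilde\theta - \theta_p| + (\tilde\theta - \theta_p)^2 = 2\sqrt{p(1-p)}\cdot|\tilde\theta - \theta_p| + (\tilde\theta - \theta_p)^2$, which is bounded by $2\pi\sqrt{p(1-p)}/M + \pi^2/M^2$ as claimed. The ``in particular'' clause follows since the right-hand side is maximized near $p = 1/2$ with value $\pi/M + \pi^2/M^2$, so choosing $M = O(1/\varepsilon)$ suffices. The main delicate point is the phase estimation step itself: both the $\pi/M$ accuracy bound and the precise $8/\pi^2$ success probability come from a careful analysis of the QFT output distribution on the generally non-dyadic input $2\theta_p/(2\pi)$, and one has to verify that the equal-weight superposition on $\ket{\omega_\pm}$ does not degrade the success probability (each eigenbranch succeeds independently, and both return the same estimate $\tilde p$).
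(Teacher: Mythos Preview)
The paper does not prove this statement; it is quoted verbatim from \cite{BHMT02} as a preliminary result and used as a black box. Your proposal is precisely the standard argument from that reference: build the Grover iterate $Q$, observe that it acts as a rotation by $2\theta_p$ on the two-dimensional span of the good and bad parts, run phase estimation on $Q$ with $M$ controlled applications to recover $\theta_p$ up to additive error $\pi/M$ with probability $\geq 8/\pi^2$, and then convert the phase error to an amplitude error via the second-order bound $\abs{\sin^2\tilde\theta - \sin^2\theta_p} \leq 2\sqrt{p(1-p)}\,\abs{\tilde\theta-\theta_p} + (\tilde\theta-\theta_p)^2$. This is correct and matches the cited source.

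One small bookkeeping remark: you note that each application of $Q$ costs two queries (one to $U$ and one to $U^\dag$), so $M$ phase-estimation iterations cost $2M$ queries rather than $M$. This factor of $2$ is irrelevant for the $M = O(1/\varepsilon)$ conclusion, but if you want the literal bound in the statement you should either redefine $M$ to count Grover iterations (as \cite{BHMT02} does) or adjust the constants accordingly.
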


\subsection{Block-encoding}

Block-encoding is a useful tool to describe information encoded in a matrix block in a quantum unitary operator, which was initially developed for Hamiltonian simulation \cite{LC19,CGJ19} and was later systematically studied in \cite{GSLW19}. 

\begin{definition} [Block-encoding]
\label{def:block-encoding}
    Suppose that $A$ is an $n$-qubit operator, $\alpha, \varepsilon \geq 0$ and $a \in \mathbb{N}$. 
    An $\rbra{n+a}$-qubit unitary operator $B$ is an $\rbra{\alpha, a, \varepsilon}$-block-encoding of $A$, if 
    \[
    \Abs*{\alpha \bra{0}^{\otimes a} B \ket{0}^{\otimes a} - A} \leq \varepsilon.
    \]
    Especially when $\alpha = 1$ and $\varepsilon = 0$, we may simply call $B$ a block-encoding of $A$ (if the parameter $a$ is clear or unimportant in context).
\end{definition}

Intuitively, if $B$ is an $(\alpha,a,\varepsilon)$-block-encoding of $A$, then the upper left block of the matrix $B$ is close to $A/\alpha$, i.e.,
\[
B \approx \begin{bmatrix}
    A/\alpha & * \\
    * & *
\end{bmatrix}.
\]

The following lemma shows how to construct a block-encoding of the product of two block-encoded operators. 

\begin{lemma} [Product of block-encoded operators, {\cite[Lemma 30]{GSLW19}}]
\label{lemma:product-block-encoding}
    Suppose $\rbra{n+a}$-qubit unitary operator $U$ is an $\rbra{\alpha, a, \delta}$-block-encoding of $n$-qubit operator $A$, and $\rbra{n+b}$-qubit unitary operator $V$ is a $\rbra{\beta, b, \varepsilon}$-block-encoding of $n$-qubit operator $B$. Then, $\rbra{U \otimes I_b} \rbra{V \otimes I_a}$ is an $\rbra{\alpha\beta, a+b, \alpha\varepsilon+\beta\delta}$-block-encoding of $AB$. 
\end{lemma}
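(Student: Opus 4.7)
The plan is to reduce the statement to an inequality between two $n$-qubit operators obtained by sandwiching $W := (U \otimes I_b)(V \otimes I_a)$ with the ancilla state $\ket{0}^{\otimes(a+b)}$. Let me define the ``exact sub-blocks''
\[
\tilde A := \alpha \bra{0}^{\otimes a} U \ket{0}^{\otimes a}, \qquad \tilde B := \beta \bra{0}^{\otimes b} V \ket{0}^{\otimes b},
\]
so that the block-encoding hypotheses read $\Abs{\tilde A - A} \le \delta$ and $\Abs{\tilde B - B} \le \varepsilon$. The goal is then
\[
\Abs*{\alpha\beta \bra{0}^{\otimes(a+b)} W \ket{0}^{\otimes(a+b)} - AB} \le \alpha\varepsilon + \beta\delta.
\]

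First, I would verify the factorization
\[
\bra{0}^{\otimes(a+b)} W \ket{0}^{\otimes(a+b)} = \bra{0}^{\otimes a} U \ket{0}^{\otimes a} \cdot \bra{0}^{\otimes b} V \ket{0}^{\otimes b},
\]
so that the scaled sub-block of $W$ equals $\tilde A \tilde B$. This is a straightforward computation: $U$ acts nontrivially only on the system and the first $a$ ancilla qubits, while $V$ acts nontrivially only on the system and the last $b$ ancilla qubits, so when we sandwich the product with $\ket{0}^{\otimes a} \otimes \ket{0}^{\otimes b}$ the two ancilla projections decouple and yield the product of the individual sub-blocks. A careful bookkeeping of which tensor factor each identity pads is the only nontrivial bit here, and I expect this to be the main obstacle (or rather the main thing to write out cleanly), since the rest of the argument is immediate.

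Second, given the factorization, the desired bound reduces to estimating $\Abs{\tilde A \tilde B - AB}$. I would use the standard telescoping decomposition
\[
\tilde A \tilde B - AB = \tilde A (\tilde B - B) + (\tilde A - A) B,
\]
apply the triangle inequality and submultiplicativity of the operator norm, and bound $\Abs{\tilde A} \le \alpha$ (since $U$ is unitary, the sub-block $\bra{0}^{\otimes a} U \ket{0}^{\otimes a}$ has operator norm at most $1$) and $\Abs{B} \le \beta$ (the standing assumption for a block-encoded operator). Putting the pieces together yields
\[
\Abs{\tilde A \tilde B - AB} \le \alpha \cdot \varepsilon + \delta \cdot \beta,
\]
which is exactly the claim with normalization factor $\alpha\beta$ and $a + b$ ancilla qubits.

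Finally, I would remark that the dimensions of the ancilla registers add (hence $a+b$) and that unitarity of the composed operator $W$ is automatic since it is a product of two unitaries on the joint space, so nothing further needs to be checked for $W$ to qualify as an $(\alpha\beta, a+b, \alpha\varepsilon + \beta\delta)$-block-encoding of $AB$ in the sense of \cref{def:block-encoding}.
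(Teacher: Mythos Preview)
The paper does not give its own proof of this lemma; it is quoted from \cite{GSLW19}. Your argument is the standard one and matches how the result is established there: factor the $(a+b)$-ancilla sub-block of $W$ as the product of the individual sub-blocks, then telescope $\tilde A\tilde B - AB$.

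One small caveat: you invoke $\Abs{B} \le \beta$ as ``the standing assumption for a block-encoded operator,'' but \cref{def:block-encoding} in this paper does not impose that; it only yields $\Abs{B} \le \beta + \varepsilon$ via $\Abs{\tilde B}\le\beta$ and $\Abs{\tilde B - B}\le\varepsilon$. With your telescoping this gives $\alpha\varepsilon + (\beta+\varepsilon)\delta$, i.e.\ an extra second-order term $\varepsilon\delta$. The exact bound $\alpha\varepsilon + \beta\delta$ holds once one adopts the common convention $\Abs{A}\le\alpha$, $\Abs{B}\le\beta$ (as \cite{GSLW19} effectively does); either way the discrepancy is immaterial for every use of the lemma in this paper.
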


With \cref{lemma:product-block-encoding}, we can scale down the block-encoded operator. 

\begin{lemma} [Down-scaling of block-encoded operators]
\label{corollary:scaling-block-encoding}
    Suppose $\rbra{n+a}$-qubit unitary operator $U$ is a $\rbra{1, a, \varepsilon}$-block-encoding of $n$-qubit operator $A$. Then, for every $\alpha > 1$, there is an $\rbra{n+a+1}$-qubit unitary operator $U_\alpha$ that is a $\rbra{1, a+1, \varepsilon/\alpha}$-block-encoding of $A/\alpha$, using $1$ query  to $U$ and $O(1)$ one- and two-qubit quantum gates.
\end{lemma}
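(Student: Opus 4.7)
The plan is to augment the block-encoding $U$ by one fresh ancilla qubit carrying a single-qubit unitary $V$ that satisfies $\bra{0}V\ket{0} = 1/\alpha$, and then to set $U_\alpha := V \otimes U$.

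First I would pick any one-qubit unitary $V$ with $\bra{0}V\ket{0} = 1/\alpha$, which is feasible since $\alpha > 1$ (for instance, a $Y$-rotation by angle $2\arccos(1/\alpha)$). With the fresh qubit serving as the new $(a+1)$-th ancilla, I would define $U_\alpha := V \otimes U$. This is automatically an $(n+a+1)$-qubit unitary, uses a single query to $U$, and costs only $O(1)$ extra one-qubit gates to implement $V$.

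The verification would then proceed by a short direct calculation: the tensor-product structure yields
\[
\bra{0}^{\otimes(a+1)} U_\alpha \ket{0}^{\otimes(a+1)} = \rbra*{\bra{0}V\ket{0}} \cdot \rbra*{\bra{0}^{\otimes a} U \ket{0}^{\otimes a}} = \frac{1}{\alpha}\bra{0}^{\otimes a} U \ket{0}^{\otimes a}.
\]
Writing $\bra{0}^{\otimes a} U \ket{0}^{\otimes a} = A + E$ with $\Abs{E} \leq \varepsilon$, this differs from $A/\alpha$ by $E/\alpha$, whose operator norm is at most $\varepsilon/\alpha$, so $U_\alpha$ is the claimed $(1, a+1, \varepsilon/\alpha)$-block-encoding of $A/\alpha$.

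I do not anticipate a substantive obstacle: the construction essentially exploits that a one-qubit unitary with $\bra{0}V\ket{0}=1/\alpha$ is itself an exact block-encoding of the scalar $1/\alpha$, so tensoring with $U$ scales both the encoded matrix and its error by the same factor. The only subtlety worth flagging is that invoking \cref{lemma:product-block-encoding} directly on $V \otimes I_n$ and $U$ would yield the weaker error bound $\varepsilon$ rather than the sharp $\varepsilon/\alpha$, because the generic product estimate does not exploit that the first factor is exact; I would therefore prefer the one-line direct calculation above.
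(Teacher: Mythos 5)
Your proposal is correct and is essentially the paper's own construction: the paper also tensors $U$ with a one-qubit rotation $V$ satisfying $\bra{0}V\ket{0} = 1/\alpha$ (its $\rbra{U \otimes I}\rbra{V \otimes I_a}$ with $V$ a rotation tensored with $I_n$ is exactly your $V \otimes U$). The only difference is that you justify the $\varepsilon/\alpha$ error by the direct calculation rather than by citing the product-of-block-encodings lemma, and your remark is apt — the generic product bound would only give $\varepsilon$, so the direct computation is what actually delivers the sharper constant.
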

\begin{proof}
    This is directly obtained from the construction of \cref{lemma:product-block-encoding} by letting 
    \[
    V = \begin{bmatrix}
        1/\alpha & -\sqrt{1 - 1/\alpha^2} \\
        \sqrt{1 - 1/\alpha^2} & 1/\alpha
    \end{bmatrix} \otimes I_n,
    \]
    which is a $\rbra{1, 1, 0}$-block-encoding of $I_n/\alpha$.
    Here, $V$ is a one-qubit rotation and can be implemented by $O(1)$ one- and two-qubit quantum gates.
    We can verify that $W = \rbra{U_{n+a} \otimes I_{1}}\rbra{V_{n+1} \otimes I_a}$ is a $\rbra{1, a+1, \varepsilon/\alpha}$-block-encoding of $A/\alpha$, where the subscripts $n, a, 1$ denote the quantum system that the unitary operator acts on.
    The detailed calculation is given as follows.
    \begin{align*}
        \Abs*{ \frac{A}{\alpha} - \bra{0}_{a+1} W \ket{0}_{a+1} }
        & = \Abs*{ \frac{A}{\alpha} - \bra{0}_{a+1} \rbra{U_{n+a} \otimes I_{1}}\rbra{V_{n+1} \otimes I_a} \ket{0}_{a+1} } \\
        & = \Abs*{ \frac{A}{\alpha} - \bra{0}_{a} U_{n+a} \ket{0}_a \cdot \bra{0}_{1} V_{n+1} \ket{0}_1 } \\
        & = \Abs*{ \frac{A}{\alpha} - \bra{0}_{a} U_{n+a} \ket{0}_a \cdot \frac{I_n}{\alpha} } \\
        & = \frac{1}{\alpha} \Abs[\big]{ A - \bra{0}_{a} U_{n+a} \ket{0}_a } \\
        & \leq \frac{\varepsilon}{\alpha}.
    \end{align*}
\end{proof}

Robust oblivious amplitude amplification \cite{BCK15} can be seen as a method for scaling up a block-encoded operator when it is unitary. 
Here, we use the version from \cite{GSLW19}.

\begin{lemma} [Robust oblivious amplitude amplification, {\cite[Theorem 15]{GSLW19}}]
    \label{lmm:roaa}
    Suppose that $n$ is an odd positive integer, $\varepsilon \in \rbra{0, 1}$, and $U$ is a unitary operator that is a $\rbra{1, a, \varepsilon}$-block-encoding of $\sin\rbra{\frac{\pi}{2n}} \cdot W$, where $W$ is a unitary operator. 
    Then, there is a quantum circuit $\widetilde U$ that is a $\rbra{1, a+1, 2n\varepsilon}$-block-encoding of $W$, using $O\rbra{n}$ queries to $U$. 
\end{lemma}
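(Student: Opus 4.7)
My plan is to adapt the standard amplitude amplification argument to the oblivious block-encoding setting, and then to track how the block-encoding error propagates through the amplification.

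Let $\theta := \pi/(2n)$. I would first handle the ideal case $\varepsilon = 0$, where the hypothesis is equivalent to the statement that, for every $n$-qubit pure state $\ket{\psi}$,
\[
U \rbra*{\ket{0}^{\otimes a} \otimes \ket{\psi}} = \sin\theta \cdot \ket{0}^{\otimes a} \otimes W\ket{\psi} + \cos\theta \cdot \ket*{\Phi_\psi^\perp},
\]
where $\ket{\Phi_\psi^\perp}$ lies in the orthogonal complement of $\ket{0}^{\otimes a} \otimes \mathbb{C}^{2^n}$. Defining the projector $\Pi_0 := \rbra{\ket{0}\!\bra{0}}^{\otimes a} \otimes I_n$ and the reflection $R_0 := 2\Pi_0 - I$, the key object is the Grover-style iterate
\[
G := -U R_0 U^\dag R_0,
\]
which consumes two queries to $U$. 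A standard two-dimensional invariant-subspace calculation, carried out separately for each $\ket{\psi}$, shows that $G$ rotates by angle $2\theta = \pi/n$ toward the ``good'' vector $\ket{0}^{\otimes a} \otimes W\ket{\psi}$.

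Next, I would define the ideal circuit $\widetilde U_{\textup{id}} := G^{(n-1)/2}\, U$. Starting from $\ket{0}^{\otimes a}\ket{\psi}$, the first $U$ yields angle $\theta$ from the ``bad'' axis, and each of the $(n-1)/2$ applications of $G$ contributes an additional $2\theta$; since $n$ is odd, the total angle is $\theta + (n-1)\theta = n\theta = \pi/2$, so the final state is $\pm \ket{0}^{\otimes a} \otimes W\ket{\psi}$. I would use one additional ancilla qubit (the ``$+1$'' in $a+1$) to absorb the global sign $(-1)^{(n-1)/2}$ via a standard controlled one-qubit rotation, obtaining a genuine $\rbra{1, a+1, 0}$-block-encoding of $W$. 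The total query count is $2 \cdot (n-1)/2 + 1 = n = O\rbra{n}$.

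The main obstacle is the robustness analysis. For $\varepsilon > 0$, pick an exact $\rbra{1,a,0}$-block-encoding $U_{\textup{id}}$ of $\sin\theta \cdot W$ with $\Abs{U - U_{\textup{id}}} \leq \varepsilon$, and build $\widetilde U$ by running the same circuit as above but with each call to $U_{\textup{id}}$ replaced by the physical $U$. Since the interleaved reflections $R_0$ are exactly unitary and error-free, a telescoping triangle-inequality argument yields $\Abs{\widetilde U - \widetilde U_{\textup{id}}} \leq n\varepsilon$; the stated bound $2n\varepsilon$ on the block-encoding error of $W$ then follows after accounting for the projection onto the $\ket{0}^{\otimes a+1}$ ancilla block, by a short calculation in the spirit of \cref{lemma:product-block-encoding}. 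The delicate point is verifying that the error accumulates linearly rather than multiplicatively, which hinges on the exactness of $R_0$ and on each noisy $U$-call being sandwiched by unitaries of norm one, so that errors never get amplified before they combine.
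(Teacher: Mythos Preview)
The paper does not prove \cref{lmm:roaa}; it is quoted verbatim from \cite[Theorem~15]{GSLW19} and used as a black box, so there is no in-paper proof to compare against.

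Your outline is the classical oblivious-amplitude-amplification argument (in the style of Berry--Childs--Cleve--Kothari--Somma): build the Grover iterate $G=-UR_0U^\dag R_0$ and apply it $(n-1)/2$ times after an initial $U$. This is correct in the exact case and is a legitimate alternative to the proof in \cite{GSLW19}, which instead casts the procedure as a quantum singular value transformation: since $W$ is unitary, every singular value of the encoded block equals $\sin\theta$, and one applies the degree-$n$ odd Chebyshev-type polynomial sending $\sin\theta\mapsto 1$. The extra ancilla in ``$a+1$'' there comes from the real-polynomial QSVT construction, not from absorbing a sign; a global phase never requires an ancilla, so your explanation of the ``$+1$'' is not right (in fact the bare Grover-iterate circuit needs no extra ancilla at all).

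There is a real gap in your robustness step. From ``$U$ is a $\rbra{1,a,\varepsilon}$-block-encoding of $\sin\theta\cdot W$'' you only get $\Abs{\bra{0}^{\otimes a}U\ket{0}^{\otimes a}-\sin\theta\, W}\le\varepsilon$; this does \emph{not} by itself furnish a unitary $U_{\textup{id}}$ that exactly block-encodes $\sin\theta\,W$ with $\Abs{U-U_{\textup{id}}}\le\varepsilon$. You can manufacture such a $U_{\textup{id}}$ via a block-encoding robustness lemma (e.g.\ \cref{lemma:robust-block-encoding} here, or Lemma~23 in the full version of \cite{GSLW19}), but that lemma gives $\Abs{U-U_{\textup{id}}}\le C\varepsilon$ with a constant $C>1$ depending on $\sin\theta$, and it carries its own dimension hypotheses. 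Your telescoping then yields $Cn\varepsilon$ rather than the stated $2n\varepsilon$. The sharp constant in \cite{GSLW19} comes from tracking the error directly through the SVT polynomial rather than detouring through an ideal unitary. For the purposes of this paper only $O\rbra{n\varepsilon}$ is ever used, so your argument, once patched with the robustness lemma, would suffice.
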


We also provide a method for scaling up block-encoded operators in the general case, which can be seen as a generalization of {\cite[Theorem 2]{LC17a}}, {\cite[Theorem 15]{GSLW19}}, and {\cite[Lemma 12]{LS24}}. 

\begin{lemma}[Up-scaling of block-encoded operators] \label{corollary:up-scaling-block-encoding}
    Let $U$ be a unitary operator that is an $\rbra{\alpha, a, \varepsilon}$-block-encoding of $n$-qubit operator $A$ with $\alpha > 1$, $\varepsilon \in [0, 1]$ and $\Abs{A} \leq 1$. 
    Then, for every $1 < \beta < \alpha$ and $0 < \varepsilon' < \min\cbra{\frac{\beta}{\alpha}, \frac{1}{2}}$, there is a quantum circuit $\widetilde U$ that is a $\rbra{\beta, a+1, \Theta\rbra{\sqrt{\alpha\varepsilon} \log\rbra{\frac{1}{\varepsilon'}}} + \varepsilon'}$-block-encoding of $A$, using $O\rbra{\frac{\alpha}{\beta-1} \log\rbra{\frac{1}{\varepsilon'}}}$ queries to $U$. 
    If we further have $\varepsilon = 0$ and 
    $\varepsilon' \leq 2\sqrt{2\beta^2+2\beta}-2\beta-2$,
    then there is a unitary operator $U'$ that is a $\rbra{\beta, a+3, 0}$-block-encoding of $A$ satisfying
    \[
    \Abs*{\widetilde U \otimes I_2 - U'} \leq \frac{2\sqrt{2}\varepsilon'}{\sqrt{4\beta^2 - \rbra*{2+\varepsilon'}^2}},
    \]
    where $I_2$ is the identity operator acting on $2$ qubits.
\end{lemma}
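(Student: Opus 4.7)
The plan is to apply quantum singular value transformation (QSVT) to $U$ with a polynomial amplifier. Since $U$ is an $\rbra{\alpha, a, \varepsilon}$-block-encoding of $A$ and $\Abs{A} \leq 1$, its upper-left block encodes $A/\alpha$, whose singular values all lie in $[0, 1/\alpha]$. To produce a $\rbra{\beta, \cdot, \cdot}$-block-encoding of $A$, each encoded singular value $\sigma$ must be amplified to $\rbra{\alpha/\beta}\sigma \in [0, 1/\beta] \subset [-1,1]$. Concretely, I would construct an odd real polynomial $P$ on $[-1,1]$ with $\abs{P(x)} \leq 1$ for $x \in [-1,1]$ and $\abs{P(x) - \rbra{\alpha/\beta}x} \leq \varepsilon'/\beta$ for $x \in [-1/\alpha, 1/\alpha]$, using the singular-value-amplification constructions of \cite{GSLW19} to achieve degree $d = O\rbra*{\frac{\alpha}{\beta-1}\log\rbra{1/\varepsilon'}}$. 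The hypothesis $\varepsilon' < \beta/\alpha$ guarantees that the target line $\rbra{\alpha/\beta}x$ stays inside $[-1,1]$ on $[-1/\alpha,1/\alpha]$, so such a $P$ exists.

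\textbf{QSVT and error tracking.} Feeding $U$ together with a phase sequence realising $P$ into odd-parity QSVT (e.g.,~\cite[Theorem~17]{GSLW19}) produces a circuit $\widetilde U$ on $a+1$ ancilla qubits using $O(d)$ queries to $U$, whose upper-left block equals $P^{\rbra{\mathrm{SV}}}(A/\alpha)$ when $\varepsilon=0$. Multiplying this block by $\beta$ approximates $A$ to within $\varepsilon'$, giving the claimed block-encoding in the exact-input case. For $\varepsilon > 0$, I would invoke the standard robustness of QSVT for approximate block-encodings to show that the $\varepsilon/\alpha$-perturbation of the input propagates to at most $O\rbra{d \cdot \sqrt{\varepsilon/\alpha}}$ additional output error; substituting $d$ yields $\Theta\rbra{\sqrt{\alpha\varepsilon}\log\rbra{1/\varepsilon'}}$, which together with $\varepsilon'$ reproduces the stated precision.

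\textbf{Exact block-encoding via dilation (second part).} When $\varepsilon=0$, $B := \beta\bra{0}_{a+1}\widetilde U\ket{0}_{a+1}$ satisfies $\Abs{B-A} \leq \varepsilon'$, so the upper-left block of $\widetilde U$ is already almost the desired block of $A/\beta$. Using the two extra ancilla qubits supplied by $\widetilde U \otimes I_2$, I would construct $U'$ as a unitary dilation that exactly encodes $A/\beta$, guided by the textbook formula
\[
V = \begin{bmatrix} A/\beta & \sqrt{I - AA^\dag/\beta^2} \\ \sqrt{I - A^\dag A/\beta^2} & -A^\dag/\beta \end{bmatrix},
\]
which is an exact $\rbra{\beta, 1, 0}$-block-encoding of $A$. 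Extending $V$ along the two auxiliary qubits lifts it to an exact $\rbra{\beta, a+3, 0}$-block-encoding $U'$ agreeing with $\widetilde U \otimes I_2$ outside the small ``correction'' block. The claimed $\Abs{\widetilde U \otimes I_2 - U'} \leq 2\sqrt{2}\varepsilon'/\sqrt{4\beta^2-(1+\varepsilon')^2}$ then follows from a block-by-block comparison combined with a Lipschitz estimate for $X \mapsto \sqrt{I - X/\beta^2}$: the denominator $\sqrt{4\beta^2-(1+\varepsilon')^2}$ measures the gap between $\Abs{B}/\beta \leq (1+\varepsilon')/\beta$ and $1$, and the smallness threshold $\frac{1}{2}\rbra{\sqrt{5\beta^2+4\beta}-\beta-2}$ on $\varepsilon'$ is exactly the condition keeping that square root well-defined and Lipschitz.

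\textbf{Main obstacle.} The critical step is polynomial amplification: realising the tight degree $O\rbra{\frac{\alpha}{\beta-1}\log\rbra{1/\varepsilon'}}$ requires a near-optimal amplifier (rather than, say, a Chebyshev truncation of the identity), and obtaining the $\sqrt{\alpha\varepsilon}\log\rbra{1/\varepsilon'}$ scaling of the QSVT-propagated error demands the strongest form of QSVT robustness; a naive triangle-inequality argument only yields $d \cdot \varepsilon/\alpha$, which is worse by a factor of $\sqrt{\alpha/\varepsilon}$. The secondary difficulty is bookkeeping the exact Lipschitz constants in the dilation step — in particular tracing the $2\sqrt{2}$ factor and the precise threshold on $\varepsilon'$ back to the spectral-gap analysis of $I - X/\beta^2$ near $X = AA^\dag$.
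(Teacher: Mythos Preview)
Your QSVT construction and robustness analysis for the first claim are essentially the paper's approach: the paper builds the amplifier as $R(x) = \frac{\alpha}{\beta}xP(x)$ with $P$ the rectangle-function approximation of \cite[Corollary~16]{GSLW19}, applies odd-parity QSVT, and controls the $\varepsilon>0$ case via the $4d\sqrt{\Abs{A'-A/\alpha}}$ robustness bound (\cite[Lemma~22]{GSLW19}). Your ``singular-value-amplification'' polynomial is the same object up to packaging, and you correctly anticipate the $\sqrt{\alpha\varepsilon}\log(1/\varepsilon')$ scaling.

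The second part, however, has a genuine gap. The textbook dilation
\[
V=\begin{bmatrix}A/\beta & \sqrt{I-AA^\dag/\beta^2}\\ \sqrt{I-A^\dag A/\beta^2} & -A^\dag/\beta\end{bmatrix}
\]
is \emph{some} exact block-encoding of $A/\beta$, but there is no reason it should be close to the particular unitary $\widetilde U\otimes I_2$ produced by QSVT: the off-diagonal blocks of $\widetilde U$ are determined by the phase sequence, not by $\sqrt{I-BB^\dag/\beta^2}$ in any canonical basis. ``Extending $V$ along the two auxiliary qubits'' does not repair this; tensoring $V$ with identities still gives a fixed unitary unrelated to $\widetilde U$. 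What is actually needed is to \emph{perturb $\widetilde U\otimes I_2$ itself} so that its upper-left block moves from $B/\beta$ to $A/\beta$ while the whole matrix stays unitary and moves by at most the claimed amount. This is exactly the content of the robustness-of-block-encodings lemma (\cite[Lemma~23]{GSLW19}), which the paper invokes as a black box: writing $\widetilde U\otimes I_2$ in $2\times 2$ block form and replacing the off-diagonal blocks $C,D$ by $C'=\sqrt{I-AA^\dag/\beta^2}\,W_C$, $D'=W_D\sqrt{I-A^\dag A/\beta^2}$, where $W_C,W_D$ are the partial isometries coming from the polar decompositions of $C,D$. Your Lipschitz estimate for $X\mapsto\sqrt{I-X/\beta^2}$ is precisely the ingredient that bounds $\Abs{C-C'}$ and $\Abs{D-D'}$, and the condition $\varepsilon'\le\frac12(\sqrt{5\beta^2+4\beta}-\beta-2)$ is what keeps $I-\rbra{\frac{B+A/\beta}{2}}^2$ positive; but you must apply it to the blocks of $\widetilde U$, not to an independent $V$.
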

\begin{proof}
    See \cref{sec:up-scaling}.
\end{proof}

We also need the linear combination of unitaries (LCU) lemma \cite{CW12,BCC+15}. 

\begin{definition} [State-preparation-pair]
    \label{def:state-preparation-pair}
    Let $y \in \mathbb{C}^m$ and $\Abs{y}_1 \leq \beta$. 
    The pair of unitary operators $\rbra{P_L, P_R}$ is a $\rbra{\beta, b, \varepsilon}$-state-preparation-pair for $y$, if $P_L \ket{0}^{\otimes b} = \sum_{j = 0}^{2^b-1} c_j \ket{j}$ and $P_R \ket{0}^{\otimes b} = \sum_{j = 0}^{2^b-1} d_j \ket{j}$ such that $\sum_{j=0}^{m-1} \abs{\beta c_j^* d_j - y_j} \leq \varepsilon$ and $c_j^* d_j = 0$ for all $m \leq j < 2^b$.
\end{definition}

\begin{lemma} [Linear combination of block-encoded operators, {\cite[Lemma 29]{GSLW19}}]
\label{lemma:lcu}
    Let $A = \sum_{j=0}^{m-1} y_jA_j$ be an $s$-qubit operator and $\varepsilon_1, \varepsilon_2 > 0$.
    Suppose that $\rbra{P_L, P_R}$ is a $\rbra{\beta, b, \varepsilon_1}$-state-preparation-pair for $y$, and each $U_j$ is an $\rbra{\alpha, a, \varepsilon_2}$-block-encoding of $A_j$ for $0\leq j\leq m-1$. 
    Then, we can implement a quantum circuit that is an $\rbra{\alpha\beta, a+b, \alpha\varepsilon_1 + \alpha\beta\varepsilon_2}$-block-encoding of $A$, using $1$ query to each of $P_L$, $P_R$ and $U_j$ for $0\leq j\leq m-1$.
\end{lemma}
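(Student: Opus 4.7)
The plan is to write down the standard LCU circuit explicitly, compute its upper-left $s$-qubit block by a direct inner-product calculation, and then split the deviation from $A$ into a ``state-preparation part'' and a ``block-encoding part,'' each bounded by the corresponding hypothesis.

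First, I would assemble the circuit. Let the ancilla register consist of a $b$-qubit ``select'' part and an $a$-qubit ``block-encoding'' part. Define the controlled-select unitary
\[
    W \;=\; \sum_{j=0}^{2^b-1} \ket{j}\!\bra{j}_{b} \otimes U_j,
\]
where for $m \le j < 2^b$ we set $U_j := I$; this is harmless because the state-preparation-pair condition forces $c_j^* d_j = 0$ on that range. The LCU circuit is then
\[
    V \;=\; \bigl(P_L^\dagger \otimes I_a \otimes I_s\bigr)\, W\, \bigl(P_R \otimes I_a \otimes I_s\bigr),
\]
which uses exactly one query each to $P_L$, $P_R$, and each $U_j$, as required.

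Next, I would identify the block. Expanding $P_L\ket{0}_b = \sum_j c_j\ket{j}_b$ and $P_R\ket{0}_b = \sum_j d_j\ket{j}_b$ as in \cref{def:state-preparation-pair}, a direct computation gives
\[
    \bra{0}_{a+b} V \ket{0}_{a+b} \;=\; \sum_{j=0}^{m-1} c_j^{\,*}\, d_j \, \bra{0}_a U_j \ket{0}_a.
\]
Writing $\widetilde A_j := \alpha\bra{0}_a U_j \ket{0}_a$, the block-encoding hypothesis is exactly $\Abs{\widetilde A_j - A_j} \le \varepsilon_2$, and rescaling yields $\alpha\beta\bra{0}V\ket{0} = \beta \sum_j c_j^{\,*} d_j\, \widetilde A_j$.

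The final step is the error analysis, which is a clean triangle-inequality split:
\[
    \alpha\beta\bra{0}V\ket{0} - A \;=\; \sum_j \bigl(\beta c_j^{\,*} d_j - y_j\bigr) A_j \;+\; \beta \sum_j c_j^{\,*} d_j\, \bigl(\widetilde A_j - A_j\bigr).
\]
The first sum is bounded using $\Abs{A_j} \le \alpha$ (inherited from the block-encoding of $A_j$ by $U_j$) together with the pair guarantee $\sum_j \abs{\beta c_j^{\,*} d_j - y_j} \le \varepsilon_1$, giving at most $\alpha\varepsilon_1$. The second sum is bounded using $\Abs{\widetilde A_j - A_j} \le \varepsilon_2$ and Cauchy--Schwarz $\sum_j \abs{c_j d_j} \le (\sum_j \abs{c_j}^2)^{1/2}(\sum_j \abs{d_j}^2)^{1/2} = 1$, since $P_L\ket{0}_b$ and $P_R\ket{0}_b$ are unit vectors. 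This gives at most $\beta\varepsilon_2$, which is $\le \alpha\beta\varepsilon_2$ for the typical normalization $\alpha \ge 1$. Adding the two contributions yields the advertised bound $\alpha\varepsilon_1 + \alpha\beta\varepsilon_2$.

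There is no genuine obstacle; the proof is a direct calculation. The main care required is bookkeeping: keeping the tensor-factor ordering consistent, invoking $c_j^{\,*} d_j = 0$ for $j \ge m$ to legitimately absorb the dummy $U_j = I$, and confirming that $\Abs{A_j} \le \alpha$ propagates correctly through the first error term so that the final constant matches the claimed $\alpha\varepsilon_1 + \alpha\beta\varepsilon_2$.
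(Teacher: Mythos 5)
The paper itself does not prove this lemma; it is imported verbatim from GSLW19 (Lemma 29; Lemma 52 in the full version), so there is no internal proof to compare against. Your proposal reconstructs exactly the standard argument from that source --- the circuit $\rbra{P_L^\dagger \otimes I}W\rbra{P_R \otimes I}$ with $W=\sum_j \ket{j}\bra{j}\otimes U_j$, the corner-block identity $\bra{0}^{\otimes a+b}V\ket{0}^{\otimes a+b}=\sum_{j<m}c_j^*d_j\,\bra{0}^{\otimes a}U_j\ket{0}^{\otimes a}$ (the $j\ge m$ terms vanish because $c_j^*d_j=0$ there), and a two-term triangle-inequality split --- and it is correct in substance; under the paper's query convention, uses of $P_L^\dagger$ and of controlled-$U_j$ inside $W$ count as queries to $P_L$ and $U_j$, so your query accounting also matches the statement.

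Two bookkeeping points are worth recording. First, the hypotheses only give $\Abs{A_j}\le\alpha+\varepsilon_2$, not $\Abs{A_j}\le\alpha$, so as written your first error term is bounded by $\rbra{\alpha+\varepsilon_2}\varepsilon_1$. This is avoided by splitting the other way around: with $\widetilde A_j \coloneqq \alpha\bra{0}^{\otimes a}U_j\ket{0}^{\otimes a}$, write $\alpha\beta\bra{0}V\ket{0}-A=\sum_j\rbra{\beta c_j^*d_j-y_j}\widetilde A_j+\sum_j y_j\rbra{\widetilde A_j-A_j}$; here $\Abs{\widetilde A_j}\le\alpha$ holds exactly (it is $\alpha$ times a compression of a unitary), and $\Abs{y}_1\le\beta$ bounds the second sum, giving $\alpha\varepsilon_1+\beta\varepsilon_2$ cleanly. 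Second, as you yourself flag, upgrading $\beta\varepsilon_2$ to the advertised $\alpha\beta\varepsilon_2$ requires $\alpha\ge 1$, which \cref{def:block-encoding} does not impose; this implicit normalization is inherited from the original GSLW19 statement and holds in every application of the lemma in this paper, so it is harmless, but it is the one place where your bound and the quoted constant do not coincide for free.
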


We provide a lemma that allows one to approximate the algorithm $\mathcal{A}^U$ if an approximate block-encoding of $U$ is given. 

\begin{lemma} [Block-encoding quantum oracles]
\label{lemma:block-encoding-as-query}
    Suppose that $\mathcal{A}^U$ is a quantum query algorithm using $Q$ queries to quantum oracle $U$.
    Let $V$ be a $\rbra{1, a, \varepsilon}$-block-encoding of $U$.
    Then, we can implement a quantum circuit $\mathcal{A}^{U \to V}$ that is a $\rbra{1, Qa, Q\varepsilon}$-block-encoding of $\mathcal{A}^U$, using $O\rbra{Q}$ queries to $V$. 
\end{lemma}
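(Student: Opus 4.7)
The plan is to implement $\mathcal{A}^{U\to V}$ by simply replacing each of the $Q$ query slots in $\mathcal{A}^U = G_Q U_Q \cdots G_1 U_1 G_0$ with a use of $V$ acting on the main register together with a fresh $a$-qubit ancilla register, and then bound the resulting block-encoding error by iterated application of \cref{lemma:product-block-encoding}. The $Q$ ancilla registers will together provide the $Qa$ ancilla qubits appearing in the claimed $(1,Qa,Q\varepsilon)$-block-encoding.

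First I would handle the different flavors of query uniformly: note that if $V$ is a $(1,a,\varepsilon)$-block-encoding of $U$, then $V^\dagger$ is a $(1,a,\varepsilon)$-block-encoding of $U^\dagger$, and a controlled version of $V$ (with the control acting trivially on the $a$ ancilla qubits) is a $(1,a,\varepsilon)$-block-encoding of controlled-$U$, since all of these relations are preserved by the defining inequality $\Abs{\bra{0}^{\otimes a} V \ket{0}^{\otimes a} - U} \leq \varepsilon$. Thus each of the $Q$ query operators $U_j$ admits a $(1,a,\varepsilon)$-block-encoding $\widehat{V}_j$ built from one call to $V$ (or $V^\dagger$, or a controlled version thereof).

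Next I would define, for $1 \leq j \leq Q$, the unitary $\widetilde{V}_j$ that acts by $\widehat{V}_j$ on the main system and on the $j$-th ancilla register (of $a$ qubits), while acting as the identity on the remaining $\rbra{Q-1}a$ ancilla qubits. Viewed as an operator on the joint space of the main system and all $Qa$ ancillas, $\widetilde{V}_j$ is still a $(1,a,\varepsilon)$-block-encoding of $U_j$ in the sense that projecting only the $j$-th ancilla onto $\ket{0}^{\otimes a}$ recovers $U_j$ up to error $\varepsilon$. Analogously, each gate $G_j$ tensored with the identity on all $Qa$ ancillas is a $(1,0,0)$-block-encoding of $G_j$. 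Setting $\mathcal{A}^{U\to V} := G_Q \widetilde{V}_Q G_{Q-1} \widetilde{V}_{Q-1} \cdots G_1 \widetilde{V}_1 G_0$ gives a circuit that uses exactly $O(Q)$ queries to $V$.

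Finally I would bound the error by iterating \cref{lemma:product-block-encoding} with $\alpha=\beta=1$: the product of a $(1,\cdot,\delta)$-block-encoding and a $(1,\cdot,\delta')$-block-encoding is a $(1,\cdot,\delta+\delta')$-block-encoding. Multiplying $\mathcal{A}^{U\to V}$ into existence one factor at a time, the $\widetilde{V}_j$ contribute $\varepsilon$ each and the $G_j$ contribute $0$ each, so by induction on $Q$ the full product is a $(1,Qa,Q\varepsilon)$-block-encoding of $\mathcal{A}^U$, as required. The only real bookkeeping subtlety (and the mildest of obstacles here) is to make sure the ancilla accounting in \cref{lemma:product-block-encoding} is applied consistently, since that lemma nominally uses disjoint ancilla registers for the two factors; this is resolved by keeping all $Q$ ancilla registers present from the start and noting that $G_j$ carries a $0$-qubit ancilla so its composition with the running product simply retains the existing ancillas rather than adjoining new ones.
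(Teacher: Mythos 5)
Your proposal is correct and follows essentially the same route as the paper's proof: observe that $V^\dag$ and controlled-$V$ are $\rbra{1,a,\varepsilon}$-block-encodings of $U^\dag$ and controlled-$U$, substitute a block-encoding $V_j$ for each query $U_j$ (with $G_j$ viewed as a $\rbra{1,0,0}$-block-encoding of itself), and accumulate the error via iterated application of \cref{lemma:product-block-encoding} to get a $\rbra{1,Qa,Q\varepsilon}$-block-encoding using $Q$ queries. Your explicit ancilla bookkeeping with one fresh $a$-qubit register per query is exactly the disjoint-ancilla accounting implicit in the paper's repeated use of that lemma.
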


\cref{lemma:block-encoding-as-query} explains how to implement the quantum query algorithm $\mathcal{A}^U$ when it is only given a block-encoding of $U$ (possibly with errors), which will be used in some of our applications such as \cref{thm:entangle-entro} and \cref{thm:ph-est}. 
Here, the notation $\mathcal{A}^{U \to V}$ means a quantum query algorithm with query access to $V$ that approximately simulates the quantum query algorithm $\mathcal{A}^U$ with query access to $U$, where $V$ is a block-encoding of $U$ with errors.

\begin{proof} [Proof of \cref{lemma:block-encoding-as-query}]
    Note that $V^\dag$ is a $\rbra{1, a, \varepsilon}$-block-encoding of $U^\dag$, and similarly, controlled-$V$ is a $\rbra{1, a, \varepsilon}$-block-encoding of controlled-$U$. 
    Suppose that $\mathcal{A}^U = G_Q U_Q \dots G_2 U_2 G_1 U_1 G_0$, where $U_j$ is either (controlled-)$U$ or (controlled-)$U^\dag$ and $G_j$ consists of one- and two-qubit quantum gates that do not depend on $U$. 
    Then, we can implement $V_j$ that is a $\rbra{1, a, \varepsilon}$-block-encoding of $U_j$, using $1$ query to $V$. 
    Also note that $G_j$ is a $\rbra{1, 0, 0}$-block-encoding of $G_j$ (itself). 
    By applying \cref{lemma:product-block-encoding} multiple times, we can implement a quantum circuit $\mathcal{A}^{U \to V}$ such that it is a $\rbra{1, Qa, Q\varepsilon}$-block-encoding of $\mathcal{A}^U$, using $Q$ queries to $V$. 
\end{proof}

\subsection{Quantum eigenvalue transformation}

The technique of polynomial eigenvalue transformation of unitary operators will be used as a key tool in our analysis.

\begin{theorem} [Polynomial eigenvalue transformation, {\cite[Theorem 31]{GSLW19}}]
\label{thm:qsvt}
    Suppose unitary operator $U$ is an $\rbra{\alpha, a, \varepsilon}$-block-encoding of Hermitian operator $A$. If $\delta > 0$ and $p\rbra{x} \in \mathbb{R}\sbra{x}$ is a polynomial of degree $d$ such that
    \[
    \forall x \in \sbra{-1, 1}, \quad \abs*{p\rbra{x}} \leq \frac 1 2,
    \]
    then there is a unitary operator $\widetilde U$ which is a $\rbra{1, a+2, 4d\sqrt{\varepsilon/\alpha} + \delta}$-block-encoding of $p\rbra{A/\alpha}$, using $O\rbra{d}$ queries to $U$, and $O\rbra{\rbra{a+1}d}$ one- and two-qubit quantum gates. 
    Moreover, the description of $\widetilde U$ can be computed classically in $\poly\rbra{d, \log\rbra{1/\delta}}$ time. 
\end{theorem}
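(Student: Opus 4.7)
The plan is to prove the result through three main ingredients: quantum signal processing (QSP), qubitization, and a linear-combination-of-unitaries step to handle parity. At the top level, QSP provides a way to realize polynomial transformations of a scalar $x \in \sbra{-1,1}$ by interleaving a fixed single-qubit signal unitary with $Z$-rotations; qubitization lifts this construction from the scalar $x$ to the Hermitian operator $A/\alpha$ encoded in the unitary $U$; and LCU combines even and odd parts of the target polynomial.

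First, I would establish the QSP existence statement: for any real polynomial $p$ of degree $d$ with $\abs{p(x)} \leq 1/2$ on $\sbra{-1,1}$ and definite parity matching $d \bmod 2$, there exist phases $\phi_0, \dots, \phi_d \in \mathbb{R}$ so that the product $\prod_{k=0}^d e^{i\phi_k Z} R(x)$ has $p(x)$ in its top-left entry, where $R(x)$ is a fixed single-qubit rotation whose top-left entry is $x$. The proof proceeds by induction on $d$ using a Fej\'er-Riesz-type complementary polynomial $q$ satisfying $p(x)^2 + (1-x^2)q(x)^2 \leq 1$, and the phases can be computed classically in $\poly\rbra{d, \log\rbra{1/\delta}}$ time (e.g.\ via Haah's algorithm), contributing the $\delta$ term to the final error.

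Second, I would apply qubitization. Adding one ancilla qubit to $U$, one constructs a walk operator $W$ whose action on each two-dimensional invariant subspace (associated to an eigenvector of $A/\alpha$ with eigenvalue $\lambda_i$) is an $SU(2)$ rotation by angle $\arccos\rbra{\lambda_i}$. Interleaving $d$ copies of $W$ with controlled phase rotations $e^{i\phi_k (2\Pi - I)}$, where $\Pi$ projects onto the block-encoding subspace, then simultaneously performs the QSP transformation inside every invariant subspace, producing a block-encoding of either the even or odd part of $p\rbra{A/\alpha}$. Since QSP yields only polynomials of a fixed parity, I would split $p = p_{\textup{e}} + p_{\textup{o}}$, construct the two block-encodings separately, and combine them via \cref{lemma:lcu}; this explains why the final block-encoding uses $a+2$ ancilla qubits.

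The main obstacle is the error analysis. Each use of the $\varepsilon$-approximate $U$ perturbs the walk operator by $O\rbra{\varepsilon/\alpha}$, and a naive telescoping over $d$ layers would give $O\rbra{d \varepsilon/\alpha}$. The sharper bound $O\rbra{d\sqrt{\varepsilon/\alpha}}$ comes from a more delicate argument exploiting unitarity together with robustness of the invariant-subspace decomposition under perturbations of $U$: although the individual two-dimensional invariant subspaces can be shifted by $O\rbra{\sqrt{\varepsilon/\alpha}}$ in operator norm, the QSP sequence accumulates this perturbation only linearly in $d$. Verifying this stability of qubitization under approximate block-encodings, and tracking it cleanly through the LCU combination, is where most of the technical work lies.
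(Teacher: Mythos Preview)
This theorem is not proved in the paper: it appears in the preliminaries as a direct citation of \cite[Theorem 31]{GSLW19}, with no accompanying argument. So there is no ``paper's own proof'' to compare against.

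That said, your sketch is a faithful outline of the proof in the cited source. The structure QSP $\to$ qubitization $\to$ LCU over parities is exactly how Gily\'en--Su--Low--Wiebe establish the result, and your identification of the error term $4d\sqrt{\varepsilon/\alpha}$ as arising from the robustness of the singular value transformation under perturbations of the block-encoded matrix is correct; this is precisely the content of their robustness lemma (quoted later in this paper as \cref{lemma:robust-qsvt}). One small refinement: the $\sqrt{\varepsilon/\alpha}$ dependence is cleanest to derive not by tracking invariant subspaces through the qubitized walk, but by applying the robustness lemma directly to the polynomial $p$ evaluated on the exact versus approximate block-encoded operator, after first extending $p$ to a complex polynomial satisfying the four conditions of \cref{lemma:robust-qsvt} via \cref{lemma:real-qsp}. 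This avoids having to reason about perturbed two-dimensional subspaces altogether.
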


Based on \cref{thm:qsvt}, a quantum query algorithm for Hamiltonian simulation is developed in \cite{GSLW19}. 

\begin{theorem} [Hamiltonian simulation, {\cite[Corollary 32]{GSLW19}}]
\label{thm:hamiltonian-simulation}
    Suppose $U$ is a unitary operator that is a $\rbra{1, a, 0}$-block-encoding of Hamiltonian $H$.
    For every $\varepsilon \in \rbra{0, 1/2}$ and $t \in \mathbb{R}$, it is necessary and sufficient to use 
    \[
    \Theta\rbra*{\abs*{t} + \frac{\log\rbra{1/\varepsilon}}{\log\rbra{e + \log\rbra{1/\varepsilon}/\abs*{t}}}}
    \]
    queries to $U$ to implement a unitary operator $V$ that is a $\rbra{1, a+2, \varepsilon}$-block-encoding of $e^{-iHt}$. 
\end{theorem}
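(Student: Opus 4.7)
The plan is to prove the two directions separately. The sufficiency (upper bound) will follow from the polynomial eigenvalue transformation (\cref{thm:qsvt}) applied to carefully chosen Chebyshev polynomial approximations of $\cos\rbra{xt}$ and $\sin\rbra{xt}$, combined via a standard linear-combination-of-unitaries step. The necessity (lower bound) is the well-known ``no-fast-forwarding'' phenomenon, and I would reduce from an existing quantum query lower bound.

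For the upper bound, I would begin with the Jacobi--Anger expansion
\[
e^{-ixt} \;=\; J_0\rbra{t} \;+\; 2\sum_{k=1}^{\infty} \rbra{-i}^k J_k\rbra{t}\, T_k\rbra{x},
\]
where $T_k$ is the $k$-th Chebyshev polynomial of the first kind and $J_k$ is the $k$-th Bessel function of the first kind. Separating real and imaginary parts yields Chebyshev expansions of $\cos\rbra{xt}$ and $\sin\rbra{xt}$ whose truncation errors are controlled by the super-exponential decay of $\abs{J_k\rbra{t}}$ once $k \gtrsim e\abs{t}/2$. A direct tail estimate shows that truncating at degree
\[
d \;=\; \Theta\rbra*{\abs{t} + \frac{\log\rbra{1/\varepsilon}}{\log\rbra{e + \log\rbra{1/\varepsilon}/\abs{t}}}}
\]
yields uniform error $\varepsilon/4$ on $\sbra{-1,1}$. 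After scaling each truncated polynomial by $1/4$ so that the hypothesis $\abs{p\rbra{x}} \leq 1/2$ of \cref{thm:qsvt} is met, I would invoke that theorem to obtain block-encodings of $\tfrac{1}{4}\cos\rbra{Ht}$ and $\tfrac{1}{4}\sin\rbra{Ht}$ using $O\rbra{d}$ queries each. Combining them through a $\rbra{1,-i}$-weighted \cref{lemma:lcu} produces a block-encoding of $\tfrac{1}{4} e^{-iHt}$, and finally I would apply \cref{corollary:up-scaling-block-encoding} (or, equivalently, robust oblivious amplitude amplification \cref{lmm:roaa} since $\tfrac{1}{4} e^{-iHt}$ is a scalar multiple of a unitary) to boost this to a $\rbra{1, a+2, \varepsilon}$-block-encoding of $e^{-iHt}$.

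For the lower bound, I would exhibit a parameterized family of Hamiltonians $\cbra{H_x}$, where $x$ ranges over inputs to a problem whose quantum query lower bound is already known. The natural choice is to reduce from computing the \textsc{Parity} of $n$ bits (which has quantum query complexity $\Omega\rbra{n}$ by the polynomial method): encode $x \in \cbra{0,1}^n$ into $H_x = \tfrac{1}{2}\sum_{i=1}^{n} \rbra{-1}^{x_i} Z_i$ with a one-query block-encoding, choose $t = \Theta\rbra{n}$, and observe that $e^{-iH_x t}$ applied to a product state in the $X$-basis accumulates a global relative phase that reveals the parity of $x$ after a single measurement; this yields the $\Omega\rbra{\abs{t}}$ term. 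The second term $\log\rbra{1/\varepsilon}/\log\rbra{e + \log\rbra{1/\varepsilon}/\abs{t}}$ is obtained separately in the small-$\abs{t}$ regime from a polynomial-method argument that lower bounds the minimum degree of any polynomial uniformly approximating $e^{-ixt}$ to additive error $\varepsilon$ on $\sbra{-1,1}$, and then feeds this degree lower bound into the standard degree-vs-queries relation of \cite{BBC+01}.

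The main obstacle is obtaining the precise form of the logarithmic correction in the lower bound. The $\Omega\rbra{\abs{t}}$ part is a routine consequence of no-fast-forwarding once the correct reduction is in hand, but matching the $\log\rbra{1/\varepsilon}/\log\log\rbra{1/\varepsilon}$-type factor requires a quantitative lower bound on the polynomial approximation of $e^{-ixt}$, typically obtained via a Markov--Bernstein inequality or a direct adversary argument on a finely tuned pair of Hamiltonians. On the upper-bound side, the only quantitatively delicate step is the Bessel-tail estimate that underlies the truncation degree $d$, which is a classical special-function calculation; once it is in place, every remaining step is a mechanical application of the block-encoding toolkit (\cref{thm:qsvt}, \cref{lemma:lcu}, \cref{corollary:up-scaling-block-encoding}) developed in the preliminaries.
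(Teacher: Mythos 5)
The paper does not prove this statement at all: it is imported verbatim as a preliminary, with the citation to \cite{GSLW19} serving as the proof. So the only meaningful comparison is with the original qubitization argument, and your upper-bound sketch is essentially that argument (Jacobi--Anger truncation of $e^{-ixt}$, QSVT applied to the truncated $\cos$ and $\sin$ parts, an LCU combination, then amplification). One quantitative caveat: with the paper's restated \cref{thm:qsvt}, which caps $\abs{p\rbra{x}}$ at $1/2$, your $1/4$-scaled polynomials combined through \cref{lemma:lcu} with weights $\rbra{1,-i}$ leave you with a block-encoding of $e^{-iHt}$ with subnormalization $1/8$ (or $1/4$); that matches neither the exact $\sin\rbra{\pi/2n}$ hypothesis of \cref{lmm:roaa} nor the reach of \cref{corollary:up-scaling-block-encoding}, which only attains $\beta>1$ and costs an extra multiplicative $\log\rbra{1/\varepsilon}$ in queries, destroying the tight $\Theta\rbra{\abs{t}+\log\rbra{1/\varepsilon}/\log\rbra{e+\log\rbra{1/\varepsilon}/\abs{t}}}$ bound. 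The original proof avoids this by engineering the subnormalization to be exactly $1/2$ and applying three-round oblivious amplitude amplification, so you would need the full-strength QSVT (allowing $\abs{P}\leq 1$ for definite-parity polynomials) rather than the paper's simplified restatement.

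The genuine gap is in the lower-bound half. Your parity gadget fails as written: $H_x=\tfrac{1}{2}\sum_i\rbra{-1}^{x_i}Z_i$ has operator norm $n/2$, so it admits no $\rbra{1,a,0}$-block-encoding as required by the theorem (you must rescale by $\Theta\rbra{n}$, which is in fact where the simulation time comes from), and, more seriously, the phase that $e^{-iH_xt}$ imprints on a product state such as $\ket{+}^{\otimes n}$ is a \emph{global} phase, so ``a single measurement'' of the uncontrolled evolution reveals nothing about the parity. The reduction can be repaired --- use a $Z$-eigenstate together with controlled evolution and phase kickback (controlled queries are permitted in the paper's model), or the standard continuous-time-walk construction of Berry--Ahokas--Cleve--Sanders --- but the step as stated does not work. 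Moreover, the $\varepsilon$-dependent term is only gestured at: a degree lower bound for uniformly approximating $e^{-ixt}$ on $\sbra{-1,1}$ does not transfer to a query lower bound against arbitrary algorithms querying a block-encoding via the Boolean polynomial method of \cite{BBC+01}; one needs the dedicated argument of \cite{GSLW19} (their optimality theorem) or an equivalent adaptation of the polynomial method to unitary oracles. Since the paper treats the theorem as a black box, none of this affects the paper itself, but as a standalone proof your necessity direction is incomplete.
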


To apply polynomial eigenvalue transformation (\cref{thm:qsvt}), we need some useful polynomial approximations. 
The following is a polynomial approximation of positive power function.

\begin{lemma} [Polynomial approximation of positive power functions, {\cite[Corollary 3.4.14]{Gil19}}]
\label{lemma:positive-power}
    Suppose $\delta, \varepsilon \in \rbra{0, 1/2}$ and $c \in \rbra{0, 1}$. Then, there is an efficiently computable even polynomial $p\rbra{x} \in \mathbb{R}\sbra{x}$ of degree $O\rbra*{\frac{1}{\delta}\log\rbra*{\frac{1}{\varepsilon}}}$ such that
    \begin{align*}
        & \forall x \in \sbra*{-1, 1}, \quad \abs*{p\rbra*{x}} \leq 1, \\
        & \forall x \in \sbra*{\delta, 1}, \quad \abs*{p\rbra*{x} - \frac 1 2 x^c} \leq \varepsilon.
    \end{align*}
\end{lemma}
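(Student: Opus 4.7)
The plan is to construct $p(x)$ as an even polynomial $p(x) = q(x^2)$, where $q$ is a polynomial approximation of $\tfrac{1}{2}y^{c/2}$ on $[\delta^2, 1]$. This reduction is natural because every even polynomial is of the form $q(x^2)$, and if $\deg q = D/2$ then $\deg p = D$. The required conditions translate to: (i) $|q(y) - \tfrac{1}{2}y^{c/2}| \le \varepsilon$ for $y \in [\delta^2, 1]$, and (ii) $|q(y)| \le 1$ for $y \in [0,1]$; then $p(x) = q(x^2)$ automatically inherits the two conditions on $[-1,1]$ and $[\delta,1]$ respectively.

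For the core approximation, I would apply the affine change of variable $w = \frac{2y - (1+\delta^2)}{1-\delta^2}$ mapping $[\delta^2, 1]$ bijectively to $[-1,1]$, and expand $\tfrac{1}{2}y^{c/2}$ as a function of $w$ in its Chebyshev series. The function is real-analytic in a Bernstein ellipse $E_\rho$ around $[-1,1]$ whose parameter satisfies $\rho - 1 = \Theta(\delta)$, as computed from the distance from $[-1,1]$ to the singularity at $w_0 = -\frac{1+\delta^2}{1-\delta^2}$. By classical Chebyshev approximation theory (Bernstein's theorem), the degree-$n$ truncation $\tilde q$ has sup-norm error $O(\rho^{-n})$ on $[\delta^2, 1]$, so taking $n = O(\tfrac{1}{\delta}\log\tfrac{1}{\varepsilon})$ yields approximation error $\varepsilon/2$, which matches the target degree.

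Finally, to enforce the global bound $|q(y)| \le 1$ on $[0, 1]$, I would multiply $\tilde q$ by a smooth cutoff polynomial $r(y)$ of comparable degree that is approximately $1$ on $[\delta^2, 1]$, bounded by $1$ on $[0, 1]$, and decays rapidly toward $0$ on $[0, \delta^2/2]$. Such a cutoff can be built by rescaling a standard polynomial approximation to the error function or a smoothed step (techniques used throughout the QSVT literature and the earlier sections of this paper). Setting $q(y) = r(y)\tilde q(y)$, possibly with a small constant rescaling by $1 - O(\varepsilon)$ to absorb cross-error terms, produces the desired $q$, and $p(x) = q(x^2)$ is the required even polynomial of degree $O(\tfrac{1}{\delta}\log\tfrac{1}{\varepsilon})$.

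The main obstacle is the cutoff step. The Chebyshev partial sum $\tilde q(y)$ can blow up exponentially for $y \in [0, \delta^2)$ because the rescaled variable $w$ leaves $[-1,1]$ there and the Chebyshev polynomials grow rapidly off the interval. One must therefore choose $r$ so that its decay near $0$ dominates the growth of $\tilde q$ while it remains close to $1$ on $[\delta^2, 1]$ to preserve the approximation. Balancing these two requirements, and then verifying the composite error and boundedness estimates, is the delicate technical core; however, the target degree $O(\tfrac{1}{\delta}\log\tfrac{1}{\varepsilon})$ provides comfortable slack (even the stronger bound $O(\tfrac{1}{\sqrt{\delta}}\log\tfrac{1}{\varepsilon})$ from the Bernstein analysis alone is within reach), so the construction goes through.
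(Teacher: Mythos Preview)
The paper does not prove this lemma; it is quoted verbatim from Gily\'en's thesis, so there is no in-paper argument to compare against. Your outline is in the spirit of the standard ``local approximation plus smoothed cutoff'' technique used in that reference, but as written it has a degree mismatch in the cutoff step.

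Your reduction $p(x)=q(x^2)$ and the Bernstein/Chebyshev analysis for $\tfrac12 y^{c/2}$ on $[\delta^2,1]$ are correct: the branch point $y=0$ maps to $w_0=-(1+\Theta(\delta^2))$, giving $\rho-1=\Theta(\delta)$ and hence Chebyshev degree $n=O(\tfrac1\delta\log\tfrac1\varepsilon)$ in $y$. The problem is the localisation. You propose $q(y)=r(y)\tilde q(y)$ with $r$ a cutoff ``of comparable degree'' transitioning on $[\delta^2/2,\delta^2]$. But a rectangle-type polynomial with transition width $\Theta(\delta^2)$ inside $[0,1]$ necessarily has degree $\Theta(\tfrac{1}{\delta^2}\log\tfrac1{\varepsilon'})$ (this is exactly the scaling in \cref{lemma:rect}), and you really do need the decay: on $[0,\delta^2]$ the partial sum $\tilde q$ can grow like $T_n(w_0)\approx e^{\Theta(n\delta)}=(1/\varepsilon)^{\Theta(1)}$ by the extremal property of Chebyshev polynomials. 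So $r(y)$ of degree $O(\tfrac1\delta\log\tfrac1\varepsilon)$ cannot tame $\tilde q$ on $[0,\delta^2]$, and the product $q$ ends up with degree $O(\tfrac1{\delta^2}\log\tfrac1\varepsilon)$, missing the target.

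The fix is to do the multiplication in the $x$ variable rather than in $y$: set $p(x)=s(x)\,\tilde q(x^2)$ where $s$ is an \emph{even} cutoff that is $\approx 1$ on $[\delta,1]\cup[-1,-\delta]$ and $\le \varepsilon^{O(1)}$ on $[-\delta/2,\delta/2]$. Now the transition width is $\Theta(\delta)$, so $s$ has degree $O(\tfrac1\delta\log\tfrac1\varepsilon)$, matching that of $\tilde q(x^2)$, and the product is even and of the right degree. This is essentially what the reference does. (Your final parenthetical that Bernstein alone yields the ``stronger bound'' $O(\tfrac{1}{\sqrt{\delta}}\log\tfrac1\varepsilon)$ is not right either: with $\rho-1=\Theta(\delta)$ the Chebyshev degree is $\Theta(\tfrac1\delta\log\tfrac1\varepsilon)$, not $\Theta(\tfrac1{\sqrt\delta}\log\tfrac1\varepsilon)$.)
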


\subsection{Block-encoding of density operators}

The following lemma introduces a method to construct a quantum circuit that is a block-encoding of a quantum state, given a quantum oracle that prepares its purification.

\begin{lemma} [Block-encoding of density operators, {\cite[Lemma 25]{GSLW19}}]
\label{lemma:block-encoding-of-density}
    Suppose that an $\rbra{n+a}$-qubit unitary operator $\mathcal{O}_\rho$ prepares a purification of an $n$-qubit quantum state $\rho$. Then, there is a $\rbra{2n+a}$-qubit unitary operator $U_\rho$, which is a $\rbra{1, n+a, 0}$-block-encoding of $\rho$, using $O\rbra{1}$ queries to $\mathcal{O}_\rho$ and $O\rbra{a}$ one- and two-qubit quantum gates. 
\end{lemma}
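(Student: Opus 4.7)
The plan is to give an explicit $3$-line unitary construction and then verify the block-encoding identity by a direct calculation. I introduce three registers: the purifying register $P$ on $a$ qubits, the ``main'' register $A$ on $n$ qubits (together forming the $(n+a)$-qubit ancilla of the block-encoding), and a ``target'' register $B$ on $n$ qubits. By assumption, $\mathcal{O}_\rho$ acts on $AP$ and satisfies $\mathcal{O}_\rho \ket{0}_A\ket{0}_P = \ket{\psi}_{AP}$ with $\operatorname{tr}_P\!\rbra{\ket{\psi}\!\bra{\psi}_{AP}} = \rho$. Write the Schmidt decomposition $\ket{\psi}_{AP} = \sum_i \sqrt{p_i}\,\ket{v_i}_A\ket{u_i}_P$, so that $\rho = \sum_i p_i \ket{v_i}\!\bra{v_i}$ with $\{\ket{u_i}\}$ orthonormal.

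Next I would \emph{define} the candidate block-encoding as
\[
U_\rho \;:=\; \rbra*{\mathcal{O}_\rho^\dag \otimes I_B} \cdot \rbra*{I_P \otimes \operatorname{SWAP}_{A,B}} \cdot \rbra*{\mathcal{O}_\rho \otimes I_B},
\]
which is manifestly unitary, uses exactly $2$ queries to $\mathcal{O}_\rho$ (counting $\mathcal{O}_\rho^\dag$ as a query), and a SWAP between two $n$-qubit registers implementable with $O(n)$ two-qubit gates (matching the gate count claimed in the lemma up to the convention of counting).

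The key step is then to verify the block-encoding identity $\rbra{\bra{0}_P \bra{0}_A \otimes I_B}\,U_\rho\,\rbra{\ket{0}_P \ket{0}_A \otimes I_B} = \rho$. I would apply $U_\rho$ to an arbitrary input $\ket{0}_P \ket{0}_A \ket{\phi}_B$: after the first $\mathcal{O}_\rho$ the state becomes $\sum_i \sqrt{p_i}\,\ket{v_i}_A\ket{u_i}_P\ket{\phi}_B$; the SWAP yields $\sum_i \sqrt{p_i}\,\ket{\phi}_A\ket{u_i}_P\ket{v_i}_B$; then $\mathcal{O}_\rho^\dag$ acts on $AP$. Projecting onto $\ket{0}_P\ket{0}_A$ produces inner products $\bra{\psi}_{AP}\rbra{\ket{\phi}_A \ket{u_i}_P} = \sqrt{p_i}\,\braket{v_i}{\phi}$, leaving on register $B$ the vector $\sum_i p_i \braket{v_i}{\phi}\,\ket{v_i} = \rho\ket{\phi}$. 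Since $\ket{\phi}$ is arbitrary, the block-encoding identity holds exactly, giving a $(1, n+a, 0)$-block-encoding as required.

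The only substantive step is the calculation above; the rest is bookkeeping. The potential confusion point, and the one I would be careful about when writing it up, is the ordering and identification of the two $n$-qubit registers $A$ and $B$: the ``main'' register of the purification $\mathcal{O}_\rho$ is \emph{not} the system register of the block-encoding, and the role of $\mathcal{O}_\rho^\dag$ after the SWAP is precisely to collapse the purifying degrees of freedom into the inner product that realises the partial trace. Once this identification is made, the verification is a one-line use of the Schmidt decomposition and the fact that the $\ket{u_i}$ are orthonormal.
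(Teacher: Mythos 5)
Your construction is exactly the one the paper relies on (it cites \cite{GSLW19} and records precisely $W=\rbra{V^\dagger\otimes I_n}\rbra{I_a\otimes \text{SWAP}_n}\rbra{V\otimes I_n}$ in the proof of \cref{lmm:blk den}), and your Schmidt-decomposition verification of the block-encoding identity is correct. The only minor point is the gate count: the SWAP costs $O(n)$ two-qubit gates rather than the $O(a)$ stated in the lemma, which is a quibble with the statement's bookkeeping (the paper itself uses $O(n)$ in \cref{lmm:blk den}), not with your proof.
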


For our purpose, we give a generalized version of \cref{lemma:block-encoding-of-density} as follows. 

\begin{lemma}
    \label{lmm:blk den}
    Let $\rho$ be an $n$-qubit density operator and $V$ be an $(n+a)$-qubit operator such that  $V\ket{0}_a\ket{0}_n=\ket{\rho}$,
    where
    $\ket{\rho}$ is a purification of $\rho$,
    i.e.,
    $\tr_a(\ket{\rho}\bra{\rho})=\rho$.
    Suppose that $U$ is a $(\beta, b, \varepsilon)$-block-encoding of $V$, then we can construct a quantum circuit $\widetilde U$ that is a $(\beta^2,n+a+2b,2\varepsilon)$-block-encoding of $\rho$,
    using $O(1)$ queries to $U$
    and $O(n)$ one- and two-qubit quantum gates.
\end{lemma}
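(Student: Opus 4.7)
The plan is to mimic the construction in the proof of \cref{lemma:block-encoding-of-density}, using $U$ in place of the idealized purifying unitary and then analyzing the error directly on the specific column that encodes $\rho$. The latter point is important, because a routine application of \cref{lemma:product-block-encoding} to the three factors would only yield a looser $2\beta\varepsilon$ bound.

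First, I would set up the registers and define $\widetilde U$ explicitly. Let $B_1, B_2$ be two disjoint $b$-qubit ancilla registers (one for each use of $U$), let $A$ be the $a$-qubit purification ancilla of $V$, and let $N_1, N_2$ be two $n$-qubit registers (with $\rho$ to live on $N_2$). Set
\[
    \widetilde U = \rbra*{U^{\dag} \otimes I_n}\rbra*{I \otimes S_{N_1 N_2}}\rbra*{U \otimes I_n},
\]
where $U$ acts on $B_1 A N_1$ in the right factor, $U^\dag$ acts on $B_2 A N_1$ in the left factor (each extended by the identity on the remaining registers), and $S_{N_1 N_2}$ swaps the two $n$-qubit registers. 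This uses $O(1)$ queries to $U$ and $O(n)$ one- and two-qubit gates for the SWAP, with block-encoding ancilla $B_1 B_2 A N_1$ of $n + a + 2b$ qubits.

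Next, I would identify the effective operator acting on $N_2$ after projecting all ancillas to $\ket{0}$. Since $B_1$ and $B_2$ are disjoint and each is touched by only one of $U, U^\dag$, their $\ket{0}^{\otimes b}$-projections factor out, giving the operator $\rbra{M^\dag \otimes I_n}\rbra{I_A \otimes S_{N_1 N_2}}\rbra{M \otimes I_n}$ on $A N_1 N_2$, where $M := \beta \bra{0}^{\otimes b} U \ket{0}^{\otimes b}$ satisfies $\Abs{M - V} \le \varepsilon$ by hypothesis. Projecting further onto $\ket{0}$ on $A N_1$ and mirroring the Schmidt-decomposition calculation in \cref{lemma:block-encoding-of-density}, I would obtain
\[
    \beta^{2}\rbra*{\bra{0}^{\otimes\rbra{n+a+2b}} \otimes I_{N_2}} \widetilde U \rbra*{\ket{0}^{\otimes\rbra{n+a+2b}} \otimes I_{N_2}} = \tr_{A}\rbra*{\ket{\tilde\rho}\bra{\tilde\rho}} =: \tilde\rho,
\]
where $\ket{\tilde\rho} := M\ket{0}_a\ket{0}_n$ satisfies $\Abs{\ket{\tilde\rho} - \ket{\rho}} \le \Abs{M - V}\cdot \Abs{\ket{0}_a\ket{0}_n} \le \varepsilon$.

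Finally, I would bound $\Abs{\tilde\rho - \rho}$. Writing $\ket{\tilde\rho} = \ket{\rho} + \ket{e}$ with $\Abs{\ket{e}} \le \varepsilon$, one has
\[
    \tilde\rho - \rho = \tr_{A}\rbra*{\ket{\rho}\bra{e}} + \tr_{A}\rbra*{\ket{e}\bra{\rho}} + \tr_{A}\rbra*{\ket{e}\bra{e}},
\]
and applying the standard inequality $\Abs{\tr_A\rbra{\ket{x}\bra{y}}} \le \Abs{\ket{x}} \cdot \Abs{\ket{y}}$ termwise yields $\Abs{\tilde\rho - \rho} \le 2\varepsilon + \varepsilon^{2}$, matching the claimed $2\varepsilon$ bound at leading order. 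The crux of the argument is precisely this last step: the direct route through \cref{lemma:product-block-encoding} would give only the $2\beta\varepsilon$ bound, while the key observation enabling the tighter $2\varepsilon$ is that the block-encoded matrix element depends only on the column $M\ket{0}_a\ket{0}_n$, whose deviation from $V\ket{0}_a\ket{0}_n = \ket{\rho}$ has norm at most $\varepsilon$ independent of $\beta$.
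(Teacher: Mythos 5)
Your construction is exactly the paper's: two disjoint $b$-qubit ancilla registers for the two uses of $U$, with $\widetilde U = \rbra{U^\dag \otimes I}\rbra{I \otimes \mathrm{SWAP}_n \otimes I}\rbra{U \otimes I}$ mirroring the proof of \cref{lemma:block-encoding-of-density}, and the same query/gate/ancilla accounting. Where you diverge is the error analysis, and your route is in fact the more robust one. The paper reduces to the operator $\rbra{\widetilde V^\dag \otimes I_n}\rbra{I_a \otimes \mathrm{SWAP}_n}\rbra{\widetilde V \otimes I_n}$ with $\Abs{\widetilde V - V} \leq \varepsilon$ and then appeals to ``linearity of error propagation in the product of operators''; read literally, that product bound carries factors of $\Abs{V}$ and $\Abs{\widetilde V}$, which need not be $\leq 1$ here (indeed, in the paper's own application inside \cref{lemma:block-encoding-of-purification} one has $\Abs{V} = \sqrt{N \Abs{\rho}}$, which can be large), just as a blind use of \cref{lemma:product-block-encoding} would give $2\beta\varepsilon$. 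Your observation that the projected block depends only on the single column $M\ket{0}_a\ket{0}_n$, whose distance to $\ket{\rho}$ is at most $\varepsilon$ independently of $\beta$ and of $\Abs{V}$, is precisely what justifies the $V$-norm-independent error claimed in the lemma, and your trace-norm contraction step $\Abs{\tr_A\rbra{\ket{x}\bra{y}}} \leq \Abs{\ket{x}}\,\Abs{\ket{y}}$ is valid. The only cosmetic difference is that you obtain $2\varepsilon + \varepsilon^2$ rather than $2\varepsilon$; this is harmless (and no worse than what the paper's own one-line justification yields), since the extra $\varepsilon^2$ term can be absorbed by a trivial rescaling and does not affect any downstream use of the lemma.
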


\begin{proof}
    The proof of \cref{lemma:block-encoding-of-density} in \cite{GSLW19} has shown that
    \begin{equation*}
        W=\rbra*{V^\dagger\otimes I_n}
        \rbra*{I_a\otimes \text{SWAP}_n}
        \rbra*{V\otimes I_n}
    \end{equation*}
    is a $(1,n+a,0)$-block-encoding of $\rho$,
    as long as $V\ket{0}_{n+a}=\ket{\rho}$.
    Let $b_1,b_2$ be two copies of a system of $b$ qubits,
    with $b_1=b_2=b$.
    We now show that
    \begin{equation*}
        \widetilde U =
        \rbra*{U^\dagger\otimes I_n\otimes I_{b_2}}
        \rbra*{I_a\otimes \text{SWAP}_n\otimes I_{b_1}\otimes I_{b_2}}
        \rbra*{U\otimes I_n\otimes I_{b_1}}
    \end{equation*}
    is a $(\beta^2,b_1+b_2,2\varepsilon)$-block-encoding of $W$,
    and is therefore a $(\beta^2,n+a+2b,2\varepsilon)$-block-encoding of $\rho$.
    To see this, note that
    \begin{align*}
        & \beta^2 \bra{0}_{b_1+b_2} \widetilde U \ket{0}_{b_1+b_2} \\
        &=\beta^2\bra{0}_{b_1+b_2}
        \rbra*{U^\dagger\otimes I_n\otimes I_{b_2}}
        \rbra*{I_a\otimes \text{SWAP}_n\otimes I_{b_1}\otimes I_{b_2}}
        \rbra*{
        U\otimes I_n\otimes I_{b_1}}
        \ket{0}_{b_1+b_2}\\
        &=
        \rbra*{\beta\bra{0}_{b_1}U^\dagger\ket{0}_{b_1}\otimes I_n}
        \rbra*{I_a\otimes \text{SWAP}_n}
        \rbra*{\beta\bra{0}_{b_2}U\ket{0}_{b_2}\otimes I_s}\\
        &=
        \rbra*{\widetilde{V}^\dag\otimes I_n}
        \rbra*{I_a\otimes \text{SWAP}_n}
        \rbra*{\widetilde{V}\otimes I_n},
    \end{align*}
    where $\Abs{\widetilde{V}-V}\leq \varepsilon$.
    The final result immediately follows by the linearity of error propagation in the product of operators.
\end{proof}

The following lemma provides a way to construct a block-encoding of a purification $\ket{\rho}\bra{\rho}$ of $\rho$, given a block-encoding of $\rho$. 

\begin{lemma}
\label{lemma:block-encoding-of-purification}
    Let $\rho$ be an $n$-qubit density operator with $\rho\geq I/\kappa$ for some $\kappa \geq 1$, and let $N = 2^n$.
    Given a unitary $U_\rho$ that is a $(\beta,a,0)$-block-encoding of $\rho$ with $\beta\geq 1$, for any $\varepsilon\in \rbra{0,1/(4\beta N)}$, we can construct a quantum circuit $\widetilde U$ that is a $(2,2n+2a+5,\varepsilon)$-block-encoding of $\ket{\rho}\bra{\rho}$,
    using $O(\beta^2\kappa N \log(\beta N/\varepsilon)\log(1/\varepsilon))$ queries to $U_\rho$ and $O(a\beta^2\kappa N \log(\beta N/\varepsilon)\log(1/\varepsilon))$ gates,
    where $\ket{\rho}$ is a purification of $\rho$.
\end{lemma}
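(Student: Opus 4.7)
The plan is to exploit the factorization
\[
\ket{\rho}\bra{\rho} \;=\; N\,\bigl(\sqrt{\rho}\otimes I_n\bigr)\,\ket{\Phi^+}\!\bra{\Phi^+}\,\bigl(\sqrt{\rho}\otimes I_n\bigr),
\]
where $\ket{\Phi^+}=\tfrac{1}{\sqrt{N}}\sum_{i=0}^{N-1}\ket{i}\ket{i}$ and $\ket{\rho}=\sum_{i}\sqrt{\rho}\ket{i}\ket{i}$ is the canonical purification of $\rho$. This writes the target projector as a product of three operators that I can block-encode one at a time. The construction will have four steps: (i) apply QSVT to $U_\rho$ to obtain a block-encoding of $\sqrt{\rho}$; (ii) prepare $\ket{\Phi^+}$ with a Hadamard-CNOT circuit and lift it to a block-encoding of $\ket{\Phi^+}\!\bra{\Phi^+}$ via \cref{lmm:blk den}; (iii) multiply the three factors via \cref{lemma:product-block-encoding} to obtain a crude block-encoding of $\ket{\rho}\bra{\rho}$ with subnormalization $\Theta(N\beta)$; (iv) apply \cref{corollary:up-scaling-block-encoding} to pull the subnormalization down to $2$.

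In step (i), the assumption $\rho \succeq I/\kappa$ places the nonzero eigenvalues of $\rho/\beta$ in $[1/(\beta\kappa),1]$, so \cref{lemma:positive-power} with $c=1/2$ and $\delta=\Theta(1/(\beta\kappa))$ gives an even polynomial $p$ of degree $d_1=O\!\rbra{\beta\kappa\log(1/\varepsilon_1)}$ with $\abs{p(x)-\tfrac12\sqrt{x}}\le\varepsilon_1$ on $[\delta,1]$ and $\abs{p}\le 1$ on $[-1,1]$. Feeding $p/2$ into \cref{thm:qsvt} converts $U_\rho$ into a circuit $W$ that is a $\bigl(\Theta(\sqrt{\beta}),\,a+2,\,O(\sqrt{\beta}\,\varepsilon_1)\bigr)$-block-encoding of $\sqrt{\rho}$, using $O(d_1)$ queries to $U_\rho$ and $O(a\,d_1)$ gates. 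In step (ii), the $O(n)$-gate circuit $E$ that prepares $\ket{\Phi^+}$ from $\ket{0}_{2n}$ is trivially a $(1,0,0)$-block-encoding of itself, and \cref{lmm:blk den} with $a=b=0$ promotes it to a $(1,2n,0)$-block-encoding $U_{\Phi^+}$ of $\ket{\Phi^+}\!\bra{\Phi^+}$. In step (iii), composing $(W\otimes I_n)\,U_{\Phi^+}\,(W\otimes I_n)$ through \cref{lemma:product-block-encoding} yields an $\bigl(O(N\beta),\,2a+2n+4,\,O(N\beta\,\varepsilon_1)\bigr)$-block-encoding of $\ket{\rho}\bra{\rho}$.

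The main obstacle sits in step (iv). \cref{corollary:up-scaling-block-encoding} shrinks the subnormalization from $\alpha_{\text{in}}=O(N\beta)$ down to $\beta_{\text{out}}=2$ using $O\!\rbra{N\beta\log(1/\varepsilon')}$ further queries and one extra ancilla (giving the $+5$ in the ancilla count), but the new error is $\Theta\!\rbra{\sqrt{\alpha_{\text{in}}\cdot(\text{input error})}\,\log(1/\varepsilon')}+\varepsilon'$. This $\sqrt{\alpha_{\text{in}}}=\Theta(\sqrt{N\beta})$ amplification forces me to pick $\varepsilon_1=\widetilde\Theta\!\rbra{\varepsilon^2/(N\beta)^2}$ in order to reach final precision $\varepsilon$, which inflates the polynomial degree to $d_1=O\!\rbra{\beta\kappa\log(\beta N/\varepsilon)}$. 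The overall query count is then $O\!\rbra{N\beta\log(1/\varepsilon)}\cdot O(d_1)=O\!\rbra{\beta^2\kappa N\log(\beta N/\varepsilon)\log(1/\varepsilon)}$, matching the statement; the gate count picks up the extra factor of $a$ from QSVT. Finally, the hypothesis $\varepsilon<1/(4\beta N)$ is exactly what places $\varepsilon'=\varepsilon/2$ inside the admissible range $(0,\min\{\beta_{\text{out}}/\alpha_{\text{in}},1/2\})$ required by \cref{corollary:up-scaling-block-encoding}.
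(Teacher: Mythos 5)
Your proposal is correct and follows essentially the same route as the paper's proof: QSVT with the positive-power polynomial (using $\rho \succeq I/\kappa$) to block-encode $\sqrt{\rho}$, combination with the maximally entangled state to realize the purification, a crude block-encoding of $\ket{\rho}\bra{\rho}$ with subnormalization $\Theta(\beta N)$, and finally \cref{corollary:up-scaling-block-encoding} with intermediate precision $\varepsilon_1$ chosen polynomially small to absorb the $\sqrt{\alpha_{\mathrm{in}}\cdot(\text{error})}$ amplification, yielding the same query, gate, and ancilla counts and the same role for the hypothesis $\varepsilon<1/(4\beta N)$. The only (cosmetic) difference is the middle step: you sandwich a block-encoding of $\ket{\Phi^+}\bra{\Phi^+}$ between two copies of the $\sqrt{\rho}$ block-encoding via \cref{lemma:product-block-encoding}, whereas the paper block-encodes the non-unitary preparation map $V=\sqrt{N}(\rho^{1/2}\otimes I_n)G$ and applies \cref{lmm:blk den} to it directly.
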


\begin{proof}
    Let $\delta = \min\cbra{1/(\beta\kappa), 1/4}$ and 
    $\varepsilon'$ be determined later.
    By \cref{lemma:positive-power}, there is a polynomial $p\rbra{x}$ of degree $d = O\rbra*{\frac{1}{\delta}\log\rbra*{\frac{1}{\varepsilon'}}}$ such that
    \begin{align*}
        & \forall x \in \sbra*{-1, 1}, \quad \abs*{p\rbra*{x}} \leq 1, \\
        & \forall x \in \sbra*{\delta, 1}, \quad \abs*{p\rbra*{x} - \frac 1 2 x^{1/2}} \leq \varepsilon'.
    \end{align*}
    Let $q\rbra{x} = p\rbra{x}/2$. 
    By \cref{thm:qsvt}, we can implement a unitary operator $U_{\rho^{1/2}}$ that is a $\rbra{1, a+2, \varepsilon'/2}$-block-encoding of $q\rbra{\rho/\beta}$, using $O\rbra{d}$ queries to $U_{\rho}$ and $O\rbra{\rbra{a+1}d}$ one- and two-qubit quantum states, and the description of $U_{\rho^{1/2}}$ can be computed in classical time $\poly\rbra{d, \log\rbra{1/\varepsilon'}}$.
    Note that $U_{\rho^{1/2}}$ is a $\rbra{1, a+2, \varepsilon'}$-block-encoding of $\rho^{1/2}/(4\beta^{1/2})$. 

    Let $G = \text{CNOT}^{\otimes n} \rbra*{H^{\otimes n}\otimes I_n}$ be a unitary operator that prepares the maximal entangled state by 
    \[
    G\ket{0}_{2n}=\frac{1}{\sqrt{N}}\sum_{j=0}^{N-1} \ket{j}_n\ket{j}_n.
    \]
    Let $U = (U_{\rho^{1/2}}\otimes I_n) (G\otimes I_{a+2})$.
    It can be shown that $U$ is a $\rbra{4\sqrt{\beta N}, a+2,
    4\sqrt{\beta N}\varepsilon'}$-block-encoding of $V =\sqrt{N} \rbra{\rho^{1/2} \otimes I_n} G$, where $V\ket{0}_{2n} = \ket{\rho}$ is a purification of $\rho$. 
    By \cref{lmm:blk den} (using $\ket{\rho}$ is a purification of $\ket{\rho} \bra{\rho}$),
    we can construct a quantum circuit $U_{\ket{\rho}}$ that is a $(16\beta N,2n+2a+4,
    8\sqrt{\beta N}\varepsilon')$-block-encoding of $\ket{\rho}\bra{\rho}$,
    using $O(1)$ queries to $U$ and additional $O(n)$ one- and two-qubit quantum gates.
    Finally, by \cref{corollary:up-scaling-block-encoding} (with $\alpha \coloneqq 16\beta N$, $\beta \coloneqq 2$, $\varepsilon \coloneqq 8\sqrt{\beta N}\varepsilon'$ and $\varepsilon'\coloneqq \varepsilon/2$),
    we can construct a quantum circuit $\widetilde U$ that is a $(2,2n+2a+5, \Theta\rbra{\beta^{3/4} N^{3/4} 
    \sqrt{\varepsilon'} \log\rbra{\frac{1}{\varepsilon}}} + \frac{\varepsilon}{2})$-block-encoding of $\ket{\rho}\bra{\rho}$, using $O\rbra{\beta N\log(1/\varepsilon)}$ queries to $U_{\ket{\rho}}$.
    By taking $\varepsilon' = \Theta(\beta^{-3/2}N^{-3/2}\varepsilon^4)$ to be sufficiently small, $\widetilde U$ is a $\rbra{2, 2n+2a+5, \varepsilon}$-block-encoding of $\ket{\rho}\bra{\rho}$.
    The total number of queries to $U_\rho$ in the construction of $\widetilde U$ is $O\rbra{\beta Nd\log(1/\varepsilon)} = O\rbra{\beta^2\kappa N \log\rbra{\beta N/\varepsilon}\log(1/\varepsilon)}$.
    The final result immediately follows.
\end{proof}

\subsection{Density matrix exponentiation}

Given only samples of a quantum state $\rho$, we can approximately implement the unitary operator $e^{-i\rho t}$ \cite{LMR14,KLL+17}, also known as the sample-based Hamiltonian simulation. 
We use the version in \cite{GKP+24}.

\begin{theorem} [Density matrix exponentiation, {\cite[Corollary 3]{GKP+24}}]
\label{thm:sample-based-hamiltonian-simulation}
    For every $0 < \delta < 1$ and $t \geq 1$, and given access to independent samples of a quantum state $\rho$, we can implement a quantum channel $\mathcal{E}$ such that
    \[
    \Abs*{\mathcal{E} - e^{-i\rho t}}_{\diamond} \leq \delta,
    \]
    using $O\rbra{t^2/\delta}$ samples of $\rho$.
\end{theorem}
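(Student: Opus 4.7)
The plan is to implement $e^{-i\rho t}$ via the Lloyd--Mohseni--Rebentrost swap trick, tracking errors carefully enough to obtain the stated $O(t^2/\delta)$ sample bound. The key identity is that for any state $\sigma$ on register $A$, any sample $\rho$ on register $B$, and any small $\Delta t$,
\[
\tr_B\!\left[ e^{-iS_{AB}\Delta t}\, (\sigma \otimes \rho)\, e^{iS_{AB}\Delta t} \right] = \sigma - i\Delta t\,[\rho,\sigma] + O(\Delta t^2) = e^{-i\rho\Delta t}\,\sigma\,e^{i\rho\Delta t} + O(\Delta t^2),
\]
where $S_{AB}$ is the SWAP operator. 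The whole protocol $\mathcal{E}$ is then defined by iterating this elementary channel $n = \lceil t/\Delta t\rceil$ times, feeding in a fresh independent copy of $\rho$ at each step, and choosing $\Delta t$ at the end.

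First I would prove the single-step estimate. Expand $e^{\pm iS\Delta t} = I \mp iS\Delta t - \tfrac{1}{2}S^2\Delta t^2 \pm \cdots$ and use $\tr_B[S_{AB}(\sigma\otimes\rho)] = \rho\sigma$ together with $S^2 = I$ to check that the first-order term in $\Delta t$ exactly reproduces $-i[\rho,\sigma]$, while the remainder is bounded in trace norm by $c\,\Delta t^2$ for some absolute constant $c$, uniformly over input states $\sigma$. Since the bound is uniform in $\sigma$, this yields a bound on the diamond (in fact trace) norm distance between the elementary channel and $e^{-i\rho\Delta t}(\cdot)e^{i\rho\Delta t}$, namely $\Abs{\mathcal{E}_1 - \mathcal{U}_{\Delta t}}_{\tr} \leq c\,\Delta t^2$, where $\mathcal{U}_{\Delta t}(\sigma) = e^{-i\rho\Delta t}\sigma e^{i\rho\Delta t}$.

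Next, I would compose the $n$ steps by a telescoping argument. Writing $\mathcal{E} = \mathcal{E}_1^{\circ n}$ (with fresh samples, so the elementary channels are independent but identically distributed in $\rho$) and $e^{-i\rho t}(\cdot)e^{i\rho t} = \mathcal{U}_{\Delta t}^{\circ n}$, the triangle inequality together with the fact that trace-preserving channels are contractive in trace norm yields
\[
\Abs{\mathcal{E} - \mathcal{U}_t}_{\tr} \leq \sum_{k=1}^{n} \Abs{\mathcal{E}_1 - \mathcal{U}_{\Delta t}}_{\tr} \leq n \cdot c\,\Delta t^2 = c\, t\,\Delta t.
\]
Setting $\Delta t = \delta/(c t)$ makes this at most $\delta$ while using $n = O(t^2/\delta)$ samples of $\rho$, as required.

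The main obstacle is the single-step error bound, in particular verifying that the $O(\Delta t^2)$ remainder is controlled in trace norm uniformly in the input $\sigma$: one must carefully bound $\tr_B$ of a sum of terms like $S^k(\sigma\otimes\rho)S^{\ell}$ against $\|\sigma\|_1\|\rho\|_1 = 1$, rather than in operator norm, to get a bound that lifts to $\Abs{\cdot}_{\tr}$ on channels. A secondary subtlety is that the telescoping must be done on channels (not on particular input states), which requires the elementary bound to be stated in the $\Abs{\cdot}_{\tr}$ (or diamond) norm from the outset; once that is in hand the accumulation of errors is straightforward by contractivity.
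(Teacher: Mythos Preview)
The paper does not prove this statement at all: \cref{thm:sample-based-hamiltonian-simulation} is quoted verbatim from \cite[Theorem 1]{KLL+17} and used as a black box, so there is no ``paper's own proof'' to compare against. Your proposal is the standard Lloyd--Mohseni--Rebentrost argument (swap trick plus telescoping over $n=O(t^2/\delta)$ steps), which is exactly the approach of the cited source, and the sketch is correct as written; the two caveats you flag (uniform trace-norm control of the $O(\Delta t^2)$ remainder and doing the telescope at the level of channels) are the right ones and are routine once $S^2=I$ and $\|\tr_B[X]\|_1\le\|X\|_1$ are in hand.
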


Based on the technique of density matrix exponentiation given in \cref{thm:sample-based-hamiltonian-simulation}, we can implement block-encodings of quantum states using their samples by quantum singular value transformation \cite{GSLW19}, as noted in \cite{GP22}.

\begin{lemma} [{\cite[Corollary 21]{GP22}}]
\label{lemma:sample-to-block-encoding}
    For every $\delta \in \rbra{0, 1}$ and given access to independent samples of a quantum state $\rho$, we can implement a quantum channel $\mathcal{E}$, using $O\rbra*{\frac{1}{\delta} \log^2\rbra*{\frac{1}{\delta}}}$ samples of $\rho$ such that $\Abs{\mathcal{E}-\mathcal{U}}_{\diamond} \leq \delta$, where $\mathcal{U} \colon \varrho \mapsto U \varrho U^\dag$ and $U$ is a $\rbra{4/\pi, 3, 0}$-block-encoding of $\rho$. 
\end{lemma}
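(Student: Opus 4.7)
The strategy is to compose density matrix exponentiation (\cref{thm:sample-based-hamiltonian-simulation}) with linear combination of unitaries (\cref{lemma:lcu}) and quantum singular value transformation (\cref{thm:qsvt}), turning samples of $\rho$ into a channel approximating a block-encoding of $\rho$.

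First, I would fix a small evolution time $t_0 = \Theta\rbra{1}$ and a per-call precision $\delta_1$ to be chosen later. Applying \cref{thm:sample-based-hamiltonian-simulation} produces channels $\mathcal{E}^{\pm}$ approximating $\varrho \mapsto e^{\mp i\rho t_0} \varrho e^{\pm i\rho t_0}$ to precision $\delta_1$, each using $O\rbra{1/\delta_1}$ samples of $\rho$. The nominally trace-norm guarantee of \cref{thm:sample-based-hamiltonian-simulation} upgrades to diamond norm at the same precision, because the construction is a partial trace of a unitary on the system together with a sample register holding $\rho^{\otimes N}$, and entangling the system with an external reference leaves the sample register untouched, so the same argument controlling the trace-norm error also controls the diamond-norm error.

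Next, I would combine $\mathcal{E}^{+}$ and $\mathcal{E}^{-}$ via \cref{lemma:lcu} with the state-preparation pair for coefficients $i/2$ and $-i/2$, obtaining a channel $\mathcal{E}_{\sin}$ whose ideal version is a $\rbra{1, 1, 0}$-block-encoding of $\sin\rbra{\rho t_0}$. I would then apply \cref{thm:qsvt} with an odd polynomial $p\rbra{x}$ of degree $d = \Theta\rbra{\log\rbra{1/\delta}}$ that approximates a suitably rescaled $\arcsin\rbra{x}$ on $\sbra{-\sin\rbra{t_0}, \sin\rbra{t_0}}$ to error $\delta_2 = O\rbra{\delta}$ and satisfies $\abs{p\rbra{x}} \leq 1/2$ on $\sbra{-1, 1}$. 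Tuning $t_0$ and the polynomial normalization produces, in the ideal (noise-free) limit, the stated $\rbra{4/\pi, 3, 0}$-block-encoding unitary $U$ of $\rho$. The overall channel $\mathcal{E}$ uses $O\rbra{d}$ calls to $\mathcal{E}_{\sin}$, hence $O\rbra{d/\delta_1}$ samples of $\rho$.

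For the error balance, by subadditivity of the diamond norm under channel composition, $\Abs{\mathcal{E} - \mathcal{U}}_{\diamond} = O\rbra{d\delta_1 + \delta_2}$. Setting $\delta_1 = \Theta\rbra{\delta/d}$ and $\delta_2 = \Theta\rbra{\delta}$ gives total error $O\rbra{\delta}$ with sample complexity $O\rbra{d^2/\delta} = O\rbra{\log^2\rbra{1/\delta}/\delta}$, matching the claim. The main obstacle is precisely this additive accounting: a naive appeal to the $4d\sqrt{\varepsilon/\alpha}$ block-encoding-precision term in \cref{thm:qsvt} would yield the worse complexity $\widetilde O\rbra{1/\delta^2}$, and the improvement hinges on recognizing that we only need closeness at the channel (diamond-norm) level, where $d$ noisy subroutine calls each costing $\delta_1$ contribute only additively rather than through a $\sqrt{\delta_1}$ amplification.
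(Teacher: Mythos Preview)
The paper does not prove this lemma; it is quoted verbatim from \cite[Corollary 21]{GP22} and used as a black box. There is therefore no in-paper proof to compare against.

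That said, your sketch follows the construction in \cite{GP22} closely: density matrix exponentiation to simulate $e^{\pm i\rho t_0}$, an LCU to extract $\sin\rbra{\rho t_0}$, and then a polynomial transformation realizing (a rescaled) $\arcsin$ to recover a block-encoding of $\rho$. You have also correctly isolated the only nontrivial point in the complexity analysis, namely that one must track the accumulated error at the channel (diamond-norm) level rather than through the $4d\sqrt{\varepsilon/\alpha}$ block-encoding-precision term of \cref{thm:qsvt}; the latter would indeed blow the sample count up to $\widetilde O\rbra{1/\delta^2}$. Two minor points worth tightening if you actually write this out: (i) \cref{lemma:lcu} and \cref{thm:qsvt} are stated for unitary oracles, so you should say explicitly that you are running those fixed circuits with each oracle call replaced by the corresponding sample-based channel, and then invoke subadditivity of the diamond norm over the $O\rbra{d}$ calls; and (ii) the specific normalization $4/\pi$ arises from the bound $\abs{p\rbra{x}} \leq 1/2$ on the odd $\arcsin$-approximant together with the choice of $t_0$, so a line justifying that constant would be needed for a complete proof.
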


In quantum algorithms, we usually need $U^\dag$, the inverse of $U$. As noted in \cite{WZ23}, we can also implement a quantum channel $\mathcal{E}^{\operatorname{inv}}$ that is $\delta$-close to $U^\dag$ in the diamond norm distance with the same complexity stated in \cref{lemma:sample-to-block-encoding}. 
Therefore, we can roughly regard the quantum channel $\mathcal{E}$ as if it were a unitary operator.

For our purpose, we show how to prepare block-encodings of quantum states without normalizing factors.

\begin{lemma} [Block-encodings of quantum states]
\label{lemma:block-encoding-of-quantum-states}
    For every $\delta \in \rbra{0, 1}$ and given access to independent samples of a quantum state $\rho$, we can implement a quantum channel $\mathcal{E}$, using $O\rbra*{\frac{1}{\delta}\log^2\rbra*{\frac{1}{\delta}}}$ samples of $\rho$ such that $\Abs{\mathcal{E} - \mathcal{U}}_{\diamond} \leq \delta$, where $\mathcal{U} \colon \varrho \mapsto U\varrho U^\dag$ and $U$ is a 
    $\rbra{2, 4, 0}$-block-encoding
    of $\rho$. 
    Moreover, we can also implement a quantum channel $\mathcal{E}^{\textup{inv}}$ with the same sample complexity such that $\Abs{\mathcal{E}^{\textup{inv}} - \mathcal{U}^{\textup{inv}}}_\diamond \leq \delta$, where $\mathcal{U}^{\textup{inv}} \colon \varrho \mapsto U^\dag \varrho U$.
\end{lemma}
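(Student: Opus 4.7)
The plan is to reduce to \cref{lemma:sample-to-block-encoding}, which already produces a channel $\mathcal{E}'$ with $\Abs*{\mathcal{E}' - \mathcal{U}'}_\diamond \leq \delta$, where $\mathcal{U}'\rbra{\varrho} = U'\varrho U'^\dag$ and $U'$ is a $\rbra{4/\pi, 3, 0}$-block-encoding of $\rho$, at a cost of $O\rbra*{\frac{1}{\delta}\log^2\rbra*{\frac{1}{\delta}}}$ samples of $\rho$. The only mismatch with the desired statement is the normalization constant ($4/\pi$ vs.\ $2$) and the number of ancilla qubits ($3$ vs.\ $4$), and both are fixed by one invocation of \cref{corollary:scaling-block-encoding} using a fixed unitary that is independent of $\rho$.

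Concretely, I would view $U'$ equivalently as a $\rbra{1, 3, 0}$-block-encoding of the Hermitian matrix $A := \rbra{\pi/4}\rho$, which is valid since $\Abs*{A} \leq \pi/4 < 1$. Applying \cref{corollary:scaling-block-encoding} with scaling factor $\alpha = \pi/2 > 1$ then yields a unitary of the form $U = \rbra{V \otimes I_n}\rbra{U' \otimes I_1}$, where $V$ is a fixed one-qubit rotation implementable with $O\rbra{1}$ gates (and in particular independent of $\rho$), such that $U$ is a $\rbra{1, 4, 0}$-block-encoding of $A/\alpha = \rho/2$, equivalently a $\rbra{2, 4, 0}$-block-encoding of $\rho$. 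I would then lift this to the channel level by defining $\mathcal{E}\rbra{\varrho} := \mathcal{V} \circ \rbra{\mathcal{E}' \otimes \mathcal{I}_1}\rbra{\varrho}$, where $\mathcal{V}\rbra{\cdot} = \rbra{V \otimes I_n}\rbra{\cdot}\rbra{V \otimes I_n}^\dag$ is conjugation by the fixed scaling circuit and $\mathcal{I}_1$ is the identity channel on the added ancilla qubit. Since conjugation by a fixed unitary and tensoring with identity both preserve the diamond norm,
\[
\Abs*{\mathcal{E} - \mathcal{U}}_\diamond = \Abs*{\mathcal{V} \circ \rbra{\mathcal{E}' \otimes \mathcal{I}_1} - \mathcal{V} \circ \rbra{\mathcal{U}' \otimes \mathcal{I}_1}}_\diamond = \Abs*{\mathcal{E}' - \mathcal{U}'}_\diamond \leq \delta,
\]
where $\mathcal{U}\rbra{\varrho} = U\varrho U^\dag$, giving the forward claim with the same sample complexity.

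The inverse case $\mathcal{E}^{\textup{inv}}$ is handled identically: start from the inverse channel $\rbra{\mathcal{E}'}^{\textup{inv}}$ guaranteed by the remark following \cref{lemma:sample-to-block-encoding} (at the same sample cost), and conjugate by $\mathcal{V}^\dag$ in place of $\mathcal{V}$, since $U^\dag = \rbra{U' \otimes I_1}^\dag \rbra{V \otimes I_n}^\dag$. I do not anticipate any substantive obstacle here; the work is essentially bookkeeping, namely matching the ancilla widths correctly and noting that the scaling step introduces exactly one additional ancilla qubit that accounts precisely for the passage from $a = 3$ to $a = 4$. All sample complexity is absorbed in \cref{lemma:sample-to-block-encoding}, and the rescaling contributes only $\rho$-independent gates.
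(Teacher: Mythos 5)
Your proposal is correct and follows essentially the same route as the paper's proof: invoke \cref{lemma:sample-to-block-encoding} to get the $\rbra{4/\pi,3,0}$-block-encoding at the channel level, then down-scale by a factor $\pi/2$ via \cref{corollary:scaling-block-encoding} using a fixed $\rho$-independent rotation (adding the fourth ancilla), and handle $\mathcal{E}^{\textup{inv}}$ by the same remark, with all samples spent in the first step. If anything, your explicit use of the unitary-invariance and stability of the diamond norm shows the error stays exactly $\delta$, which is marginally cleaner than the paper's ``replace the oracle uses and rescale $\delta$ by a constant'' phrasing.
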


\begin{proof}
    By \cref{lemma:sample-to-block-encoding}, we can implement a quantum channel $\mathcal{E}$ using $s = O\rbra*{\frac{1}{\delta}\log^2\rbra*{\frac 1 \delta}}$
    samples of $\rho$ such that 
    $\Abs{\mathcal{E} - \mathcal{U}_{\pi\rho/4}}_{\diamond} \leq \delta$, where $\mathcal{U}_{\pi\rho/4} \colon \varrho \mapsto U_{\pi\rho/4} \varrho U_{\pi\rho/4}^\dag$ and $U_{\pi\rho/4}$ is a $\rbra{4/\pi, 3, 0}$-block-encoding of $\rho$, i.e., a $\rbra{1, 3, 0}$-block-encoding of $\pi\rho/4$.
    We can also implement $\mathcal{E}^{\text{inv}}$ using $s$ samples, such that 
    $\Abs{\mathcal{E}^{\text{inv}} - \mathcal{U}_{\pi\rho/4}^\dag}_{\diamond} \leq \delta$, where $\mathcal{U}_{\pi\rho/4}^\dag$ denotes the quantum channel induced by $U_{\pi\rho/4}^\dag$.
    In addition, we can implement the controlled versions of $\mathcal{E}$ and $\mathcal{E}^{\text{inv}}$ with the same sample complexity by replacing each gate in their implementations with its controlled version. 

    By \cref{corollary:scaling-block-encoding}, we can construct a unitary opeator $U_{\rho/2}$ that is a $\rbra{1, 4, 0}$-block-encoding of $\rho/2$, i.e., a $\rbra{2, 4, 0}$-block-encoding of $\rho$, using $1$ query to $U_{\pi\rho/4}$.
    If we replace all the uses of (controlled-)$U_{\pi\rho/4}$ and (controlled-)$U_{\pi\rho/4}^\dag$ by (controlled-)$\mathcal{E}$ and (controlled-)$\mathcal{E}^{\text{inv}}$ in the implementation of $U_\rho$, then we will obtain an implementation of a quantum channel $\mathcal{E}_{\rho}$ such that
    $\Abs{\mathcal{E}_{\rho} - \mathcal{U}_\rho}_{\diamond} \leq O\rbra{\delta}$, where $\mathcal{U}_\rho \colon \varrho \mapsto U_\rho \varrho U_\rho^\dag$.
    Finally, rescaling $\delta$ by a constant factor will achieve our goal.

    To implement the inverse $\mathcal{E}^{\textup{inv}}$, we can directly apply the trick noted in \cite{WZ23}.
\end{proof}

\section{Quantum Sample-to-Query Lifting Theorem} \label{sec:lifting}

In this section, we prove our quantum sample-to-query lifting theorem, and then show its tightness. 

\subsection{Main theorem}

\begin{theorem} [Lifting]
\label{thm:main}
    For every promise problem $\mathcal{P}$ for quantum states, we have
    \[
    \mathsf{Q}_\diamond\rbra*{\mathcal{P}} = \Omega\rbra*{\frac{\sqrt{\mathsf{S}\rbra{\mathcal{P}}}}{\log\rbra{\mathsf{S}\rbra{\mathcal{P}}}}}.
    \]
\end{theorem}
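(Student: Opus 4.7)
The plan is to execute the reduction outlined in the proof sketch of \cref{thm:lifting}, turning a query algorithm for the unitary version $\mathcal{P}^\diamond$ into a sample algorithm for $\mathcal{P}$, and then to track the error propagation carefully enough to read off the logarithmic factor.

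Let $Q = \mathsf{Q}_\diamond(\mathcal{P})$ and let $\mathcal{A}$ be a quantum query algorithm witnessing this bound, so that given any $(1, a, 0)$-block-encoding $U_\rho$ of $\frac{1}{2}\rho$, the algorithm $\mathcal{A}^{U_\rho}$ decides whether $\rho$ is a yes- or no-instance of $\mathcal{P}$ with success probability at least $2/3$ using $Q$ queries to $U_\rho$. First, I would boost this success probability to $1 - \eta$ for some small constant $\eta$ (say $\eta = 1/12$) by $O(1)$-fold majority amplification, which only inflates $Q$ by a constant factor and does not affect the asymptotic bound to prove.

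The next step is to replace each query to $U_\rho$ by a quantum channel manufactured from samples of $\rho$. By \cref{lemma:block-encoding-of-quantum-states}, for any $\delta \in (0,1)$, we can implement a channel $\mathcal{E}$ (and its inverse analogue $\mathcal{E}^{\textup{inv}}$, and their controlled versions) using $O((1/\delta)\log^2(1/\delta))$ samples of $\rho$ such that $\|\mathcal{E} - \mathcal{U}_\rho\|_\diamond \leq \delta$, where $\mathcal{U}_\rho(\varrho) = U_\rho\varrho U_\rho^\dag$ for a genuine block-encoding $U_\rho$ of $\frac{1}{2}\rho$. I would then define a sample-based tester $\widetilde{\mathcal{A}}$ obtained from $\mathcal{A}$ by substituting each oracle call by an independent invocation of $\mathcal{E}$ (or $\mathcal{E}^{\textup{inv}}$, or their controlled variants) with freshly drawn samples; the non-query gates of $\mathcal{A}$ are kept as is.

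The key estimate is error propagation. Because the diamond norm is subadditive under composition, the channel induced by $\widetilde{\mathcal{A}}$ is $Q\delta$-close in diamond distance to the ideal channel induced by $\mathcal{A}^{U_\rho}$, and in particular the output distributions differ by at most $Q\delta$ in total variation. Choosing $\delta = \Theta(1/Q)$ small enough, the success probability of $\widetilde{\mathcal{A}}$ is at least $2/3$, so $\widetilde{\mathcal{A}}$ is a legitimate quantum tester for $\mathcal{P}$ in the sample model. Its sample usage is $Q$ instantiations of $\mathcal{E}$, each consuming $O((1/\delta)\log^2(1/\delta)) = O(Q \log^2 Q)$ samples, for a total of $O(Q^2 \log^2 Q)$ samples. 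By definition of $\mathsf{S}(\mathcal{P})$, this gives
\[
\mathsf{S}(\mathcal{P}) \leq O\!\left(Q^2 \log^2 Q\right),
\]
which rearranges into $Q = \Omega\!\bigl(\sqrt{\mathsf{S}(\mathcal{P})}/\log \mathsf{S}(\mathcal{P})\bigr)$, as claimed.

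The main obstacle I anticipate is the bookkeeping around error propagation and the interface between a unitary oracle model and a channel model: the algorithm $\mathcal{A}$ is defined with respect to an ideal unitary block-encoding, but the simulator only supplies a CPTP channel close to the corresponding unitary channel. I would handle this by viewing $\mathcal{A}$ itself (together with its input register and its ancillas) as the composition of CPTP maps alternating between oracle-independent unitaries and the oracle channel, then applying the triangle inequality for the diamond norm $Q$ times; the fact that \cref{lemma:block-encoding-of-quantum-states} also provides $\mathcal{E}^{\textup{inv}}$ and controlled variants ensures that any use of $U_\rho^\dag$ or controlled-$U_\rho$ in $\mathcal{A}$ can be simulated with the same per-query sample cost. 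The only non-routine choice is $\delta = \Theta(1/Q)$, which is exactly what produces the $\log Q$ factor in the final bound through the $\log^2(1/\delta)$ cost of \cref{lemma:block-encoding-of-quantum-states}, one power of the logarithm being absorbed into the rearrangement $Q^2 \log^2 Q \ge \mathsf{S}(\mathcal{P})$ to yield $Q = \Omega(\sqrt{\mathsf{S}(\mathcal{P})}/\log \mathsf{S}(\mathcal{P}))$.
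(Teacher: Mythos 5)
Your proposal is correct and follows essentially the same route as the paper: replace each of the $Q$ oracle calls by the sample-based channel of \cref{lemma:block-encoding-of-quantum-states} with accuracy $\delta = \Theta(1/Q)$, accumulate the error via the diamond-norm triangle inequality, and count $O(Q^2\log^2 Q)$ samples to conclude $\mathsf{S}(\mathcal{P}) = O(Q^2\log^2 Q)$. The only cosmetic difference is where the constant-factor amplification happens — you boost $\mathcal{A}$'s success probability before samplizing, whereas the paper tolerates a degraded $5/9$ vs.\ $4/9$ gap and amplifies the resulting tester by majority voting afterwards.
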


For convenience, we first define the unitary property testing problem corresponding to a given quantum state testing problem, where the tested quantum state is given as block-encoded in the tested unitary operator. 

\begin{definition} \label{def:p-diamond}
    Suppose that $\mathcal{P} = \rbra{\mathcal{P}^{\textup{yes}}, \mathcal{P}^{\textup{no}}}$ is a promise problem for quantum state testing. 
    We define $\mathcal{P}^\diamond$ to be the unitary property testing problem corresponding to $\mathcal{P}$, which is a promise problem with
    \begin{itemize}
    \item {Yes} instance: a block-encoding $U$ of $\frac{1}{2}\rho$ such that $\rho \in \mathcal{P}^{\textup{yes}}$;
    \item {No} instance: a block-encoding $U$ of $\frac{1}{2}\rho$ such that $\rho \in \mathcal{P}^{\textup{no}}$.
\end{itemize}
\end{definition}

\begin{proof}[Proof of \cref{thm:main}]
    The proof is a reduction from $\mathcal{P}$ to $\mathcal{P}^\diamond$. 

    \paragraph{Reduction}
    Let $Q = \mathsf{Q}_\diamond\rbra{\mathcal{P}}$.
    That is, there is a quantum query algorithm $\mathcal{A}$ for the promise problem $\bb{\mathcal{P}}$ using queries to the tested unitary operator $U$ such that
    \[
    \mathcal{A}^U = G_Q U_Q \dots G_2 U_2 G_1 U_1 G_0,
    \]
    where each $U_j$ is either (controlled-)$U$ or (controlled-)$U^\dag$, and $G_j$ are quantum gates that do not depend on $U$. 
    By the definition of $\mathcal{A}$, for every block-encoding $U_\rho$ of $\rho$, 
    \begin{enumerate}
        \item $\Pr\sbra{\mathcal{A} \textup{ accepts } U_\rho} \geq 2/3$ if $\rho \in \mathcal{P}^{\textup{yes}}$,
        \item $\Pr\sbra{\mathcal{A} \textup{ accepts } U_\rho} \leq 1/3$ if $\rho \in \mathcal{P}^{\textup{no}}$.
    \end{enumerate}
    Here, $\mathcal{A}$ is said to accept $U_\rho$ if $\mathcal{A}^{U_{\rho}}$ outputs $0$, i.e., $\Pr\sbra{\mathcal{A} \textup{ accepts } U_\rho} = \Abs{\Pi \mathcal{A}^{U_\rho} \ket{0}}^2$ and $\Pi = \ket{0}\bra{0} \otimes I$ is the projector onto the first qubit.

    Then, we reduce $\mathcal{P}$ to $\bb{\mathcal{P}}$ by constructing a tester $\mathcal{T}$ for $\mathcal{P}$ that relates to the quantum query algorithm $\mathcal{A}$ as follows.

    \vspace{5pt}
    \noindent\fbox{
    \parbox{\textwidth-15.6pt}{
    \vspace{7pt}
    $\mathcal{T}$ --- \textbf{Tester for} $\mathcal{P}$ \textbf{based on the tester} $\mathcal{A}$ \textbf{for} $\bb{\mathcal{P}}$

    \begin{itemize}
        \item[] \textbf{Input}: Quantum state $\rho$.
        \item[] \textbf{Output}: A classical bit $b \in \cbra{0, 1}$. 
    \end{itemize}

    \begin{enumerate}
        \item Let $\delta = 1/9Q$.
        \item Prepare $\sigma = G_0 \ket{0} \bra{0} G_0^\dag$. 
        \item For $i = 1, 2, \dots, Q$ in this order,
        \begin{enumerate}
            \item[3.1.] Apply $\mathcal{E}_i$ to $\sigma$, where $\mathcal{E}_i$ is (controlled-)$\mathcal{E}_\rho$ (resp.\ $\mathcal{E}_{\rho}^{\textup{inv}}$) if $U_i$ is (controlled-)$U$ (resp.\ $U^\dag$), where $\mathcal{E}_\rho$ is the quantum channel defined in \cref{lemma:block-encoding-of-quantum-states} that is $\delta$-close to a block-encoding of $\frac{1}{2}\rho$ in the diamond norm distance.
            \item[3.2.] Apply $G_i$ to $\sigma$.
        \end{enumerate}
        \item Let $b$ be the measurement outcome of the first qubit of $\sigma$ in the computational basis. 
        \item Return $b$. 
    \end{enumerate}
    \vspace{7pt}
    }
    }
    \vspace{5pt}

    \paragraph{Complexity}
    The implementation of $\mathcal{T}$ is visualized in \cref{fig:samplized-block-encoded-sample-access}, where
    \begin{equation}
    \label{eq:s}
        s = O\rbra*{\frac{1}{\delta}\log^2\rbra*{\frac{1}{\delta}}} = O\rbra*{Q \log^2 \rbra{Q}}
    \end{equation}
    is the number of independent samples of $\rho$ used in the implementation of $\mathcal{E}_\rho$ in \cref{lemma:block-encoding-of-quantum-states}.
    It is clear that $\mathcal{T}$ uses $Qs$ independent samples of $\rho$. 
    On the other hand, if $\mathcal{T}$ solves $\mathcal{P}$, then the quantum sample complexity of $\mathcal{T}$ is lower bounded by that of $\mathcal{P}$, i.e., $Qs \geq \mathsf{S}\rbra{\mathcal{P}}$, which together with \cref{eq:s} gives
    \[
    Q = \Omega\rbra*{\frac{\sqrt{\mathsf{S}\rbra{\mathcal{P}}}}{\log\rbra{\mathsf{S}\rbra{\mathcal{P}}}}}.
    \]
\begin{figure} [!htp]
\centering
\begin{quantikz}
    & \wireoverride{n} & \lstick{$\rho^{\otimes s}$} \wireoverride{n} & \gate[2]{\mathcal{E}_1} & \wireoverride{n} & \lstick{$\rho^{\otimes s}$} \wireoverride{n} & \gate[2]{\mathcal{E}_2} & \wireoverride{n} & \wireoverride{n} & \lstick{$\rho^{\otimes s}$} \wireoverride{n} & \gate[2]{\mathcal{E}_Q} & \wireoverride{n} \\
    \lstick{$\ket 0$} & \gate{G_0}\qwbundle{} & \qw &      & \gate{G_1} & \qw &  & \gate{G_2} & \qw \midstick[2,brackets=none]{\(\cdots\)} & \qw &  & \gate{G_Q} & \qw \rstick{$\sigma$} \\
\end{quantikz}
\caption{Tester for $\mathcal{P}$ based on $\mathcal{A}$.}
\label{fig:samplized-block-encoded-sample-access}
\end{figure}
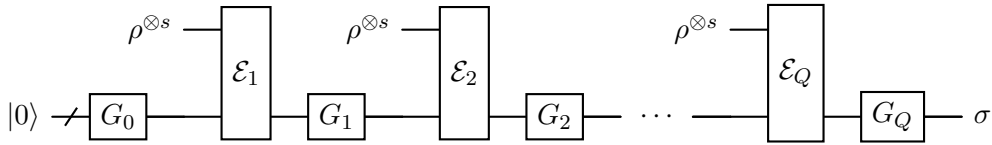

    \paragraph{Correctness}
    To complete the proof, it remains to show that $\mathcal{T}$ is a tester for $\mathcal{P}$. 
    Let $\rho$ be a quantum state.
    We only have to show that there is a constant $\varepsilon > 0$ such that
    \begin{enumerate}
        \item $\Pr\sbra{\mathcal{T}^\rho \textup{ outputs } 0} \geq 1/2 + \varepsilon$ if $\rho \in \mathcal{P}^{\textup{yes}}$,
        \item $\Pr\sbra{\mathcal{T}^\rho \textup{ outputs } 0} \leq 1/2 - \varepsilon$ if $\rho \in \mathcal{P}^{\textup{no}}$.
    \end{enumerate}
    One can achieve success probability $2/3$ by majority voting from a constant number of repetitions of $\mathcal{T}$.
    
    Let $U_\rho$ be the 
    $\rbra{2, 4, 0}$-block-encoding of $\rho$ defined in \cref{lemma:block-encoding-of-quantum-states}. 
    Note that in the implementation of $\mathcal{T}$, it holds that $\Abs{\mathcal{E}_i - \mathcal{U}_i}_\diamond \leq \delta$ for every $1 \leq i \leq Q$, where $\mathcal{U}_i \colon \varrho \mapsto U_i \varrho U_i^\dag$. 
    Then, we have
    \[
    \Abs*{\mathcal{T}^{\rho} - \mathcal{A}^{U_\rho}}_{\diamond} \leq Q\delta = \frac 1 9.
    \]
    Note that $\Pr\sbra{\mathcal{T}^{\rho} \textup{ outputs } 0} = \tr\rbra{\Pi \mathcal{T}^{\rho}\rbra{\ket{0}\bra{0}}}$ and $\Pr\sbra{\mathcal{A}^{U_\rho} \textup{ outputs } 0} = \Abs{\Pi \mathcal{A}^{U_\rho} \ket{0}}^2$. Then, we have
    \begin{align*}
    \abs*{ \Pr\sbra*{\mathcal{T}^{\rho} \textup{ outputs } 0} - \Pr\sbra*{\mathcal{A}^{U_{\rho}} \textup{ outputs } 0} }
    & = \abs*{ \tr\rbra*{ \Pi \mathcal{T}^{\rho} \rbra{\ket{0}\bra{0}} } - \tr\rbra*{ \Pi \mathcal{A}^{U_\rho} \ket{0}\bra{0} \rbra{\mathcal{A}^{U_\rho}}^\dag } } \\
    & \leq \Abs*{\mathcal{T}^{\rho} - \mathcal{A}^{U_\rho}}_\diamond \leq \frac 1 9.
    \end{align*}
    If $\rho \in \mathcal{P}^{\textup{yes}}$, then
    \[
    \Pr\sbra*{\mathcal{T}^{\rho} \textup{ outputs } 0} \geq \Pr\sbra*{\mathcal{A}^{U_{\rho}} \textup{ outputs } 0} - \frac 1 9 \geq \frac 2 3 - \frac 1 9 = \frac 5 9.
    \]
    If $\rho \in \mathcal{P}^{\textup{no}}$, then
    \[
    \Pr\sbra*{\mathcal{T}^{\rho} \textup{ outputs } 0} \leq \Pr\sbra*{\mathcal{A}^{U_{\rho}} \textup{ outputs } 0} + \frac 1 9 \leq \frac 1 3 + \frac 1 9 = \frac 4 9.
    \]
    Therefore, we conclude that $\mathcal{T}$ is a tester for $\mathcal{P}$.
\end{proof}

\subsection{Tightness}

\label{sec:tightness}

We show that \cref{thm:main} is tight by considering the promise problem $\textsc{Dis}_{\rho_+, \rho_-}$, where
\[
    \rho_{\pm} = \rbra*{\frac 1 2 \mp 8\varepsilon} \ket{0}\bra{0} + \rbra*{\frac 1 2 \pm 8\varepsilon} \ket{1}\bra{1}.
\]

\begin{theorem} [Lifting is tight]
\label{thm:tightness}
    $\mathsf{Q}_\diamond\rbra{\textsc{Dis}_{\rho_+, \rho_-}} = \widetilde \Theta\rbra{\sqrt{\mathsf{S}\rbra{\textsc{Dis}_{\rho_+, \rho_-}}}} = \widetilde \Theta\rbra{1/\varepsilon}$ for $0 < \varepsilon \leq 1/16$.
\end{theorem}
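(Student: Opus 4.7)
The plan is to prove the upper and lower bounds on $\mathsf{S}$ and $\mathsf{Q}_\diamond$ separately, and then combine them with \cref{thm:main} to get the claimed tight relation.

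First I would compute the infidelity $\gamma = 1 - \operatorname{F}(\rho_+, \rho_-)$. Since $\rho_\pm$ are diagonal in the computational basis with the same eigenbasis, their fidelity reduces to the classical Bhattacharyya coefficient,
\[
\operatorname{F}(\rho_+, \rho_-) = \sqrt{\rbra{\tfrac{1}{2}-8\varepsilon}\rbra{\tfrac{1}{2}+8\varepsilon}} + \sqrt{\rbra{\tfrac{1}{2}+8\varepsilon}\rbra{\tfrac{1}{2}-8\varepsilon}} = \sqrt{1-256\varepsilon^2}.
\]
A first-order expansion (valid for $\varepsilon \leq 1/16$) gives $\gamma = \Theta(\varepsilon^2)$, so \cref{lemma:dis-sample} yields $\mathsf{S}(\textsc{Dis}_{\rho_+, \rho_-}) = \Omega(1/\gamma) = \Omega(1/\varepsilon^2)$.

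Next I would show the matching upper bound $\mathsf{Q}_\diamond(\textsc{Dis}_{\rho_+, \rho_-}) = O(1/\varepsilon)$ by reducing to quantum amplitude estimation (\cref{thm:ampest}). Given any $\rbra{1, a, 0}$-block-encoding $U$ of $\tfrac{1}{2}\rho_\pm$, the amplitude of $\ket{0}^{\otimes a}\ket{0}$ in $U\ket{0}^{\otimes a}\ket{0}$ equals $\bra{0}\rbra{\tfrac{1}{2}\rho_\pm}\ket{0} = \tfrac{1}{4} \mp 4\varepsilon$, so the corresponding probability is $p_\pm = (\tfrac{1}{4} \mp 4\varepsilon)^2 = \tfrac{1}{16} \mp 2\varepsilon + 16\varepsilon^2$. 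The two cases are therefore separated by $|p_+ - p_-| = 4\varepsilon$, and by \cref{thm:ampest} with $M = \Theta(1/\varepsilon)$ queries we can estimate $p_\pm$ within additive error smaller than $\varepsilon$ with success probability $\geq 8/\pi^2$, which can be boosted to $\geq 2/3$ by a constant number of repetitions and a majority vote.

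Combining the two bounds yields $\mathsf{Q}_\diamond(\textsc{Dis}_{\rho_+, \rho_-})^2 = O(1/\varepsilon^2) = O(\mathsf{S}(\textsc{Dis}_{\rho_+, \rho_-}))$, and the reverse direction $\mathsf{Q}_\diamond = \widetilde \Omega(\sqrt{\mathsf{S}})$ is exactly \cref{thm:main}. Putting these together gives $\mathsf{Q}_\diamond(\textsc{Dis}_{\rho_+,\rho_-}) = \widetilde \Theta(\sqrt{\mathsf{S}(\textsc{Dis}_{\rho_+,\rho_-})}) = \widetilde \Theta(1/\varepsilon)$, as claimed. The only mildly delicate step is verifying that amplitude estimation applies cleanly to a block-encoding: one turns $U\ket{0}^{\otimes a}\ket{0}$ into the required form $\sqrt{p}\ket{0}\ket{\phi_0} + \sqrt{1-p}\ket{1}\ket{\phi_1}$ by viewing the projection onto the flag $\ket{0}^{\otimes a}\ket{0}$ as the ``good'' branch, which is immediate; no genuine obstacle is anticipated.
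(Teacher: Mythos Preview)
Your proposal is correct and follows essentially the same approach as the paper: compute the infidelity $\gamma = \Theta(\varepsilon^2)$ to get the sample lower bound via \cref{lemma:dis-sample}, establish the query upper bound $O(1/\varepsilon)$ by reading off the amplitude $\bra{0}\tfrac{1}{2}\rho_\pm\ket{0} = \tfrac{1}{4}\mp 4\varepsilon$ from any block-encoding and applying \cref{thm:ampest}, and invoke \cref{thm:main} for the matching query lower bound. One trivial remark: since $8/\pi^2 > 2/3$, the majority-vote boosting step you mention is not actually needed.
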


\begin{proof}
    We reduce $\bb{\textsc{Dis}_{\rho_+, \rho_-}}$ to quantum amplitude estimation.
    
    \paragraph{Reduction}
    Let $\mathcal{A}$ be the quantum query algorithm for amplitude estimation given in \cref{thm:ampest}.
    We construct a tester $\mathcal{T}$ for $\bb{\textsc{Dis}_{\rho_+, \rho_-}}$ using $\mathcal{A}$ as follows.

    \vspace{5pt}
    \noindent\fbox{
    \parbox{\textwidth-15.6pt}{
    \vspace{7pt}
    $\mathcal{T}$ --- \textbf{Tester for} $\bb{\textsc{Dis}_{\rho_+, \rho_-}}$ \textbf{using quantum amplitude estimator} $\mathcal{A}$

    \begin{itemize}
        \item[] \textbf{Input}: Quantum oracle $U$.
        \item[] \textbf{Output}: A classical bit $b \in \cbra{0, 1}$. 
    \end{itemize}

    \begin{enumerate}
        \item Let $\tilde p$ be an $\varepsilon$-estimate of $p$ returned by $\mathcal{A}^U$, where $U\ket{0} = \sqrt{p} \ket{0} + \sqrt{1-p} \ket{\perp}$.
        \item Return $0$ if $\tilde p < 1/16 + 16\varepsilon^2$, and $1$ otherwise.
    \end{enumerate}
    \vspace{7pt}
    }
    }
    \vspace{5pt}

    \paragraph{Complexity}
    It can be seen that the quantum query complexity of $\mathcal{T}$ is exactly that of $\mathcal{A}$, which is $O\rbra{1/\varepsilon}$. 
    Therefore, if $\mathcal{T}$ solves $\bb{\textsc{Dis}_{\rho_+, \rho_-}}$, then $\mathsf{Q}_\diamond\rbra{\textsc{Dis}_{\rho_+, \rho_-}} = O\rbra{1/\varepsilon}$.
    Note that $\gamma = 1 - \operatorname{F}\rbra{\rho_+, \rho_-} = 1 - \sqrt{1 - 256\varepsilon^2} \leq 256 {\varepsilon^2}$.
    By \cref{corollary:qsd-intro}, 
    \[
    \mathsf{Q}_\diamond\rbra{\textsc{Dis}_{\rho_+, \rho_-}} = \widetilde\Omega\rbra*{\sqrt{\mathsf{S}\rbra{\textsc{Dis}_{\rho_+, \rho_-}}}} = \widetilde \Omega\rbra{1/\sqrt{\gamma}} = \widetilde\Omega\rbra{1/\varepsilon}.
    \]
    The above together yield that $\mathsf{Q}_\diamond\rbra{\textsc{Dis}_{\rho_+, \rho_-}} = \widetilde \Theta\rbra{1/\varepsilon}$.

    \paragraph{Correctness} To complete the proof, it remains to show that $\mathcal{T}$ is a tester for $\bb{\textsc{Dis}_{\rho_+, \rho_-}}$. 
    To begin with, let $U_{\rho_+}$ and $U_{\rho_-}$ be block-encodings of $\rho_+$ and $\rho_-$, respectively. 
    We only have to show that 
    \begin{enumerate}
        \item $\Pr \sbra{ \mathcal{T}^{U_{\rho_+}} \textup{ outputs } 0 } \geq 2/3$,
        \item $\Pr \sbra{ \mathcal{T}^{U_{\rho_-}} \textup{ outputs } 0 } \leq 1/3$.
    \end{enumerate}
    Suppose that $U_{\rho_\pm}$ is a $\rbra{1, a, 0}$-block-encoding of $\frac{1}{2}\rho_\pm$. 
    Then, it can be seen that
    \[
    U_{\rho_\pm} \ket{0} \ket{0}^{\otimes a} = \rbra*{\frac 1 4 \mp 4\varepsilon} \ket{0} \ket{0}^{\otimes a} + \sqrt{1 - \rbra*{\frac 1 4 \mp 4\varepsilon}^2} \ket{\perp},
    \]
    where $\ket{\perp}$ is orthogonal to $\ket{0}\ket{0}^{\otimes a}$. 
    Let $\tilde p_{\pm}$ be the estimate returned by $\mathcal{A}^{U_{\rho_{\pm}}}$. 
    Then, 
    \[
    \Pr\sbra*{ \abs*{\tilde p_{\pm} - \rbra*{\frac 1 4 \mp 4\varepsilon}^2} \leq \varepsilon } \geq \frac 2 3,
    \]
    which implies that
    \[
    \Pr \sbra*{ \tilde p_+ \leq \frac{1}{16} - \varepsilon + 16\varepsilon^2 } \geq \frac 2 3, \qquad \Pr \sbra*{ \tilde p_- \geq \frac{1}{16} + \varepsilon + 16\varepsilon^2 } \geq \frac 2 3.
    \]
    Finally, we have
    \begin{align*}
    \Pr\sbra*{\mathcal{T}^{U_{\rho_+}} \textup{ outputs } 0} & = \Pr\sbra*{\tilde p_+ < \frac 1 {16} + 16\varepsilon^2} \geq \Pr \sbra*{ \tilde p_+ \leq \frac{1}{16} - \varepsilon + 16\varepsilon^2 } \geq \frac 2 3, \\
    \Pr\sbra*{\mathcal{T}^{U_{\rho_-}} \textup{ outputs } 0} & = \Pr\sbra*{\tilde p_- < \frac 1 {16} + 16\varepsilon^2} \leq 1 - \Pr \sbra*{ \tilde p_- \geq \frac{1}{16} + \varepsilon + 16\varepsilon^2 } \leq \frac 1 3.
    \end{align*}
    Therefore, we conclude that $\mathcal{T}$ is a tester for $\bb{\textsc{Dis}_{\rho_+, \rho_-}}$.
\end{proof}

The tester used in the proof of \cref{thm:tightness} is a reduction from block quantum state discrimination to quantum amplitude estimation.
If we apply the tester from another direction, we will obtain an information-theoretic quantum query lower bound for amplitude estimation (see \cref{sec:ampl-est}). 

\section{Quantum Query Lower Bounds}
\label{sec:query-lower-bound}

In this section, we give unified proofs for quantum query lower bounds related to a series of problems. 

\subsection{Quantum Gibbs sampling}

A quantum query algorithm $\mathcal{A}$ is said to be a quantum Gibbs sampler with precision $\varepsilon$ at inverse temperature $\beta$, if for every block-encoding $U$ of a Hamiltonian $H$, $\mathcal{A}^U$ prepares a mixed quantum state that is $\varepsilon$-close to the Gibbs state $e^{-\beta H}/\tr\rbra{e^{-\beta H}}$ in trace distance.
It was shown in \cite[Theorem 35]{GSLW19} that there is a quantum Gibbs sampler with constant precision at inverse temperature $\beta$ with quantum query complexity $O\rbra{\beta}$.

\begin{theorem} [Optimality of quantum Gibbs sampling]
\label{thm:gibbs}
    For $\beta \geq 4$ and $0 < \varepsilon \leq 1/5$, any quantum Gibbs sampler with precision $\varepsilon$ at inverse temperature $\beta$ has quantum query complexity $\widetilde\Omega\rbra{\beta}$. 
\end{theorem}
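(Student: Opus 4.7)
The plan is to reduce the quantum state discrimination problem $\textsc{Dis}_{\rho_+,\rho_-}$ to quantum Gibbs sampling and then invoke \cref{corollary:qsd-intro} to obtain the desired $\widetilde\Omega(\beta)$ lower bound. The key idea is to choose a pair of one-qubit states $\rho_\pm$ whose infidelity scales like $1/\beta^2$, yet whose corresponding Gibbs states at inverse temperature $\beta$ are separated by a constant in trace distance and can therefore be distinguished by a single computational-basis measurement.

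Concretely, I would take $\rho_\pm = \tfrac{1}{2}I \mp \tfrac{2}{\beta}Z$, which are valid one-qubit density operators for $\beta\geq 4$. A short calculation (say via $\operatorname{F}(\rho_+,\rho_-)=\sqrt{1-\tfrac{1}{4}\Abs{\rho_+-\rho_-}_1^2}$ or the explicit eigenvalues) gives infidelity $\gamma = O(1/\beta^2)$, so \cref{corollary:qsd-intro} yields $\mathsf{Q}_\diamond(\textsc{Dis}_{\rho_+,\rho_-}) = \widetilde\Omega(\beta)$. Now suppose $\mathcal{A}$ is a Gibbs sampler of precision $\varepsilon\leq 1/5$ at inverse temperature $\beta$ with query complexity $Q$. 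Given a block-encoding $U$ of $\tfrac{1}{2}\rho$ for an unknown $\rho\in\{\rho_+,\rho_-\}$ (which is the oracle model in the definition of $\mathsf{Q}_\diamond$), the matrix $\tfrac{1}{2}\rho$ is Hermitian and thus a legitimate ``Hamiltonian'' input to $\mathcal{A}$. The corresponding Gibbs state is
\[
\frac{e^{-\beta\cdot \tfrac{1}{2}\rho_\pm}}{\tr(e^{-\beta\cdot\tfrac{1}{2}\rho_\pm})} \;=\; \frac{e^{\pm 1}}{e+e^{-1}}\ket{0}\bra{0}+\frac{e^{\mp 1}}{e+e^{-1}}\ket{1}\bra{1},
\]
so the two Gibbs states differ by a constant $c_0>0$ in trace distance.

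Running $\mathcal{A}$ on $U$ and measuring the output in the computational basis then discriminates $\rho_+$ from $\rho_-$ with bias at least $c_0-2\varepsilon$, which is a positive constant for $\varepsilon\leq 1/5$. By standard success amplification (repeating a constant number of times and majority voting), we obtain a tester for $\textsc{Dis}_{\rho_+,\rho_-}$ with query complexity $O(Q)$, hence $Q = \widetilde\Omega(\beta)$.

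The main obstacle I anticipate is merely bookkeeping: verifying that $\rho_\pm$ are indeed density operators under the stated range $\beta\geq 4$, that their infidelity is $O(1/\beta^2)$ (one could also compute it via $1-\operatorname{F}(\rho_+,\rho_-)\leq \tfrac{1}{2}\Abs{\rho_+-\rho_-}_1$ through \cref{thm:fidelity-vs-td}), and that the precision $\varepsilon$ of the Gibbs sampler does not destroy the distinguishing bias. No nontrivial new technique is required; the content is the combination of the infidelity-matching gadget $\rho_\pm$ with the lifting-based discrimination bound.
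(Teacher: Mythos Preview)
Your proposal is correct and follows essentially the same approach as the paper: the same hard instance $\rho_\pm=\tfrac{1}{2}I\mp\tfrac{2}{\beta}Z$, the same observation that the Gibbs state of $\tfrac{1}{2}\rho_\pm$ at inverse temperature $\beta$ equals $\frac{e^{\pm1}}{e+e^{-1}}\ket{0}\bra{0}+\frac{e^{\mp1}}{e+e^{-1}}\ket{1}\bra{1}$, and the same appeal to \cref{corollary:qsd-intro}. The only cosmetic difference is that the paper checks directly that $\tfrac{e}{e+e^{-1}}-\varepsilon\geq \tfrac{2}{3}$ and $\tfrac{e^{-1}}{e+e^{-1}}+\varepsilon\leq \tfrac{1}{3}$ for $\varepsilon\leq 1/5$, so no amplification step is needed.
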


\begin{proof}
    We reduce $\bb{\textsc{Dis}_{\rho_+, \rho_-}}$ to quantum Gibbs sampling, where 
    \[
    \rho_{\pm} = \rbra*{\frac 1 2 \mp \frac{2}{\beta}} \ket{0}\bra{0} + \rbra*{\frac 1 2 \pm \frac{2}{\beta}} \ket{1}\bra{1}.
    \]
    
    \paragraph{Reduction}
    Suppose that there is a quantum Gibbs sampler $\mathcal{A}$ with precision $\varepsilon$ at inverse temperature $\beta$, which has quantum query complexity $Q$, such that for every block-encoding $U$ of an $n$-qubit Hamiltonian $H$,
    \[
    \Abs*{ \tr_{\textup{env}}\rbra*{ \mathcal{A}^{U} \rbra*{ \ket{0} \bra{0}^{\otimes n} \otimes \underbrace{\ket{0} \bra{0}^{\otimes n_{\textup{env}}}}_{\textup{env}} } \rbra*{\mathcal{A}^{U}}^\dag } - \frac{e^{-\beta H}}{\tr\rbra{e^{-\beta H}}} }_1 \leq \varepsilon.
    \]
    We reduce $\bb{\textsc{Dis}_{\rho_+, \rho_-}}$ to quantum Gibbs sampling by constructing a tester $\mathcal{T}$ for $\bb{\textsc{Dis}_{\rho_+, \rho_-}}$ using the quantum Gibbs sampler $\mathcal{A}$ as follows.
    Note that $\rho_+$ and $\rho_-$ are one-qubit mixed quantum states.
    
    \vspace{5pt}
    \noindent\fbox{
    \parbox{\textwidth-15.6pt}{
    \vspace{7pt}
    $\mathcal{T}$ --- \textbf{Tester for} $\bb{\textsc{Dis}_{\rho_+, \rho_-}}$ \textbf{using quantum Gibbs sampler} $\mathcal{A}$

    \begin{itemize}
        \item[] \textbf{Input}: Quantum oracle $U$.
        \item[] \textbf{Output}: A classical bit $b \in \cbra{0, 1}$. 
    \end{itemize}

    \begin{enumerate}
        \item Prepare $\ket{\psi} = \mathcal{A}^U \ket{0} \ket{0}^{\otimes n_{\textup{env}}}$. 
        \item Let $b$ be the measurement outcome of the first qubit of $\ket{\psi}$ in the computational basis. 
        \item Return $b$. 
    \end{enumerate}
    \vspace{7pt}
    }
    }
    \vspace{5pt}

    \paragraph{Complexity}
    Clearly, it can be seen that the quantum query complexity of $\mathcal{T}$ is exactly that of $\mathcal{A}$, which is $Q$. 
    On the other hand, if $\mathcal{T}$ solves $\bb{\textsc{Dis}_{\rho_+, \rho_-}}$, then the quantum query complexity of $\mathcal{T}$ is lower bounded by that of $\bb{\textsc{Dis}_{\rho_+, \rho_-}}$, i.e., $Q \geq \mathsf{Q}_\diamond\rbra{\textsc{Dis}_{\rho_+, \rho_-}}$.
    Note that 
    \[
    \gamma = 1 - \operatorname{F}\rbra{\rho_+, \rho_-} = 1 - \sqrt{1 - \frac{16}{\beta^2}} \leq \frac {16} {\beta^2}.
    \]
    By \cref{corollary:qsd-intro}, we have the quantum query lower bound $Q \geq \mathsf{Q}_\diamond\rbra{\textsc{Dis}_{\rho_+, \rho_-}} = \widetilde\Omega\rbra{1/\sqrt{\gamma}} = \widetilde\Omega\rbra{\beta}$ for quantum Gibbs sampling.

    \paragraph{Correctness}
    To complete the proof, it remains to show that $\mathcal{T}$ is a tester for $\bb{\textsc{Dis}_{\rho_+, \rho_-}}$. 
    To begin with, let $U_{\rho_+}$ and $U_{\rho_-}$ be block-encodings of $\frac{1}{2}\rho_+$ and $\frac{1}{2}\rho_-$, respectively. 
    We only have to show that 
    \begin{enumerate}
        \item $\Pr \sbra{ \mathcal{T}^{U_{\rho_+}} \textup{ outputs } 0 } \geq 2/3$,
        \item $\Pr \sbra{ \mathcal{T}^{U_{\rho_-}} \textup{ outputs } 0 } \leq 1/3$.
    \end{enumerate}
    Let $\Pi = \ket{0}\bra{0} \otimes I_{\textup{env}}$ be a projector onto the first qubit. Then, 
    \begin{align}
    \Pr\sbra*{\mathcal{T}^{U_{\rho_\pm}} \textup{ outputs } 0} 
    & = \Abs*{\Pi \mathcal{A}^{U_{\rho_\pm}} \ket{0} \ket{0}^{\otimes n_{\textup{env}}}}^2 \nonumber \\
    & = \tr\rbra*{ \ket{0}\bra{0} \cdot \tr_{\textup{env}} \rbra*{ \mathcal{A}^{U_{\rho_\pm}} \rbra*{ \ket{0} \bra{0} \otimes \underbrace{\ket{0} \bra{0}^{\otimes n_{\textup{env}}}}_{\textup{env}} } \rbra*{\mathcal{A}^{U_{\rho_\pm}}}^\dag } }. \label{eq:gibbs-prob}
    \end{align}
    By the definition of quantum Gibbs sampler, we have
    \begin{equation} \label{eq:gibbs-tr}
    \Abs*{ \tr_{\textup{env}} \rbra*{ \mathcal{A}^{U_{\rho_\pm}} \rbra*{ \ket{0} \bra{0} \otimes \underbrace{\ket{0} \bra{0}^{\otimes n_{\textup{env}}}}_{\textup{env}} } \rbra*{\mathcal{A}^{U_{\rho_\pm}}}^\dag } - \frac{e^{-\frac 1 2\beta \rho_\pm}}{\tr\rbra*{e^{-\frac 1 2\beta \rho_\pm}}} }_1 \leq \varepsilon.
    \end{equation}
    By \cref{eq:gibbs-prob} and \cref{eq:gibbs-tr}, we have
    \begin{equation} \label{eq:gibbs-prob-eps}
    \abs*{ \Pr\sbra*{\mathcal{T}^{U_{\rho_\pm}} \textup{ outputs } 0} - \tr\rbra*{\ket{0}\bra{0} \cdot \frac{e^{-\frac 1 2\beta \rho_\pm}}{\tr\rbra*{e^{-\frac 1 2\beta \rho_\pm}}}} } \leq \varepsilon.
    \end{equation}
    It can be verified that
    \[
    \frac{e^{-\frac 1 2\beta \rho_\pm}}{\tr\rbra*{e^{-\frac 1 2\beta \rho_\pm}}} = \frac{1}{e+e^{-1}} \rbra*{ e^{\pm 1} \ket{0}\bra{0} + e^{\mp 1} \ket{1}\bra{1} },
    \]
    and thus
    \[
    \tr\rbra*{\ket{0}\bra{0} \cdot \frac{e^{-\frac 1 2\beta \rho_\pm}}{\tr\rbra*{e^{-\frac 1 2\beta \rho_\pm}}}} = \frac{e^{\pm 1}}{e+e^{-1}}.
    \]
    By \cref{eq:gibbs-prob-eps}, we can consider $\rho_+$ and $\rho_-$ separately as follows:
    \begin{align*}
    \Pr\sbra*{\mathcal{T}^{U_{\rho_+}} \textup{ outputs } 0} & \geq \tr\rbra*{\ket{0}\bra{0} \cdot \frac{e^{-\frac 1 2\beta \rho_+}}{\tr\rbra*{e^{-\frac 1 2\beta \rho_+}}}} - \varepsilon = \frac{e}{e+e^{-1}} - \varepsilon \geq \frac 2 3, \\
    \Pr\sbra*{\mathcal{T}^{U_{\rho_-}} \textup{ outputs } 0} & \leq \tr\rbra*{\ket{0}\bra{0} \cdot \frac{e^{-\frac 1 2\beta \rho_-}}{\tr\rbra*{e^{-\frac 1 2\beta \rho_-}}}} + \varepsilon = \frac{e^{-1}}{e+e^{-1}} + \varepsilon \leq \frac 1 3.
    \end{align*}
    We conclude that $\mathcal{T}$ is a tester for $\bb{\textsc{Dis}_{\rho_+, \rho_-}}$.
\end{proof}

\subsubsection*{Quantum Gibbs sampler given the square root of the Hamiltonian}

A quantum query algorithm $\mathcal{A}$ is said to be a quantum Gibbs sampler given the square root of the Hamiltonian with precision $\varepsilon$ at inverse temperature $\beta$, if for every block-encoding $U$ of $\sqrt{H}$ where $H$ is a Hamiltonian, $\mathcal{A}^U$ prepares a mixed quantum state that is $\varepsilon$-close to the Gibbs state $e^{-\beta H}/\tr\rbra{e^{-\beta H}}$ in trace distance.

\begin{theorem} [Impossibility of improving quantum Gibbs sampling even with stronger oracles]
\label{thm:sqrt-beta-gibbs}
    For $\beta \geq 4$ and $0 < \varepsilon \leq 1/16$, any quantum Gibbs sampler given the square root of the Hamiltonian with precision $\varepsilon$ at inverse temperature $\beta$ has quantum query complexity $\widetilde\Omega\rbra{\beta}$. 
\end{theorem}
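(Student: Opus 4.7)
The plan is to parallel the reduction used in the proof of \cref{thm:gibbs}, changing only how the Hamiltonian is extracted from the block-encoded input. I will reduce $\bb{\textsc{Dis}_{\rho_+,\rho_-}}$ to quantum Gibbs sampling given the square root of the Hamiltonian, using the very same one-qubit states $\rho_\pm = \tfrac{1}{2}I \mp \tfrac{2}{\beta}Z$ as before. The crucial observation is that, since $\beta \geq 4$ forces $\rho_\pm \succeq 0$, any block-encoding of $\tfrac{1}{2}\rho_\pm$ is automatically a block-encoding of $\sqrt{H_\pm}$ for the legitimate Hamiltonian $H_\pm \coloneqq \tfrac{1}{4}\rho_\pm^2$. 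Consequently, the Gibbs sampler $\mathcal{A}$ (which by hypothesis acts on a block-encoding of $\sqrt{H}$) can be invoked directly on $U_{\rho_\pm}$, so the reduction is query-preserving and has exactly the same outer structure as in \cref{thm:gibbs}.

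Next I will compute the target Gibbs state. Expanding $\rho_\pm^2 = \bigl(\tfrac{1}{4} + \tfrac{4}{\beta^2}\bigr)I \mp \tfrac{2}{\beta}Z$, the $I$-component contributes only an overall scalar in $e^{-\beta H_\pm}$ that cancels upon normalization, leaving
\[
\frac{e^{-\beta H_\pm}}{\tr\rbra*{e^{-\beta H_\pm}}} = \frac{e^{\pm 1/2}\ket{0}\bra{0} + e^{\mp 1/2}\ket{1}\bra{1}}{e^{1/2}+e^{-1/2}}.
\]
Measuring the first qubit of $\mathcal{A}$'s output in the computational basis then returns $0$ with probability $e^{\pm 1/2}/(e^{1/2}+e^{-1/2})$, so the two cases are separated by a fixed constant. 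Taking $\varepsilon \leq 1/16$ keeps the trace-distance slack of the Gibbs sampler well below half of this gap, so the resulting tester $\mathcal{T}$ distinguishes $\rho_+$ from $\rho_-$ with constant bias; the accept/reject probability calculation is identical in form to the one carried out in the proof of \cref{thm:gibbs}, only with the exponents $\pm 1$ replaced by $\pm \tfrac{1}{2}$.

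I will finish by invoking \cref{corollary:qsd-intro} with $\gamma = 1 - \operatorname{F}(\rho_+,\rho_-) = 1 - \sqrt{1 - 16/\beta^2} \leq 16/\beta^2$, which yields $\mathsf{Q}_\diamond\rbra{\textsc{Dis}_{\rho_+,\rho_-}} = \widetilde\Omega(1/\sqrt{\gamma}) = \widetilde\Omega(\beta)$; since the reduction is query-preserving, this lower bound transfers directly to $\mathcal{A}$. I do not foresee a substantive obstacle. The argument is essentially a recycling of \cref{thm:gibbs}, and the only genuinely new ingredient is the identity $\sqrt{\tfrac{1}{4}\rho_\pm^2} = \tfrac{1}{2}\rho_\pm$, which is valid precisely because of the positivity assumption $\beta \geq 4$. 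The tightening of the precision requirement from $\varepsilon \leq 1/5$ in \cref{thm:gibbs} to $\varepsilon \leq 1/16$ here simply reflects that the relevant Gibbs-state exponents have been halved, so there is less margin between the two acceptance probabilities to absorb the sampler's error.
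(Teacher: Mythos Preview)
Your proposal is correct and follows essentially the same approach as the paper: both reduce $\bb{\textsc{Dis}_{\rho_+,\rho_-}}$ with the identical states $\rho_\pm = \tfrac{1}{2}I \mp \tfrac{2}{\beta}Z$, observe that a block-encoding of $\tfrac{1}{2}\rho_\pm$ is a block-encoding of $\sqrt{H_\pm}$ for $H_\pm = \tfrac{1}{4}\rho_\pm^2$, compute the resulting Gibbs state $\propto e^{\pm 1/2}\ket{0}\bra{0} + e^{\mp 1/2}\ket{1}\bra{1}$, and conclude via \cref{corollary:qsd-intro}. The paper's proof is even terser than yours, simply noting that \cref{eq:gibbs-prob-eps} becomes the analogous bound with $\tfrac{1}{4}\beta\rho_\pm^2$ in place of $\tfrac{1}{2}\beta\rho_\pm$ and that the rest of the argument from \cref{thm:gibbs} carries through verbatim.
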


\begin{proof}
    Suppose that there is a quantum Gibbs sampler $\mathcal{A}$ with precision $\varepsilon$ at inverse temperature $\beta$, which has quantum query complexity $Q$, such that for every block-encoding $U$ of $\sqrt{H}$ where $H$ is an $n$-qubit Hamiltonian,
    \[
    \Abs*{ \tr_{\textup{env}}\rbra*{ \mathcal{A}^{U} \rbra*{ \ket{0} \bra{0}^{\otimes n} \otimes \underbrace{\ket{0} \bra{0}^{\otimes n_{\textup{env}}}}_{\textup{env}} } \rbra*{\mathcal{A}^{U}}^\dag } - \frac{e^{-\beta H}}{\tr\rbra{e^{-\beta H}}} }_1 \leq \varepsilon.
    \]
    
    The proof follows that of \cref{thm:gibbs}.
    The difference is that \cref{eq:gibbs-prob-eps} in the proof of \cref{thm:gibbs} becomes 
    \[
    \abs*{ \Pr\sbra*{\mathcal{T}^{U_{\rho_\pm}} \textup{ outputs } 0} - \tr\rbra*{\ket{0}\bra{0} \cdot \frac{e^{-\frac{1}{4}\beta \rho_\pm^2}}{\tr\rbra*{e^{-\frac{1}{4}\beta \rho_\pm^2}}}} } \leq \varepsilon.
    \]
    Then, a similar argument to the proof of \cref{thm:gibbs} still works here. This can be seen by noting that
    \[
    \frac{e^{-\frac 1 4\beta \rho_\pm^2}}{\tr\rbra*{e^{-\frac 1 4\beta \rho_\pm^2}}} = \frac{1}{e^{\frac 1 2}+e^{-\frac 1 2}} \rbra*{ e^{\pm \frac 1 2} \ket{0}\bra{0} + e^{\mp \frac 1 2} \ket{1}\bra{1} }
    \]
    under the same choice of $\rho_{\pm}$.
\end{proof}

\subsection{Entanglement entropy problem}

The entanglement entropy problem $\textsc{Entropy}_{N, a, b}$ with $0 < a < b \leq \ln\rbra{N}$ is a promise problem for unitary property testing, with
\begin{itemize}
    \item \textit{Yes} instance: $U = I - 2\ket{\psi}\bra{\psi}$ such that $S_2\rbra{\tr_B\rbra{\ket{\psi}\bra{\psi}}} \leq a$,
    \item \textit{No} instance: $U = I - 2\ket{\psi}\bra{\psi}$ such that $S_2\rbra{\tr_B\rbra{\ket{\psi}\bra{\psi}}} \geq b$,
\end{itemize}
where $S_2\rbra{\rho} = -\ln\rbra{\tr\rbra{\rho^2}}$ is the $2$-R\'enyi entropy, $\ket{\psi} \in \mathcal{H}_A \otimes \mathcal{H}_B$ and $\mathcal{H}_A = \mathcal{H}_B = \mathbb{C}^N$. 

\begin{theorem} [New bound for $\textsc{Entropy}_{N, a, b}$]
\label{thm:entangle-entro}
For $0 < \Delta = b-a \leq 1/4$, any quantum tester for $\textsc{Entropy}_{N, a, b}$ has query complexity $\widetilde \Omega\rbra{1/\sqrt{\Delta}}$.
\end{theorem}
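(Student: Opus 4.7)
The plan is to reduce $\textsc{Dis}_{\rho,\sigma}^{\diamond}$ to $\textsc{Entropy}_{N,a,b}$ and then apply \cref{corollary:qsd-intro}. Concretely, take $\rho = I/2$ and $\sigma = I/2 - \eta Z$ as one-qubit states with $\eta = \sqrt{(e^{\Delta}-1)/4}$, chosen so that the $2$-R\'enyi entropies of the corresponding purifications satisfy $S_2(\rho) = \ln 2$ and $S_2(\sigma) = \ln 2 - \Delta$; setting $b = \ln 2$ and $a = \ln 2 - \Delta$, the purification $\ket{\rho}$ is a \textit{no} instance and $\ket{\sigma}$ is a \textit{yes} instance of $\textsc{Entropy}_{N,a,b}$, embedded into $\mathbb{C}^N \otimes \mathbb{C}^N$ by padding each side with $\ket{0}$'s (which preserves $S_2$ of the reduced state). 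For $\Delta \le 1/4$, $\sigma$ remains $\Omega(1)$-well-conditioned, and the infidelity is $\gamma = 1-\sqrt{1-4\eta^2} = \Theta(\Delta)$.

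Given any $Q$-query tester $\mathcal{A}$ for $\textsc{Entropy}_{N,a,b}$ and a block-encoding $U_\varrho$ of $\varrho/2$ (equivalently, a $(2,O(1),0)$-block-encoding of $\varrho$), the reduction constructs an $\varepsilon$-approximation $\widetilde R_\varrho$ of the reflection $R_\varrho = I - 2\ket{\varrho}\bra{\varrho}$ in two steps. First, since $\varrho$ is one-qubit with constant condition number, \cref{lemma:block-encoding-of-purification} (applied with $\beta,\kappa,N$ all $O(1)$) gives a block-encoding of $\ket{\varrho}\bra{\varrho}$ using only $\polylog(1/\varepsilon)$ queries to $U_\varrho$. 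Second, downscaling via \cref{corollary:scaling-block-encoding} and then applying Hamiltonian simulation (\cref{thm:hamiltonian-simulation}) at time $t = 2\pi$ implements $e^{-i\pi\ket{\varrho}\bra{\varrho}} = R_\varrho$ up to the target precision using another $\polylog(1/\varepsilon)$ queries. Taking $\varepsilon = \Theta(1/Q)$, each $\widetilde R_\varrho$ costs $\polylog(Q)$ queries to $U_\varrho$.

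Substituting each of the $Q$ oracle calls of $\mathcal{A}$ by $\widetilde R_\varrho$ and running the resulting circuit on $\varrho \in \{\rho,\sigma\}$, the total diamond-distance error is $Q \varepsilon = O(1)$, which becomes a sufficiently small constant by choosing the hidden constant in $\varepsilon$ small; after a constant number of majority-vote repetitions this solves $\textsc{Dis}_{\rho,\sigma}^{\diamond}$ with success probability $\ge 2/3$, using a total of $Q\cdot\polylog(Q) = \widetilde O(Q)$ queries to the block-encoding of $\varrho/2$. Combining with \cref{corollary:qsd-intro} yields $\widetilde O(Q) \ge \mathsf{Q}_\diamond(\textsc{Dis}_{\rho,\sigma}) = \widetilde\Omega(1/\sqrt{\gamma}) = \widetilde\Omega(1/\sqrt{\Delta})$, hence $Q = \widetilde\Omega(1/\sqrt{\Delta})$ as required.

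The step I expect to require the most care is the error budgeting inside the construction of $\widetilde R_\varrho$: \cref{lemma:block-encoding-of-purification} internally uses QSVT for the matrix square root, whose error is amplified by a $\sqrt{\,}$-factor, so the internal precision must be taken polynomially (rather than linearly) smaller than the final target $\varepsilon$, and this must be balanced against the Hamiltonian-simulation error and the error propagation across the $Q$ substitutions while keeping the overall query overhead polylogarithmic in $Q$. The remaining pieces---verifying the exact choice of $\eta$, the padding into the bipartite space, and constant-probability boosting---are standard.
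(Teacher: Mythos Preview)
Your proposal is correct and follows essentially the same route as the paper: reduce $\textsc{Dis}_{\rho,\sigma}^{\diamond}$ with $\rho=I/2$, $\sigma=I/2-\Theta(\sqrt{\Delta})Z$, $b=\ln 2$, $a=\ln 2-\Delta$, build a block-encoding of $\ket{\varrho}\bra{\varrho}$ via \cref{lemma:block-encoding-of-purification}, turn it into an $\varepsilon$-approximate reflection $R_\varrho$, substitute into the $Q$-query tester, and invoke \cref{corollary:qsd-intro}. The only local difference is how $R_\varrho$ is synthesized: you use $e^{-i\pi\ket{\varrho}\bra{\varrho}}=R_\varrho$ via \cref{thm:hamiltonian-simulation}, whereas the paper uses LCU (\cref{lemma:lcu}) on $I$ and $\ket{\varrho}\bra{\varrho}$ followed by robust oblivious amplitude amplification (\cref{lmm:roaa}); both yield a $(1,O(1),O(\varepsilon))$-block-encoding of $R_\varrho$ with $\polylog(1/\varepsilon)$ overhead, and the paper then plugs this in via \cref{lemma:block-encoding-as-query} exactly as you describe.

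One small caveat on your variant: \cref{thm:hamiltonian-simulation} as stated assumes a $(1,a,0)$-block-encoding of a Hermitian $H$, but the output of \cref{lemma:block-encoding-of-purification} is only an $\varepsilon$-approximate block-encoding whose actual top-left block need not be Hermitian. This is harmless in the end (robustness of QSVT, e.g.\ \cref{lemma:robust-qsvt}, absorbs the perturbation into the $\widetilde O$), but you should say so explicitly rather than invoking \cref{thm:hamiltonian-simulation} as a black box; the paper's LCU+OAA route sidesteps this because \cref{lmm:roaa} is already stated for approximate block-encodings of a unitary target. Also, your expression $\gamma=1-\sqrt{1-4\eta^2}$ is not literally $1-\operatorname{F}(\rho,\sigma)$ for these diagonal states, though your conclusion $\gamma=\Theta(\Delta)$ is correct; the paper simply bounds $\gamma\le 2\Delta$ directly.
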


\begin{proof}
    We reduce $\bb{\textsc{Dis}_{\rho, \sigma}}$ to $\textsc{Entropy}_{N, a, b}$, where
    \[
    \rho = \frac 1 2 \ket{0}\bra{0} + \frac 1 2 \ket{1}\bra{1}, \qquad \sigma = \rbra*{\frac 1 2 - \sqrt{\Delta}} \ket{0}\bra{0} + \rbra*{\frac 1 2 +\sqrt{\Delta}} \ket{1}\bra{1},
    \]
    with $\Delta=b-a$ and $b=\ln \rbra{2}$.

    \paragraph{Reduction}
    Suppose that there is a quantum tester $\mathcal{A}$ for $\textsc{Entropy}_{N, a, b}$ with quantum query complexity $Q$
    such that
    \begin{enumerate}
        \item $\Abs{\Pi \mathcal{A}^U \ket{0}}^2 \geq 3/4$ if $U$ is a \textit{yes} instance of $\textsc{Entropy}_{N, a, b}$,
        \item $\Abs{\Pi \mathcal{A}^U \ket{0}}^2 \leq 1/4$ if $U$ is a \textit{no} instance of $\textsc{Entropy}_{N, a, b}$,
    \end{enumerate}
    where $\Pi = \ket{0}\bra{0} \otimes I$ is the projector that measures the first qubit in the computational basis. 
    In the following, we construct a tester $\mathcal{T}$ for $\bb{\textsc{Dis}_{\rho, \sigma}}$ using $\mathcal{A}$.
    
    \vspace{5pt}
    \noindent\fbox{
    \parbox{\textwidth-15.6pt}{
    \vspace{7pt}
    $\mathcal{T}$ --- \textbf{Tester for} $\bb{\textsc{Dis}_{\rho, \sigma}}$ \textbf{using the tester} $\mathcal{A}$ \textbf{for} $\textsc{Entropy}_{N, a, b}$

    \begin{itemize}
        \item[] \textbf{Input}: Quantum oracle $U$ that is a $\rbra{1, m, 0}$-block-encoding of a quantum state $\frac{1}{2}\varrho$.
        \item[] \textbf{Output}: A classical bit $c \in \cbra{0, 1}$. 
    \end{itemize}

    \begin{enumerate}
        \item Let $\varepsilon = 1/240Q$.
        \item Let $U_{\ket{\varrho}\bra{\varrho}}$ be a $\rbra{2, O\rbra{n+m}, \varepsilon}$-block-encoding of $\ket{\varrho}\bra{\varrho}$ by \cref{lemma:block-encoding-of-purification}, where $\ket{\varrho}$ is a purification of $\varrho$.
        \item Let $U_{R_{\varrho}}$ be a $\rbra{1, O(n+m), 20\varepsilon}$-block-encoding of $R_{\varrho} = I - 2\ket{\varrho}\bra{\varrho}$ by \cref{lemma:lcu} and \cref{lmm:roaa}.
        \item Let $\mathcal{A}^{R_{\varrho} \to U_{R_{\varrho}}}$ be a $\rbra{1, n_{\textup{env}},20Q\varepsilon}$-block-encoding of $\mathcal{A}^{R_{\varrho}}$ by \cref{lemma:block-encoding-as-query}, where $n_{\textup{env}} = O\rbra{Q\rbra{n+m}}$. 
        \item Prepare $\ket{\psi} = \mathcal{A}^{R_{\varrho} \to U_{R_{\varrho}}} \ket{0} \ket{0}^{\otimes n_{\textup{env}}}$. 
        \item Measure the first and ancilla qubits of $\ket{\psi}$ in the computational basis. 
        Return $0$ if all the outcomes are $0$, and $1$ otherwise. 
    \end{enumerate}
    \vspace{7pt}
    }
    }
    \vspace{5pt}

    \paragraph{Complexity}
    Let us analyze the quantum query complexity of $\mathcal{T}$. 
    By \cref{lemma:block-encoding-of-purification} (with $\beta=2$, $\kappa = 4$ and $N = 2$), $U_{\ket{\varrho}\bra{\varrho}}$ can be implemented using $O\rbra{\log^{2}\rbra{1/\varepsilon}}$ queries to $U$.
    To implement $U_{R_{\varrho}}$,
    we first take $y=\rbra{1,-4}$ and $\beta=5$ in \cref{def:state-preparation-pair},
    then it is easy to see that $\rbra{P_L,P_R}$ with
    \[
    P_L\ket{0}=\frac{1}{\sqrt{5}}\ket{0}+\frac{2}{\sqrt{5}}\ket{1}, \text{ and } P_R\ket{0}=\frac{1}{\sqrt{5}}\ket{0}-\frac{2}{\sqrt{5}}\ket{1}
    \]
    is a $(5,1,0)$-state-preparation-pair and can be efficiently implemented without queries.
    By taking $A_1=I$ and $A_2=\frac{1}{2}\ket{\varrho}\bra{\varrho}$ in \cref{lemma:lcu}, we can then construct a $\rbra*{5,O(n+m),2.5\varepsilon}$-block-encoding of $R_{\varrho}$,
    which is equivalently a $\rbra*{1,O(n+m),\varepsilon/2}$-block-encoding of $R_{\varrho}/5$,
    using $O\rbra{\log^{2}(1/\varepsilon)}$ queries to $U$.
    As a result, combined with \cref{lmm:roaa}, 
    $U_{R_{\varrho}}$ can be implemented using $O\rbra{\log^{2}(1/\varepsilon)}$ queries to $U$.
    Since $\mathcal{A}^{R_{\varrho} \to U_{R_{\varrho}}}$ uses $Q$ queries to $U_{R_{\varrho}}$,
    the total quantum query complexity of $\mathcal{T}$ is $O\rbra{Q\log^{2}(1/\varepsilon)}=\widetilde{O}(Q)$.
    
    On the other hand, if $\mathcal{T}$ solves $\bb{\textsc{Dis}_{\rho, \sigma}}$, then $\widetilde O\rbra{Q} \geq \mathsf{Q}_\diamond\rbra{\textsc{Dis}_{\rho, \sigma}}$.
    Note that
    \[
    \gamma = 1 - \operatorname{F}\rbra{\rho, \sigma} = 1 - \frac{1}{\sqrt{2}}\rbra*{\sqrt{\frac{1}{2} -\sqrt{\Delta} } + \sqrt{\frac{1}{2} +\sqrt{\Delta}}} \leq 2\Delta.
    \]
    By \cref{corollary:qsd-intro}, $\mathsf{Q}_\diamond\rbra{\textsc{Dis}_{\rho, \sigma}} = \widetilde\Omega\rbra{1/\sqrt{\gamma}} = \widetilde\Omega\rbra{1/\sqrt{\Delta}}$.
    Therefore, $Q = \widetilde\Omega\rbra{1/\sqrt{\Delta}}$. 

    \paragraph{Correctness}
    To complete the proof, it remains to show that $\mathcal{T}$ is a tester for $\bb{\textsc{Dis}_{\rho, \sigma}}$. 
    To begin with, let $U_{\rho}$ and $U_{\sigma}$ be block-encodings of $\frac{1}{2}\rho$ and $\frac{1}{2}\sigma$, respectively. 
    We only have to show that 
    \begin{enumerate}
        \item $\Pr \sbra{ \mathcal{T}^{U_{\rho}} \textup{ outputs } 0 } \leq 1/3$,
        \item $\Pr \sbra{ \mathcal{T}^{U_{\sigma}} \textup{ outputs } 0 } \geq 2/3$.
    \end{enumerate}
    Note that $S_2\rbra{\rho} = \ln\rbra{2} \geq b$ and $S_2\rbra{\sigma} = -\ln\rbra{1/2+2\Delta} \leq \ln\rbra{2} - \Delta = a$.
    From this, we know that $R_\rho = I - 2\ket{\rho}\bra{\rho}$ is a \textit{no} instance of $\textsc{Entropy}_{N, a, b}$ and $R_\sigma = I - 2\ket{\sigma}\bra{\sigma}$ is a \textit{yes} instance of $\textsc{Entropy}_{N, a, b}$.
    Therefore, 
    \begin{equation} \label{eq:succ-prob}
    \Abs*{\Pi \mathcal{A}^{R_\rho} \ket{0}}^2 \leq \frac 1 4, \text{ and } \Abs*{\Pi \mathcal{A}^{R_\sigma} \ket{0}}^2 \geq \frac 3 4.
    \end{equation}
    For $\varrho \in \cbra{\rho, \sigma}$, we have
    \begin{align}
    \Pr \sbra*{\mathcal{T}^{U_{\varrho}}\textup{ outputs $0$}}
    & = \Abs*{ \rbra*{\Pi \otimes \ket{0}\bra{0}^{\otimes n_{\textup{env}}} } \mathcal{A}^{R_{\varrho} \to U_{R_{\varrho}}} \ket{0} \ket{0}^{\otimes n_{\textup{env}}}}^2 \nonumber \\
    & = \Abs*{ \Pi \rbra*{ \bra{0}^{\otimes n_{\textup{env}}} \mathcal{A}^{R_{\varrho} \to U_{R_{\varrho}}} \ket{0}^{\otimes n_{\textup{env}}} } \ket{0} }^2. \label{eq:prob-entr}
    \end{align}
    Note that $\mathcal{A}^{R_{\varrho} \to U_{R_{\varrho}}}$ is a $\rbra{1, n_{\textup{env}}, 20Q\varepsilon}$-block-encoding of $\mathcal{A}^{R_{\varrho}}$, i.e., 
    \begin{equation} \label{eq:err-block}
    \Abs*{\bra{0}^{\otimes n_{\textup{env}}} \mathcal{A}^{R_{\varrho} \to U_{R_{\varrho}}} \ket{0}^{\otimes n_{\textup{env}}} - \mathcal{A}^{R_{\varrho}}} \leq 20Q\varepsilon = \frac 1 {12}.
    \end{equation}
    By \cref{eq:prob-entr} and \cref{eq:err-block}, we have
    \[
    \abs*{ \Pr\sbra*{\mathcal{T}^{U_{\varrho}}\textup{ outputs } 0} - \Abs{\Pi \mathcal{A}^{R_{\varrho}} \ket{0}}^2 } \leq \frac 1 {12}.
    \]
    Then by \cref{eq:succ-prob}, we have
    \begin{align*}
        \Pr \sbra*{\mathcal{T}^{U_\rho}\textup{ outputs $0$}}& \leq \Abs*{\Pi \mathcal{A}^{R_\rho} \ket{0}}^2 + \frac 1 {12} \leq \frac 1 4 + \frac 1 {12} = \frac 1 3, \\
        \Pr \sbra*{\mathcal{T}^{U_\sigma}\textup{ outputs $0$}} & \geq \Abs*{\Pi \mathcal{A}^{R_\sigma} \ket{0}}^2 - \frac 1 {12} \geq \frac 3 4 - \frac 1 {12} = \frac 2 3.
    \end{align*}
    We conclude that $\mathcal{T}$ is a tester for $\bb{\textsc{Dis}_{\rho, \sigma}}$.
\end{proof}

\subsection{Hamiltonian simulation}

A quantum query algorithm $\mathcal{A}$ is for Hamiltonian simulation for time $t$ with precision $\varepsilon$ (in the diamond norm distance), if for every block-encoding $U$ of an $n$-qubit Hamiltonian $H$, $\mathcal{A}^U$ implements a quantum channel that is $\varepsilon$-close to the unitary operator $e^{-iHt}$ in the diamond norm distance.

\begin{theorem} [Optimality of Hamiltonian simulation]
\label{thm:ham-sim}
    For any $t \geq 2\pi$ and $0 \leq \varepsilon \leq 1/3$, any quantum query algorithm for Hamiltonian simulation for time $t$ with precision $\varepsilon$ (in the trace norm distance) has quantum query complexity $\widetilde\Omega\rbra{t}$.
\end{theorem}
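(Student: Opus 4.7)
The plan is to reduce $\bb{\textsc{Dis}_{\rho,\sigma}}$ to Hamiltonian simulation for the one-qubit pair $\rho = \tfrac{1}{2}I$ and $\sigma = \tfrac{1}{2}I + \tfrac{\pi}{t}Z$. Both are valid density operators when $t \geq 2\pi$. A direct computation gives
\[
\operatorname{F}(\rho,\sigma)^2 \;=\; \tfrac{1}{2} + \sqrt{\tfrac{1}{4} - \pi^2/t^2},
\]
so the infidelity is $\gamma = 1 - \operatorname{F}(\rho,\sigma) = O(1/t^2)$. By \cref{corollary:qsd-intro}, $\mathsf{Q}_\diamond(\textsc{Dis}_{\rho,\sigma}) = \widetilde\Omega(1/\sqrt{\gamma}) = \widetilde\Omega(t)$. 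It therefore suffices to turn any Hamiltonian simulator $\mathcal{A}$ for time $t$ with trace-norm precision $\varepsilon \leq 1/3$ into a tester for $\bb{\textsc{Dis}_{\rho,\sigma}}$ using the same number of queries.

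The tester $\mathcal{T}$ takes as input a block-encoding $U_\varrho$ of $\tfrac{1}{2}\varrho$ for $\varrho \in \{\rho,\sigma\}$, treats $\tfrac{1}{2}\varrho$ as a one-qubit Hamiltonian, and uses $\mathcal{A}^{U_\varrho}$ to apply a channel $\mathcal{E}_\varrho$ with $\Abs{\mathcal{E}_\varrho - e^{-i\frac{t}{2}\varrho}}_{\tr} \leq \varepsilon$ to the initial state $\ket{+}\!\bra{+}$. A short calculation shows
\[
e^{-i\frac{t}{2}\rho}\ket{+} \;=\; e^{-it/4}\ket{+}, \qquad e^{-i\frac{t}{2}\sigma}\ket{+} \;=\; e^{-it/4}\,e^{-i\frac{\pi}{2}Z}\ket{+} \;=\; -i\,e^{-it/4}\ket{-},
\]
so the two ideal outputs are orthogonal. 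Measuring the output qubit in the Hadamard basis and returning the outcome distinguishes the two ideal cases with certainty; since the trace-norm error is at most $\varepsilon \leq 1/3$, the corresponding outcome probabilities are still separated by at least $1 - 2\varepsilon \geq 1/3$. A constant number of repetitions then yields success probability at least $2/3$.

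Finally, $\mathcal{T}$ uses exactly the query complexity $Q$ of $\mathcal{A}$, so $Q \geq \mathsf{Q}_\diamond(\textsc{Dis}_{\rho,\sigma}) = \widetilde\Omega(t)$, proving the theorem for trace-norm precision. The diamond-norm statement follows a fortiori from \cref{eq:tr-vs-diamond}, since a diamond-norm $\varepsilon$-approximation is automatically a trace-norm $\varepsilon$-approximation. The only nontrivial calibration is the choice of the $Z$-coefficient $\pi/t$ in $\sigma$: it is just small enough to make $\rho$ and $\sigma$ information-theoretically close as samples (ensuring $\gamma = O(1/t^2)$), yet just large enough that running the simulator for the allotted time $t$ rotates $\ket{+}$ to $\ket{-}$, turning the hard sample-complexity discrimination into a perfectly distinguishable output. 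No genuine obstacle is expected; the sample-to-query lifting does all the heavy lifting.
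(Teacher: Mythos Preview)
Your proposal is correct and follows essentially the same route as the paper: the same one-qubit states $\rho=\tfrac12 I$ and $\sigma=\tfrac12 I+\tfrac{\pi}{t}Z$, the same tester that runs $\mathcal{A}$ on $\ket{+}$ and measures in the Hadamard basis, the same observation that $e^{-i\frac{t}{2}\rho}\ket{+}\propto\ket{+}$ while $e^{-i\frac{t}{2}\sigma}\ket{+}\propto\ket{-}$, and the same invocation of \cref{corollary:qsd-intro} with $\gamma=O(1/t^2)$. The only cosmetic difference is that the paper's error accounting already gives acceptance probabilities $\geq 2/3$ and $\leq 1/3$ directly (since $\varepsilon\leq 1/3$), so your repetition step is unnecessary, but harmless.
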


\begin{remark}
    Due to \cref{eq:tr-vs-diamond}, every quantum algorithm for Hamiltonian simulation with respect to the diamond norm distance $\Abs{\cdot}_\diamond$ (which is usually considered in the literature, e.g., \cite{BCK15}) is a valid one with respect to the trace norm distance $\Abs{\cdot}_{\tr}$, but not vice versa. As a result, any lower bound for Hamiltonian simulation with respect to the trace norm distance is also a lower bound with respect to the diamond norm distance, but not vice versa.
\end{remark}

\begin{proof}
    We reduce $\bb{\textsc{Dis}_{\rho, \sigma}}$ to Hamiltonian simulation, where 
    \[
    \rho = \frac 1 2 \ket{0}\bra{0} + \frac 1 2 \ket{1}\bra{1}, \qquad \sigma = \rbra*{\frac 1 2 + \frac{\pi}{t}} \ket{0}\bra{0} + \rbra*{\frac 1 2 - \frac{\pi}{t}} \ket{1}\bra{1}.
    \]

    \paragraph{Reduction}
    Suppose that there is a quantum query algorithm $\mathcal{A}$ for Hamiltonian simulation for time $t$ with precision $\varepsilon$ (in the trace norm distance), which has quantum query complexity $Q$, such that for every block-encoding $U$ of an $n$-qubit Hamiltonian $H$, $\Abs{\mathcal{E}^{U} - e^{-iHt}}_{\tr} \leq \varepsilon$, where
    \[
    \mathcal{E}^U\rbra{\varrho} = \tr_{\textup{env}} \rbra*{ \mathcal{A}^U \rbra*{ \varrho \otimes \underbrace{\ket{0}\bra{0}^{\otimes n_{\textup{env}}}}_{\textup{env}} } \rbra*{\mathcal{A}^U}^\dag }.
    \]
    We reduce $\bb{\textsc{Dis}_{\rho, \sigma}}$ to Hamiltonian simulation by constructing a tester $\mathcal{T}$ for $\bb{\textsc{Dis}_{\rho, \sigma}}$ using the quantum Hamiltonian simulator $\mathcal{A}$ as follows.
    
    \vspace{5pt}
    \noindent\fbox{
    \parbox{\textwidth-15.6pt}{
    \vspace{7pt}
    $\mathcal{T}$ --- \textbf{Tester for} $\bb{\textsc{Dis}_{\rho, \sigma}}$ \textbf{using quantum Hamiltonian simulator} $\mathcal{A}$

    \begin{itemize}
        \item[] \textbf{Input}: Quantum oracle $U$.
        \item[] \textbf{Output}: A classical bit $b \in \cbra{0, 1}$. 
    \end{itemize}

    \begin{enumerate}
        \item Prepare $\ket{\psi} = \rbra{\mathit{Had} \otimes I_{\textup{env}}} \cdot \mathcal{A}^U \ket{+} \ket{0}^{\otimes n_{\textup{env}}}$, where $\mathit{Had}$ is the Hadamard gate. 
        \item Let $b$ be the measurement outcome of the first qubit of $\ket{\psi}$ in the computational basis. 
        \item Return $b$. 
    \end{enumerate}
    \vspace{7pt}
    }
    }
    \vspace{5pt}

    \paragraph{Complexity}

    Clearly, it can be seen that the quantum query complexity of $\mathcal{T}$ is exactly that of $\mathcal{A}$, which is $Q$. 
    On the other hand, if $\mathcal{T}$ solves $\bb{\textsc{Dis}_{\rho, \sigma}}$, then the quantum query complexity of $\mathcal{T}$ is lower bounded by that of $\bb{\textsc{Dis}_{\rho, \sigma}}$, i.e., $Q \geq \mathsf{Q}_\diamond\rbra{\textsc{Dis}_{\rho, \sigma}}$.
    Note that 
    \[
    \gamma = 1 - \operatorname{F}\rbra{\rho, \sigma} = 1 - \frac 1 2 \rbra*{\sqrt{1 + \frac{2\pi}{t}} + \sqrt{1 - \frac{2\pi}{t}}} \leq \frac {4\pi^2} {t^2}.
    \]
    By \cref{corollary:qsd-intro}, we have the quantum query lower bound $Q \geq \mathsf{Q}_\diamond\rbra{\textsc{Dis}_{\rho, \sigma}} = \widetilde \Omega\rbra{1/\sqrt{\gamma}} = \widetilde \Omega\rbra{t}$ for Hamiltonian simulation.

    \paragraph{Correctness}
    To complete the proof, it remains to show that $\mathcal{T}$ is a tester for $\bb{\textsc{Dis}_{\rho, \sigma}}$. 
    To begin with, let $U_{\rho}$ and $U_{\sigma}$ be block-encodings of $\frac{1}{2}\rho$ and $\frac{1}{2}\sigma$, respectively. 
    We only have to show that 
    \begin{enumerate}
        \item $\Pr \sbra{ \mathcal{T}^{U_{\rho}} \textup{ outputs } 0 } \geq 2/3$,
        \item $\Pr \sbra{ \mathcal{T}^{U_{\sigma}} \textup{ outputs } 0 } \leq 1/3$.
    \end{enumerate}
    Let $\Pi = \ket{0}\bra{0} \otimes I_{\textup{env}}$ be a projector onto the first qubit.
    For $\eta \in \cbra{\rho, \sigma}$, we have
    \begin{align*}
    \Pr\sbra*{\mathcal{T}^{U_{\eta}} \textup{ outputs } 0} 
    & = \Abs*{\Pi \rbra{\mathit{Had} \otimes I_{\textup{env}}} \cdot \mathcal{A}^{U_{\eta}} \ket{+} \ket{0}^{\otimes n_{\textup{env}}}}^2  \\
    & = \tr\rbra*{ \ket{+}\bra{+} \cdot \tr_{\textup{env}} \rbra*{ \mathcal{A}^{U_{\eta}} \rbra*{ \ket{+} \bra{+} \otimes \underbrace{\ket{0} \bra{0}^{\otimes n_{\textup{env}}}}_{\textup{env}} } \rbra*{\mathcal{A}^{U_{\eta}}}^\dag } }  \\
    & = \tr\rbra*{ \ket{+}\bra{+} \cdot \mathcal{E}^{U_{\eta}}\rbra*{\ket{+}\bra{+}} },
    \end{align*}
    where $\Abs{\mathcal{E}^{U_{\eta}} - e^{-i \frac{\eta}{2} t}}_{\tr} \leq \varepsilon$. 
    This implies that
    \begin{equation} \label{eq:hsim-prob-eps}
    \abs*{ \Pr\sbra*{\mathcal{T}^{U_{\eta}} \textup{ outputs } 0} - \tr\rbra*{ \ket{+}\bra{+} \cdot e^{-i \frac{\eta}{2} t} \ket{+}\bra{+} e^{i \frac{\eta}{2} t}} } \leq \varepsilon.
    \end{equation}
    It can be verified that
    \[
    e^{-i \frac{\rho}{2} t} \ket{+} = e^{-i\frac{t}{4}} \ket{+}, \qquad e^{-i \frac{\sigma}{2} t} \ket{+} = -i e^{-i\frac{t}{4}} \ket{-}.
    \]
    and thus
    \[
    \tr\rbra*{ \ket{+}\bra{+} \cdot e^{-i \frac{\rho}{2} t} \ket{+}\bra{+} e^{i \rho t}} = 1, \qquad \tr\rbra*{ \ket{+}\bra{+} \cdot e^{-i \frac{\sigma}{2} t} \ket{+}\bra{+} e^{i \sigma t}} = 0.
    \]
    By \cref{eq:hsim-prob-eps}, we can consider $\rho$ and $\sigma$ separately as follows:
    \begin{align*}
    \Pr\sbra*{\mathcal{T}^{U_{\rho}} \textup{ outputs } 0} & \geq \tr\rbra*{ \ket{+}\bra{+} \cdot e^{-i \frac{\rho}{2} t} \ket{+}\bra{+} e^{i \frac{\rho}{2} t}} - \varepsilon = 1 - \varepsilon \geq \frac 2 3, \\
    \Pr\sbra*{\mathcal{T}^{U_{\sigma}} \textup{ outputs } 0} & \leq \tr\rbra*{ \ket{+}\bra{+} \cdot e^{-i \frac{\sigma}{2} t} \ket{+}\bra{+} e^{i \frac{\sigma}{2} t}} + \varepsilon = 0 + \varepsilon \leq \frac 1 3.
    \end{align*}
    We conclude that $\mathcal{T}$ is a tester for $\bb{\textsc{Dis}_{\rho, \sigma}}$.
\end{proof}

\subsection{Phase estimation}

A quantum query algorithm $\mathcal{A}$ is for phase estimation with precision $\delta$, if for every unitary operator $U$ and its eigenvector $\ket{\psi}$ with $U \ket{\psi} = e^{i\lambda} \ket{\psi}$, $\mathcal{A}^{U}$ with input state $\ket{\psi}$ returns $\tilde \lambda$ such that $\abs{\tilde \lambda - \lambda \pmod{2\pi}} \leq \delta$ with probability at least $2/3$. 

\begin{theorem} [Optimality of phase estimation]
\label{thm:ph-est}
    For $0 < \delta \leq 1/8$, any quantum query algorithm for phase estimation with precision $\delta$ has quantum query complexity $\widetilde \Omega\rbra{1/\delta}$.
\end{theorem}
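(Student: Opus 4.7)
The plan is to reduce $\bb{\textsc{Dis}_{\rho_+, \rho_-}}$ to phase estimation, following the general template used for Hamiltonian simulation (\cref{thm:ham-sim}) and Gibbs sampling (\cref{thm:gibbs}), and then invoke \cref{corollary:qsd-intro}. I will take $\rho_{\pm} = \frac{1}{2} I \mp \delta Z$, which are one-qubit quantum states with infidelity $\gamma = 1 - \operatorname{F}(\rho_+, \rho_-) = 1 - \sqrt{1-4\delta^2} \leq 4\delta^2$. The point is that if a block-encoding $U_{\rho_{\pm}}$ of $\frac{1}{2}\rho_{\pm}$ is given, then the unitary $e^{-i\frac{\rho_{\pm}}{2}}$ has $\ket{0}$ as an eigenvector with eigenphase $\frac{1}{4} \mp \frac{\delta}{2}$, so a precise enough estimate of the phase tells $\rho_+$ from $\rho_-$.

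First I would suppose that there is a quantum phase estimator $\mathcal{A}$ with precision $\delta$ having query complexity $Q$, and build a tester $\mathcal{T}$ for $\bb{\textsc{Dis}_{\rho_+, \rho_-}}$ in two stages. In the first stage, given access to a $(1,a,0)$-block-encoding $U_{\rho_\pm}$, I invoke the Hamiltonian simulation result \cref{thm:hamiltonian-simulation} to implement a unitary $V$ that is a $(1, a+2, \varepsilon)$-block-encoding of $e^{-i\frac{\rho_\pm}{2}}$ for $\varepsilon = \Theta(1/Q)$; this costs $\widetilde O(1)$ queries to $U_{\rho_\pm}$ since the simulation time is a constant. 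In the second stage, I run $\mathcal{A}$ with oracle access to $V$, feeding it the state $\ket{0}$ on the primary register. By \cref{lemma:block-encoding-as-query}, treating $V$ as a block-encoding of the ideal $e^{-i\frac{\rho_\pm}{2}}$ introduces total error $Q\varepsilon = \Theta(1)$ in the output distribution, which can be made smaller than a fixed constant by tuning the constant in $\varepsilon$. Finally, $\mathcal{T}$ returns $0$ if the phase estimate is greater than $\frac{1}{4}$ and $1$ otherwise.

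For correctness, I would choose $\delta' = \delta/2$ as the precision threshold so that, up to the small constant error from the block-encoding approximation, the estimate returned by $\mathcal{A}$ is on the correct side of $\frac{1}{4}$ with probability at least $2/3$ in both cases; a constant number of amplifications will take care of the $\widetilde{O}(1)$ slackness. The total query cost of $\mathcal{T}$ to $U_{\rho_\pm}$ is $Q \cdot \widetilde O(1) = \widetilde O(Q)$. Since $\mathcal{T}$ solves $\bb{\textsc{Dis}_{\rho_+, \rho_-}}$, we obtain $\widetilde O(Q) \geq \mathsf{Q}_\diamond(\textsc{Dis}_{\rho_+, \rho_-})$, and combining with \cref{corollary:qsd-intro} and $\gamma = O(\delta^2)$ gives $\widetilde O(Q) \geq \widetilde\Omega(1/\sqrt{\gamma}) = \widetilde\Omega(1/\delta)$, hence $Q = \widetilde\Omega(1/\delta)$.

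The main obstacle I anticipate is the bookkeeping around the definition of phase estimation when the oracle is only an approximate (and block-encoded) implementation of $e^{-i\frac{\rho_\pm}{2}}$ rather than the exact unitary: the promise that $U\ket{\psi} = e^{i\lambda}\ket{\psi}$ holds only approximately, so one has to verify that the error propagation through the $Q$ queries inside $\mathcal{A}$ is still bounded by $\widetilde O(Q\varepsilon)$ as guaranteed by \cref{lemma:block-encoding-as-query}, and that this does not degrade the success probability below $2/3$ after the appropriate amplification. The other slightly delicate point is choosing constants so that the threshold $\frac{1}{4}$ cleanly separates the two phases $\frac{1}{4} \mp \frac{\delta}{2}$ at the stated precision $\delta$, which can be handled by running $\mathcal{A}$ at precision $\delta/c$ for a small constant $c$ and paying only a constant factor in queries.
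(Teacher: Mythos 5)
Your overall strategy is the same as the paper's: reduce $\bb{\textsc{Dis}_{\rho_+,\rho_-}}$ to phase estimation by simulating the one-qubit Hamiltonian $\frac{1}{2}\rho_\pm$ for unit time via \cref{thm:hamiltonian-simulation}, wrapping the given estimator with \cref{lemma:block-encoding-as-query} to control the $Q\varepsilon$ error, thresholding the estimate at $\frac{1}{4}$, and closing with \cref{corollary:qsd-intro} and $\gamma = O(\delta^2)$. However, there is a genuine gap in your choice of constants. With $\rho_\pm = \frac{1}{2}I \mp \delta Z$ the two eigenphases are $\frac{1}{4} \mp \frac{\delta}{2}$, separated by only $\delta$. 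A phase estimator with precision $\delta$ only promises an estimate within $\delta$ of the true phase, and the two resulting intervals both straddle $\frac{1}{4}$; for instance, an estimator that always outputs $\frac{1}{4}$ on these instances satisfies the precision-$\delta$ guarantee for both $\rho_+$ and $\rho_-$, so your tester cannot be shown to distinguish them, and its correctness fails.

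Your proposed remedy --- ``running $\mathcal{A}$ at precision $\delta/c$ for a small constant $c$'' --- is not available: $\mathcal{A}$ is an \emph{arbitrary} algorithm with precision exactly $\delta$ whose query complexity you are lower bounding, and nothing guarantees a precision-$\delta/c$ algorithm with comparable cost exists (repeating $\mathcal{A}$ boosts confidence, not precision). The correct repair, which the paper uses, is to widen the separation of the \emph{states} rather than sharpen the estimator: take $\rho_\pm = \rbra{\frac{1}{2} \mp 4\delta}\ket{0}\bra{0} + \rbra{\frac{1}{2} \pm 4\delta}\ket{1}\bra{1}$, so the phases are $\frac{1}{4} \mp 2\delta$ and any estimate within $\delta$ lands strictly on the correct side of $\frac{1}{4}$. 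The infidelity is then $\gamma = 1 - \sqrt{1 - 64\delta^2} \leq 64\delta^2 = O(\delta^2)$, so the final bound $\widetilde\Omega(1/\delta)$ is unaffected; with $\varepsilon = \Theta(1/Q)$ the success probability degrades only to a constant above $1/2$ (the paper gets $5/9$ vs.\ $4/9$), which suffices after constant amplification. With that single change of constants your argument matches the paper's proof of \cref{thm:ph-est}.
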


\begin{proof}
    We reduce $\bb{\textsc{Dis}_{\rho_+, \rho_-}}$ to phase estimation, where 
    \[
    \rho_{\pm} = \rbra*{\frac 1 2 \mp 4\delta} \ket{0}\bra{0} + \rbra*{\frac 1 2 \pm 4\delta} \ket{1}\bra{1}.
    \]

    \paragraph{Reduction}
    Suppose that there is a quantum query algorithm $\mathcal{A}$ for phase estimation with precision $\delta$, which has quantum query complexity $Q$.
    We reduce $\bb{\textsc{Dis}_{\rho_+, \rho_-}}$ to phase estimation by constructing a tester $\mathcal{T}$ for $\bb{\textsc{Dis}_{\rho_+, \rho_-}}$ using the quantum phase estimator $\mathcal{A}$ as follows.
    
    \vspace{5pt}
    \noindent\fbox{
    \parbox{\textwidth-15.6pt}{
    \vspace{7pt}
    $\mathcal{T}$ --- \textbf{Tester for} $\bb{\textsc{Dis}_{\rho_+, \rho_-}}$ \textbf{using quantum phase estimator} $\mathcal{A}$

    \begin{itemize}
        \item[] \textbf{Input}: Quantum oracle $U$ that is a $\rbra{1, a, 0}$-block-encoding of a one-qubit Hermitian $H$.
        \item[] \textbf{Output}: A classical bit $b \in \cbra{0, 1}$. 
    \end{itemize}

    \begin{enumerate}
        \item Let $\varepsilon = 1/9Q$. 
        \item Let $V$ be a $\rbra{1, a+2, \varepsilon}$-block-encoding of $e^{-iH}$ 
        by \cref{thm:hamiltonian-simulation}.
        \item Let $\mathcal{A}^{e^{-iH} \to V}$ be a $\rbra{1, Q\rbra{a+2}, Q\varepsilon}$-block-encoding of $\mathcal{A}^{e^{-iH}}$ 
        by \cref{lemma:block-encoding-as-query}.
        \item Let $\tilde \lambda$ be the estimate of the phase $\lambda$ of $\ket{0}$ returned by $\mathcal{A}^{e^{-iH} \to V}$.
        \item Return $0$ if $\tilde \lambda < 1/2$, and $1$ otherwise. 
    \end{enumerate}
    \vspace{7pt}
    }
    }
    \vspace{5pt}

    \paragraph{Complexity} 
    The unitary operator $V$ can be implemented by \cref{thm:hamiltonian-simulation} using $O\rbra*{\frac{\log\rbra{1/\varepsilon}}{\log\log\rbra{1/\varepsilon}}}$ queries to $U$. 
    Therefore, the quantum query complexity of $\mathcal{T}$ is 
    \[
    Q \cdot O\rbra*{\frac{\log\rbra{1/\varepsilon}}{\log\log\rbra{1/\varepsilon}}} = \widetilde O\rbra{Q}.
    \]
    On the other hand, if $\mathcal{T}$ solves $\bb{\textsc{Dis}_{\rho_+, \rho_-}}$, then the quantum query complexity of $\mathcal{T}$ is lower bounded by that of $\bb{\textsc{Dis}_{\rho_+, \rho_-}}$, i.e., $\widetilde O\rbra{Q} \geq \mathsf{Q}_\diamond\rbra{\textsc{Dis}_{\rho_+, \rho_-}}$.
    Note that $
    \gamma = 1 - \operatorname{F}\rbra{\rho_+, \rho_-} = 1 - \sqrt{1 - 64\delta^2} \leq 64 {\delta^2}$.
    By \cref{corollary:qsd-intro}, we have the quantum query lower bound $Q = \widetilde\Omega\rbra{\mathsf{Q}_\diamond\rbra{\textsc{Dis}_{\rho_+, \rho_-}}} = \widetilde \Omega\rbra{1/\sqrt{\gamma}} = \widetilde\Omega\rbra{1/\delta}$ for phase estimation.

    \paragraph{Correctness}
    To complete the proof, it remains to show that $\mathcal{T}$ is a tester for $\bb{\textsc{Dis}_{\rho_+, \rho_-}}$. 
    To begin with, let $U_{\rho_+}$ and $U_{\rho_-}$ be block-encodings of $\frac{1}{2}\rho_+$ and $\frac{1}{2}\rho_-$, respectively. 
    We only have to show that 
    \begin{enumerate}
        \item $\Pr \sbra{ \mathcal{T}^{U_{\rho_+}} \textup{ outputs } 0 } \geq 5/9$,
        \item $\Pr \sbra{ \mathcal{T}^{U_{\rho_-}} \textup{ outputs } 0 } \leq 4/9$.
    \end{enumerate}
    Note that
    \[
    e^{-i \frac{\rho_{\pm}}{2}} = e^{-i \rbra*{\frac 1 {4} \mp 2\delta}} \ket{0}\bra{0} + e^{-i \rbra*{\frac 1 {4} \pm 2\delta}} \ket{1}\bra{1}.
    \]
    The phase of $\ket{0}$ of $e^{-i \frac{\rho_{\pm}}{2}}$ is $\lambda_{\pm} = \frac 1 {4} \mp 2\delta$.
    Let $V_{\pm}$ be the unitary operator constructed at Step 2, using queries to $U_{\rho_{\pm}}$.
    The phase estimator $\mathcal{A}^{e^{-i\frac{\rho_{\pm}}{2}}\to V_{\pm}}$ will return an estimate $\tilde \lambda_{\pm}$ such that 
    \[
    \Pr\sbra*{ \abs*{ \tilde \lambda_{\pm} - \lambda_{\pm} } \leq \delta } \geq \frac 2 3 - Q\varepsilon = \frac{5}{9},
    \]
    which implies that
    \[
    \Pr \sbra*{ \tilde \lambda_+ \leq \frac{1}{4} - \delta } \geq \frac 5 9, \qquad \Pr \sbra*{ \tilde \lambda_- \geq \frac{1}{4} + \delta } \geq \frac 5 9.
    \]
    Finally, we have
    \begin{align*}
    \Pr\sbra*{\mathcal{T}^{U_{\rho_+}} \textup{ outputs } 0} & = \Pr\sbra*{\tilde \lambda_+ < \frac 1 {4}} \geq \Pr \sbra*{ \tilde \lambda_+ \leq \frac{1}{4} - \delta } \geq \frac 5 9, \\
    \Pr\sbra*{\mathcal{T}^{U_{\rho_-}} \textup{ outputs } 0} & = \Pr\sbra*{\tilde \lambda_- < \frac 1 {4}} \leq 1 - \Pr \sbra*{ \tilde \lambda_- \geq \frac 1 {4} + \delta } \leq \frac 4 9.
    \end{align*}
    We conclude that $\mathcal{T}$ is a tester for $\bb{\textsc{Dis}_{\rho_+, \rho_-}}$.
\end{proof}

\subsection{Amplitude estimation}
\label{sec:ampl-est}

A quantum query algorithm $\mathcal{A}$ is for amplitude estimation with precision $\varepsilon$, if for every unitary operator 
\[
U \ket{0} = \sqrt{p} \ket{0}\ket{\phi_0} + \sqrt{1-p} \ket{1}\ket{\phi_1},
\]
where $\ket{\phi_0}$ and $\ket{\phi_1}$ are some normalized pure states, 
$\mathcal{A}^U$ returns $\tilde p$ such that $\abs{\tilde p - p} \leq \varepsilon$ with probability at least $2/3$. 

\begin{theorem} [Optimality of amplitude estimation]
\label{thm:ampl-est}
    For $0 < \varepsilon \leq 1/16$, any quantum query algorithm for amplitude estimation with precision $\varepsilon$ has quantum query complexity $\widetilde \Omega\rbra{1/\varepsilon}$.
\end{theorem}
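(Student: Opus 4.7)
The plan is to essentially recycle the tester constructed in the proof of \cref{thm:tightness}, since that construction already shows how an amplitude estimator with precision $\varepsilon$ solves $\bb{\textsc{Dis}_{\rho_+, \rho_-}}$ for suitably chosen one-qubit states $\rho_\pm$. Concretely, I will take $\rho_\pm = \rbra{\frac{1}{2} \mp c\varepsilon}\ket{0}\bra{0} + \rbra{\frac{1}{2} \pm c\varepsilon}\ket{1}\bra{1}$ for a small constant $c$ (e.g.\ $c = 8$), so that the infidelity is $\gamma = 1 - \operatorname{F}\rbra{\rho_+,\rho_-} = O\rbra{\varepsilon^2}$ and \cref{corollary:qsd-intro} already gives $\mathsf{Q}_\diamond\rbra{\textsc{Dis}_{\rho_+,\rho_-}} = \widetilde\Omega\rbra{1/\sqrt{\gamma}} = \widetilde\Omega\rbra{1/\varepsilon}$.

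For the reduction itself, suppose $\mathcal{A}$ is a quantum query algorithm for amplitude estimation with precision $\varepsilon$ and query complexity $Q$. Given a $\rbra{1, a, 0}$-block-encoding $U_{\rho_\pm}$ of $\frac{1}{2}\rho_\pm$, the key observation (already used implicitly in \cref{thm:tightness}) is that $U_{\rho_\pm}\ket{0}\ket{0}^{\otimes a}$ has computational-basis amplitude exactly $\bra{0}\bra{0}^{\otimes a} U_{\rho_\pm}\ket{0}\ket{0}^{\otimes a} = \frac{1}{4} \mp \frac{c}{2}\varepsilon$ on $\ket{0}\ket{0}^{\otimes a}$. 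Thus the ``$p$'' seen by $\mathcal{A}$ differs by $\Theta\rbra{\varepsilon}$ between the two cases, and running $\mathcal{A}^{U_{\rho_\pm}}$ with additive error $\varepsilon' = \Theta(\varepsilon)$ (absorbed by adjusting the constant $c$) lets us threshold at $\frac{1}{16}$ to distinguish $\rho_+$ from $\rho_-$ with the required success probability. This is precisely the tester used in the proof of \cref{thm:tightness}, so the correctness analysis (with the same choice of $\rho_\pm$ up to a constant scaling of $c$) carries over verbatim.

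To close the argument, I combine the two halves: the tester solves $\bb{\textsc{Dis}_{\rho_+, \rho_-}}$ using $O\rbra{Q}$ queries to $U_{\rho_\pm}$, hence $\mathsf{Q}_\diamond\rbra{\textsc{Dis}_{\rho_+, \rho_-}} = O\rbra{Q}$; on the other hand, by \cref{corollary:qsd-intro} (which itself follows from \cref{thm:main} applied to \cref{lemma:dis-sample}) we have $\mathsf{Q}_\diamond\rbra{\textsc{Dis}_{\rho_+, \rho_-}} = \widetilde\Omega\rbra{1/\varepsilon}$. Chaining these two bounds gives $Q = \widetilde\Omega\rbra{1/\varepsilon}$, as desired.

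There is essentially no hard step here, since all the heavy lifting (the sample lower bound for discrimination and the lifting theorem) is already available, and the calculation $\bra{0}U_{\rho_\pm}\ket{0} = \frac{1}{4} \mp \Theta\rbra{\varepsilon}$ is immediate from the definition of block-encoding. The only point requiring mild care is picking the constants in the definition of $\rho_\pm$ and in the threshold so that (i) the amplitude gap $\Theta\rbra{\varepsilon}$ strictly exceeds the $\varepsilon$ precision of $\mathcal{A}$, and (ii) the trial states remain valid one-qubit density matrices for all $\varepsilon$ in the stated range; this is why I take $c$ to be a constant such as $8$ and restrict to $\varepsilon \leq 1/16$, mirroring the choice already made in \cref{thm:tightness}.
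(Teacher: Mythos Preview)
Your proposal is correct and follows essentially the same approach as the paper's proof: it reduces $\bb{\textsc{Dis}_{\rho_+,\rho_-}}$ with $\rho_\pm = \rbra{\frac{1}{2}\mp 8\varepsilon}\ket{0}\bra{0}+\rbra{\frac{1}{2}\pm 8\varepsilon}\ket{1}\bra{1}$ to amplitude estimation by reusing the tester from the proof of \cref{thm:tightness}, and then invokes the lower bound $\mathsf{Q}_\diamond\rbra{\textsc{Dis}_{\rho_+,\rho_-}} = \widetilde\Omega\rbra{1/\varepsilon}$ (via \cref{corollary:qsd-intro}, which is also how the paper derives that bound inside \cref{thm:tightness}). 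The constants, the amplitude computation $\bra{0}U_{\rho_\pm}\ket{0} = \frac{1}{4}\mp 4\varepsilon$, and the thresholding argument all match the paper's treatment.
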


\begin{proof}
    We reduce $\bb{\textsc{Dis}_{\rho_+, \rho_-}}$ to amplitude estimation, where
    \[
    \rho_{\pm} = \rbra*{\frac 1 2 \mp 8\varepsilon} \ket{0}\bra{0} + \rbra*{\frac 1 2 \pm 8\varepsilon} \ket{1}\bra{1}.
    \]
    The reduction follows the idea of the proof of \cref{thm:tightness}. 
    Here, We just outline how to establish a quantum query lower bound for amplitude estimation. 

    \paragraph{Reduction}
    Suppose that there is a quantum query algorithm $\mathcal{A}$ for amplitude estimation with precision $\varepsilon$, which has quantum query complexity $Q$.
    We reduce $\bb{\textsc{Dis}_{\rho_+, \rho_-}}$ to amplitude estimation by the tester $\mathcal{T}$ for $\bb{\textsc{Dis}_{\rho_+, \rho_-}}$ using the quantum amplitude estimator $\mathcal{A}$ given in the proof of \cref{thm:tightness}.
    
    \paragraph{Complexity}
    Clearly, it can be seen that the quantum query complexity of $\mathcal{T}$ is exactly that of $\mathcal{A}$, which is $Q$. 
    On the other hand, if $\mathcal{T}$ solves $\bb{\textsc{Dis}_{\rho_+, \rho_-}}$, then the quantum query complexity of $\mathcal{T}$ is lower bounded by that of $\bb{\textsc{Dis}_{\rho_+, \rho_-}}$, i.e., $Q \geq \mathsf{Q}_\diamond\rbra{\textsc{Dis}_{\rho_+, \rho_-}}$.
    By the lower bound given in \cref{thm:tightness}, we have the quantum query lower bound $Q \geq \mathsf{Q}_\diamond\rbra{\textsc{Dis}_{\rho_+, \rho_-}} = \widetilde\Omega\rbra{1/\varepsilon}$ for amplitude estimation.

    \paragraph{Correctness}

    The correctness has already been shown in \cref{thm:tightness}.
\end{proof}

\subsection{Matrix spectrum testing}

We can also derive a series of quantum query lower bounds for testing properties of the spectrum of a matrix block-encoded in a unitary operator, by reducing from testing properties of the spectrum of quantum states \cite{OW21}. 

\begin{corollary} [Rank Testing]
\label{corollary:rank-testing}
    Testing whether a matrix $A \in \mathbb{C}^{N \times N}$ block-encoded in a unitary operator $U$ has rank $r$ or $\varepsilon$-far (in trace norm) requires $\widetilde\Omega\rbra{\sqrt{r/\varepsilon}}$ queries to $U$. 
\end{corollary}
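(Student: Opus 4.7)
The plan is to reduce from the rank testing problem for quantum states, whose sample lower bound is due to O'Donnell and Wright \cite{OW15}, to the rank testing problem for block-encoded matrices, and then close the gap by invoking the lifting theorem (\cref{thm:main}).

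First I would define the quantum state testing problem $\mathcal{P}_{\textup{rank}}$ whose yes instances are the $N$-dimensional quantum states of rank at most $r$ and whose no instances are the quantum states $2\varepsilon$-far in trace norm from every such state. By \cite{OW15}, this problem has sample complexity $\mathsf{S}\rbra{\mathcal{P}_{\textup{rank}}} = \Omega\rbra{r/\varepsilon}$. Plugging this into \cref{thm:main} yields
\[
\mathsf{Q}_\diamond\rbra{\mathcal{P}_{\textup{rank}}} = \widetilde\Omega\rbra*{\sqrt{r/\varepsilon}}.
\]

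Next I would argue that any quantum tester $\mathcal{A}$ for matrix rank testing with gap parameter $\varepsilon$ also solves $\mathcal{P}_{\textup{rank}}^{\diamond}$: given a block-encoding $U$ of $\frac{1}{2}\rho$, feed $U$ directly to $\mathcal{A}$. Scaling by $\frac{1}{2}$ preserves rank, so yes instances of $\mathcal{P}_{\textup{rank}}$ map to rank-at-most-$r$ matrices. On the no side, if $\rho$ is $2\varepsilon$-far in trace norm from every rank-$r$ state, then by homogeneity of the trace norm, $\frac{1}{2}\rho$ is $\varepsilon$-far in trace norm from every rank-$r$ matrix, so $\frac{1}{2}\rho$ is a no instance for matrix rank testing. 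Hence the query complexity of matrix rank testing is lower bounded by $\mathsf{Q}_\diamond\rbra{\mathcal{P}_{\textup{rank}}} = \widetilde\Omega\rbra*{\sqrt{r/\varepsilon}}$.

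The only subtlety is the scaling factor of $\frac{1}{2}$ in the definition of $\mathsf{Q}_\diamond$, which forces me to start from a $2\varepsilon$-far source so that the induced block-encoded matrix $\frac{1}{2}\rho$ lies $\varepsilon$-far from rank $r$ in trace norm. Beyond this bookkeeping, there is no real obstacle: once the \cite{OW15} sample lower bound and \cref{thm:main} are in place, both legs of the reduction are immediate, and the same template applies almost verbatim to the mixedness and uniformity testing variants stated in \cref{corollary:spectrum-testing}.
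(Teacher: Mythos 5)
Your proposal is correct and follows essentially the same route as the paper: invoke the $\Omega\rbra{r/\varepsilon}$ sample lower bound of \cite{OW15} for the state rank-testing problem, lift it via \cref{thm:main} to $\widetilde\Omega\rbra{\sqrt{r/\varepsilon}}$, and observe that a matrix rank tester solves the lifted problem, handling the factor-$\frac{1}{2}$ in the block-encoding exactly as the paper does by passing through the $2\varepsilon$-far version (the paper's $\textsc{Rank}_{r,2\varepsilon}^\diamond$).
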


\begin{proof}
    We reduce from the promise problem $\textsc{Rank}_{r,\varepsilon}$ for quantum state testing, with
    \begin{itemize}
        \item \textit{Yes} instance: a quantum state $\rho$ of rank $\leq r$, 
        \item \textit{No} instance: a quantum state $\rho$ such that $\Abs{\rho - \sigma}_1 > \varepsilon$ for any quantum state $\sigma$ of rank $r$.
    \end{itemize}
    It was shown in \cite[Theorem 1.9]{OW21} that $\mathsf{S}\rbra{\textsc{Rank}_{r,\varepsilon}} = \Omega\rbra{r/\varepsilon}$.
    By \cref{thm:lifting}, we immediately have $\mathsf{Q}_\diamond\rbra{\textsc{Rank}_{r,\varepsilon}} = \widetilde\Omega\rbra{\sqrt{r/\varepsilon}}$.
    Let $\textsc{RankTesting}_{r, \varepsilon}$ be the problem of Rank Testing with parameters $r$ and $\varepsilon$.
    To complete the proof, we note that any tester for $\textsc{RankTesting}_{r, \varepsilon}$ can be used to solve $\textsc{Rank}_{r,2\varepsilon}^\diamond$.
\end{proof}

\begin{corollary} [Mixedness Testing]
\label{corollary:mixed-testing}
    Testing whether an Hermitian matrix $A \in \mathbb{C}^{N \times N}$ block-encoded in a unitary operator $U$ has spectrum $\rbra{1/N, 1/N, \dots, 1/N}$ or $\varepsilon$-far (in trace norm) requires $\widetilde\Omega\rbra{\sqrt{N}/\varepsilon}$ queries to $U$. 
\end{corollary}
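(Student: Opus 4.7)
The plan is to mimic the reduction used for Rank Testing (\cref{corollary:rank-testing}): pick the analogous quantum state testing problem studied in \cite{OW15}, invoke its sample lower bound, and then lift it via \cref{thm:lifting}. Concretely, define the promise problem $\textsc{Mixed}_{N,\varepsilon}$ for quantum state testing by
\begin{itemize}
    \item \emph{Yes} instance: the maximally mixed state $\rho = I/N$;
    \item \emph{No} instance: a quantum state $\rho$ with $\Abs{\rho - I/N}_1 > \varepsilon$.
\end{itemize}

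The key input from \cite{OW15} (their lower bound for mixedness testing, i.e., testing whether a quantum state has the uniform spectrum) is $\mathsf{S}(\textsc{Mixed}_{N,\varepsilon}) = \Omega(N/\varepsilon^2)$. Plugging this into \cref{thm:lifting} gives
\[
\mathsf{Q}_\diamond\rbra{\textsc{Mixed}_{N,\varepsilon}} = \widetilde\Omega\rbra*{\sqrt{N/\varepsilon^2}} = \widetilde\Omega\rbra*{\sqrt{N}/\varepsilon}.
\]

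To close the loop, I would observe that any quantum tester for Mixedness Testing in the block-encoding model can be used to solve $\textsc{Mixed}_{N,2\varepsilon}^\diamond$ (possibly after a trivial rescaling of $\varepsilon$): a \textit{yes} instance of $\textsc{Mixed}_{N,2\varepsilon}^\diamond$ is a block-encoding of $\frac{1}{2}\cdot I/N$, whose spectrum is uniform; a \textit{no} instance is a block-encoding of $\frac{1}{2}\rho$ with $\rho$ being $2\varepsilon$-far from $I/N$ in trace distance, hence $\frac{1}{2}\rho$ is $\varepsilon$-far from $\frac{1}{2N}I$ in trace norm. Thus the Mixedness Testing tester distinguishes the two cases, and its query complexity inherits the $\widetilde\Omega(\sqrt{N}/\varepsilon)$ lower bound.

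The only step requiring minor care is the reduction from the block-encoding testing problem back to $\mathcal{P}^\diamond$ as defined in \cref{sec:lifting} (which block-encodes $\frac{1}{2}\rho$ rather than $\rho$): this only changes the effective $\varepsilon$ by a constant factor, and has no effect on the asymptotic bound. I do not expect any real obstacle here — the entire argument is a direct parallel of the Rank Testing case, with the only substantive ingredient being the $\Omega(N/\varepsilon^2)$ sample lower bound of \cite{OW15}.
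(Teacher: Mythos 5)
There is a genuine gap in your closing step. The rescaling trick that works for Rank Testing does not carry over here, because the property ``has spectrum $\rbra{1/N,\dots,1/N}$'' is not scale-invariant, whereas rank is. In the problem $\mathcal{P}^\diamond$ produced by \cref{thm:lifting}, the tested unitary block-encodes $\frac{1}{2}\rho$, so in your \emph{yes} case the block-encoded matrix is $\frac{1}{2N}I$, whose spectrum is $\rbra{1/2N,\dots,1/2N}$ --- this is itself at trace-norm distance $1/2$ from the target spectrum $\rbra{1/N,\dots,1/N}$, so a Mixedness Testing tester run directly on this block-encoding is under no obligation to accept it; your claim that the tester ``distinguishes the two cases'' therefore does not follow, and no constant-factor adjustment of $\varepsilon$ repairs it. The paper closes exactly this gap by converting the given block-encoding of $\frac{1}{2}\rho$ into an approximate block-encoding of $\rho$ itself via the up-scaling lemma (\cref{corollary:up-scaling-block-encoding}), with precision $\varepsilon' = \Theta(1/Q)$ and an error analysis showing the simulated tester still succeeds with probability $5/9$, at the cost of an $O(\log Q)$ overhead absorbed into the $\widetilde\Omega$.

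A second, related issue is your choice of the \emph{no}-instance set as all states $\varepsilon$-far from $I/N$: such states can have operator norm as large as $1$, which would block the up-scaling step (one cannot up-scale a block-encoding beyond the norm constraint). The paper instead reduces from the specific hard family of \cite{OW15} (uniform spectrum versus the two-valued spectrum $\frac{1\pm 2\varepsilon}{N}$), for which $\Abs{\rho} \leq \frac{1+2\varepsilon}{N} \leq 1/4$ in both cases, precisely so that \cref{corollary:up-scaling-block-encoding} applies. So the substantive missing ingredient is not the $\Omega(N/\varepsilon^2)$ sample bound (which you cite correctly) but the up-scaling reduction and the choice of hard instances that makes it legitimate.
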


\begin{proof}
    Let $N \geq 8$ be an even integer and $\varepsilon \in (0, 1/4)$.
    We reduce from the promise problem $\textsc{Mix}_{N,\varepsilon}$ for quantum state testing, with
    \begin{itemize}
        \item \textit{Yes} instance: an $N$-dimensional quantum state $\rho$ with the uniform spectrum $\rbra{1/N, \dots, 1/N}$, 
        \item \textit{No} instance: an $N$-dimensional quantum state $\rho$ with $\frac{N}{2}$ eigenvalues being $\frac{1+2\varepsilon}{N}$ and the rest $\frac{N}{2}$ eigenvalues being $\frac{1-2\varepsilon}{N}$.
    \end{itemize}
    It was shown in \cite[Theorem 1.8]{OW21} that $\mathsf{S}\rbra{\textsc{Mix}_{N,\varepsilon}} = \Omega\rbra{N/\varepsilon^2}$.
    By \cref{thm:lifting}, we immediately have $\mathsf{Q}_\diamond\rbra{\textsc{Mix}_{N,\varepsilon}} = \widetilde\Omega\rbra{\sqrt{N}/\varepsilon}$.
    This means that, given unitary operator $U$ that is a block-encoding of $\rho/2$, it requires $\widetilde \Omega\rbra{\sqrt{N}/\varepsilon}$ queries to $U$ to check whether $\rho$ is a \textit{yes} instance or a \textit{no} instance.
    Suppose there is a tester $\mathcal{T}$ for $\textsc{MixednessTesting}_{N,\varepsilon}$, the problem of Mixedness Testing with parameters $N$ and $\varepsilon$, with query complexity $Q$.
    Note that a unitary $\rbra{1, a, 0}$-block-encoding $U$ of $\frac{1}{2}\rho$ is actually a $\rbra{4, a, 0}$-block-encoding of $2\rho$.
    Using \cref{corollary:up-scaling-block-encoding} with $\alpha \coloneqq 4$, $\beta \coloneqq 2$, $A \coloneqq 2\rho$ (note that $\Abs{\rho} \leq \frac{1+2\varepsilon}{N} \leq 1/4$ holds in both cases), $\varepsilon \coloneqq 0$, and $\varepsilon' \coloneqq 1/9Q < 1/2$, 
    we can implement a unitary operator $\widetilde U$ using $O\rbra{\log\rbra{1/\varepsilon'}}$ queries to $U$ such that
    \[
    \Abs*{\widetilde U - U'} \leq \frac{2\sqrt{2}\varepsilon'}{\sqrt{16-\rbra*{2+\frac{1}{2}}^2}} \leq \varepsilon'.
    \]
    where $U'$ is a unitary operator that is a $\rbra{2, a+3, 0}$-block-encoding of $2\rho$, i.e., a $\rbra{1, a+3, 0}$-block-encoding of $\rho$.
    Therefore, $\Abs{\mathcal{T}^{\widetilde U} - \mathcal{T}^{U'}} \leq Q\varepsilon'$, which implies that
    \[
    \abs*{ \Pr\sbra{\mathcal{T}^{\widetilde U} \text{ outputs } b} - \Pr\sbra{\mathcal{T}^{U'} \text{ outputs } b} } \leq Q\varepsilon' = \frac{1}{9}
    \]
    for $b \in \cbra{0, 1}$. 
    On the other hand, 
    \begin{itemize}
        \item $\mathcal{T}^{U'}$ outputs $1$ with probability at least $2/3$ if $\rho$ is a \textit{yes} instance, and
        \item $\mathcal{T}^{U'}$ outputs $0$ with probability at least $2/3$ if $\rho$ is a \textit{no} instance.
    \end{itemize}
    This gives that
    \begin{itemize}
        \item $\mathcal{T}^{\widetilde U}$ outputs $1$ with probability at least $5/9$ if $\rho$ is a \textit{yes} instance, and
        \item $\mathcal{T}^{\widetilde U}$ outputs $0$ with probability at least $5/9$ if $\rho$ is a \textit{no} instance.
    \end{itemize}
    One can amplify the success probability of $\mathcal{T}^{\widetilde U}$ from $5/9$ to $2/3$ with a constant number of repetitions. 
    Note that $\mathcal{T}^{\widetilde U}$ can be implemented by $O\rbra{Q\log\rbra{1/\varepsilon'}}$ queries to $U$; this implementation of $\mathcal{T}^{\widetilde U}$ using $U$ is actually a tester for $\textsc{Mix}_{N,\varepsilon}^{\diamond}$. 
    Therefore, $O\rbra{Q\log\rbra{1/\varepsilon'}} \geq \mathsf{Q}_\diamond\rbra{\textsc{Mix}_{N,\varepsilon}} = \widetilde\Omega\rbra{\sqrt{N}/\varepsilon}$, which gives $Q = \widetilde \Omega\rbra{\sqrt{N}/\varepsilon}$ and completes the proof.
\end{proof}

\begin{corollary} [Uniformity Testing]
\label{corollary:uniformity-distinguishing}
    Testing whether an Hermitian matrix $A \in \mathbb{C}^{N \times N}$ block-encoded in a unitary operator $U$ has spectrum uniform on $r$ or $r+\Delta$ eigenvalues requires $\Omega^*\rbra{r/\sqrt{\Delta}}$ queries to $U$. 
\end{corollary}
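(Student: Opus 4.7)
The plan is to mirror the structure of the proof of \cref{corollary:mixed-testing}, reducing from a sample lower bound for quantum state spectrum testing established in \cite{OW15} and then converting block-encoding scales via \cref{corollary:up-scaling-block-encoding}. Define the promise problem $\textsc{Unif}_{N,r,\Delta}$ on quantum states, whose \textit{yes} instances are $N$-dimensional density operators with spectrum uniform on exactly $r$ eigenvalues (value $1/r$ on the support, $0$ elsewhere), and whose \textit{no} instances are $N$-dimensional density operators with spectrum uniform on exactly $r+\Delta$ eigenvalues. The O'Donnell--Wright bound (the uniformity distinguishing result in \cite{OW15}) gives $\mathsf{S}\rbra{\textsc{Unif}_{N,r,\Delta}} = \Omega^*\rbra{r^2/\Delta}$.

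Applying the lifting theorem (\cref{thm:main}) to this sample lower bound yields
\[
\mathsf{Q}_\diamond\rbra{\textsc{Unif}_{N,r,\Delta}} = \widetilde\Omega\rbra*{\sqrt{\mathsf{S}\rbra{\textsc{Unif}_{N,r,\Delta}}}} = \Omega^*\rbra{r/\sqrt{\Delta}},
\]
because $\Omega^*$ absorbs both the quasi-logarithmic factors from \cite{OW15} and the logarithmic factors introduced by the lifting. This already proves the bound for the variant where the oracle is a block-encoding of $\frac{1}{2}\rho$.

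It remains to convert this into a bound against an arbitrary block-encoding of $A = \rho$ as stated in the corollary. Suppose, for contradiction, that there is a tester $\mathcal{T}$ for Uniformity Testing using $Q$ queries to a $(1,a,0)$-block-encoding of $\rho$. Given a $(1,a,0)$-block-encoding $U$ of $\frac{1}{2}\rho$, view it as a $(2,a,0)$-block-encoding of $\rho$. Since every yes/no instance satisfies $\Abs{\rho} = 1/r \leq 1$ (for $r \geq 1$), we may invoke \cref{corollary:up-scaling-block-encoding} with $\alpha \coloneqq 2$, $\beta \coloneqq 2 - \eta$ for some small fixed $\eta \in (0,1)$ (or one can simply take $\alpha \coloneqq 4$ from a further trivial down-scaling and then $\beta \coloneqq 2$, as in the proof of \cref{corollary:mixed-testing}), $\varepsilon \coloneqq 0$, and $\varepsilon' \coloneqq 1/\rbra{9Q}$, to produce $\widetilde{U}$ that is $\varepsilon'$-close to a true $(1,a+O(1),0)$-block-encoding $U'$ of $\rho$, using $O\rbra{\log\rbra{Q}}$ queries to $U$. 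A standard triangle inequality argument (identical to the one in \cref{corollary:mixed-testing}) shows that $\mathcal{T}^{\widetilde U}$ decides $\textsc{Unif}_{N,r,\Delta}$ with success probability at least $5/9$, amplifiable by a constant number of repetitions. This uses $O\rbra{Q \log\rbra{Q}}$ queries to $U$ and is therefore a valid tester for $\textsc{Unif}_{N,r,\Delta}^\diamond$, forcing $Q \log\rbra{Q} = \Omega^*\rbra{r/\sqrt{\Delta}}$, i.e., $Q = \Omega^*\rbra{r/\sqrt{\Delta}}$.

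The only delicate point, which is routine rather than an obstacle, is ensuring that the up-scaling step incurs only a quasi-logarithmic overhead so that it can be cleanly absorbed into the $\Omega^*$ notation; this holds because the number of queries in \cref{corollary:up-scaling-block-encoding} scales as $O\rbra{\frac{\alpha}{\beta-1}\log\rbra{1/\varepsilon'}}$, which is $O\rbra{\log\rbra{Q}}$ for the constant choice of $\alpha/\rbra{\beta-1}$ used here. No new quantum state testing lower bound is needed beyond \cite{OW15}, and no new lifting argument is needed beyond \cref{thm:main}; the entire proof is a transcription of the Mixedness Testing argument with the sample complexity input replaced.
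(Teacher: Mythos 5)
Your overall route is exactly the paper's: take the $\Omega^*\rbra{r^2/\Delta}$ sample lower bound from \cite[Theorem 1.10]{OW15}, lift it via \cref{thm:main} to $\mathsf{Q}_\diamond\rbra{\textsc{Unif}_{r,\Delta}} = \Omega^*\rbra{r/\sqrt{\Delta}}$, and then convert a tester that expects a block-encoding of $A=\rho$ into a tester for the $\frac{1}{2}\rho$-block-encoded problem by up-scaling, exactly as in \cref{corollary:mixed-testing}. The only part that does not go through as written is your primary parameterization of the up-scaling step. Taking $\alpha \coloneqq 2$, $\beta \coloneqq 2-\eta$ with $A \coloneqq \rho$ produces a $\rbra{2-\eta, \cdot, \cdot}$-block-encoding of $\rho$, i.e.\ a scale-$1$ block-encoding of $\rho/(2-\eta)$, not of $\rho$; since \cref{corollary:up-scaling-block-encoding} requires $\beta > 1$, you can never reach scale $1$ this way, and a tester promised a block-encoding of a state with spectrum $(1/r,\dots,1/r)$ is not guaranteed to behave correctly on $\rho/(2-\eta)$. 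The route that actually works is the one you mention parenthetically (and which the paper uses): reinterpret the $\rbra{1,a,0}$-block-encoding of $\frac{1}{2}\rho$ as a $\rbra{4,a,0}$-block-encoding of $2\rho$ (this is a reinterpretation, not a ``further down-scaling''), and up-scale with $\alpha \coloneqq 4$, $\beta \coloneqq 2$, $A \coloneqq 2\rho$, landing on a $\rbra{1,\cdot,0}$-block-encoding of $\rho$.

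Relatedly, the norm condition you check is not the right one: what \cref{corollary:up-scaling-block-encoding} needs on this route is $\Abs{2\rho} \leq 1$, i.e.\ $1/r \leq 1/2$, not merely $\Abs{\rho} = 1/r \leq 1$. For $r = 1$ the hypothesis of the lemma fails (and your justification does not notice this); the paper sidesteps the issue by assuming $r \geq 4$, so that $\Abs{\rho} \leq 1/4$ in both the \textit{yes} and \textit{no} cases, which is harmless since the claimed bound is trivial for constant $r$. With the parenthetical parameter choice and an explicit restriction such as $r \geq 4$ (or $r \geq 2$), the rest of your argument --- the $\varepsilon' = 1/(9Q)$ choice, the triangle-inequality degradation to success probability $5/9$, constant-round amplification, and absorbing the $O\rbra{\log Q}$ overhead into $\Omega^*$ --- matches the paper and is correct.
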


\begin{proof}
    Let $r \geq 4$.
    We reduce from the promise problem $\textsc{Unif}_{r,\Delta}$ for quantum state testing, with
    \begin{itemize}
        \item \textit{Yes} instance: a quantum state $\rho$ with a uniform spectrum on $r$ eigenvalues, 
        \item \textit{No} instance: a quantum state $\rho$ with a uniform spectrum on $r+\Delta$ eigenvalues.
    \end{itemize}
    It was shown in \cite[Theorem 1.10]{OW21} that $\mathsf{S}\rbra{\textsc{Unif}_{r,\Delta}} = \Omega^*\rbra{r^2/\Delta}$.
    By \cref{thm:lifting}, we immediately have $\mathsf{Q}_\diamond\rbra{\textsc{Unif}_{r,\Delta}} = \Omega^*\rbra{r/\sqrt{\Delta}}$.
    To complete the proof, we need to approximately implement a unitary operator that is a block-encoding of $\rho$, using queries to unitary block-encoding of $\frac{1}{2}\rho$, in the same way (note that $\Abs{\rho} \leq 1/4$ holds in both cases) as in the proof of \cref{corollary:mixed-testing}. 
\end{proof}

\section*{Acknowledgments}
The work of Qisheng Wang was supported in part by the Engineering and Physical Sciences Research Council under Grant EP/X026167/1 and in part by the MEXT Quantum Leap Flagship Program (MEXT Q-LEAP) under Grant JPMXS0120319794. 
The work of Zhicheng Zhang was supported by the Sydney Quantum Academy, NSW, Australia.

The authors would like to thank Yupan Liu for pointing out the similarity of our results to the investigation of quantum query-to-communication lifting in \cite{ABDK21}.
They also thank Andr\'{a}s Gily\'{e}n for pointing out the query lower bound for quantum Gibbs sampling shown in \cite{CKBG23}.
They also thank Andr\'{a}s Gily\'{e}n, Yassine Hamoudi, Minbo Gao, and the anonymous reviewers for pointing out a mistake in the up-scaling of block-encoded operators in an earlier version of this paper.
The authors also thank the anonymous reviewers for their constructive suggestions and for pointing out a mistake in \cref{corollary:up-scaling-block-encoding} in an earlier version of this paper. 
Qisheng Wang would also like to thank Fran\c{c}ois Le Gall for helpful discussions.
The authors also thank Jinge Bao for pointing out that an earlier arXiv version of this paper contained incorrect theorem numbering.

\addcontentsline{toc}{section}{References}

\bibliographystyle{unsrturl}
\bibliography{main}

\begin{thebibliography}{100}

\bibitem{Gro96}
Lov~K. Grover.
\newblock A fast quantum mechanical algorithm for database search.
\newblock In {\em Proceedings of the 28th Annual ACM Symposium on Theory of Computing}, pages 212--219, 1996.
\newblock \href {https://doi.org/10.1145/237814.237866} {\path{doi:10.1145/237814.237866}}.

\bibitem{Kit95}
A.~Yu. Kitaev.
\newblock Quantum measurements and the {Abelian} stabilizer problem.
\newblock ArXiv e-prints, 1995.
\newblock \href {https://arxiv.org/abs/quant-ph/9511026} {\path{arXiv:quant-ph/9511026}}.

\bibitem{Sho94}
Peter~W. Shor.
\newblock Algorithms for quantum computation: discrete logarithms and factoring.
\newblock In {\em Proceedings of the 35th Annual Symposium on Foundations of Computer Science}, pages 124--134, 1994.
\newblock \href {https://doi.org/10.1109/SFCS.1994.365700} {\path{doi:10.1109/SFCS.1994.365700}}.

\bibitem{BBBV97}
Charles~H. Bennett, Ethan Bernstein, Gilles Brassard, and Umesh Vazirani.
\newblock Strengths and weaknesses of quantum computing.
\newblock {\em SIAM Journal on Computing}, 26(5):1510--1523, 1997.
\newblock \href {https://doi.org/10.1137/S0097539796300933} {\path{doi:10.1137/S0097539796300933}}.

\bibitem{BBHT98}
Michel Boyer, Gilles Brassard, Peter H{\o}yer, and Alain Tapp.
\newblock Tight bounds on quantum searching.
\newblock {\em Fortschritte der Physik}, 46(4--5):493--506, 1998.
\newblock \href {https://doi.org/10.1002/(SICI)1521-3978(199806)46:4/5<493::AID-PROP493>3.0.CO;2-P} {\path{doi:10.1002/(SICI)1521-3978(199806)46:4/5<493::AID-PROP493>3.0.CO;2-P}}.

\bibitem{Zal99}
Christof Zalka.
\newblock Grover’s quantum searching algorithm is optimal.
\newblock {\em Physical Review A}, 60(4):2746, 1999.
\newblock \href {https://doi.org/10.1103/PhysRevA.60.2746} {\path{doi:10.1103/PhysRevA.60.2746}}.

\bibitem{BBC+01}
Robert Beals, Harry Buhrman, Richard Cleve, Michele Mosca, and Ronald de~Wolf.
\newblock Quantum lower bounds by polynomials.
\newblock {\em Journal of the ACM}, 48(4):778--797, 2001.
\newblock \href {https://doi.org/10.1145/502090.502097} {\path{doi:10.1145/502090.502097}}.

\bibitem{BKT20}
Mark Bun, Robin Kothari, and Justin Thaler.
\newblock The polynomial method strikes back: tight quantum query bounds via dual polynomials.
\newblock {\em Theory of Computing}, 16(10):1--71, 2020.
\newblock \href {https://doi.org/10.4086/toc.2020.v016a010} {\path{doi:10.4086/toc.2020.v016a010}}.

\bibitem{SY23}
Adrian She and Henry Yuen.
\newblock Unitary property testing lower bounds by polynomials.
\newblock In {\em Proceedings of the 14th Innovations in Theoretical Computer Science Conference}, pages 96:1--96:17, 2023.
\newblock \href {https://doi.org/10.4230/LIPIcs.ITCS.2023.96} {\path{doi:10.4230/LIPIcs.ITCS.2023.96}}.

\bibitem{Amb02}
Andris Ambainis.
\newblock Quantum lower bounds by quantum arguments.
\newblock {\em Journal of Computer and System Sciences}, 64(4):750--767, 2002.
\newblock \href {https://doi.org/10.1006/jcss.2002.1826} {\path{doi:10.1006/jcss.2002.1826}}.

\bibitem{SS06}
Robert {\v{S}}palek and Mario Szegedy.
\newblock All quantum adversary methods are equivalent.
\newblock {\em Theory of Computing}, 2(1):1--18, 2006.
\newblock \href {https://doi.org/10.4086/toc.2006.v002a001} {\path{doi:10.4086/toc.2006.v002a001}}.

\bibitem{Zha19}
Mark Zhandry.
\newblock How to record quantum queries, and applications to quantum indifferentiability.
\newblock In {\em Proceedings of the 39th Annual International Cryptology Conference}, pages 239--268, 2019.
\newblock \href {https://doi.org/10.1007/978-3-030-26951-7_9} {\path{doi:10.1007/978-3-030-26951-7_9}}.

\bibitem{ABDK+21}
Scott Aaronson, Shalev Ben-David, Robin Kothari, Shravas Rao, and Avishay Tal.
\newblock Degree vs. approximate degree and quantum implications of {Huang's} sensitivity theorem.
\newblock In {\em Proceedings of the 53rd Annual ACM SIGACT Symposium on Theory of Computing}, pages 1330--1342, 2021.
\newblock \href {https://doi.org/10.1145/3406325.3451047} {\path{doi:10.1145/3406325.3451047}}.

\bibitem{ABB+17}
Andris Ambainis, Kaspars Balodis, Aleksandrs Belovs, Troy Lee, Miklos Santha, and Juris Smotrovs.
\newblock Separations in query complexity based on pointer functions.
\newblock {\em Journal of the ACM}, 64(5):32:1--24, 2017.
\newblock \href {https://doi.org/10.1145/3106234} {\path{doi:10.1145/3106234}}.

\bibitem{Wan11}
Guoming Wang.
\newblock Property testing of unitary operators.
\newblock {\em Physical Review A}, 84(5):052328, 2011.
\newblock \href {https://doi.org/10.1103/PhysRevA.84.052328} {\path{doi:10.1103/PhysRevA.84.052328}}.

\bibitem{MdW16}
Ashley Montanaro and Ronald de~Wolf.
\newblock A survey of quantum property testing.
\newblock In {\em Theory of Computing Library}, number~7 in Graduate Surveys, pages 1--81. University of Chicago, 2016.
\newblock \href {https://doi.org/10.4086/toc.gs.2016.007} {\path{doi:10.4086/toc.gs.2016.007}}.

\bibitem{AKKT20}
Scott Aaronson, Robin Kothari, William Kretschmer, and Justin Thaler.
\newblock Quantum lower bounds for approximate counting via {Laurent} polynomials.
\newblock In {\em Proceedings of the 35th Computational Complexity Conference}, pages 7:1--7:47, 2020.
\newblock \href {https://doi.org/10.4230/LIPIcs.CCC.2020.7} {\path{doi:10.4230/LIPIcs.CCC.2020.7}}.

\bibitem{CNY23}
Thomas Chen, Shivam Nadimpalli, and Henry Yuen.
\newblock Testing and learning quantum juntas nearly optimally.
\newblock In {\em Proceedings of the 2023 Annual ACM-SIAM Symposium on Discrete Algorithms}, pages 1163--1185, 2023.
\newblock \href {https://doi.org/10.1137/1.9781611977554.ch43} {\path{doi:10.1137/1.9781611977554.ch43}}.

\bibitem{MdW23}
Nikhil~S. Mande and Ronald de~Wolf.
\newblock Tight bounds for quantum phase estimation and related problems.
\newblock In {\em Proceedings of the 31st Annual European Symposium on Algorithms}, pages 81:1--81:16, 2023.
\newblock \href {https://doi.org/10.4230/LIPIcs.ESA.2023.81} {\path{doi:10.4230/LIPIcs.ESA.2023.81}}.

\bibitem{LC19}
Guang~Hao Low and Isaac~L. Chuang.
\newblock Hamiltonian simulation by qubitization.
\newblock {\em Quantum}, 3:163, 2019.
\newblock \href {https://doi.org/10.22331/q-2019-07-12-163} {\path{doi:10.22331/q-2019-07-12-163}}.

\bibitem{GSLW19}
Andr\'{a}s Gily\'{e}n, Yuan Su, Guang~Hao Low, and Nathan Wiebe.
\newblock Quantum singular value transformation and beyond: exponential improvements for quantum matrix arithmetics.
\newblock In {\em Proceedings of the 51st Annual ACM SIGACT Symposium on Theory of Computing}, pages 193--204, 2019.
\newblock \href {https://doi.org/10.1145/3313276.3316366} {\path{doi:10.1145/3313276.3316366}}.

\bibitem{CKS17}
Andrew~M. Childs, Robin Kothari, and Rolando~D. Somma.
\newblock Quantum algorithm for systems of linear equations with exponentially improved dependence on precision.
\newblock {\em SIAM Journal on Computing}, 46(6):1920--1950, 2017.
\newblock \href {https://doi.org/10.1137/16M1087072} {\path{doi:10.1137/16M1087072}}.

\bibitem{CAS+22}
Pedro C.~S. Costa, Dong An, Yuval~R. Sanders, Yuan Su, Ryan Babbush, and Dominic~W. Berry.
\newblock Optimal scaling quantum linear-systems solver via discrete adiabatic theorem.
\newblock {\em PRX Quantum}, 3(4):040303, 2022.
\newblock \href {https://doi.org/10.1103/PRXQuantum.3.040303} {\path{doi:10.1103/PRXQuantum.3.040303}}.

\bibitem{BS17}
Fernando G. S.~L. Brand{\~{a}}o and Krysta~M. Svore.
\newblock Quantum speed-ups for solving semidefinite programs.
\newblock In {\em Proceedings of the 58th IEEE Annual Symposium on Foundations of Computer Science}, pages 415--426, 2017.
\newblock \href {https://doi.org/10.1109/FOCS.2017.45} {\path{doi:10.1109/FOCS.2017.45}}.

\bibitem{vAGGdW20}
Joran van Apeldoorn, Andr{\'{a}}s Gily{\'{e}}n, Sander Gribling, and Ronald de~Wolf.
\newblock Quantum {SDP}-solvers: better upper and lower bounds.
\newblock {\em Quantum}, 4:230, 2020.
\newblock \href {https://doi.org/10.22331/q-2020-02-14-230} {\path{doi:10.22331/q-2020-02-14-230}}.

\bibitem{BKL+19}
Fernando G. S.~L. Brand{\~a}o, Amir Kalev, Tongyang Li, Cedric Yen-Yu Lin, Krysta~M. Svore, and Xiaodi Wu.
\newblock Quantum {SDP} solvers: large speed-ups, optimality, and applications to quantum learning.
\newblock In {\em Proceedings of the 46th International Colloquium on Automata, Languages, and Programming}, pages 27:1--27:14, 2019.
\newblock \href {https://doi.org/10.4230/LIPIcs.ICALP.2019.27} {\path{doi:10.4230/LIPIcs.ICALP.2019.27}}.

\bibitem{vAG19}
Joran van Apeldoorn and Andr{\'{a}}s Gily{\'{e}}n.
\newblock Improvements in quantum {SDP}-solving with applications.
\newblock In {\em Proceedings of the 46th International Colloquium on Automata, Languages, and Programming}, pages 99:1--99:15, 2019.
\newblock \href {https://doi.org/10.4230/LIPIcs.ICALP.2019.99} {\path{doi:10.4230/LIPIcs.ICALP.2019.99}}.

\bibitem{Has11}
Matthew~B. Hastings.
\newblock Topological order at nonzero temperature.
\newblock {\em Physical Review Letters}, 107(21):210501, 2011.
\newblock \href {https://doi.org/10.1103/PhysRevLett.107.210501} {\path{doi:10.1103/PhysRevLett.107.210501}}.

\bibitem{Yos11}
Beni Yoshida.
\newblock Feasibility of self-correcting quantum memory and thermal stability of topological order.
\newblock {\em Annals of Physics}, 326(10):2566--2633, 2011.
\newblock \href {https://doi.org/10.1016/j.aop.2011.06.001} {\path{doi:10.1016/j.aop.2011.06.001}}.

\bibitem{Eld21}
Lior Eldar.
\newblock Robust quantum entanglement at (nearly) room temperature.
\newblock In {\em Proceedings of the 12th Innovations in Theoretical Computer Science Conference}, pages 49:1--49:20, 2021.
\newblock \href {https://doi.org/10.4230/LIPIcs.ITCS.2021.49} {\path{doi:10.4230/LIPIcs.ITCS.2021.49}}.

\bibitem{Hel67}
Carl~W. Helstrom.
\newblock Detection theory and quantum mechanics.
\newblock {\em Information and Control}, 10(3):254--291, 1967.
\newblock \href {https://doi.org/10.1016/S0019-9958(67)90302-6} {\path{doi:10.1016/S0019-9958(67)90302-6}}.

\bibitem{Hol73}
Alexander~S. Holevo.
\newblock Statistical decision theory for quantum systems.
\newblock {\em Journal of Multivariate Analysis}, 3(4):337--394, 1973.
\newblock \href {https://doi.org/10.1016/0047-259X(73)90028-6} {\path{doi:10.1016/0047-259X(73)90028-6}}.

\bibitem{Wat02}
John Watrous.
\newblock Limits on the power of quantum statistical zero-knowledge.
\newblock In {\em Proceedings of the 43rd Annual IEEE Symposium on Foundations of Computer Science}, pages 459--468, 2002.
\newblock \href {https://doi.org/10.1109/SFCS.2002.1181970} {\path{doi:10.1109/SFCS.2002.1181970}}.

\bibitem{BASTS10}
Avraham Ben-Aroya, Oded Schwartz, and Amnon Ta-Shma.
\newblock Quantum expanders: motivation and construction.
\newblock {\em Theory of Computing}, 6(3):47--79, 2010.
\newblock \href {https://doi.org/10.4086/toc.2010.v006a003} {\path{doi:10.4086/toc.2010.v006a003}}.

\bibitem{GL20}
Andr{\'a}s Gily{\'e}n and Tongyang Li.
\newblock Distributional property testing in a quantum world.
\newblock In {\em Proceedings of the 11th Innovations in Theoretical Computer Science Conference}, pages 25:1--25:19, 2020.
\newblock \href {https://doi.org/10.4230/LIPIcs.ITCS.2020.25} {\path{doi:10.4230/LIPIcs.ITCS.2020.25}}.

\bibitem{GHS21}
Tom Gur, Min-Hsiu Hsieh, and Sathyawageeswar Subramanian.
\newblock Sublinear quantum algorithms for estimating von {Neumann} entropy.
\newblock ArXiv e-prints, 2021.
\newblock \href {https://arxiv.org/abs/2111.11139} {\path{arXiv:2111.11139}}.

\bibitem{SH21}
Sathyawageeswar Subramanian and Min-Hsiu Hsieh.
\newblock Quantum algorithm for estimating $\alpha$-{Renyi} entropies of quantum states.
\newblock {\em Physical Review A}, 104(2):022428, 2021.
\newblock \href {https://doi.org/10.1103/PhysRevA.104.022428} {\path{doi:10.1103/PhysRevA.104.022428}}.

\bibitem{WZC+23}
Qisheng Wang, Zhicheng Zhang, Kean Chen, Ji~Guan, Wang Fang, Junyi Liu, and Mingsheng Ying.
\newblock Quantum algorithm for fidelity estimation.
\newblock {\em IEEE Transactions on Information Theory}, 69(1):273--282, 2023.
\newblock \href {https://doi.org/10.1109/TIT.2022.3203985} {\path{doi:10.1109/TIT.2022.3203985}}.

\bibitem{RASW23}
Soorya Rethinasamy, Rochisha Agarwal, Kunal Sharma, and Mark~M. Wilde.
\newblock Estimating distinguishability measures on quantum computers.
\newblock {\em Physical Review A}, 108(1):012409, 2023.
\newblock \href {https://doi.org/10.1103/PhysRevA.108.012409} {\path{doi:10.1103/PhysRevA.108.012409}}.

\bibitem{WGL+22}
Qisheng Wang, Ji~Guan, Junyi Liu, Zhicheng Zhang, and Mingsheng Ying.
\newblock New quantum algorithms for computing quantum entropies and distances.
\newblock {\em IEEE Transactions on Information Theory}, 70(8):5653--5680, 2024.
\newblock \href {https://doi.org/10.1109/TIT.2024.3399014} {\path{doi:10.1109/TIT.2024.3399014}}.

\bibitem{GP22}
Andr\'{a}s {Gily\'{e}n} and Alexander Poremba.
\newblock Improved quantum algorithms for fidelity estimation.
\newblock ArXiv e-prints, 2022.
\newblock \href {https://arxiv.org/abs/2203.15993} {\path{arXiv:2203.15993}}.

\bibitem{Liu23}
Yupan Liu.
\newblock Quantum state testing beyond the polarizing regime and quantum triangular discrimination.
\newblock {\em Computational Complexity}, 34:11, 2025.
\newblock \href {https://doi.org/10.1007/s00037-025-00273-8} {\path{doi:10.1007/s00037-025-00273-8}}.

\bibitem{GPW18}
Mika G{\"{o}}{\"{o}}s, Toniann Pitassi, and Thomas Watson.
\newblock Deterministic communication vs. partition number.
\newblock {\em SIAM Journal on Computing}, 47(6):2435--2450, 2018.
\newblock \href {https://doi.org/10.1137/16M1059369} {\path{doi:10.1137/16M1059369}}.

\bibitem{CKLM18}
Arkadev Chattopadhyay, Michal Kouck\'{y}, Bruno Loff, and Sagnik Mukhopadhyay.
\newblock Simulation theorems via pseudo-random properties.
\newblock {\em Computational Complexity}, 28:617--659, 2018.
\newblock \href {https://doi.org/10.1007/s00037-019-00190-7} {\path{doi:10.1007/s00037-019-00190-7}}.

\bibitem{GPW20}
Mika G{\"{o}}{\"{o}}s, Toniann Pitassi, and Thomas Watson.
\newblock Query-to-communication lifting for {BPP}.
\newblock {\em SIAM Journal on Computing}, 49(4):441--461, 2020.
\newblock \href {https://doi.org/10.1137/17M115339X} {\path{doi:10.1137/17M115339X}}.

\bibitem{CFK+19}
Arkadev Chattopadhyay, Yuval Filmus, Sajin Koroth, Or~Meir, and Toniann Pitassi.
\newblock Query-to-communication lifting for {BPP} using inner product.
\newblock In {\em Proceedings of the 46th International Colloquium on Automata, Languages, and Programming}, pages 35:1--35:15, 2019.
\newblock \href {https://doi.org/10.4230/LIPIcs.ICALP.2019.35} {\path{doi:10.4230/LIPIcs.ICALP.2019.35}}.

\bibitem{ABDK21}
Anurag Anshu, Shalev Ben-David, and Srijita Kundu.
\newblock On query-to-communication lifting for adversary bounds.
\newblock In {\em Proceedings of the 36th Computational Complexity Conference}, pages 30:1--30:39, 2021.
\newblock \href {https://doi.org/10.4230/LIPIcs.CCC.2021.30} {\path{doi:10.4230/LIPIcs.CCC.2021.30}}.

\bibitem{GRT22}
Uma Girish, Ran Raz, and Avishay Tal.
\newblock Quantum versus randomized communication complexity, with efficient players.
\newblock {\em Computational Complexity}, 31:17, 2022.
\newblock \href {https://doi.org/10.1007/s00037-022-00232-7} {\path{doi:10.1007/s00037-022-00232-7}}.

\bibitem{GSTW23}
Uma Girish, Makrand Sinha, Avishay Tal, and Kewen Wu.
\newblock Fourier growth of communication protocols for {XOR} functions.
\newblock In {\em Proceedings of the 64th IEEE Annual Symposium on Foundations of Computer Science}, pages 721--732, 2023.
\newblock \href {https://doi.org/10.1109/FOCS57990.2023.00047} {\path{doi:10.1109/FOCS57990.2023.00047}}.

\bibitem{LMR14}
Seth Lloyd, Masoud Mohseni, and Patrick Rebentrost.
\newblock Quantum principal component analysis.
\newblock {\em Nature Physics}, 10(9):631--633, 2014.
\newblock \href {https://doi.org/10.1038/nphys3029} {\path{doi:10.1038/nphys3029}}.

\bibitem{KLL+17}
Shelby Kimmel, Cedric Yen-Yu Lin, Guang~Hao Low, Maris Ozols, and Theodore~J. Yoder.
\newblock Hamiltonian simulation with optimal sample complexity.
\newblock {\em npj Quantum Information}, 3(1):1--7, 2017.
\newblock \href {https://doi.org/10.1038/s41534-017-0013-7} {\path{doi:10.1038/s41534-017-0013-7}}.

\bibitem{Cho75}
Man-Duen Choi.
\newblock Completely positive linear maps on complex matrices.
\newblock {\em Linear Algebra and its Applications}, 10(3):285--290, 1975.
\newblock \href {https://doi.org/10.1016/0024-3795(75)90075-0} {\path{doi:10.1016/0024-3795(75)90075-0}}.

\bibitem{Jam72}
Andrzej Jamio{\l}kowski.
\newblock Linear transformations which preserve trace and positive semidefiniteness of operators.
\newblock {\em Reports on Mathematical Physics}, 3(4):275--278, 1972.
\newblock \href {https://doi.org/10.1016/0034-4877(72)90011-0} {\path{doi:10.1016/0034-4877(72)90011-0}}.

\bibitem{BCWdW01}
Harry Buhrman, Richard Cleve, John Watrous, and Ronald de~Wolf.
\newblock Quantum fingerprinting.
\newblock {\em Physical Review Letters}, 87(16):167902, 2001.
\newblock \href {https://doi.org/10.1103/PhysRevLett.87.167902} {\path{doi:10.1103/PhysRevLett.87.167902}}.

\bibitem{Che00}
Anthony Chefles.
\newblock Quantum state discrimination.
\newblock {\em Contemporary Physics}, 41(6):401--424, 2000.
\newblock \href {https://doi.org/10.1080/00107510010002599} {\path{doi:10.1080/00107510010002599}}.

\bibitem{BC09}
Stephen~M. Barnett and Sarah Croke.
\newblock Quantum state discrimination.
\newblock {\em Advances in Optics and Photonics}, 1(2):238--278, 2009.
\newblock \href {https://doi.org/10.1364/AOP.1.000238} {\path{doi:10.1364/AOP.1.000238}}.

\bibitem{BHMT02}
Gilles Brassard, Peter H{\o}yer, Michele Mosca, and Alain Tapp.
\newblock Quantum amplitude amplification and estimation.
\newblock In Samuel~J. Lomonaco, Jr. and Howard~E. Brandt, editors, {\em Quantum Computation and Information}, volume 305 of {\em Contemporary Mathematics}, pages 53--74. AMS, 2002.
\newblock \href {https://doi.org/10.1090/conm/305/05215} {\path{doi:10.1090/conm/305/05215}}.

\bibitem{Aar08}
Scott Aaronson.
\newblock Quantum certificate complexity.
\newblock {\em Journal of Computer and System Sciences}, 74(3):313--322, 2008.
\newblock \href {https://doi.org/10.1016/j.jcss.2007.06.020} {\path{doi:10.1016/j.jcss.2007.06.020}}.

\bibitem{Mid05}
Gatis Midrij{\={a}}nis.
\newblock On randomized and quantum query complexities.
\newblock ArXiv e-prints, 2005.
\newblock \href {https://arxiv.org/abs/quant-ph/0501142} {\path{arXiv:quant-ph/0501142}}.

\bibitem{Amb16}
Andris Ambainis.
\newblock Superlinear advantage for exact quantum algorithms.
\newblock {\em SIAM Journal on Computing}, 45(2):617--631, 2016.
\newblock \href {https://doi.org/10.1137/130939043} {\path{doi:10.1137/130939043}}.

\bibitem{Mid04}
Gatis Midrij{\={a}}nis.
\newblock Exact quantum query complexity for total {Boolean} functions.
\newblock ArXiv e-prints, 2004.
\newblock \href {https://arxiv.org/abs/quant-ph/0403168} {\path{arXiv:quant-ph/0403168}}.

\bibitem{PW09}
David Poulin and Pawel Wocjan.
\newblock Sampling from the thermal quantum {Gibbs} state and evaluating partition functions with a quantum computer.
\newblock {\em Physical Review Letters}, 103(22):220502, 2009.
\newblock \href {https://doi.org/10.1103/PhysRevLett.103.220502} {\path{doi:10.1103/PhysRevLett.103.220502}}.

\bibitem{TOV+11}
K.~Temme, T.~J. Osborne, K.~G. Vollbrecht, D.~Poulin, and F.~Verstraete.
\newblock Quantum {Metropolis} sampling.
\newblock {\em Nature}, 471(7336):87--90, 2011.
\newblock \href {https://doi.org/10.1038/nature09770} {\path{doi:10.1038/nature09770}}.

\bibitem{YAG12}
Man-Hong Yung and Al{\'{a}}n Aspuru-Guzik.
\newblock A quantum-quantum {Metropolis} algorithm.
\newblock {\em Proceedings of the National Academy of Sciences}, 109(3):754--759, 2012.
\newblock \href {https://doi.org/10.1073/pnas.1111758109} {\path{doi:10.1073/pnas.1111758109}}.

\bibitem{KB16}
Michael~J. Kastoryano and Fernando G. S.~L. Brand{\~{a}}o.
\newblock Quantum {Gibbs} samplers: the commuting case.
\newblock {\em Communications in Mathematical Physics}, 344:915--957, 2016.
\newblock \href {https://doi.org/10.1007/s00220-016-2641-8} {\path{doi:10.1007/s00220-016-2641-8}}.

\bibitem{CS17}
Anirban~Narayan Chowdhury and Rolando~D. Somma.
\newblock Quantum algorithms for {Gibbs} sampling and hitting-time estimation.
\newblock {\em Quantum Information and Computation}, 17(1--2):41--64, 2017.
\newblock \href {https://doi.org/10.26421/QIC17.1-2-3} {\path{doi:10.26421/QIC17.1-2-3}}.

\bibitem{BK19}
Fernando G. S.~L. Brand{\~{a}}o and Michael~J. Kastoryano.
\newblock Finite correlation length implies efficient preparation of quantum thermal states.
\newblock {\em Communications in Mathematical Physics}, 365:1--16, 2019.
\newblock \href {https://doi.org/10.1007/s00220-018-3150-8} {\path{doi:10.1007/s00220-018-3150-8}}.

\bibitem{Fey82}
Richard~P. Feynman.
\newblock Simulating physics with computers.
\newblock {\em International Journal of Theoretical Physics}, 21(6):467--488, 1982.
\newblock \href {https://doi.org/10.1007/BF02650179} {\path{doi:10.1007/BF02650179}}.

\bibitem{Llo96}
Seth Lloyd.
\newblock Universal quantum simulators.
\newblock {\em Science}, 273(5278):1073--1078, 1996.
\newblock \href {https://doi.org/10.1126/science.273.5278.1073} {\path{doi:10.1126/science.273.5278.1073}}.

\bibitem{ATS03}
Dorit Aharonov and Amnon Ta-Shma.
\newblock Adiabatic quantum state generation and statistical zero knowledge.
\newblock In {\em Proceedings of the 35th Annual ACM Symposium on Theory of Computing}, pages 20--29, 2003.
\newblock \href {https://doi.org/10.1145/780542.780546} {\path{doi:10.1145/780542.780546}}.

\bibitem{BACS07}
Dominic~W. Berry, Graeme Ahokas, Richard Cleve, and Barry~C. Sanders.
\newblock Efficient quantum algorithms for simulating sparse {Hamiltonians}.
\newblock {\em Communications in Mathematical Physics}, 270(2):359--371, 2007.
\newblock \href {https://doi.org/10.1007/s00220-006-0150-x} {\path{doi:10.1007/s00220-006-0150-x}}.

\bibitem{CW12}
Andrew~M. Childs and Nathan Wiebe.
\newblock Hamiltonian simulation using linear combinations of unitary operations.
\newblock {\em Quantum Information and Computation}, 12(11--12):901--924, 2012.
\newblock \href {https://doi.org/10.26421/QIC12.11-12-1} {\path{doi:10.26421/QIC12.11-12-1}}.

\bibitem{BCC+14}
Dominic~W. Berry, Andrew~M. Childs, Richard Cleve, Robin Kothari, and Rolando~D. Somma.
\newblock Exponential improvement in precision for simulating sparse {Hamiltonians}.
\newblock In {\em Proceedings of the 46th Annual ACM Symposium on Theory of Computing}, pages 283--292, 2014.
\newblock \href {https://doi.org/10.1145/2591796.2591854} {\path{doi:10.1145/2591796.2591854}}.

\bibitem{BCC+15}
Dominic~W. Berry, Andrew~M. Childs, Richard Cleve, Robin Kothari, and Rolando~D. Somma.
\newblock Simulating {Hamiltonian} dynamics with a truncated {Taylor} series.
\newblock {\em Physical Review Letters}, 114(9):090502, 2015.
\newblock \href {https://doi.org/10.1103/PhysRevLett.114.090502} {\path{doi:10.1103/PhysRevLett.114.090502}}.

\bibitem{BCK15}
Dominic~W. Berry, Andrew~M. Childs, and Robin Kothari.
\newblock Hamiltonian simulation with nearly optimal dependence on all parameters.
\newblock In {\em Proceedings of the 56th IEEE Annual Symposium on Foundations of Computer Science}, pages 792--809, 2015.
\newblock \href {https://doi.org/10.1109/FOCS.2015.54} {\path{doi:10.1109/FOCS.2015.54}}.

\bibitem{LC17}
Guang~Hao Low and Isaac~L. Chuang.
\newblock Optimal {Hamiltonian} simulation by quantum signal processing.
\newblock {\em Physical Review Letters}, 118(1):010501, 2017.
\newblock \href {https://doi.org/10.1103/PhysRevLett.118.010501} {\path{doi:10.1103/PhysRevLett.118.010501}}.

\bibitem{CK10}
Andrew~M. Childs and Robin Kothari.
\newblock Limitations on the simulation of non-sparse {Hamiltonians}.
\newblock {\em Quantum Information and Computation}, 10(7):669--684, 2010.
\newblock \href {https://doi.org/10.26421/QIC10.7-8-7} {\path{doi:10.26421/QIC10.7-8-7}}.

\bibitem{FGGS98}
Edward Farhi, Jeffrey Goldstone, Sam Gutmann, and Michael Sipser.
\newblock Limit on the speed of quantum computation in determining parity.
\newblock {\em Physical Review Letters}, 81(24):5442, 1998.
\newblock \href {https://doi.org/10.1103/PhysRevLett.81.5442} {\path{doi:10.1103/PhysRevLett.81.5442}}.

\bibitem{BHT98b}
Gilles Brassard, Peter H{\o}yer, and Alain Tapp.
\newblock Quantum counting.
\newblock In {\em Proceedings of the 25yh International Colloquium on Automata, Languages, and Programming}, pages 820--831, 1998.
\newblock \href {https://doi.org/10.1007/BFb0055105} {\path{doi:10.1007/BFb0055105}}.

\bibitem{HHL09}
Aram~W. Harrow, Avinatan Hassidim, and Seth Lloyd.
\newblock Quantum algorithm for linear systems of equations.
\newblock {\em Physical Review Letters}, 103(15):150502, 2009.
\newblock \href {https://doi.org/10.1103/PhysRevLett.103.150502} {\path{doi:10.1103/PhysRevLett.103.150502}}.

\bibitem{Bes05}
Arvid~J. Bessen.
\newblock Lower bound for quantum phase estimation.
\newblock {\em Physical Review A}, 71(4):042313, 2005.
\newblock \href {https://doi.org/10.1103/PhysRevA.71.042313} {\path{doi:10.1103/PhysRevA.71.042313}}.

\bibitem{LT20}
Lin Lin and Yu~Tong.
\newblock Near-optimal ground state preparation.
\newblock {\em Quantum}, 4:372, 2020.
\newblock \href {https://doi.org/10.22331/q-2020-12-14-372} {\path{doi:10.22331/q-2020-12-14-372}}.

\bibitem{NW99}
Ashwin Nayak and Felix Wu.
\newblock The quantum query complexity of approximating the median and related statistics.
\newblock In {\em Proceedings of the 31st Annual ACM Symposium on Theory of Computing}, pages 384--393, 1999.
\newblock \href {https://doi.org/10.1145/301250.301349} {\path{doi:10.1145/301250.301349}}.

\bibitem{Wan24}
Qisheng Wang.
\newblock Optimal trace distance and fidelity estimations for pure quantum states.
\newblock {\em IEEE Transactions on Information Theory}, 70(12):8791--8805, 2024.
\newblock \href {https://doi.org/10.1109/TIT.2024.3447915} {\path{doi:10.1109/TIT.2024.3447915}}.

\bibitem{OW21}
Ryan O'Donnell and John Wright.
\newblock Quantum spectrum testing.
\newblock {\em Communications in Mathematical Physics}, 387(1):1--75, 2021.
\newblock \href {https://doi.org/10.1007/s00220-021-04180-1} {\path{doi:10.1007/s00220-021-04180-1}}.

\bibitem{Bei93}
Richard Beigel.
\newblock The polynomial method in circuit complexity.
\newblock In {\em Proceedings of the 8th Annual Structure in Complexity Theory Conference}, pages 82--95, 1993.
\newblock \href {https://doi.org/10.1109/SCT.1993.336538} {\path{doi:10.1109/SCT.1993.336538}}.

\bibitem{AS04}
Scott Aaronson and Yaoyun Shi.
\newblock Quantum lower bounds for the collision and the element distinctness problems.
\newblock {\em Journal of the ACM}, 51(4):595--605, 2004.
\newblock \href {https://doi.org/10.1145/1008731.1008735} {\path{doi:10.1145/1008731.1008735}}.

\bibitem{BHT98a}
Gilles Brassard, Peter H{\o}yer, and Alain Tapp.
\newblock Quantum cryptanalysis of hash and claw-free functions.
\newblock In {\em Proceedings of the 3rd Latin American Symposium on Theoretical Informatics}, pages 163--169, 1998.
\newblock \href {https://doi.org/10.1007/BFb0054319} {\path{doi:10.1007/BFb0054319}}.

\bibitem{Amb07}
Andris Ambainis.
\newblock Quantum walk algorithm for element distinctness.
\newblock {\em SIAM Journal on Computing}, 37(1):210--239, 2007.
\newblock \href {https://doi.org/10.1137/S0097539705447311} {\path{doi:10.1137/S0097539705447311}}.

\bibitem{ABRdW16}
Andris Ambainis, Aleksandrs Belovs, Oded Regev, and Ronald de~Wolf.
\newblock Efficient quantum algorithms for (gapped) group testing and junta testing.
\newblock In {\em Proceedings of the 2016 Annual ACM-SIAM Symposium on Discrete Algorithms}, pages 903--922, 2016.
\newblock \href {https://doi.org/10.1137/1.9781611974331.ch65} {\path{doi:10.1137/1.9781611974331.ch65}}.

\bibitem{BHH11}
Sergey Bravyi, Aram~W. Harrow, and Avinatan Hassidim.
\newblock Quantum algorithms for testing properties of distributions.
\newblock {\em IEEE Transactions on Information Theory}, 57(6):3971--3981, 2011.
\newblock \href {https://doi.org/10.1109/TIT.2011.2134250} {\path{doi:10.1109/TIT.2011.2134250}}.

\bibitem{LW19}
Tongyang Li and Xiaodi Wu.
\newblock Quantum query complexity of entropy estimation.
\newblock {\em IEEE Transactions on Information Theory}, 65(5):2899--2921, 2019.
\newblock \href {https://doi.org/10.1109/TIT.2018.2883306} {\path{doi:10.1109/TIT.2018.2883306}}.

\bibitem{BM12}
Paul Beame and Widad Machmouchi.
\newblock The quantum query complexity of $\mathsf{AC}^0$.
\newblock {\em Quantum Information and Computation}, 12(7--8):670--676, 2012.
\newblock \href {https://doi.org/10.26421/QIC12.7-8-11} {\path{doi:10.26421/QIC12.7-8-11}}.

\bibitem{She18}
Alexander~A. Sherstov.
\newblock The power of asymmetry in constant-depth circuits.
\newblock {\em SIAM Journal on Computing}, 47(6):2362--2434, 2018.
\newblock \href {https://doi.org/10.1137/16M1064477} {\path{doi:10.1137/16M1064477}}.

\bibitem{DHHM06}
Christoph D\"{u}rr, Mark Heiligman, Peter H{\o}yer, and Mehdi Mhalla.
\newblock Quantum query complexity of some graph problems.
\newblock {\em SIAM Journal on Computing}, 35(6):1310--1328, 2006.
\newblock \href {https://doi.org/10.1137/050644719} {\path{doi:10.1137/050644719}}.

\bibitem{Amb06}
Andris Ambainis.
\newblock Polynomial degree vs. quantum query complexity.
\newblock {\em Journal of Computer and System Sciences}, 72(2):220--238, 2006.
\newblock \href {https://doi.org/10.1016/j.jcss.2005.06.006} {\path{doi:10.1016/j.jcss.2005.06.006}}.

\bibitem{Rei11}
Ben~W. Reichardt.
\newblock Reflections for quantum query algorithms.
\newblock In {\em Proceedings of the 2011 Annual ACM-SIAM Symposium on Discrete Algorithms}, pages 560--569, 2011.
\newblock \href {https://doi.org/10.1137/1.9781611973082.44} {\path{doi:10.1137/1.9781611973082.44}}.

\bibitem{CFHL21}
Kai-Min Chung, Serge Fehr, Yu-Hsuan Huang, and Tai-Ning Liao.
\newblock On the compressed-oracle technique, and post-quantum security of proofs of sequential work.
\newblock In {\em Proceedings of the 40th Annual International Conference on the Theory and Applications of Cryptographic Techniques}, pages 598--629, 2021.
\newblock \href {https://doi.org/10.1007/978-3-030-77886-6_21} {\path{doi:10.1007/978-3-030-77886-6_21}}.

\bibitem{Unr23}
Dominique Unruh.
\newblock Towards compressed permutation oracles.
\newblock In {\em Proceedings of the 29th International Conference on the Theory and Application of Cryptology and Information Security}, pages 369--400, 2023.
\newblock \href {https://doi.org/10.1007/978-981-99-8730-6_12} {\path{doi:10.1007/978-981-99-8730-6_12}}.

\bibitem{MMW25}
Christian Majenz, Giulio Malavolta, and Michael Walter.
\newblock Permutation superposition oracles for quantum query lower bounds.
\newblock In {\em Proceedings of the 57th Annual ACM Symposium on Theory of Computing}, pages 1508--1519, 2025.
\newblock \href {https://doi.org/10.1145/3717823.3718266} {\path{doi:10.1145/3717823.3718266}}.

\bibitem{HM23}
Yassine Hamoudi and Fr{\'{e}}d{\'{e}}ric Magniez.
\newblock Quantum time–space tradeoff for finding multiple collision pairs.
\newblock {\em ACM Transactions on Computation Theory}, 15(1--2):3:1--3:22, 2023.
\newblock \href {https://doi.org/10.1145/3589986} {\path{doi:10.1145/3589986}}.

\bibitem{HLS24}
Yassine Hamoudi, Qipeng Liu, and Makrand Sinha.
\newblock The {NISQ} complexity of collision finding.
\newblock In {\em Proceedings of the 43rd Annual International Conference on the Theory and Applications of Cryptographic Techniques}, pages 3--32, 2024.
\newblock \href {https://doi.org/10.1007/978-3-031-58737-5_1} {\path{doi:10.1007/978-3-031-58737-5_1}}.

\bibitem{CGLQ20}
Kai-Min Chung, Siyao Guo, Qipeng Liu, and Luowen Qian.
\newblock Tight quantum time-space tradeoffs for function inversion.
\newblock In {\em Proceedings of the 61st IEEE Annual Symposium on Foundations of Computer Science}, pages 673--684, 2020.
\newblock \href {https://doi.org/10.1109/FOCS46700.2020.00068} {\path{doi:10.1109/FOCS46700.2020.00068}}.

\bibitem{Ros21}
Ansis Rosmanis.
\newblock Tight bounds for inverting permutations via compressed oracle arguments.
\newblock ArXiv e-prints, 2021.
\newblock \href {https://arxiv.org/abs/2103.08975} {\path{arXiv:2103.08975}}.

\bibitem{Ros23}
Ansis Rosmanis.
\newblock Quantum search with noisy oracle.
\newblock ArXiv e-prints, 2023.
\newblock \href {https://arxiv.org/abs/2309.14944} {\path{arXiv:2309.14944}}.

\bibitem{TW25}
Ewin Tang and John Wright.
\newblock Amplitude amplification and estimation require inverses.
\newblock ArXiv e-prints, 2025.
\newblock \href {https://arxiv.org/abs/2507.23787} {\path{arXiv:2507.23787}}.

\bibitem{She11}
Alexander~A. Sherstov.
\newblock The pattern matrix method.
\newblock {\em SIAM Journal on Computing}, 40(6):1969--2000, 2011.
\newblock \href {https://doi.org/10.1137/080733644} {\path{doi:10.1137/080733644}}.

\bibitem{Aar12}
Scott Aaronson.
\newblock Impossibility of succinct quantum proofs for collision-freeness.
\newblock {\em Quantum Information and Computation}, 12(1--2):21--28, 2012.
\newblock \href {https://doi.org/10.26421/QIC12.1-2-3} {\path{doi:10.26421/QIC12.1-2-3}}.

\bibitem{LHS07}
Peter H{\o}yer, Troy Lee, and Robert {\v{S}}palek.
\newblock Negative weights make adversaries stronger.
\newblock In {\em Proceedings of the 39th Annual ACM Symposium on Theory of Computing}, pages 526--535, 2007.
\newblock \href {https://doi.org/10.1145/1250790.1250867} {\path{doi:10.1145/1250790.1250867}}.

\bibitem{RS12}
Ben~W. Reichardt and Robert {\v{S}}palek.
\newblock Span-program-based quantum algorithm for evaluating formulas.
\newblock {\em Theory of Computing}, 8(13):291--319, 2012.
\newblock \href {https://doi.org/10.4086/toc.2012.v008a013} {\path{doi:10.4086/toc.2012.v008a013}}.

\bibitem{Rei09}
Ben~W. Reichardt.
\newblock Span programs and quantum query complexity: The general adversary bound is nearly tight for every {Boolean} function.
\newblock In {\em Proceedings of the 50th Annual IEEE Symposium on Foundations of Computer Science}, pages 544--551, 2009.
\newblock \href {https://doi.org/10.1109/FOCS.2009.55} {\path{doi:10.1109/FOCS.2009.55}}.

\bibitem{AB23}
Andris Ambainis and Aleksandrs Belovs.
\newblock An exponential separation between quantum query complexity and the polynomial degree.
\newblock In {\em Proceedings of the 38th Computational Complexity Conference}, pages 24:1--24:13, 2023.
\newblock \href {https://doi.org/10.4230/LIPIcs.CCC.2023.24} {\path{doi:10.4230/LIPIcs.CCC.2023.24}}.

\bibitem{MR13}
Lo{\"{i}}ck Magnin and J{\'{e}}r{\'{e}}mie Roland.
\newblock Explicit relation between all lower bound techniques for quantum query complexity.
\newblock In {\em Proceedings of the 30th International Symposium on Theoretical Aspects of Computer Science}, pages 434--445, 2013.
\newblock \href {https://doi.org/10.4230/LIPIcs.STACS.2013.434} {\path{doi:10.4230/LIPIcs.STACS.2013.434}}.

\bibitem{Bel24}
Aleksandrs Belovs.
\newblock A direct reduction from the polynomial to the adversary method.
\newblock In {\em Proceedings of the 19th Conference on the Theory of Quantum Computation, Communication and Cryptography}, pages 11:1--11:15, 2024.
\newblock \href {https://doi.org/10.4230/LIPIcs.TQC.2024.11} {\path{doi:10.4230/LIPIcs.TQC.2024.11}}.

\bibitem{CKBG23}
Chi-Fang Chen, Michael~J. Kastoryano, Fernando G. S.~L. Brand{\~a}o, and Andr{\'{a}}s Gily{\'{e}}n.
\newblock Quantum thermal state preparation.
\newblock ArXiv e-prints, 2023.
\newblock \href {https://arxiv.org/abs/2303.18224} {\path{arXiv:2303.18224}}.

\bibitem{Weg24}
Jordi Weggemans.
\newblock Lower bounds for unitary property testing with proofs and advice.
\newblock {\em Quantum}, 9:1717, 2025.
\newblock \href {https://doi.org/10.22331/q-2025-04-18-1717} {\path{doi:10.22331/q-2025-04-18-1717}}.

\bibitem{CWZ24}
Kean Chen, Qisheng Wang, and Zhicheng Zhang.
\newblock Local test for unitarily invariant properties of bipartite quantum states.
\newblock ArXiv e-prints, 2024.
\newblock \href {https://arxiv.org/abs/2404.04599} {\path{arXiv:2404.04599}}.

\bibitem{WZ24b}
Qisheng Wang and Zhicheng Zhang.
\newblock Time-efficient quantum entropy estimator via samplizer.
\newblock {\em IEEE Transactions on Information Theory}, 2025.
\newblock \href {https://doi.org/10.1109/TIT.2025.3576137} {\path{doi:10.1109/TIT.2025.3576137}}.

\bibitem{LWWZ24}
Nana Liu, Qisheng Wang, Mark~M. Wilde, and Zhicheng Zhang.
\newblock Quantum algorithms for matrix geometric means.
\newblock {\em npj Quantum Information}, 11:101, 2025.
\newblock \href {https://doi.org/10.1038/s41534-025-00973-7} {\path{doi:10.1038/s41534-025-00973-7}}.

\bibitem{LW25}
Yupan Liu and Qisheng Wang.
\newblock On estimating the trace of quantum state powers.
\newblock In {\em Proceedings of the 2025 Annual ACM-SIAM Symposium on Discrete Algorithms}, pages 947--993, 2025.
\newblock \href {https://doi.org/10.1137/1.9781611978322.28} {\path{doi:10.1137/1.9781611978322.28}}.

\bibitem{AISW20}
Jayadev Acharya, Ibrahim Issa, Nirmal~V. Shende, and Aaron~B. Wagner.
\newblock Estimating quantum entropy.
\newblock {\em IEEE Journal on Selected Areas in Information Theory}, 1(2):454--468, 2020.
\newblock \href {https://doi.org/10.1109/JSAIT.2020.3015235} {\path{doi:10.1109/JSAIT.2020.3015235}}.

\bibitem{WZL24}
Xinzhao Wang, Shengyu Zhang, and Tongyang Li.
\newblock A quantum algorithm framework for discrete probability distributions with applications to {R\'{e}nyi} entropy estimation.
\newblock {\em IEEE Transactions on Information Theory}, 70(5):3399--3426, 2024.
\newblock \href {https://doi.org/10.1109/TIT.2024.3382037} {\path{doi:10.1109/TIT.2024.3382037}}.

\bibitem{LW25b}
Yupan Liu and Qisheng Wang.
\newblock On estimating the quantum $\ell_\alpha$ distance.
\newblock In {\em Proceedings of the 32nd Annual European Symposium on Algorithms}, pages 106:1--106:19, 2025.
\newblock \href {https://doi.org/10.4230/LIPIcs.ESA.2025.106} {\path{doi:10.4230/LIPIcs.ESA.2025.106}}.

\bibitem{NRTM25}
Ryotaro Niwa, Zane~Marius Rossi, Philip Taranto, and Mio Murao.
\newblock Singular value transformation for unknown quantum channels.
\newblock ArXiv e-prints, 2025.
\newblock \href {https://arxiv.org/abs/2506.24112} {\path{arXiv:2506.24112}}.

\bibitem{WZ25}
Qisheng Wang and Zhicheng Zhang.
\newblock Sample-optimal quantum estimators for pure-state trace distance and fidelity via samplizer.
\newblock ArXiv e-prints, 2024.
\newblock \href {https://arxiv.org/abs/2410.21201} {\path{arXiv:2410.21201}}.

\bibitem{FW25}
Wang Fang and Qisheng Wang.
\newblock Optimal quantum algorithm for estimating fidelity to a pure state.
\newblock In {\em Proceedings of the 32nd Annual European Symposium on Algorithms}, pages 4:1--4:12, 2025.
\newblock \href {https://doi.org/10.4230/LIPIcs.ESA.2025.4} {\path{doi:10.4230/LIPIcs.ESA.2025.4}}.

\bibitem{FvdG99}
Christopher~A. Fuchs and Jeroen van~de Graaf.
\newblock Cryptographic distinguishability measures for quantum-mechanical states.
\newblock {\em IEEE Transactions on Information Theory}, 45(4):1216--1227, 1999.
\newblock \href {https://doi.org/10.1109/18.761271} {\path{doi:10.1109/18.761271}}.

\bibitem{Wil13}
Mark~M. Wilde.
\newblock {\em Quantum Information Theory}.
\newblock Cambridge University Press, 2013.

\bibitem{CGJ19}
Shantanav Chakraborty, Andr{\'a}s Gily{\'e}n, and Stacey Jeffery.
\newblock The power of block-encoded matrix powers: improved regression techniques via faster {Hamiltonian} simulation.
\newblock In {\em Proceedings of the 46th International Colloquium on Automata, Languages, and Programming}, pages 33:1--33:14, 2019.
\newblock \href {https://doi.org/10.4230/LIPIcs.ICALP.2019.33} {\path{doi:10.4230/LIPIcs.ICALP.2019.33}}.

\bibitem{LC17a}
Guang~Hao Low and Isaac~L. Chuang.
\newblock Hamiltonian simulation by uniform spectral amplification.
\newblock ArXiv e-prints, 2017.
\newblock \href {https://arxiv.org/abs/1707.05391} {\path{arXiv:1707.05391}}.

\bibitem{LS24}
Guang~Hao Low and Yuan Su.
\newblock Quantum eigenvalue processing.
\newblock In {\em Proceedings of the 65th IEEE Annual Symposium on Foundations of Computer Science}, pages 1051--1062, 2024.
\newblock \href {https://doi.org/10.1109/FOCS61266.2024.00070} {\path{doi:10.1109/FOCS61266.2024.00070}}.

\bibitem{Gil19}
Andr\'{a}s Gily\'{e}n.
\newblock {\em Quantum Singular Value Transformation \& Its Algorithmic Applications}.
\newblock Phd thesis, University of Amsterdam, 2019.
\newblock URL: \url{https://pure.uva.nl/ws/files/35292358/Thesis.pdf}.

\bibitem{GKP+24}
Byeongseon Go, Hyukjoon Kwon, Siheon Park, Dhrumil Patel, and Mark~M. Wilde.
\newblock Sample-based {Hamiltonian} and {Lindbladian} simulation: Non-asymptotic analysis of sample complexity.
\newblock {\em Quantum Science and Technology}, 10(4):045058, 2025.
\newblock \href {https://doi.org/10.1088/2058-9565/ae075b} {\path{doi:10.1088/2058-9565/ae075b}}.

\bibitem{WZ23}
Qisheng Wang and Zhicheng Zhang.
\newblock Fast quantum algorithms for trace distance estimation.
\newblock {\em IEEE Transactions on Information Theory}, 70(4):2720--2733, 2024.
\newblock \href {https://doi.org/10.1109/TIT.2023.3321121} {\path{doi:10.1109/TIT.2023.3321121}}.

\end{thebibliography}

\appendix
\section{Up-Scaling of Block-Encoded Operators} \label{sec:up-scaling}

To prove \cref{corollary:up-scaling-block-encoding}, we recall singular value decomposition and transformation. 
Let $A \in \mathbb{C}^{\tilde n \times n}$ be a complex-valued matrix with singular value transformation $A = \sum_{j=1}^{n_{\min}} \varsigma_j \ket{\tilde \psi_j} \bra{\psi_j}$, where $\cbra{\ket{\tilde \psi_j}}$ and $\cbra{\ket{\psi_j}}$ are orthonormal bases, $n_{\min} \coloneqq \min\cbra{\tilde n, n}$ and $\varsigma_j \geq 0$. 
Let $f \colon \mathbb{R} \to \mathbb{C}$ be an even or odd function. 
The singular value transformation of $A$ corresponding to $f$ is defined by $f^{\SV}\rbra{A} \coloneqq \sum_{j=1}^{n_{\min}} f\rbra{\varsigma_j} \ket{\tilde \psi_j} \bra{\psi_j}$ if $f$ is odd and $f^{\SV}\rbra{A} \coloneqq \sum_{j=1}^{n_{\min}} f\rbra{\varsigma_j} \ket{\psi_j} \bra{\psi_j}$ if $f$ is even.
In particular, if $A$ is Hermitian, then $f^{\SV}\rbra{A} = f\rbra{A}$. 
We need the quantum singular value transformation for odd/even polynomials, given as follows. 

\begin{theorem} [Quantum singular value transformation for odd/even polynomials, {\cite[Theorem 2]{GSLW19}}] \label{thm:qsvt-odd-even}
    Suppose that $U$ is a $\rbra{1, a, 0}$-block-encoding of a matrix $A$ with $\Abs{A} \leq 1$. 
    Let $P \colon \sbra{-1, 1} \to \sbra{-1, 1}$ be an odd/even polynomial of degree $d$.
    Then, there is a quantum circuit $\widetilde U$ that is a $\rbra{1, a+1, 0}$-block-encoding of $P^{\SV}\rbra{A}$, using $O\rbra{d}$ queries to $U$.
\end{theorem}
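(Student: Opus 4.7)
The strategy is the standard qubitization plus quantum signal processing (QSP) argument. First, I would decompose $A = \sum_{j} \varsigma_j \ket{\tilde\psi_j}\bra{\psi_j}$ via SVD, and observe that the block-encoding condition $\bra{0}^{\otimes a} U \ket{0}^{\otimes a} = A$ means that for each right singular vector $\ket{\psi_j}$, the state $U (\ket{0}^{\otimes a}\ket{\psi_j})$ lies in the two-dimensional subspace spanned by $\ket{0}^{\otimes a}\ket{\tilde\psi_j}$ (with amplitude $\varsigma_j$) and some orthogonal ``garbage'' vector; symmetrically for $U^\dag$ applied to $\ket{0}^{\otimes a}\ket{\tilde\psi_j}$. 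Defining $\Pi \coloneqq \ket{0}\bra{0}^{\otimes a} \otimes I$, the key observation is that the operator $U$ together with the reflections $2\Pi - I$ leaves each such two-dimensional invariant subspace $\mathcal{H}_j$ stable, and inside $\mathcal{H}_j$ the unitary $U$ acts as a rotation by an angle $\theta_j$ with $\cos\theta_j = \varsigma_j$. This is the qubitization step.

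Next, using one additional ancilla qubit, I would construct the ``phased iterate'' $U_\phi \coloneqq e^{i\phi(2\Pi - I)}$ controlled to apply to the appropriate side of $U$, exactly as in QSP/QSVT: a single ancilla qubit lets one implement $e^{i\phi(2\Pi-I)}$ using $O(1)$ gates and without extra queries to $U$. Concretely, interleaving $d$ copies of (alternately) $U$ and $U^\dag$, sandwiched between projector-controlled phase rotations $e^{i\phi_k(2\Pi - I)}$ with carefully chosen phases $\phi_0,\ldots,\phi_d \in \mathbb{R}$, produces a unitary $\widetilde U$ whose restriction to each $\mathcal{H}_j$ is the $2\times 2$ QSP unitary corresponding to signal $\theta_j$ and phase sequence $\phi$. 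By the QSP characterization theorem of Low--Chuang (also used as Theorem 3 of \cite{GSLW19}), the top-left matrix element inside each $\mathcal{H}_j$ can be made to equal $P(\varsigma_j)$ for any real polynomial $P$ of degree $d$ with $|P|\le 1$ and definite parity matching $d$; choosing the phases accordingly, $\bra{0}^{\otimes a}\bra{\tilde 0}\widetilde U \ket{0}^{\otimes a}\ket{\tilde 0}$ (with $\ket{\tilde 0}$ the ancilla) acts on $\ket{\psi_j}$ (even case) or on $\ket{\psi_j}$ mapping to $\ket{\tilde\psi_j}$ (odd case) with coefficient $P(\varsigma_j)$, which is exactly $P^{\SV}(A)$.

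The main technical obstacle is the qubitization step: one must verify that the alternating product really preserves the two-dimensional subspaces $\mathcal{H}_j$ and acts as a genuine SU$(2)$ rotation inside each, despite the different left and right singular bases. For odd $d$, the final unitary naturally maps right singular vectors to left singular vectors (which matches the odd $f^{\SV}$ definition), whereas for even $d$ it maps right singular vectors to themselves (matching the even $f^{\SV}$ definition); confirming this parity-dependent basis behavior requires careful tracking of how $U$ and $U^\dag$ alternate. The query count of $O(d)$ and the single extra ancilla qubit follow immediately from the construction, completing the proof.
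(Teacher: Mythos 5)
First, note that the paper does not prove this statement at all: it is imported verbatim from \cite{GSLW19} (Theorem 2 there), so your proposal can only be compared against the standard QSVT proof. Your outline follows that standard route (invariant two-dimensional subspaces for the projector pair, alternating phased iterates of $U$ and $U^\dag$, QSP on each subspace), and the qubitization and parity-tracking parts are fine in spirit.

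There is, however, a genuine gap in the step where you invoke the QSP characterization: it is \emph{not} true that for every real polynomial $P$ of degree $d$ with definite parity and $\abs{P}\leq 1$ on $\sbra{-1,1}$ one can choose phases $\phi_0,\dots,\phi_d$ so that the top-left entry of the QSP unitary equals $P(x)$. The achievable top-left entries are complex polynomials $P$ for which a partner $Q$ exists with $\abs{P(x)}^2 + (1-x^2)\abs{Q(x)}^2 = 1$ on $\sbra{-1,1}$; in particular $\abs{P(\pm 1)} = 1$, so any target such as $P(x) = x^2/2$ is already out of reach for a pure phase sequence. The missing ingredient is the real-part extraction: one first finds (via real quantum signal processing, i.e.\ \cref{lemma:real-qsp}, which is Corollary~10 of the full version of \cite{GSLW19}) a QSP-achievable complex polynomial whose real part is the desired $P$, and then implements $\frac{1}{2}\rbra{U_\Phi + U_{-\Phi}}$ coherently, using one ancilla qubit to control between the phase sequences $\Phi$ and $-\Phi$ (the same ancilla can be recycled to realize the projector-controlled phases $e^{i\phi\rbra{2\Pi - I}}$). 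This linear-combination step is precisely what the ``$a+1$'' ancilla in the statement accounts for; your proposal instead attributes the extra qubit solely to implementing $e^{i\phi\rbra{2\Pi-I}}$ and asserts a strictly stronger QSP statement than is true, so as written the construction does not yield a block-encoding of $P^{\SV}\rbra{A}$ for general real odd/even $P$ with $\abs{P}\le 1$.
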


We also need polynomial approximations of the rectangle function for our purpose. 

\begin{lemma} [Polynomial approximations of the rectangle function, {\cite[Corollary 16]{GSLW19}}] \label{lemma:rect}
    For every $\delta',\varepsilon' \in \rbra{0, \frac{1}{2}}$ and $t \in (\delta', 1-\delta')$, there is an efficiently computable even polynomial $P \in \mathbb{R}\sbra{x}$ of degree $O\rbra{\frac{1}{\delta'}\log\rbra{\frac{1}{\varepsilon'}}}$ such that
    \begin{itemize}
        \item For all $x \in \sbra{-1, 1}$, $\abs{P\rbra{x}} \leq 1$,
        \item For all $x \in \sbra{-t+\delta', t-\delta'}$, $P\rbra{x} \in \sbra{1-\varepsilon', 1}$,
        \item For all $x \in \sbra{-1, -t-\delta'} \cup \sbra{t+\delta', 1}$, $P\rbra{x} \in \sbra{0, \varepsilon'}$.
    \end{itemize}
\end{lemma}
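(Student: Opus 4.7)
The plan is to reduce the construction of the rectangle polynomial to a standard polynomial approximation of the sign function, exploiting the decomposition
\[
\mathbf{1}_{\sbra{-t,t}}\rbra{x} = \frac{1}{2}\rbra{\sgn\rbra{t+x} + \sgn\rbra{t-x}}
\]
that is valid away from $x = \pm t$. Two features make this decomposition convenient: (i) if we replace $\sgn$ by an odd polynomial, the resulting expression is automatically even in $x$; and (ii) uniform polynomial approximations of the sign function with degree $O\rbra{\frac{1}{\delta'}\log\rbra{\frac{1}{\varepsilon'}}}$ are classical.

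First I would invoke (or reconstruct) a polynomial $P_{\sgn}$ of degree $d = O\rbra{\frac{1}{\delta'}\log\rbra{\frac{1}{\varepsilon'}}}$ such that $\abs{P_{\sgn}\rbra{y}} \leq 1$ on $\sbra{-2,2}$, $\abs{P_{\sgn}\rbra{y}-1} \leq \varepsilon'$ on $\sbra{\delta',2}$, and $\abs{P_{\sgn}\rbra{y}+1} \leq \varepsilon'$ on $\sbra{-2,-\delta'}$. The standard route is to truncate the Chebyshev expansion of $\operatorname{erf}\rbra{ky}$ at level $n \asymp k \sqrt{\log\rbra{1/\varepsilon'}}$ with $k \asymp 1/\delta'$; the Gaussian decay of the Chebyshev coefficients of $e^{-\rbra{ky}^2}$ yields both the degree bound and the uniform error, after a small renormalisation to enforce $\Abs{P_{\sgn}}_\infty \leq 1$.

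Next I would define $P\rbra{x} \coloneqq \frac{1}{2}\rbra{P_{\sgn}\rbra{t+x} + P_{\sgn}\rbra{t-x}}$, which is even in $x$ and of degree at most $d$, and perform a three-way case analysis. For $\abs{x} \leq t-\delta'$ both shifts $t \pm x$ lie in $\sbra{\delta',2}$, so $P\rbra{x} \in \sbra{1-\varepsilon',1+\varepsilon'}$. For $x \in \sbra{t+\delta',1}$ we have $t+x \geq \delta'$ and $t-x \leq -\delta'$, so the two sign contributions cancel up to error $\varepsilon'$, giving $\abs{P\rbra{x}} \leq \varepsilon'$ (symmetric on $\sbra{-1,-t-\delta'}$). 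The global bound $\abs{P\rbra{x}} \leq 1+\varepsilon'$ follows from $\abs{P_{\sgn}} \leq 1$ on $\sbra{-2,2}$. A final rescaling by $1/\rbra{1+\varepsilon'}$ and a constant-factor tightening of $\varepsilon'$ produces a polynomial satisfying all three bullets exactly, without changing the asymptotic degree.

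The main obstacle is the quantitative construction of $P_{\sgn}$ with the precise degree $O\rbra{\frac{1}{\delta'}\log\rbra{\frac{1}{\varepsilon'}}}$ together with the uniform bound on $\sbra{-1,1}$; the erf-truncation argument is standard but a bit delicate, because the naive truncation has small coefficient overshoots near $y = \pm 1$ that must be absorbed either by damping the highest Chebyshev modes or by a further polynomial clipping step. Since the paper is content to cite \cite[Corollary 16]{GSLW19} here, I would present the reduction (Steps 1, 3, 4 above) in full and invoke the sign-function approximation as a black box from the quantum signal processing literature.
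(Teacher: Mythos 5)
The paper does not prove this lemma at all: it is imported verbatim as \cite[Corollary 16]{GSLW19}, so the only ``paper proof'' is the citation. Your reconstruction follows exactly the route used in \cite{GSLW19} itself: write the indicator of $\sbra{-t,t}$ as $\frac{1}{2}\rbra{\sgn\rbra{t+x}+\sgn\rbra{t-x}}$, plug in an erf-based polynomial approximation of $\sgn$ of degree $O\rbra{\frac{1}{\delta'}\log\rbra{\frac{1}{\varepsilon'}}}$, and observe that the symmetrized sum is automatically even, so this is the same proof rather than a new one, and citing the sign-function approximation as a black box is exactly what the original source does internally.

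One quantitative step in your write-up does not deliver the statement as phrased: on $\sbra{-1,-t-\delta'}\cup\sbra{t+\delta',1}$ your construction only gives $\abs{P\rbra{x}}\leq\varepsilon'$, i.e.\ $P$ may dip slightly below $0$, and the final rescaling by $1/\rbra{1+\varepsilon'}$ does not repair the sign, so the third bullet $P\rbra{x}\in\sbra{0,\varepsilon'}$ is not met. The fix is an additive shift rather than a pure rescaling: run the argument with accuracy $\varepsilon'/2$ and set $\widetilde P\rbra{x}=\rbra{P\rbra{x}+\varepsilon'/2}/\rbra{1+\varepsilon'/2}$, which preserves evenness and degree, forces $\widetilde P\in\sbra{0,\varepsilon'}$ on the outer region, keeps $\widetilde P\in\sbra{1-\varepsilon',1}$ on the middle region, and keeps $\abs{\widetilde P}\leq 1$ on $\sbra{-1,1}$. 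Also note that since $\abs{P_{\sgn}}\leq 1$ on the whole shifted range, the upper bounds $P\leq 1$ (first and second bullets) already hold without any renormalisation, so the $1+\varepsilon'$ slack you invoke is unnecessary; the only genuine correction needed is the additive shift above, together with the routine remark that $P_{\sgn}$ must be controlled on the shifted interval (e.g.\ $\sbra{-2,2}$, obtained by halving the argument at the cost of a constant factor in the degree).
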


To bound the error, we need the following lemma for the robustness of singular value transformation. 

\begin{lemma} [Robustness of singular value transformation, {\cite[Lemma 22 in the full version]{GSLW19}}] \label{lemma:robust-qsvt}
    Let $P \in \mathbb{C}\sbra{x}$ be an odd/even polynomial of degree $d$ such that
    \begin{enumerate}
        \item $P$ has parity $d \bmod 2$,
        \item For all $x \in \sbra{-1, 1}$, $\abs{P\rbra{x}} \leq 1$,
        \item For all $x \in (-\infty, -1] \cup [1, +\infty)$, $\abs{P\rbra{x}} \geq 1$,
        \item If $d$ is even, then for all $x \in \mathbb{R}$, $\abs{P\rbra{ix}} \geq 1$.
    \end{enumerate}
    For every $A, A' \in \mathbb{C}^{\tilde n \times n}$ with $\Abs{A} \leq 1$ and $\Abs{A'} \leq 1$, we have
    \[
    \Abs*{P^{\SV}\rbra{A} - P^{\SV}\rbra{A'}} \leq 4 d \sqrt{\Abs*{A - A'}}.
    \]
\end{lemma}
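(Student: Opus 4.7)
The plan is to reduce the question to the Lipschitz-in-oracle behavior of the QSVT circuit, combined with the classical operator Hölder continuity of the matrix square root. The key conceptual point is that, although a degree-$d$ polynomial is Lipschitz as a scalar function, the QSVT implementation accesses $A$ only through a unitary block-encoding $U_A$; and passing from $A$ to a canonical block-encoding of $A$ unavoidably involves a square root (the ``complement'' block), which is only $1/2$-Hölder in operator norm. That is the source of the $\sqrt{\cdot}$ in the stated bound, which is strictly worse in its $\Abs{A-A'}$ dependence than a naive polynomial Lipschitz bound would suggest.

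Concretely, I would fix the canonical unitary dilations
\[
U_A \;=\; \begin{pmatrix} A & \sqrt{I - AA^\dagger} \\ \sqrt{I - A^\dagger A} & -A^\dagger \end{pmatrix},
\qquad
U_{A'} \;=\; \begin{pmatrix} A' & \sqrt{I - A'A'^\dagger} \\ \sqrt{I - A'^\dagger A'} & -A'^\dagger \end{pmatrix},
\]
which are exact $(1,1,0)$-block-encodings of $A$ and $A'$ under the hypothesis $\Abs{A}, \Abs{A'} \le 1$. The first step is to bound $\Abs{U_A - U_{A'}}$ in terms of $\Abs{A - A'}$. For the diagonal blocks this is direct; for the off-diagonal blocks I would invoke the operator Hölder inequality $\Abs{\sqrt{X} - \sqrt{Y}} \le \sqrt{\Abs{X-Y}}$ for positive $X, Y$, combined with $\Abs{AA^\dagger - A'A'^\dagger} \le (\Abs{A}+\Abs{A'})\Abs{A - A'} \le 2\Abs{A - A'}$ and analogously for $A^\dagger A$. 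Summing over the four blocks, using the standard $2\times 2$ block-matrix operator-norm estimate, yields $\Abs{U_A - U_{A'}} \le c\sqrt{\Abs{A - A'}}$ for an explicit constant $c$.

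The second step is to apply \cref{thm:qsvt-odd-even} to each of $U_A$ and $U_{A'}$. Because the hypotheses $(1)$–$(4)$ on $P$ are precisely those guaranteeing a ``direct'' (non-rescaled) QSP/QSVT construction, both $P^{\SV}(A)$ and $P^{\SV}(A')$ arise as the upper-left block of the \emph{same} QSVT circuit template $V_P^{U} = e^{i\phi_0 \Pi}\,U\,e^{i\phi_1 \Pi'}\,U^\dagger \cdots$ applied to the two different oracles $U_A, U_{A'}$. The phases $\phi_j$ depend only on $P$, not on $A$. A one-line hybrid argument, replacing $U_A$ by $U_{A'}$ one query at a time, then yields $\Abs{V_P^{U_A} - V_P^{U_{A'}}} \le d\,\Abs{U_A - U_{A'}}$. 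Since projecting onto the encoded block is norm-nonincreasing, the final bound
\[
\Abs{P^{\SV}(A) - P^{\SV}(A')} \;\le\; \Abs{V_P^{U_A} - V_P^{U_{A'}}} \;\le\; d\cdot c\sqrt{\Abs{A-A'}}
\]
follows, and a careful accounting of $c$ via the two sources $\Abs{A-A'} \le \sqrt{2\Abs{A-A'}}$ (valid since $\Abs{A-A'} \le 2$) and the block-sum bound gives the promised constant $4$.

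The main obstacle I expect is bookkeeping rather than conceptual: both the Hölder bound for the matrix square root on possibly rectangular $A$ (which requires splitting into $AA^\dagger$ and $A^\dagger A$ and keeping track of the two different ambient dimensions in the dilation) and the precise constant in the block-matrix operator-norm estimate need to be tight enough that the final factor does not exceed $4d$. Justifying that the QSVT sequence truly depends on the oracle only as a substituted black box—so the hybrid estimate is valid with coefficient exactly $d$—relies on the fact that all other gates in the QSVT template are projector-controlled phase rotations independent of $A$, which is exactly the content built into \cref{thm:qsvt-odd-even} under the four assumed conditions on $P$.
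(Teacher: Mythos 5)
The paper does not prove this lemma at all --- it is imported verbatim from \cite{GSLW19} (Lemma 22 of the full version) --- so the only meaningful comparison is with the original argument there, which your sketch essentially reconstructs: unitary dilation of $A$ and $A'$, $1/2$-H\"older continuity of the operator square root as the source of $\sqrt{\Abs{A-A'}}$, and a telescoping/hybrid step over the $d$ oracle uses in the alternating-phase QSVT sequence. The argument is sound. The quantitative accounting also works: $\Abs{AA^\dag - A'A'^\dag} \leq 2\Abs{A-A'}$ together with $\Abs{\sqrt{X}-\sqrt{Y}} \leq \sqrt{\Abs{X-Y}}$ for PSD $X,Y$ and the block-norm estimate give $\Abs{U_A - U_{A'}} \leq \Abs{A-A'} + \sqrt{2\Abs{A-A'}} \leq 2\sqrt{2}\sqrt{\Abs{A-A'}}$ (using $\Abs{A-A'}\leq 2$), and multiplying by $d$ and compressing to the encoded corner yields $2\sqrt{2}\,d\sqrt{\Abs{A-A'}} \leq 4d\sqrt{\Abs{A-A'}}$.

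One caveat deserves fixing. You attribute the ``same circuit template with exactly $d$ queries and $A$-independent phases'' to \cref{thm:qsvt-odd-even}, but that statement (as quoted in the paper) assumes only parity and boundedness of $P$ and promises $O\rbra{d}$ queries, possibly via an LCU-type construction with extra ancillas; a hybrid argument against it would give an unspecified constant, not $4$. What you actually need is the exact representation $P^{\SV}\rbra{A} = \widetilde{\Pi}\, U_\Phi\, \Pi$ (odd $d$; analogously for even $d$) with a phase sequence $\Phi \in \mathbb{R}^d$ depending only on $P$, valid for \emph{every} $A$ with $\Abs{A}\leq 1$. The existence of such a $\Phi$ is exactly what hypotheses (3) and (4) purchase --- they guarantee the complementary polynomial, hence an exact length-$d$ QSP decomposition of $P$ itself --- i.e., the relevant citations are the exact QSP decomposition and QSVT-by-alternating-phase-modulation theorems of \cite{GSLW19}, not the paper's restated \cref{thm:qsvt-odd-even}. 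With that substitution your proof is complete; the remaining ingredients (the intertwining identity making the dilation unitary for rectangular $A$, $\Abs{U^\dag_A - U^\dag_{A'}} = \Abs{U_A - U_{A'}}$, and norm-nonincrease under compression) are standard and used correctly.
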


We also need the extensions of real-valued polynomials that satisfy the four conditions required in \cref{lemma:robust-qsvt}.

\begin{lemma} [Real quantum signal processing, {\cite[Corollary 10 in the full version]{GSLW19}}] \label{lemma:real-qsp}
    Let $R \in \mathbb{R}\sbra{x}$ be an odd/even polynomial of degree $d$ such that 
    \begin{itemize}
        \item $R$ has parity $d \bmod 2$,
        \item For all $x \in \sbra{-1, 1}$, $\abs{P\rbra{x}} \leq 1$.
    \end{itemize}
    Then, there exists a polynomial $S \in \mathbb{C}\sbra{x}$ of degree $d$ that satisfies the four conditions required in \cref{lemma:robust-qsvt} such that $S\rbra{x} = R\rbra{x}$ for all $x \in \mathbb{R}$.
\end{lemma}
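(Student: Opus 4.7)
The plan is to construct $S$ via the classical \emph{completion} technique from quantum signal processing, whose analytic core is a Fej\'er--Riesz/Luk\'acs-type spectral factorization of a non-negative polynomial on $[-1,1]$. The construction begins from $R$ itself and produces an imaginary ``witness'' polynomial that is absorbed into $S$ while leaving the behaviour on $\mathbb{R}$ undisturbed.

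First, I would study the auxiliary real polynomial $T(x) := 1 - R(x)^2$. By hypothesis $T \geq 0$ on $[-1,1]$, $T$ is even of degree $2d$, and its parity structure is inherited from $R$. Applying Luk\'acs' theorem for polynomials non-negative on a compact interval, I would decompose
\[
1 - R(x)^2 \;=\; A(x)^2 + (1 - x^2)\,B(x)^2
\]
for real polynomials $A$ of degree at most $d$ and $B$ of degree at most $d-1$, chosen with parities matching those of $R$ and $xR$ respectively. This yields the standard QSP compatibility identity.

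Second, I would assemble $S$ via the substitution $x = \cos\theta$. On the unit circle the expression $R(\cos\theta) + i\,\sin\theta\,B(\cos\theta)$ is a Laurent polynomial in $e^{i\theta}$ of degree $d$; extracting the analytic (non-negative frequency) part and re-expressing in $x$ gives a complex polynomial $S(x)$ of degree $d$ whose restriction to the real line coincides with $R(x)$ because the $\sin\theta$ factor only contributes to terms that cancel on $\mathbb{R}$ by construction. Conditions (1) and (2) of \cref{lemma:robust-qsvt} are then immediate from the completion identity and the chosen parity; conditions (3) and (4), concerning $|S(x)| \geq 1$ outside $[-1,1]$ and on the imaginary axis, follow from the fact that the same algebraic identity rearranges to $|S(x)|^2 - 1 = (x^2 - 1)\,B(x)^2$, which is non-negative on $(-\infty,-1]\cup[1,\infty)$ and on $i\mathbb{R}$.

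The main obstacle will be the spectral factorization step itself. One must track parities and degrees carefully through the Luk\'acs decomposition so that $S$ ends up with degree exactly $d$ and the correct parity, and one must ensure the extension $S$ genuinely agrees with $R$ on $\mathbb{R}$ as polynomial identities rather than merely numerically. The cleanest handling is to pair the roots of $T$ on and off the unit disk after the substitution $x = \tfrac{1}{2}(z + z^{-1})$ and invoke a ``square root'' choice that respects the parity; boundary degeneracies at $x = \pm 1$ (where $R(\pm 1)^2$ may equal $1$) require a small separate check but do not change the conclusion.
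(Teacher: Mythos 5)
The paper does not actually prove this lemma; it imports it wholesale from \cite{GSLW19} (Corollary~10 of the full version), and your route --- a parity-respecting Luk\'acs/Fej\'er--Riesz factorization $1-R(x)^2=A(x)^2+(1-x^2)B(x)^2$ followed by a completion of $R$ to a complex polynomial --- is exactly the argument behind that citation. So the method is the right one; the problem is with what you claim it delivers.

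You assert that the constructed $S$ satisfies $S(x)=R(x)$ for all real $x$. Two polynomials agreeing on all of $\mathbb{R}$ are identical, so this would force $S\equiv R$, and then conditions (3) and (4) of \cref{lemma:robust-qsvt} would have to hold for $R$ itself --- which the hypotheses do not guarantee ($R(x)=x/2$ violates (3), $R(x)=x^2$ violates (4)). Your own verification exposes the inconsistency: the identity $\abs{S(x)}^2-1=(x^2-1)B(x)^2$ on $\mathbb{R}$ is the identity satisfied by the completion $S=R+iA$, for which $\abs{S}^2=R^2+A^2=1-(1-x^2)B^2$; it is incompatible with $S=R$ on $\mathbb{R}$ unless $A\equiv 0$. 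What the construction actually yields, and what the cited Corollary~10 of \cite{GSLW19} actually states, is $\operatorname{Re}\rbra{S(x)}=R(x)$ for real $x$ (the transcription ``$S(x)=R(x)$ on $\mathbb{R}$'' in the lemma as stated here is an overstatement and cannot be proved); the step where you ``extract the analytic part'' of $R(\cos\theta)+i\sin\theta\,B(\cos\theta)$ and claim the extra terms ``cancel on $\mathbb{R}$'' is where this slips in --- the cleaner completion is simply $S\coloneqq R+iA$ with $Q\coloneqq B$ as the companion polynomial. Finally, condition (4) does not follow from the displayed identity alone, since $\abs{S(x)}^2$ is not a polynomial in $x$ and the identity proved on $\mathbb{R}$ does not transport to the imaginary axis by itself: you need the parity bookkeeping ($A$ of parity $d\bmod 2$, $B$ of parity $(d-1)\bmod 2$) to see that for even $d$ and real $y$, $R(iy)$ and $A(iy)$ are real while $B(iy)^2\leq 0$, whence $\abs{S(iy)}^2=1-(1+y^2)B(iy)^2\geq 1$.
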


Since we are dealing with block-encodings, we also need the following lemma for the robustness.

\begin{lemma} [Robustness of block-encodings, adapted from {\cite[Lemma 23 in the full version]{GSLW19}}, see also {\cite[Corollary 12]{GP22}}] \label{lemma:robust-block-encoding}
    Let $U$ be a $d \times d$ unitary operator that is a block-encoding of an $n \times n$ matrix $A$, where $d \geq 8n$.
    Then, for any $n \times n$ matrix $\tilde A$ satisfying
    \[
    \Abs*{A - \tilde A} + \Abs*{\frac{A + \tilde A}{2}}^2 \leq 1,
    \]
    there is a $d \times d$ unitary operator $\tilde U$ that is a block-encoding of $\tilde A$ satisfying
    \[
    \Abs*{U - \tilde U} \leq \sqrt{\frac{2}{1 - \Abs*{\dfrac{A+\tilde A}{2}}^2}} \Abs*{A - \tilde A}.
    \]
\end{lemma}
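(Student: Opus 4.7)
The plan is constructive: build $\tilde U$ by first matching the first $n$ columns of $U$ (which encode the block $A$) and then completing to a full $d\times d$ unitary. Let $\Pi$ be the projector onto the top $n$ coordinates of $\mathbb{C}^d$, so that the first $n$ columns of $U$ form an isometry $V=\binom{A}{C}\colon\mathbb{C}^n\to\mathbb{C}^d$ with $C^\dag C=I_n-A^\dag A$ by unitarity of $U$. I would polar-decompose $C=W(I_n-A^\dag A)^{1/2}$ for a partial isometry $W$ and define $\tilde V=\binom{\tilde A}{\tilde C}$ with $\tilde C=W(I_n-\tilde A^\dag\tilde A)^{1/2}$. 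The hypothesis $\Abs{A-\tilde A}+\Abs{(A+\tilde A)/2}^2\leq 1$ implies $\Abs{A},\Abs{\tilde A}\leq 1$, so the square roots are well-defined; by construction $\tilde V$ is an isometry with $\Pi\tilde V=\tilde A$, and the ``polar alignment'' forces $C^\dag\tilde C$ to be positive semi-definite, which is what ultimately yields a tight distance bound.

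\textbf{Core estimate.} The next step is to bound $\Abs{V-\tilde V}$. Expanding,
\[
(V-\tilde V)^\dag(V-\tilde V)=2I_n-(A^\dag\tilde A+\tilde A^\dag A)-(C^\dag\tilde C+\tilde C^\dag C),
\]
and the polar choice gives (up to symmetrization) $C^\dag\tilde C+\tilde C^\dag C=2(I_n-A^\dag A)^{1/2}(I_n-\tilde A^\dag\tilde A)^{1/2}$. I would then establish the operator-norm inequality
\[
\Abs{2I_n-(A^\dag\tilde A+\tilde A^\dag A)-2(I_n-A^\dag A)^{1/2}(I_n-\tilde A^\dag\tilde A)^{1/2}}\leq\frac{2\Abs{A-\tilde A}^2}{1-\Abs{(A+\tilde A)/2}^2},
\]
yielding $\Abs{V-\tilde V}\leq\sqrt{2/(1-\Abs{\bar A}^2)}\cdot\Abs{A-\tilde A}$ with $\bar A=(A+\tilde A)/2$.

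\textbf{Completion.} Finally, I would extend $\tilde V$ to a full $d\times d$ unitary $\tilde U$ by orthogonalizing the remaining $d-n$ columns of $U$ against the image of $\tilde V$ via Gram--Schmidt. The hypothesis $d\geq 8n$ provides enough ancilla room that this extension perturbs each remaining column by at most a constant multiple of $\Abs{V-\tilde V}$; a careful choice of the extension (for instance, re-using those columns of $U$ that already lie nearly orthogonal to $\tilde V$'s image) shows that this constant can be absorbed into the leading factor, so that $\Abs{U-\tilde U}$ inherits the same bound as $\Abs{V-\tilde V}$.

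\textbf{Main obstacle.} The critical step is the operator inequality in the second paragraph: the tight denominator $1-\Abs{\bar A}^2$ depends only on the symmetric mean $\bar A$, not on $\max(\Abs{A},\Abs{\tilde A})$, and obtaining this symmetry appears to require expanding the square-root cross-term $(I_n-A^\dag A)^{1/2}(I_n-\tilde A^\dag\tilde A)^{1/2}$ around the midpoint $I_n-\bar A^\dag\bar A$ via an integral representation of the operator square root (for instance, $X^{1/2}=\frac{1}{\pi}\int_0^\infty t^{-1/2}\bigl(I-t(tI+X)^{-1}\bigr)\,dt$). The hypothesis $\Abs{A-\tilde A}+\Abs{\bar A}^2\leq 1$ is precisely what keeps $I_n-\bar A^\dag\bar A$ bounded away from zero by enough margin for the resulting integrals to converge at rate $1/(1-\Abs{\bar A}^2)$, and gluing the first-order expansion together with the cross-term $A^\dag\tilde A+\tilde A^\dag A=2\bar A^\dag\bar A-\frac{1}{2}(A-\tilde A)^\dag(A-\tilde A)$ should produce the claimed quadratic-in-$\Abs{A-\tilde A}$ bound.
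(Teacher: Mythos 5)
The paper never proves this lemma --- it is imported verbatim from \cite[Lemma 23 in the full version]{GSLW19} and \cite[Corollary 12]{GP22} --- so your attempt has to stand on its own, and its skeleton is indeed the right one: pass to the block-column isometry $V=U(\ket{0}\otimes I)=\binom{A}{C}$, use the polar part of $C$ to align the lower block, and reduce to comparing $\binom{A}{\sqrt{I-A^\dag A}}$ with $\binom{\tilde A}{\sqrt{I-\tilde A^\dag\tilde A}}$. Two caveats already arise there: the polar partial isometry $W$ has initial space only $\supp\bigl((I-A^\dag A)^{1/2}\bigr)$, so when $A$ has unit singular values your $\tilde V$ need not be an isometry until you extend $W$ to a genuine isometry of $\mathbb{C}^n$ into $\mathbb{C}^{d-n}$ (this is fixable and is one place the room $d\ge 2n$ is used), and the cross term is only the symmetrized product $\frac{1}{2}\bigl[(I-A^\dag A)^{1/2}(I-\tilde A^\dag\tilde A)^{1/2}+\mathrm{h.c.}\bigr]$, which is not positive semidefinite. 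The more serious issue is that your ``core estimate'' --- the displayed inequality with denominator $1-\Abs{(A+\tilde A)/2}^2$ --- \emph{is} essentially the cited GSLW lemma and carries all of the technical difficulty; you assert it and gesture at an integral-representation expansion of the operator square root around the midpoint, but this is not carried out, and it is far from clear that a first-order expansion yields the tight constant (already in the scalar case the bound comes from an exact trigonometric cancellation, not from a perturbative expansion). As written, this step is a citation-sized hole rather than a proof.

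The completion step does not work as described. Bounding each of the remaining $d-n$ Gram--Schmidt columns by a constant multiple of $\Abs{V-\tilde V}$ does not control $\Abs{U-\tilde U}$: per-column $2$-norm bounds only give an operator-norm bound after a factor as large as $\sqrt{d-n}$ (the column perturbations may add coherently), and even a dimension-free constant loss would be fatal, because the conclusion carries the exact constant $\sqrt{2/(1-\Abs{(A+\tilde A)/2}^2)}$ with no slack to ``absorb'' anything. The clean way to finish is to produce a single unitary $S$ with $SV=\tilde V$ and set $\tilde U=SU$; then $\tilde U(\ket{0}\otimes I)=\tilde V$, so $\tilde U$ is automatically a block-encoding of $\tilde A$, and $\Abs{U-\tilde U}=\Abs{S-I}$. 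Showing that such an $S$ exists with $\Abs{S-I}$ within the \emph{same} bound as $\Abs{V-\tilde V}$ is exactly where the hypothesis $d\ge 8n$ and the argument behind \cite[Corollary 12]{GP22} enter, and your sketch supplies neither. So the proposal has the correct opening reduction, but both load-bearing steps --- the GSLW-type operator inequality and the lossless unitary completion --- remain unproven.
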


Now we are ready to prove \cref{corollary:up-scaling-block-encoding}.

\begin{proof} [Proof of \cref{corollary:up-scaling-block-encoding}]
    Let $P\rbra{x}$ be the polynomial given in \cref{lemma:rect} with $\delta' \coloneqq \frac{\beta-1}{2\alpha} \in \rbra{0, \frac{1}{2}}$, $\varepsilon' \coloneqq \varepsilon' \in \rbra{0, \frac{1}{2}}$ and $t = \frac{\beta+1}{2\alpha} \in \rbra{\delta', 1-\delta'}$ of degree $d = O\rbra{\frac{1}{\delta'}\log\rbra{\frac{1}{\varepsilon'}}} = O\rbra{\frac{\alpha}{\beta-1}\log\rbra{\frac{1}{\varepsilon'}}}$ such that
    \begin{itemize}
        \item $\abs{P\rbra{x}} \leq 1$ for $x \in \sbra{-1, 1}$, 
        \item $P\rbra{x} \in \sbra{1-\varepsilon', 1}$ for $x \in \sbra{-\frac{1}{\alpha},\frac{1}{\alpha}}$,
        \item $P\rbra{x} \in \sbra{0, \varepsilon'}$ for $x \in \sbra{-1, -\frac{\beta}{\alpha}} \cup \sbra{\frac{\beta}{\alpha}, 1}$.
    \end{itemize}
    Let $R\rbra{x} \coloneqq \frac{\alpha}{\beta}xP\rbra{x}$.
    It can be verified that 
    \begin{itemize}
        \item $\abs{R\rbra{x}} \leq 1$ for $x \in \sbra{-1, 1}$,
        \item $\abs{R\rbra{x} - \frac{\alpha}{\beta}x} \leq \frac{\varepsilon'}{\beta}$ for $x \in \sbra{-\frac{1}{\alpha}, \frac{1}{\alpha}}$.
    \end{itemize}
    Recall that the unitary operator $U$ is an $\rbra{\alpha, a, \varepsilon}$-block-encoding of $A$. Then, there is a linear operator $A'$ satisfying $\Abs{A'} \leq 1$ and $\Abs{\alpha A' - A} \leq \varepsilon$ such that $U$ is a $\rbra{1, a, 0}$-block-encoding of $A'$.
    By \cref{thm:qsvt-odd-even}, there is a quantum circuit $\widetilde U$ that is a $\rbra{1, a+1, 0}$-block-encoding of $R^{\SV}\rbra{A'}$, using $O\rbra{d} = O\rbra{\frac{\alpha}{\beta-1}\log\rbra{\frac{1}{\varepsilon'}}}$ queries to $U$. 
    To bound the error, note that
    \[
        \Abs{\beta R^{\SV}\rbra{A'} - A}
        \leq \beta \rbra*{\Abs*{R^{\SV}\rbra{A'} - R^{\SV}\rbra*{\frac{A}{\alpha}}} + \Abs*{R^{\SV}\rbra*{\frac{A}{\alpha}} - \frac{A}{\beta}}}.
    \]
    By \cref{lemma:real-qsp}, there exists a polynomial $S \in \mathbb{C}\sbra{x}$ of degree $d$ that satisfies the four conditions required in \cref{lemma:robust-qsvt} such that $S\rbra{x} = R\rbra{x}$ for all $x \in \mathbb{R}$. 
    Then, $S^{\SV}\rbra{A'} = R^{\SV}\rbra{A'}$ and $S^{\SV}\rbra{\frac{A}{\alpha}} = R^{\SV}\rbra{\frac{A}{\alpha}}$. 
    By \cref{lemma:robust-qsvt}, we further have
    \[
    \Abs*{R^{\SV}\rbra{A'} - R^{\SV}\rbra*{\frac{A}{\alpha}}} = \Abs*{S^{\SV}\rbra{A'} - S^{\SV}\rbra*{\frac{A}{\alpha}}} \leq 4d\sqrt{\Abs*{A' - \frac{A}{\alpha}}} \leq 4d\sqrt{\frac{\varepsilon}{\alpha}}.
    \]
    On the other hand, since $\Abs{A} \leq 1$, we have
    \[
        \Abs*{R^{\SV}\rbra*{\frac{A}{\alpha}} - \frac{A}{\beta}} \leq \max_{x \in \sbra{0, \frac{1}{\alpha}}} \abs*{R\rbra{x} - \frac{\alpha}{\beta} x} \leq \frac{\varepsilon'}{\beta}.
    \]
    Combining the above, we have
    \[
    \Abs{\beta R^{\SV}\rbra{A'} - A} \leq \beta \rbra*{4d\sqrt{\frac{\varepsilon}{\alpha}} + \frac{\varepsilon'}{\beta}} = \Theta\rbra*{\sqrt{\alpha\varepsilon} \log\rbra*{\frac{1}{\varepsilon'}}} + \varepsilon'.
    \]
    Therefore, $\widetilde U$ is a $\rbra{\beta, a+1, \Theta\rbra{\sqrt{\alpha\varepsilon}\log\rbra{\frac{1}{\varepsilon'}}}+\varepsilon'}$-block-encoding of $A$.

    At last, we consider the special case that $\varepsilon = 0$ and $\varepsilon' \leq 2\sqrt{2\beta^2+2\beta}-2\beta-2$. 
    Note that $A$ is an $N \times N$ matrix, where $N = 2^n$. Then, $\widetilde U$ is a $2^{a+1} N \times 2^{a+1}N$ unitary operator. 
    Let $I_2$ be the identity operator acting on $2$ qubits, which is a $4 \times 4$ matrix. 
    Then, $\widetilde U \otimes I_2$ is a $2^{a+3}N \times 2^{a+3}N$ unitary operator that is a $\rbra{\beta, a+3, \varepsilon'}$-block-encoding of $A$. 
    Assume that $\widetilde U \otimes I_2$ is a $\rbra{1, a+3, 0}$-block-encoding of an $N \times N$ matrix $B$, then $\Abs{\beta B - A} \leq \varepsilon'$, i.e.,
    \[
    \Abs*{B - \frac{A}{\beta}} \leq \frac{\varepsilon'}{\beta},
    \]
    which implies that
    \[
    \Abs*{B} \leq \Abs*{\frac{A}{\beta}} + \frac{\varepsilon'}{\beta} \leq \frac{1+\varepsilon'}{\beta}.
    \]
    Furthermore, we have
    \[
    \Abs*{\frac{B + A/\beta}{2}}^2
    \leq \frac{1}{4} \rbra*{ \Abs*{B} + \Abs*{\frac{A}{\beta}} }^2 
    \leq \frac{1}{4} \rbra*{ \frac{1+\varepsilon'}{\beta} + \frac{1}{\beta} }^2 = \rbra*{\frac{2+\varepsilon'}{2\beta}}^2,
    \]
    and thus 
    \[
    \Abs*{B - \frac{A}{\beta}} + \Abs*{\frac{B + A/\beta}{2}}^2 \leq \frac{\varepsilon'}{\beta} + \rbra*{\frac{2+\varepsilon'}{2\beta}}^2 \leq 1.
    \]
    Using \cref{lemma:robust-block-encoding} with $d \coloneqq 2^{a+3}N$, $n \coloneqq N$, $U \coloneqq \widetilde U \otimes I_2$, $A \coloneqq B$, and $\tilde A \coloneqq A/\beta$, there exists a $2^{a+3}N \times 2^{a+3}N$ unitary operator $U'$ that is a $\rbra{1, a+3, 0}$-block-encoding of $A/\beta$, i.e., a $\rbra{\beta, a+3, 0}$-block-encoding of $A$, satisfying
    \begin{align*}
        \Abs*{\widetilde U \otimes I_2 - U'}
        & \leq \sqrt{\frac{2}{1-\Abs*{\dfrac{B + A/\beta}{2}}^2}} \Abs*{B - \frac{A}{\beta}} \\
        & \leq \sqrt{\frac{2}{1-\rbra*{\dfrac{2+\varepsilon'}{2\beta}}^2}} \cdot \frac{\varepsilon'}{\beta} \\
        & = \frac{2\sqrt{2}\varepsilon'}{\sqrt{4\beta^2 - \rbra*{2+\varepsilon'}^2}}.
    \end{align*}
\end{proof}

\section{Quantum Query Complexity with General Parameter} \label{sec:alpha-query}

For completeness, we discuss the relationships between $\alpha$-quantum query complexities $\mathsf{Q}_\diamond^\alpha\rbra{\mathcal{P}}$ for different values of constant $\alpha > 1$. 

\begin{theorem}
    For every constant $1 < \alpha < \beta$ and promise problem $\mathcal{P}$ for quantum states, we have $\mathsf{Q}_\diamond^{\alpha}\rbra{\mathcal{P}} = \widetilde \Theta\rbra{\mathsf{Q}_\diamond^{\beta}\rbra{\mathcal{P}}}$.
\end{theorem}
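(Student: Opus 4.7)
The plan is to prove the two bounds $\mathsf{Q}_\diamond^{\alpha}\rbra{\mathcal{P}} \leq \mathsf{Q}_\diamond^{\beta}\rbra{\mathcal{P}}$ and $\mathsf{Q}_\diamond^{\beta}\rbra{\mathcal{P}} = \widetilde O\rbra{\mathsf{Q}_\diamond^{\alpha}\rbra{\mathcal{P}}}$ by simulating each kind of oracle access with the other. For the easy direction, a block-encoding of $\rho/\alpha$ can be turned into a block-encoding of $\rho/\beta$ by multiplying the encoded operator by the constant $\alpha/\beta < 1$, which is exactly the setting of \cref{corollary:scaling-block-encoding}, costing only one query per conversion and introducing no error. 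Substituting this conversion into any optimal tester for $\mathcal{P}$ using a block-encoding of $\rho/\beta$ yields a tester that uses the same number of queries to a block-encoding of $\rho/\alpha$, so $\mathsf{Q}_\diamond^{\alpha}\rbra{\mathcal{P}} \leq \mathsf{Q}_\diamond^{\beta}\rbra{\mathcal{P}}$.

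For the reverse direction, let $Q \coloneqq \mathsf{Q}_\diamond^{\alpha}\rbra{\mathcal{P}}$ with optimal tester $\mathcal{A}$ querying an $\rbra{\alpha, a, 0}$-block-encoding of $\rho$. Given any $\rbra{\beta, a', 0}$-block-encoding $U$ of $\rho$, I would invoke \cref{corollary:up-scaling-block-encoding} with the lemma's $\alpha$ set to our $\beta$, its $\beta$ set to our $\alpha$, the input error set to $0$, and target precision $\varepsilon' \coloneqq c/Q$ for a sufficiently small constant $c > 0$ depending on $\alpha$ and $\beta$. This produces a circuit $\widetilde U$ that is an $\rbra{\alpha, a'+1, \varepsilon'}$-block-encoding of $\rho$, using $O\rbra{\frac{\beta}{\alpha-1}\log\rbra{1/\varepsilon'}} = O\rbra{\log Q}$ queries to $U$ per simulated block-encoding, since $\beta/\rbra{\alpha-1}$ is a constant under $1 < \alpha < \beta$. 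The $\varepsilon = 0$ refinement of the same lemma further yields a genuine unitary $U'$ that is an exact block-encoding of $\rho/\alpha$ with $\Abs{\widetilde U \otimes I_2 - U'} = O\rbra{\varepsilon'}$, provided $\varepsilon' \leq \tfrac{1}{2}\rbra{\sqrt{5\alpha^2 + 4\alpha} - \alpha - 2}$; this upper bound is strictly positive precisely because $\alpha > 1$ (the inequality $5\alpha^2+4\alpha > (\alpha+2)^2$ reduces to $\alpha > 1$), so $c$ can be chosen small enough to satisfy it.

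Substituting the simulated queries into $\mathcal{A}$ then gives a tester whose output probabilities differ from those of $\mathcal{A}^{U'}$ by $O\rbra{Q \varepsilon'} = O\rbra{c}$, by \cref{lemma:block-encoding-as-query} applied to the $O\rbra{\varepsilon'}$ operator-norm closeness of each simulated query to the ideal $U'$. Choosing $c$ small enough makes this slack smaller than any fixed fraction of the $1/3$ success-probability gap, so the simulated tester still separates \textit{yes} and \textit{no} instances and a constant number of repetitions restores success probability $\geq 2/3$. The total number of queries to the block-encoding of $\rho/\beta$ is $Q \cdot O\rbra{\log Q} = \widetilde O\rbra{\mathsf{Q}_\diamond^{\alpha}\rbra{\mathcal{P}}}$, yielding $\mathsf{Q}_\diamond^{\beta}\rbra{\mathcal{P}} = \widetilde O\rbra{\mathsf{Q}_\diamond^{\alpha}\rbra{\mathcal{P}}}$ and completing the proof. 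The main technical nuisance will be verifying both parameter conditions of the up-scaling lemma---especially the second, more delicate bound on $\varepsilon'$ that licenses the existence of the ideal $U'$---and propagating the per-query error cleanly across $Q$ layered queries; both are routine given the robustness tools already developed in the paper.
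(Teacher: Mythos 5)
Your proposal is correct and follows essentially the same route as the paper, which splits the statement into the two directions proved in \cref{lemma:alpha-leq-beta} (down-scaling via \cref{corollary:scaling-block-encoding}, one exact query per conversion) and \cref{lemma:beta-leq-alpha} (up-scaling via \cref{corollary:up-scaling-block-encoding} with per-query error $\Theta(1/Q)$ at cost $O(\log Q)$ queries, giving the $\widetilde O$ overhead). Your explicit use of the $\varepsilon=0$ refinement to pass through an exact block-encoding $U'$ is precisely the mechanism the paper's terser proof relies on when it speaks of simulating $\mathcal{T}_\alpha$ ``to precision $\Theta(Q\varepsilon)$,'' so no gap remains.
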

\begin{proof}
    This is implied by \cref{lemma:alpha-leq-beta} and \cref{lemma:beta-leq-alpha}.
\end{proof}

\begin{lemma} \label{lemma:alpha-leq-beta}
    For every constant $1 < \alpha < \beta$ and promise problem $\mathcal{P}$ for quantum states, we have  $\mathsf{Q}_\diamond^{\alpha}\rbra{\mathcal{P}} \leq \mathsf{Q}_\diamond^{\beta}\rbra{\mathcal{P}}$.
\end{lemma}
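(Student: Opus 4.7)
The plan is to exhibit a one-query simulation: any single query to a block-encoding of $\rho/\beta$ can be replaced by a single query to a block-encoding of $\rho/\alpha$, after which we just substitute this simulation into an optimal algorithm for the $\beta$-setting.

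More concretely, let $\mathcal{A}$ be an optimal quantum query algorithm witnessing $\mathsf{Q}_\diamond^{\beta}\rbra{\mathcal{P}} = Q$, so that $\mathcal{A}^{V}$ uses $Q$ queries to an arbitrary $\rbra{1, a, 0}$-block-encoding $V$ of $\frac{1}{2}\rho/(\beta/2) = \rho/\beta$ and solves $\mathcal{P}$ with success probability at least $2/3$. Given any $\rbra{1, a', 0}$-block-encoding $U$ of $\rho/\alpha$, I would apply \cref{corollary:scaling-block-encoding} with scaling factor $\beta/\alpha > 1$ (and $\varepsilon = 0$) to obtain a unitary $\widetilde U$ that is a $\rbra{1, a'+1, 0}$-block-encoding of $\rbra{\rho/\alpha}/\rbra{\beta/\alpha} = \rho/\beta$, using exactly one query to $U$ and $O\rbra{1}$ additional one- and two-qubit gates. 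Substituting $\widetilde U$ for every invocation of $V$ inside $\mathcal{A}$ (controlled versions being handled identically, since the scaling construction is itself a fixed quantum circuit surrounding the query) yields an algorithm $\mathcal{A}'$ that uses $Q$ queries to $U$ and whose output distribution on $U$ is identical to that of $\mathcal{A}$ on $\widetilde U$. Hence $\mathcal{A}'$ solves $\mathcal{P}$ with success probability at least $2/3$, and we conclude $\mathsf{Q}_\diamond^{\alpha}\rbra{\mathcal{P}} \leq Q = \mathsf{Q}_\diamond^{\beta}\rbra{\mathcal{P}}$.

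There is essentially no obstacle here beyond checking that the down-scaling in \cref{corollary:scaling-block-encoding} incurs no error (which holds because the input block-encoding is exact, $\varepsilon = 0$) and that it is compatible with the controlled queries used in the definition of a quantum query algorithm; the latter is immediate because the ancilla rotation $V$ introduced in the proof of \cref{corollary:scaling-block-encoding} is input-independent and can be absorbed into the surrounding $G_j$ gates. The only mildly subtle point worth noting is that, since the definitions use specifically a block-encoding of $\frac{1}{2}\rho$ in the $\alpha=2$ case and $\rho/\alpha$ in general, the statement is really comparing access models indexed by the scaling parameter, and the lemma formalizes the intuition that smaller $\alpha$ (a larger encoded block) is a strictly stronger oracle.
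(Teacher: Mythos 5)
Your proposal is correct and follows essentially the same route as the paper: both reduce the $\beta$-access model to the $\alpha$-access model by invoking the exact ($\varepsilon=0$) down-scaling of block-encoded operators (\cref{corollary:scaling-block-encoding}) to turn one query to a block-encoding of $\rho/\alpha$ into one query to a block-encoding of $\rho/\beta$, and then substitute this into an optimal tester for the $\beta$-setting. Your additional remarks on controlled/inverse queries and the exactness of the scaling are consistent with (and slightly more explicit than) the paper's argument.
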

\begin{proof}
    Let $U_\alpha$ be a unitary operator that is an $\rbra{\alpha, a, 0}$-block-encoding of $\rho$, where $\rho$ is a mixed quantum state. 
    By \cref{corollary:scaling-block-encoding}, we can implement a unitary operator $U_\beta$ that is a $\rbra{\beta, a+1, 0}$-block-encoding of $\rho$, using $1$ query to $U_\alpha$.
    Therefore, any tester using $Q$ queries to $U_\beta$ can be converted to a tester using $Q$ queries to $U_\alpha$ with the same behavior with respect to the mixed state $\rho$ to be tested. 
    This implies that $\mathsf{Q}_\diamond^{\alpha}\rbra{\mathcal{P}} \leq \mathsf{Q}_\diamond^{\beta}\rbra{\mathcal{P}}$.
\end{proof}

\begin{lemma} \label{lemma:beta-leq-alpha}
    For every constant $1 < \alpha < \beta$ and promise problem $\mathcal{P}$ for quantum states, we have $\mathsf{Q}_\diamond^{\alpha}\rbra{\mathcal{P}} \leq O\rbra{\mathsf{Q}_\diamond^{\beta}\rbra{\mathcal{P}} \log\rbra{\mathsf{Q}_\diamond^{\beta}\rbra{\mathcal{P}}}}$.
\end{lemma}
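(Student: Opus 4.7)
The claim is a direct corollary of Lemma~\ref{lemma:alpha-leq-beta}, which in fact establishes the sharper inequality $\mathsf{Q}_\diamond^{\alpha}(\mathcal{P}) \leq \mathsf{Q}_\diamond^{\beta}(\mathcal{P})$ with no logarithmic factor at all. The plan is simply to invoke that lemma and trivially relax the upper bound, since $N \leq O(N \log N)$ for every $N \geq 1$, yielding
\[
\mathsf{Q}_\diamond^{\alpha}(\mathcal{P}) \;\leq\; \mathsf{Q}_\diamond^{\beta}(\mathcal{P}) \;\leq\; O\rbra*{\mathsf{Q}_\diamond^{\beta}(\mathcal{P}) \log \mathsf{Q}_\diamond^{\beta}(\mathcal{P})}.
\]
(For the degenerate case $\mathsf{Q}_\diamond^{\beta}(\mathcal{P}) = O(1)$, the $O(\cdot)$ on the right absorbs the missing $\log$ into a uniform constant.)

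For clarity I would also spell out the underlying reduction behind Lemma~\ref{lemma:alpha-leq-beta}: given any $(\alpha, a, 0)$-block-encoding $U_\alpha$ of a quantum state $\rho$, Corollary~\ref{corollary:scaling-block-encoding} (down-scaling, applied with ratio $\beta/\alpha > 1$) produces a $(\beta, a+1, 0)$-block-encoding $U_\beta$ of $\rho$ using a single query to $U_\alpha$ and $O(1)$ auxiliary gates. Therefore any tester for $\mathcal{P}$ that makes $Q = \mathsf{Q}_\diamond^{\beta}(\mathcal{P})$ queries to a $\beta$-block-encoding of $\rho$ can be simulated---query for query, with exactly the same output distribution---by a tester that makes $Q$ queries to any $\alpha$-block-encoding of $\rho$. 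This certifies $\mathsf{Q}_\diamond^{\alpha}(\mathcal{P}) \leq \mathsf{Q}_\diamond^{\beta}(\mathcal{P})$, from which the stated bound follows.

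Because the argument reduces to Lemma~\ref{lemma:alpha-leq-beta}, there is no substantive obstacle; the statement as worded is strictly weaker than what that preceding lemma already delivers. For orientation, the genuinely log-incurring direction---needed to complete the $\widetilde\Theta$-equivalence in the enveloping theorem---is the converse $\mathsf{Q}_\diamond^{\beta}(\mathcal{P}) \leq O\rbra*{\mathsf{Q}_\diamond^{\alpha}(\mathcal{P}) \log \mathsf{Q}_\diamond^{\alpha}(\mathcal{P})}$. There the reduction must \emph{up-scale} a given $(\beta,\cdot,0)$-block-encoding into an approximate $(\alpha,\cdot,\varepsilon')$-block-encoding via Lemma~\ref{corollary:up-scaling-block-encoding} at cost $O(\log(1/\varepsilon'))$ queries per simulated call, and setting $\varepsilon' = \Theta(1/Q)$ with $Q = \mathsf{Q}_\diamond^{\alpha}(\mathcal{P})$ yields the $\log Q$ overhead by a triangle-inequality accumulation over the $Q$ simulated queries. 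The direction targeted here, however, needs none of that machinery.
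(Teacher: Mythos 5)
Your literal reading is formally sound: as printed, the inequality in \cref{lemma:beta-leq-alpha} does follow from \cref{lemma:alpha-leq-beta} together with the trivial relaxation $N \leq O(N \log N)$. But that is not the content the lemma is meant to carry, and it is not what the paper proves. The paper's own proof of \cref{lemma:beta-leq-alpha} establishes the \emph{converse} inequality $\mathsf{Q}_\diamond^{\beta}\rbra{\mathcal{P}} \leq O\rbra{\mathsf{Q}_\diamond^{\alpha}\rbra{\mathcal{P}}\log\rbra{\mathsf{Q}_\diamond^{\alpha}\rbra{\mathcal{P}}}}$: starting from a $\rbra{\beta,a,0}$-block-encoding of $\rho$, it invokes the up-scaling lemma (\cref{corollary:up-scaling-block-encoding}) to build an approximate $\rbra{\alpha,a+1,\varepsilon}$-block-encoding at a cost of $O\rbra{\frac{\beta}{\alpha-1}\log\rbra{1/\varepsilon}}$ queries per simulated call, sets $\varepsilon = \Theta\rbra{1/Q}$ with $Q = \mathsf{Q}_\diamond^{\alpha}\rbra{\mathcal{P}}$, and accumulates the deviation over the $Q$ calls --- exactly the argument you relegated to your closing ``orientation'' remark. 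In other words, the displayed statement almost certainly has $\alpha$ and $\beta$ transposed: as worded it is strictly weaker than \cref{lemma:alpha-leq-beta}, would make the lemma redundant, and would leave the two-sided $\widetilde\Theta$ equivalence of the surrounding theorem unsupported. So your trivial derivation, while valid for the words on the page, does not discharge the lemma's actual role; your sketch of the log-incurring direction is the paper's proof and should be promoted to the main argument. If you do write it out, also make explicit the one subtlety both sketches gloss over: the tester for the $\alpha$-access model is only promised an \emph{exact} block-encoding of $\rho/\alpha$, so one needs the second clause of \cref{corollary:up-scaling-block-encoding} (closeness in operator norm to an exact $\rbra{\alpha,\cdot,0}$-block-encoding) to justify that substituting the up-scaled circuit changes the tester's acceptance probability by at most $\Theta\rbra{Q\varepsilon}$.
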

\begin{proof}
    Let $U_\beta$ be a unitary operator that is a $\rbra{\beta, a, 0}$-block-encoding of $\rho$, where $\rho$ is a mixed quantum state. 
    By \cref{corollary:up-scaling-block-encoding}, we can implement a unitary operator $U_\alpha$ that is an $\rbra{\alpha, a+1, \varepsilon}$-block-encoding of $\rho$, using $O\rbra{\frac{\beta}{\alpha-1}\log\rbra{\frac{1}{\varepsilon}}}$ query to $U_\beta$, where $0 < \varepsilon < \min\cbra{\frac{\alpha}{\beta}, \frac{1}{2}}$ is to be determined.
    Therefore, any tester $\mathcal{T}_\alpha$ using $Q$ queries to $U_\alpha$ can be simulated by a tester $\mathcal{T}_\beta$ using $Q \cdot O\rbra{\frac{\beta}{\alpha-1}\log\rbra{\frac{1}{\varepsilon}}}$ queries to $U_\beta$ to precision $\Theta\rbra{Q\varepsilon}$. 
    It is sufficient to choose $\varepsilon = \Theta\rbra{1/Q}$ to make $\mathcal{T}_{\alpha}$ and $\mathcal{T}_{\beta}$ have similar enough behavior. 
    This implies that $\mathsf{Q}_\diamond^{\beta}\rbra{\mathcal{P}} \leq \mathsf{Q}_\diamond^{\alpha}\rbra{\mathcal{P}} \cdot O\rbra{\frac{\beta}{\alpha-1}\log\rbra{\mathsf{Q}_\diamond^{\alpha}\rbra{\mathcal{P}}}}$.
    When $\alpha$ and $\beta$ are constant, this gives $\mathsf{Q}_\diamond^{\beta}\rbra{\mathcal{P}} \leq O\rbra{\mathsf{Q}_\diamond^{\alpha}\rbra{\mathcal{P}}\log\rbra{\mathsf{Q}_\diamond^{\alpha}\rbra{\mathcal{P}}}}$.
\end{proof}

\section{Reducing Search to Quantum Gibbs Sampling}
\label{sec:gibbs-dimension}

\begin{theorem}
    For $\beta = \Omega\rbra{\log\rbra{N}}$, every quantum query algorithm for preparing the Gibbs state of an $N$-dimensional Hamiltonian $H$ at inverse temperature $\beta$, given an oracle that is a block-encoding of $H$, has query complexity $\Omega\rbra{\sqrt{N}}$. 
\end{theorem}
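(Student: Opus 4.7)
The plan is to prove the bound by a Karp reduction from the $N$-item unstructured search problem---whose quantum query complexity is $\Omega(\sqrt{N})$ by \cite{BBBV97,Zal99}---to quantum Gibbs sampling at inverse temperature $\beta = \Omega(\log N)$. Given a search oracle $O_f$ for $f \colon \sbra{N} \to \cbra{0, 1}$ with at most one marked item $x^*$, I would first convert $O_f$ into the phase oracle $O_f' = I - 2\ket{x^*}\bra{x^*}$ (or $O_f' = I$ if no marked item exists) using $O(1)$ queries, and then block-encode the Hamiltonian
\[
H \coloneqq -\ket{x^*}\bra{x^*} = \frac{O_f' - I}{2}
\]
(or $H = 0$ in the unmarked case), noting $\Abs{H} \leq 1$. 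Applying \cref{lemma:lcu} with $y = \rbra{1/2, -1/2}$, $A_1 = O_f'$, and $A_2 = I$ yields a $\rbra{1, O\rbra{1}, 0}$-block-encoding of $H$ using $O\rbra{1}$ queries to $O_f$.

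Next, I would feed this block-encoding to the hypothetical Gibbs sampler $\mathcal{A}$ with precision $\varepsilon \leq 1/6$ and query complexity $Q$. In the marked case, its output is $\varepsilon$-close in trace distance to
\[
\sigma = \frac{e^\beta \ket{x^*}\bra{x^*} + \sum_{x \neq x^*} \ket{x}\bra{x}}{e^\beta + N - 1},
\]
while in the unmarked case it is $\varepsilon$-close to the maximally mixed state $I/N$. For $\beta \geq c \log N$ with a sufficiently large constant $c$, the weight on $\ket{x^*}$ in $\sigma$ is $e^\beta / \rbra{e^\beta + N - 1} \geq 1 - 1/N$. Measuring the output in the computational basis and then verifying the outcome with one more query to $O_f$ therefore distinguishes the marked from the unmarked case with a constant probability gap: in the marked case the verification accepts with probability $\geq 1 - 1/N - \varepsilon$, whereas in the unmarked case it never accepts (since $f \equiv 0$).

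Amplifying with a constant number of repetitions then solves the search decision problem using $O\rbra{Q}$ queries to $O_f$, so $Q = \Omega\rbra{\sqrt{N}}$. The main bookkeeping obstacle is reconciling normalizations: one must ensure the block-encoding has subnormalization exactly $1$ (so that the ``$\beta$'' promised by the Gibbs sampler coincides with the target inverse temperature in the statement), and that the block-encoding precision together with the sampler error $\varepsilon$ are both small enough to preserve the constant-gap distinguishability---these amount to choosing the constant $c$ in $\beta \geq c \log N$ sufficiently large that $1/N + \varepsilon$ is bounded away from $1/2$.
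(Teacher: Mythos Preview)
Your proposal is correct and follows essentially the same route as the paper: reduce unstructured search to Gibbs sampling by observing that, for a Hamiltonian whose ground state marks the solution, the Gibbs state at $\beta = \Omega(\log N)$ concentrates on the marked element, so measuring it in the computational basis recovers the solution with high probability.

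The only noteworthy difference is in how the block-encoding of $H$ is obtained. You pass through the phase oracle $O_f' = I - 2\ket{x^*}\bra{x^*}$ and then invoke \cref{lemma:lcu} with $y=(1/2,-1/2)$ to build a $(1,O(1),0)$-block-encoding of $H = -\ket{x^*}\bra{x^*}$. The paper instead observes that the \emph{standard} oracle $\mathcal{O}\colon \ket{i}\ket{0}\mapsto\ket{i}\ket{x_i}$ is \emph{already} a $(1,1,0)$-block-encoding of $H = I - \diag(x_0,\dots,x_{N-1})$, since $\bra{0}_B\mathcal{O}\ket{0}_B = H$; no LCU or phase-oracle conversion is needed. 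The two Hamiltonians differ only by an additive multiple of $I$, which leaves the Gibbs state unchanged. The paper's construction is thus a bit more direct and avoids the LCU machinery, while yours makes the ``combine oracle with identity'' step explicit; neither offers a genuine advantage over the other.
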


\begin{proof}
    We reduce the search problem to quantum Gibbs sampling. 
    Suppose that $\mathcal{O}$ is a quantum oracle for the search problem over $N = 2^n$ items $x_0, x_1, \dots, x_{N-1}$, defined by $\mathcal{O} \colon \ket{i}\ket{0} \mapsto \ket{i}\ket{x_i}$ for $0 \leq i < N$. 
    Assume that there is only one index $i^*$ with $x_{i^*} = 1$ and the rest items are $0$'s. 
    It is known in \cite{BBBV97,BBHT98,Zal99} that finding the index $i^*$ requires $\Omega\rbra{\sqrt{N}}$ queries to $\mathcal{O}$. 
    
    On the other hand, we note that $\mathcal{O}$ is a block-encoding of the Hamiltonian 
    \[
    H = I - \diag\rbra{x_0, x_1, \dots, x_{N-1}}. 
    \]
    To see this, let $A$ and $B$ denote the first and second registers that $\mathcal{O}$ acts on, respectively.
    Then, $\mathcal{O}$ is a $\rbra{1,1,0}$-block-encoding of $\bra{0}_B \mathcal{O} \ket{0}_B$. 
    For every $0 \leq i < N$, $\bra{0}_B \mathcal{O} \ket{0}_B \ket{i}_A = \ket{i}_A$ if $x_i = 0$, and $0$ (the zero vector) if $x_i = 1$, which implies that $\bra{0}_B \mathcal{O} \ket{0}_B$ is a diagonal matrix with the $i$-th diagonal element being $1 - x_i$, i.e., $\bra{0}_B \mathcal{O} \ket{0}_B=H$. 
    
    Let $\beta = \ceil{2 \ln\rbra{N}}$.
    The measure outcome of the Gibbs state of $H$ in the computational basis will be $i^*$ with probability
    \[
    \frac{e^{-\beta \rbra{1-x_{i^*}}}}{\sum_{i=0}^{N-1} e^{-\beta \rbra{1-x_i}}} = \frac{1}{1+\rbra{N-1}e^{-\beta}} \geq \frac{1}{1+\rbra{N-1}/N^2} \geq \frac 4 5.
    \]
    This means that we can find the solution $i^*$ with high probability by one application of any quantum Gibbs sampler.
    Therefore, any quantum Gibbs sampler at low temperature such that $\beta = \Omega\rbra{\log\rbra{N}}$ has query complexity $\Omega\rbra{\sqrt{N}}$.
\end{proof}

\end{document}